\crefname{lemma}{Lemma}{Lemmas}
\crefname{fact}{Fact}{Facts}
\newcommand{\colorconstraints}{\text{Color Constraints}}
\crefname{colorconstraints}{(color constraints)}{Color Constraints}
\crefname{indsetconstraints}{(indset constraints)}{IndSet Constraints}
\crefname{theorem}{Theorem}{Theorems}
\crefname{mtheorem}{Theorem}{Theorems}
\crefname{corollary}{Corollary}{Corollaries}
\crefname{claim}{Claim}{Claims}
\crefname{example}{Example}{Examples}
\crefname{algorithm}{Algorithm}{Algorithms}
\crefname{problem}{Problem}{Problems}
\crefname{definition}{Definition}{Definitions}
\newtheorem{theorem}{Theorem}[section]
\newtheorem{mtheorem}{Theorem}%[section]
\newtheorem*{theorem*}{Theorem}
\newtheorem*{proposition*}{Proposition}
\newtheorem{lemma}[theorem]{Lemma}
\newtheorem*{lemma*}{Lemma}
\newtheorem*{conjecture*}{Conjecture}
\newtheorem{fact}[theorem]{Fact}
\newtheorem*{fact*}{Fact}
\newtheorem*{hypothesis*}{Hypothesis}
\theoremstyle{definition}
\newtheorem{definition}[theorem]{Definition}
\newtheorem*{definition*}{Definition}
\theoremstyle{remark}
\newtheorem{claim}[theorem]{Claim}
\newtheorem*{claim*}{Claim}
\newtheorem{remark}[theorem]{Remark}
\newtheorem*{remark*}{Remark}
\newtheorem{observation}[theorem]{Observation}
\newtheorem*{observation*}{Observation}
\let\mathbb\varmathbb
\definecolor{petergreen}{rgb}{0, 0.75, 0}
\newcommand{\peter}[1]{\dtcolornote[Peter]{petergreen}{#1}}
\newcommand{\FormatAuthor}[3]{
\begin{tabular}{c}
#1 \\ {\small\texttt{#2}} \\ {\small #3}
\end{tabular}
}
\newcommand{\keywords}[1]{\bigskip\par\noindent{\footnotesize\textbf{Keywords\/}: #1}}
\newcommand{\R}{{\mathbb R}}
\newcommand{\N}{{\mathbb N}}
\newcommand{\norm}[1]{\lVert #1 \rVert}
\newcommand{\boolnorm}[1]{{\lVert #1 \rVert}_{\infty \to 1}}
\newcommand{\abs}[1]{\lvert #1 \rvert}
\newcommand{\floor}[1]{\lfloor #1 \rfloor}
\newcommand{\eps}{\varepsilon}
\newcommand{\F}{{\mathbb F}}
\newcommand{\E}{{\mathbb E}}
\newcommand{\Tr}{\mathrm{Tr}}
\newcommand{\Bits}{\{0,1\}}
\newcommand{\Fits}{\{-1,1\}}
\newcommand{\cH}{\mathcal H}
\newcommand{\cB}{\mathcal B}
\newcommand{\cG}{\mathcal G}
\newcommand{\val}{\mathrm{val}}
\newcommand{\ceil}[1]{\lceil #1 \rceil}
\newcommand{\mper}{\,.}
\newcommand{\mcom}{\,,}
\newcommand{\Paren}[1]{\left(#1\right)}
\newcommand{\Norm}[1]{\left\lVert#1\right\rVert}
\newcommand{\polylog}{\mathrm{polylog}}
\newcommand{\cD}{\mathcal D}
\newcommand{\wt}{\mathrm{wt}}
\newcommand{\cV}{\mathcal{V}}
\newcommand{\Lincode}{\mathcal{L}}
\newcommand{\Code}{\mathcal{C}}
\newcommand{\cT}{\mathcal{T}}
\newcommand{\Dec}{\mathrm{Dec}}
\newcommand{\defeq}{:=}
\newcommand{\AND}{\mathrm{AND}}
\newcommand{\Enc}{\mathrm{Enc}}
\newcommand{\rand}{\mathrm{rand}}
\begin{document}
%%%%%%%%%%%%%%%%%%%%%%%%%%%%%%%%%%%%%%%%%%%%%%%%%%%%%%%%%%%%%%%%%%%%%%%%%%%%%%%%
%%%%%%%%%%%%%%%%%%%%%%%%%%%%%%%%%%%%%%%%%%%%%%%%%%%%%%%%%%%%%%%%%%%%%%%%%%%%%%%%
%%%%%%%%%%%%%%%%%%%%%%%%%%%%%%%%%%%%%%%%%%%%%%%%%%%%%%%%%%%%%%%%%%%%%%%%%%%%%%%%

%%%%%%%%%%%%%%%%%%%%%%%%%%%%%%%%%%%%%%%%%%%%%%%%%%%%%%%%%%%%%%%%%%%%%%%%%%%%%%%%

\title{Exponential Lower Bounds for Smooth 3-LCCs and Sharp Bounds for Designs}
\author{
\begin{tabular}[h!]{ccc}
      \FormatAuthor{Pravesh K.\ Kothari}{kothari@cs.princeton.edu}{IAS \& Princeton University}
      \FormatAuthor{Peter Manohar}{pmanohar@cs.cmu.edu}{Carnegie Mellon University}
\end{tabular}
} %
\date{\today}
%%%%%%%%%%%%%%%%%%%%%%%%%%%%%%%%%%%%%%%%%%%%%%%%%%%%%%%%%%%%%%%%%%%%%%%%%%%%%%%%

%%%%%%%%%%%%%%%%%%%%%%%%%%%%%%%%%%%%%%%%%%%%%%%%%%%%%%%%%%%%%%%%%%%%%%%%%%%%%%%%
\maketitle
%%%%%%%%%%%%%%%%%%%%%%%%%%%%%%%%%%%%%%%%%%%%%%%%%%%%%%%%%%%%%%%%%%%%%%%%%%%%%%%%

%%%%%%%%%%%%%%%%%%%%%%%%%%%%%%%%%%%%%%%%%%%%%%%%%%%%%%%%%%%%%%%%%%%%%%%%%%%%%%%%
%%%%%%%%%%%%%%%%%%%%%%%%%%%%%%%%%%%%%%%%%%%%%%%%%%%%%%%%%%%%%%%%%%%%%%%%%%%%%%%%
%%%%%%%%%%%%%%%%%%%%%%%%%%%%%%%%%%%%%%%%%%%%%%%%%%%%%%%%%%%%%%%%%%%%%%%%%%%%%%%%
\begin{abstract}
We give improved lower bounds for binary $3$-query locally correctable codes (3-LCCs) $\Code \colon \{0,1\}^k \rightarrow \{0,1\}^n$. Specifically, we prove: 
\begin{enumerate}[(1)]
    \item If $\Code$ is a linear \emph{design} 3-LCC, then $n \geq 2^{(1 - o(1))\sqrt{k} }$. A design 3-LCC has the additional property that the correcting sets for every codeword bit form a perfect matching and every pair of codeword bits is queried an equal number of times across all matchings. Our bound is tight up to a factor $\sqrt{8}$ in the exponent of $2$, as the best construction of binary $3$-LCCs (obtained by taking Reed--Muller codes on $\F_4$ and applying a natural projection map) is a design $3$-LCC with $n \leq 2^{\sqrt{8 k}}$. Up to a factor of $8$, this resolves the Hamada conjecture on the maximum $\F_2$-codimension of a $4$-design.
    \item If $\Code$ is a smooth, non-linear, adaptive $3$-LCC with perfect completeness, then, $n \geq 2^{\Omega(k^{1/5})}$. 
    \item If $\Code$ is a smooth, non-linear, adaptive $3$-LCC with completeness $1 - \eps$, then $n \geq \tilde{\Omega}(k^{\frac{1}{2\eps}})$. 
    In particular, when $\eps$ is a small constant, this implies a lower bound for general non-linear LCCs that beats the prior best $n \geq \tilde{\Omega}(k^3)$ lower bound of~\cite{AlrabiahGKM23} by a polynomial factor.\peter{here}
\end{enumerate}

Our design LCC lower bound is obtained via a fine-grained analysis of the Kikuchi matrix method applied to a variant of the matrix used in~\cite{KothariM23}. Our lower bounds for non-linear codes are obtained by designing a from-scratch reduction from nonlinear 3-LCCs to a system of ``chain XOR equations'' --- polynomial equations with similar structure to the \emph{long chain derivations} that arise in the lower bounds for linear $3$-LCCs~\cite{KothariM23}.

\keywords{Locally Correctable Codes,  Locally Decodable Codes, Kikuchi Matrices}
\end{abstract}
%%%%%%%%%%%%%%%%%%%%%%%%%%%%%%%%%%%%%%%%%%%%%%%%%%%%%%%%%%%%%%%%%%%%%%%%%%%%%%%%
%%%%%%%%%%%%%%%%%%%%%%%%%%%%%%%%%%%%%%%%%%%%%%%%%%%%%%%%%%%%%%%%%%%%%%%%%%%%%%%%
%%%%%%%%%%%%%%%%%%%%%%%%%%%%%%%%%%%%%%%%%%%%%%%%%%%%%%%%%%%%%%%%%%%%%%%%%%%%%%%%

\clearpage
 \microtypesetup{protrusion=false}
  \tableofcontents{}
  \microtypesetup{protrusion=true}

\clearpage

\pagestyle{plain}
\setcounter{page}{1}

\section{Introduction}
A \emph{locally correctable code} (LCC) is an error correcting code that admits, in addition, a \emph{local} correction (a.k.a. \emph{self-correction}) algorithm that can recover any symbol of the original codeword by querying only a small number of randomly chosen symbols from the received corrupted codeword. More formally, we say that a code $\Code \colon \Bits^k \to \Bits^n$ is $q$-locally correctable if for any codeword $x$, a corruption $y$ of $x$, and input $u \in [n]$, the local correction algorithm reads at most $q$ symbols (typically a small constant such as $2$ or $3$) of $y$ and recovers the bit $x_u$ with probability $1 -  \eps$ whenever $\Delta(x, y) \coloneqq \abs{\{v \in [n] : x_v \ne y_v\}} \leq \delta n$, where $\delta$, the ``distance'' of the code, and $\eps$, the decoding error, are constants. Such codes 
have had myriad applications (see the surveys~\cite{Trevisan04,Yekhanin12,Dvir12}) starting with \emph{program checking}~\cite{BlumK95},   sublinear algorithms and property testing~\cite{RubinfeldS96,BlumLR93}, probabilistically checkable proofs~\cite{AroraLMSS98,AroraS98}, \textsc{IP}=\textsc{PSPACE}~\cite{LundFKN90,Shamir90}, worst-case to average-case reductions~\cite{BabaiFNW93}, constructions of explicit rigid matrices~\cite{Dvir10}, and $q$-server private information retrieval protocols~\cite{IshaiK99,BarkolIW10}. 

Reed--Muller codes, or codes based on the evaluations of multivariate degree $q-1$ polynomials over a finite field, provide a natural class of $q$-query locally correctable codes. For any constant $q\geq 2$, they imply binary $q$-LCCs with block length $n \leq 2^{ O(k^{1/(q-1)})}$. These codes are in fact $\F_2$-linear if we view the code $\Code$ as mapping $\F_2^k$ into $\F_2^n$. Despite significant effort over the past three decades, we do not know of a binary $q$-LCC with a smaller block length than Reed--Muller codes. This has motivated the conjecture that Reed--Muller codes are optimal $q$-LCCs for any constant $q$. 

For $q=2$, classical works~\cite{KerenidisW04, GoldreichKST06} on lower bounds on local codes confirm that the $2$-LCC based on binary Hadamard codes (the special case of Reed--Muller codes when $q=2$) achieves the smallest possible block length of $n = 2^k$ up to absolute constants in the exponent. For $q = 3$, a recent work of~\cite{KothariM23} improved on the best prior lower bound of $n \geq \tilde{\Omega}(k^3)$~\cite{AlrabiahGKM23}, and showed that for any binary\footnote{Their results extends to codes over any field $\F$ of size $\abs{\F} \leq k^{1 - \eta}$ for a constant $\eta > 0$, more generally.} linear code, $n \geq 2^{\Omega(k^{1/8})}$. As a corollary, they obtained a strong separation between $3$-LCCs and the weaker notion of $3$-query locally \emph{decodable} codes --- codes where the local correction algorithm only needs to succeeds for the $k$ message bits and for which sub-exponential length constructions, i.e., $n = 2^{k^{o(1)}}$, are known~\cite{Yekhanin08, Efremenko09}. The lower bound of~\cite{KothariM23} was very recently improved to $n \geq 2^{\Omega(k^{1/4})}$ in a follow-up work by Yankovitz~\cite{Yankovitz24}.

Despite this substantial progress, these results strongly exploit the linearity of the codes and do not yield \emph{any} improvement over the prior cubic lower bound for non-linear codes of~\cite{AlrabiahGKM23}. Even for the case of linear codes, these bounds, while exponential, still do not asymptotically match the block length of Reed--Muller codes. In this work, we make progress on both these fronts. Before discussing our results, we will take a brief detour to discuss a connection between $3$-LCCs and a foundational question about the algebraic rank of combinatorial designs.  

\parhead{Connections to the Hamada Conjecture.} Locally correctable codes have a deep connection --- first formalized by Barkol, Ishai and Weinreb~\cite{BarkolIW10} --- to the widely open Hamada conjecture from the 1970s in combinatorial design theory (with deep connections to coding theory, see~\cite{AssmusKey1992}
 for a classical reference). For positive integers $m,s,\lambda$, a $2$-$(m,s,\lambda)$-design is a collection $\cB \subseteq [m]$ of subsets (called \emph{blocks}) of size $s$, such that every pair of elements in $[m]$ appears in exactly $\lambda$ subsets in $\cB$. For any prime $p$, the $p$-rank of a design $\cB$ is the rank, over $\F_p$, of the \emph{incidence} matrix of $\cB$: the $0$-$1$ matrix with rows labeled by elements of $[m]$, columns labeled by elements of $\cB$ and an entry $(i,B)$ is $1$ iff $B$ contains $i$. A central question in algebraic design theory is understanding the smallest possible $p$-rank of a $2$-$(m,s,\lambda)$-design. 

In~\cite{BarkolIW10}, the authors showed that given any $2$-$(m,s,\lambda)$-design $\cD$ of $p$-rank $m - k$, the dual subspace to the column space of the incidence matrix of $\cD$ yields a linear $(s-1)$-query locally correctable code on $\F_p^m$ of dimension $k$. In particular, applying this transformation to the well-studied \emph{geometric designs} yields the folklore construction of Reed--Muller locally correctable codes discussed earlier. Specifically, the $3$-query locally correctable \emph{binary} code obtained from Reed--Muller codes on $\F_4$ corresponds to a $2$-$(n,4,1)$-design over $\F_2$ (see \cref{app:reedmuller}).

In 1973, Hamada~\cite{Hamada73} made a foundational conjecture (see~\cite{MR2860656} for a recent survey) in the area that states\footnote{Hamada's original conjecture is that affine geometric designs, or, dual codes to Reed--Muller codes, are the unique optimal designs with the same parameters. This strong form has since then been disproved -- there are non-affine geometric designs that achieve the \emph{same} (but not better!) parameters~\cite{Jungnickel84,Kantor94,LamLT00,LamLT01,LamT02,MR2480694}. The version of the problem we study here is called the \emph{weak version} of Hamada conjecture. 
} that affine geometric designs (i.e., duals to the Reed--Muller LCCs) minimize the $p$-rank among all algebraic designs of the same parameters. Over the past few decades, the conjecture has been confirmed in various special cases~\cite{HamadaO75,DoyenHV78,Teirlinck80,Tonchev99} that all correspond to $s\leq 3$ or $s= n-1$. In particular, the case of $s= 4$ (the setting of $3$-LCCs) was widely open until the recent result of~\cite{KothariM23} for $3$-LCC lower bounds. The connection between Hamada's conjecture and LCC lower bounds was suggested in~\cite{BarkolIW10} as evidence for \emph{the difficulty} of proving LCC lower bounds.

%%%%%%%%%%%%%%%%%%%%%%%%%%%%%%%%%%%%%%%%%%%%%%%%%%%%%%%%%%%%%%%%%%%%%%%%%%%%%%%%
%%%%%%%%%%%%%%%%%%%%%%%%%%%%%%%%%%%%%%%%%%%%%%%%%%%%%%%%%%%%%%%%%%%%%%%%%%%%%%%%
\subsection{\cref{mthm:designs}: Sharp lower bounds for design $3$-LCCs.} 
In our first result, we obtain a bound that is sharp up to a $\sqrt{8}$ factor in the exponent on the blocklength of any binary linear $3$-LCC where the local correction query sets form a $2$-$(n,4,1)$-design. This is equivalent to asking for the local correction sets for correcting any bit of the codeword to be a \emph{perfect} $3$-uniform hypergraph matching and that every pair of codewords bits appears in exactly $2$ triples across all matchings\footnote{The reason that this is $2$ instead of $\lambda = 1$ is because a $4$-tuple $(u,v,s,t)$ yields $2$ decoding triples, $(u,s,t)$ for $v$ and $(v,s,t)$ for $u$, that contain the pair $(s,t)$.}. Specifically, for such design $3$-LCCs, we prove:
\begin{mtheorem}
\label{mthm:designs}
Let $\Lincode \colon \Bits^k \to \Bits^n$ be a design $3$-LCC. Then, $n \geq 2^{(1-o(1))\sqrt{k}}$. Here, the $o(1)$-factor is $O(\log k/\sqrt{k})$.
\end{mtheorem}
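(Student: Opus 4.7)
The plan is to adapt the Kikuchi matrix method of~\cite{KothariM23} to the design setting, where the $2$-$(n,4,1)$-design hypothesis provides the combinatorial regularity needed to sharpen the spectral analysis and attain the exponent $\sqrt{k}$ up to lower-order terms.

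The first step is to extract the defining $4$-XOR system. By the definition of a linear design $3$-LCC, every codeword satisfies $x_u \oplus x_v \oplus x_s \oplus x_t = 0$ for every $4$-tuple $\{u,v,s,t\}$ obtained from the local-correction matchings $M_u$. Collecting these tuples yields a family $\cB$ of blocks that, by the design hypothesis, forms a $2$-$(n,4,1)$-design with $\abs{\cB} = \binom{n}{2}/6$, and whose $4$-XOR relations are satisfied by every codeword.

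Next, for a parameter $\ell \approx \tfrac{1}{2}\sqrt{k}$ to be optimized, I would build a pair-swap Kikuchi matrix $A$ of dimension $N \defeq \binom{n}{\ell}$: for each block $B \in \cB$, each partition $B = P_1 \sqcup P_2$ into two pairs, and each $\ell$-set $S$ with $P_1 \subseteq S$ and $P_2 \cap S = \emptyset$, add $1$ to the entry $A[S, S \triangle B]$. For any codeword $x$, the character vector $y_x(S) = (-1)^{\sum_{i\in S} x_i}$ satisfies $y_x(S \triangle B) = y_x(S) \cdot y_x(B) = y_x(S)$ for every $B \in \cB$, so $y_x^\top A y_x = W$ exactly, where $W$ is the total entry-sum of $A$. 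Since the $2^k$ characters $\{y_x\}$ span a $2^k$-dimensional subspace of $\R^N$ (controlled via Krawtchouk-type Gram computations at scale $\ell \ll n$), a standard trace/rank argument then bounds $2^k$ in terms of $N$, the top eigenvalue of $A$, and the common quadratic-form value $W/N$.

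The technical heart of the argument is a sharp spectral bound $\lambda_{\max}(A) \leq (1+o(1)) \cdot W/N$: the top eigenvalue essentially matches the average row-sum. Here the $2$-design hypothesis is indispensable -- because every pair $\{i,j\}\subseteq[n]$ appears in exactly one block, the edges of the Kikuchi graph are distributed with near-perfect uniformity across the $N$ vertices, so row-sums concentrate and $A$ decomposes as a principal term whose spectrum can be computed exactly (via a Johnson-scheme-style diagonalization) plus a lower-order perturbation. The generic $4$-uniform hypergraph analysis of~\cite{KothariM23} must absorb polynomial losses because it cannot rule out pathologically concentrated pair coverage; the $2$-design condition eliminates this slack by construction. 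Substituting yields $2^k \leq (1+o(1)) \cdot N$, and choosing $\ell = \tfrac{1}{2}\sqrt{k}$ together with $\binom{n}{\ell} \leq (en/\ell)^\ell$ inverts to $n \geq 2^{(1-o(1))\sqrt{k}}$, with the $o(1)$ absorbing the $O(\log k /\sqrt{k})$ slack from Stirling and from the concentration step. The hardest step will be the sharp eigenvalue bound: obtaining multiplicative $(1+o(1))$ control, as opposed to the polynomial factors that suffice for $k^{\Theta(1)}$-type exponents, is exactly what the design regularity must buy, and without it the $\sqrt{k}$ target is unattainable.
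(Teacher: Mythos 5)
Your proposal drops the chain derivations of~\cite{KothariM23}, which is the mechanism that moves this problem from the polynomial regime ($k \leq \tilde{O}(n^{1/3})$ from~\cite{AlrabiahGKM23}) into the exponential one. You apply the Kikuchi matrix directly to the $\binom{n}{2}/6$ design blocks; the paper instead forms $r$-chains with $r \approx \frac{1}{2}\log_2 n$ (\cref{def:designchains}), builds a Kikuchi graph $G_u$ on $\binom{[n]}{\ell}\times[n]$ and $\binom{[n]}{\ell}$ indexed by those chains (\cref{def:designkikuchi}), proves degree concentration via a sharp second-moment bound (\cref{lem:designrowpruning}) that crucially exploits the perfectness of the design matchings $H_u$, extracts a near-perfect matching from $G_u$, and closes by reducing to a $2$-LDC and invoking~\cref{fact:2ldclb}. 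Without the chain amplification, the constraint density at Kikuchi level $\ell = O(\log n)$ is far too low for the spectral step to beat the known polynomial bound, and this is not a loss of constants --- it is why~\cite{KothariM23} introduced chains at all.

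The concrete gap in your rank step is that the assertion ``the $2^k$ restricted characters $y_x\vert_{\binom{[n]}{\ell}}$ span a $2^k$-dimensional subspace of $\R^N$'' \emph{is} the inequality $2^k \leq N$ you are trying to prove once $\ell = \Theta(\sqrt{k})$ is substituted. In the contradiction regime $n < 2^{\sqrt{k}}$ one has $N = \binom{n}{\ell} < 2^{k/2}$, so the $y_x$'s cannot be linearly independent; the rank of the Krawtchouk Gram matrix is exactly the number of cosets of $\Lincode^{\perp}$ containing a weight-$\ell$ vector, and nothing in the design hypothesis forces this to equal $2^k$ at $\ell = O(\log n)$. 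Even granting full rank, your trace step needs near-orthogonality of the $y_x$'s (not merely a large span) to convert a trace bound into a dimension lower bound for the near-top eigenspace, but the off-diagonal Krawtchouk values $K_\ell(\delta n)/\binom{n}{\ell} \approx (1-2\delta)^{\ell}$ at $\ell = \Theta(\sqrt{k})$ are vastly larger than the $2^{-k}$ scale that argument would need. And the claimed $\lambda_{\max}(A) \leq (1+o(1))W/N$ is itself a degree-concentration statement you do not supply. The long-chain construction is not an optimization layered on top of your plan; without it the spectral argument does not have a proof to optimize.
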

\cref{mthm:designs} improves on the prior best lower bound of $n \geq 2^{\Omega(k^{1/3})}$ for designs recently obtained by Yankovitz~\cite{Yankovitz24} building on the $n \geq 2^{\Omega(k^{1/6})}$ bound of~\cite{KothariM23}.\footnote{The stated result of~\cite{KothariM23} is $n \geq 2^{\Omega(k^{1/8})}$ for (non-design) linear $3$-LCCs; however, their proof implicitly gives this slightly stronger bound for designs.} We note that there is a technical bottleneck that prevents the proof of~\cite{Yankovitz24} from beating a lower bound of $n \geq 2^{\Omega(k^{1/3}})$ even for the case of designs that \cref{mthm:designs} tackles (see \cref{rem:newmatrix} for a more detailed explanation). 

Reed--Muller codes, in particular, are design LCCs. In fact, in~\cref{app:reedmuller} we observe that the folklore best-known construction of binary $3$-query LCCs --- obtained by projecting Reed--Muller codes of degree-$2$ polynomials over $\F_4$ to $\F_2$ via the trace map --- is a design $3$-LCC with $n \leq 2^{\sqrt{8k}}$, or equivalently, a $2$-$(n,4,1)$ design of rank $n - k$.
Thus, the bound in \cref{mthm:designs} is \emph{tight up to a factor of $\sqrt{8}$ in the exponent.} As a direct corollary, we also confirm the Hamada conjecture for $2$-$(n,4,1)$-designs up to a factor of $8$ in the co-dimension.

\parhead{Towards obtaining a $k \leq O(\log^2 n)$ bound for all linear $3$-LCCs.} Given our almost sharp lower bound for design $3$-LCCs, we can use our proof to attribute the ``extra'' $\log^2 n$ factor loss in~\cite{KothariM23,Yankovitz24} to certain ``irregularities'' of general linear $3$-LCCs. Concretely, there are two places where the proof of~\cite{KothariM23,Yankovitz24} is lossy: \begin{inparaenum}[(1)]\item there is a ``hypergraph decomposition'' step to handle that pairs of codeword bits may appear in $\gg O(1)$ triples across all matchings (one $\log n$ loss), and \item there is a ``row pruning'' step to argue that a certain graph is approximately regular (one $\log n$ loss)\end{inparaenum}. For the case of designs,~\cite{Yankovitz24} proves a $k \leq O(\log^3 n)$ bound since there is no ``hypergraph decomposition'' needed for designs as each pair of codeword bits appears in a bounded number of triples. Our proof of \cref{mthm:designs} additionally shows that because the hypergraph matchings in the design are \emph{perfect}, we can (via this work's modified approach, see \cref{rem:newmatrix}) mitigate the $\log n$ factor loss in the ``row pruning'' step. 

The sharpness of our bound for design $3$-LCCs may be somewhat surprising because removing an analogous ``last" $\log n$ factor in the hypergraph Moore bound (also proved via the Kikuchi matrix method) and related problems remains a challenging problem~\cite{GuruswamiKM22,HsiehKM23,HsiehKMMS24}. In this setting, making extra structural assumptions about the hypergraph, analogous to the additional structure of design $3$-LCCs, does not seem to help.

%%%%%%%%%%%%%%%%%%%%%%%%%%%%%%%%%%%%%%%%%%%%%%%%%%%%%%%%%%%%%%%%%%%%%%%%%%%%%%%%
%%%%%%%%%%%%%%%%%%%%%%%%%%%%%%%%%%%%%%%%%%%%%%%%%%%%%%%%%%%%%%%%%%%%%%%%%%%%%%%%
\subsection{\cref{mthm:nonlin}: Exponential lower bounds for smooth $3$-LCCs.}
In our second result, we obtain improved lower bounds for smooth $3$-LCCs with high completeness. These codes may be non-linear and may have adaptive correction algorithms.\peter{here}

A $3$-LCC is said to be $\delta$-\emph{smooth} if no codeword bit is queried with probability more than $\frac{1}{\delta n}$ on any particular invocation of the decoder. Introduced by Katz and Trevisan~\cite{KatzT00}, smooth codes provide a clean formalization of general locally correctable/decodable codes. We say that such a code has completeness $1-\epsilon$, if, when running the $\delta$-smooth local correction algorithm on an \emph{uncorrupted} codeword, the algorithm succeeds with probability at least $1-\epsilon$. Recall that the usual notion of completeness (e.g., in~\cite{KatzT00}) for LCCs is for an input with a $\delta$-fraction of corruptions.

Our result shows that for any $(1-\epsilon)$-complete $\delta$-smooth code where $\delta$ is a constant, $n \geq k^{O(1/\epsilon)}$. In particular, when $\epsilon \leq 1/\log n$, we obtain a exponential lower bound on the block length, and as $\eps$ approaches $0$ the bound becomes $n \geq 2^{\Omega(k^{1/5})}$.

\begin{mtheorem} 
\label{mthm:nonlin}
There is an absolute constant $\gamma > 0$ such that the following holds. Let $\Code \colon \Bits^k \to \Bits^n$ be a $\delta$-smooth (possibly non-linear and adaptive) 3-LCC with completeness $1 - \eps$. Then for any $\eta \in (0,1)$, it holds that $k \leq \frac{\log(1/\delta)}{\eta^4 \delta^3} \cdot O(n^{\frac{1}{r}} \log^5 n)$, where $r = \min(\floor{\frac{1 - \eta}{2 \eps}}, \log_2 n)$.

In particular, if $\delta$ is a constant and $\eps = 0$, then $k \leq O(\log^5 n)$, i.e., $n \geq 2^{\Omega(k^{1/5})}$, and if $\eps > 0$ is a small constant and $1/(2\eps)$ is not an integer, then taking $\eta = 1/\log n$ implies that $k \leq \tilde{O}(n^{2 \eps})$, i.e., $k \geq \tilde{\Omega}(k^{\frac{1}{2\eps}})$.
\end{mtheorem}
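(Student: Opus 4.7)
The plan is to reduce a $(1-\eps)$-complete $\delta$-smooth nonlinear 3-LCC to a large system of biased XOR equations on the codeword bits and then apply the Kikuchi matrix method to that system, in the spirit of~\cite{KothariM23,Yankovitz24}. First I would normalize the decoder: by standard transformations it suffices to assume the decoder is non-adaptive and, for each $u \in [k]$, samples a triple $(v_1, v_2, v_3) \sim \cD_u$ together with a function $f: \Bits^3 \to \Bits$ and outputs $f(y_{v_1}, y_{v_2}, y_{v_3})$. Smoothness bounds the per-bit query probability by $1/(\delta n)$, and completeness gives $\Pr[f(C(x)_{v_1}, C(x)_{v_2}, C(x)_{v_3}) = C(x)_u] \geq 1 - \eps$ on the uncorrupted codeword for a uniformly random $x \in \Bits^k$.

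Next I would extract biased XOR relations from each decoder. Fourier-expanding $f$ over $\{-1,1\}^3$, the correlation inequality $\E[(-1)^{f(Y)} \cdot (-1)^{C(x)_u}] \geq 1 - 2\eps$ combined with Parseval forces some subset $S \subseteq \{v_1,v_2,v_3\}$ and constant $b \in \Bits$ for which $C(x)_u \oplus \bigoplus_{i \in S} C(x)_{v_i} = b$ holds with constant bias over uniformly random codewords. Pigeonholing over the $O(1)$ choices of $(S,b)$ and discarding decoders whose Fourier mass is supported on $|S| \leq 1$ (which would reduce the code to a $2$-LCC and immediately force $n \geq 2^{\Omega(k)}$) yields, for each $u$, a large collection of biased short-XOR relations on codeword bits.

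The chain construction then amplifies these into long ``chain XOR'' equations by iterated substitution. Starting from a biased XOR expressing $C(x)_u$ as the parity of $O(1)$ codeword bits, I would repeatedly replace a right-hand side bit by its own biased XOR expansion; after $r$ such substitutions this produces a chain XOR equation on $O(r)$ codeword bits. A careful hybrid argument shows that each substitution degrades bias by a factor of only $(1 - O(\eps))$, so after $r$ steps the composite equation retains bias at least $(1 - O(\eps))^r$. Requiring this to stay above the $1/\polylog(n)$ threshold needed by the spectral step downstream limits the chain length to $r \leq \floor{(1-\eta)/(2\eps)}$, exactly the truncation in the statement; the slack parameter $\eta$ controls how much of the bias budget is spent. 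When $\eps = 0$ there is no bias decay and the chain can run all the way up to $r = \log_2 n$, recovering the exponential bound.

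Finally I would apply the Kikuchi matrix method of~\cite{KothariM23,Yankovitz24} to the chain XOR system: build a matrix indexed by $r$-subsets of $[n]$ whose entries encode the chain equations, run the hypergraph decomposition and row pruning steps to regularize, and conclude by a spectral refutation argument; these steps together account for the $\log^5 n$ factor and the $\log(1/\delta)/(\eta^4 \delta^3)$ prefactor in the stated bound. I expect the main obstacle to lie in Steps 2 and 3: producing useful biased XORs from genuinely nonlinear (and adaptive) Boolean decoders, and, more delicately, showing that chain substitution costs only a $(1 - O(\eps))$ bias factor per step rather than a fixed constant factor. The latter is essential because a constant loss per step would limit chains to $O(1)$ length and collapse the bound to a polynomial one, whereas the $(1-O(\eps))$ rate is precisely what makes the bound interpolate between the $\eps=0$ case (Kikuchi order $\log n$) and the $\eps > 0$ case (Kikuchi order $1/\eps$).
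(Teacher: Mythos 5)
Your first step, "normalize the decoder to be non-adaptive via standard transformations," is precisely the move the paper identifies as fatal. The Katz--Trevisan reduction to a nonadaptive decoder simulates the adaptive decoder by feeding it random answers; this takes completeness $1-\eps$ to $\tfrac12 + \tfrac{1-\eps}{8}$, which is only a \emph{constant} advantage regardless of how small $\eps$ is. Once each basic constraint has only constant bias $c < 1$ over random codewords, your chain of length $r$ has bias $\leq c^{r}$, which is exponentially small for $r = \omega(1)$. At that point the spectral refutation gives nothing beyond a polynomial bound, and the whole interpolation between $\eps = 0$ and $\eps > 0$ collapses. Your Step 3 then postulates a $(1 - O(\eps))$ decay per substitution, but that cannot be recovered after the lossy nonadaptive reduction; the $(1-O(\eps))$ rate has to be threaded through from the \emph{original} adaptive decoder, which is exactly why the paper builds a from-scratch reduction that never makes the decoder nonadaptive.

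The paper's actual route is quite different: it represents the adaptive decoder for index $u$ \emph{exactly} as a degree-$\leq 3$ polynomial built from $\AND$ constraints of the form ``$x_{v_1}=a_1 \wedge x_{v_2}=a_2 \implies x_u = \pm x_{v_3}$,'' with weights equal to the decoder's query probabilities. Chains are then formed adaptively, replacing the third query by a recursive invocation of the decoder, and the polynomial for the length-$r$ chain computes $\E[\Dec_r^x(u)\, x_u]$ exactly. A union bound over the $r$ invocations gives an \emph{additive} loss $1 - 2(r+1)\eps$, not the multiplicative $(1-O(\eps))^r$ you hope for; the paper explicitly discusses why the multiplicative version is out of reach, since $\E[x_u \Dec^x(u)]$ depends jointly on $x$ and $u$ and the reweightings cannot be controlled term by term. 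There is also a weight-bookkeeping issue you do not address: an $\AND$-constraint carries total weight $4$, so a na\"ive chain of length $r$ has weight $4^r$, which only yields a quasipolynomial bound. The paper avoids this by splitting the decoder polynomial into its homogeneous degree-$3$ part (weight $\leq 1$) and its degree-$\leq 2$ remainder, chaining only through the degree-$3$ part, and refuting the "early-stop" degree-$\leq 2$ residuals $\Phi^{(t)}_b$ separately; this keeps the total weight linear in $r$. Finally, your "discard decoders with Fourier support on $|S| \leq 1$" step is not sound — a decoder can put partial Fourier mass on small sets without the code being a $2$-LCC, and in any case the paper handles those "graph-tailed" terms head-on rather than discarding them.
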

As we shall discuss towards at the end of this section, \cref{mthm:nonlin} implies a lower bound for general $(3, \delta, \eps)$-LCCs that beats the prior best $n \geq \tilde{\Omega}(k^3)$ lower bound of~\cite{AlrabiahGKM23} by a polynomial factor when $\eps$ is a small constant. Moreover, in the case of near-perfect completeness, our result above obtains the first exponential lower bound for (possibly adaptive and non-linear) smooth $3$-LCCs. 

Our proof is based on the method of spectral refutation via Kikuchi matrices (first introduced in~\cite{WeinAM19} for an application to Gaussian tensor PCA and refutation of random $k$-XOR instances of even arity) developed in prior works~\cite{GuruswamiKM22,HsiehKM23,AlrabiahGKM23,KothariM23}. The key idea in this method is to associate the existence of a combinatorial object (e.g., a $3$-LCC) to the satisfiability of a family of XOR formulas and find a spectral \emph{refutation} (i.e., certificate of unsatisfiability) for a randomly chosen member of the family. 
Unlike the works of~\cite{KothariM23,Yankovitz24}, which only prove lower bounds for linear codes with an argument that can be reformulated to be entirely combinatorial, the proof of \cref{mthm:nonlin} crucially uses the power of spectral refutation.% to refute arbitrary XOR formulas of low value.

The proof of \cref{mthm:nonlin} requires new conceptual ideas. The first immediate observation is that the standard reduction of~\cite{KatzT00} to nonadaptive, linear decoders that succeeds in expectation loses a large factor in the completeness parameter. So, to prove \cref{mthm:nonlin} we need to come up with a new reduction \emph{from scratch}. Our reduction is based on two new key ideas. First, we \emph{exactly} encode the behavior of the LCC decoder on a particular input index $u$ as a degree $\leq 3$ polynomial. Effectively, the constraints we uncover from the decoder are ``$\AND$'' constraints:\begin{equation*}
``x_{v_1} = a_1 \wedge x_{v_2} = a_2 \implies x_u = x_{v_3} \mcom''
\end{equation*}
rather than the linear constraints $x_{v_1} + x_{v_2} + x_{v_3} = x_u$ encountered for linear codes. 
Second, we execute the ``long chain derivation'' strategy of~\cite{KothariM23} by forming chains by (1) first splitting the degree $\leq 3$ polynomial that represents the decoder into the homogeneous degree $3$ component and the degree $\leq 2$ component, and then (2) forming chains by replacing the third query $v_3$ of the decoder in the ``homogeneous degree $3$ component'' by an invocation of the decoding algorithm for $v_3$ (and then iterating); we call such chains ``adaptive chains''. \peter{here}
Combining these two ideas allows us to write a ``chain polynomial'' that plays the role of the ``long chain derivations'' in~\cite{KothariM23}. Refuting this ``chain polynomial'' using spectral bounds on Kikuchi matrices yields \cref{mthm:nonlin}.

\parhead{Smooth vs.\ general LCCs} Smooth LCCs (\cref{def:smoothLCC}) were defined in the work of~\cite{KatzT00}, motivated by their connection to general LCCs (\cref{def:LCC}). A simple reduction in~\cite{KatzT00} shows that any $(3, \delta,\eps)$-LCC, i.e., an LCC with distance $\delta$ and completeness $1 - \eps$, can be turned into a $(3, \delta/3, \eps)$-smooth LCC, i.e., a $\delta/3$-smooth $3$-LCC with completeness $1 - \eps$. Conversely, any $(3, \delta, \eps)$-smooth LCC is a $(3, \eta \delta, \eps + \eta)$-LCC for any $\eta > 0$.

Thus, when $\eps$ is a small constant, \cref{mthm:nonlin} implies a lower bound for general $(3, \delta, \eps)$-LCCs that beats the prior best $n \geq \tilde{\Omega}(k^3)$ lower bound of~\cite{AlrabiahGKM23} by a polynomial factor.

However, in the setting of perfect completeness (and $\eps = o(1)$ more generally), the comparison between smooth LCCs and general LCCs begins to break down. This is because, for a general LCC, $\delta$ is the fraction of errors one can tolerate while still decoding correctly with probability $1 - \eps$; the parameters $\delta$ and $\eps$ are coupled! In particular, it is likely not possible to simultaneously have $\eps = o(1)$, $\delta = O(1)$ and $q = O(1)$. On the other hand, for a smooth LCC, $\delta$ is the smoothness parameter, and $1 - \eps$ is the probability that the decoder succeeds \emph{on an uncorrupted codeword}. Thus, for smooth codes, it is perfectly sensible to set $\delta = O(1)$, $\eps = 0$, and $q = O(1)$.

In retrospect, the definition of LCCs inherently couples $\delta$ and the completeness $\eps$, whereas for smooth codes these parameters become independent. In particular, a smooth code allows us to seamlessly trade off between the fraction of errors $\eta \delta$ tolerated and the success probability $1 - \eps - \eta$ of the decoder in the presence of this fraction of errors. For this reason, a smooth code is a stronger object, but also perhaps a more natural one. 

Indeed, in some important applications of LDCs/LCCs, smooth LDCs/LCCs are the right notion to consider. For example, a \emph{perfectly smooth} $(q, 1, \eps)$-smooth LDC gives a $q$-server information-theoretically secure private information retrieval scheme with completeness $1 - \eps$.

The subtle definitional issues above did not affect prior lower bound (or upper bound) techniques. Indeed, known constructions of $q$-LDCs and LCCs are \emph{perfectly} smooth and satisfy \emph{perfect} completeness, i.e., $(q, 1, 0)$-smooth LDCs/LCCs, and the lower bound techniques of~\cite{KatzT00, KerenidisW04, AlrabiahGKM23} (that is, the best known lower bounds before the work~\cite{KothariM23}) succeed for smooth LDCs/LCCs even with \emph{low} completeness.

\parhead{Concurrent work.} In concurrent and independent work, \cite{AlrabiahG24} proves an $n \geq 2^{\Omega(\sqrt{k/\log k})}$ lower bound for all linear $3$-LCCs over $\F_2$, improving on the $2^{\Omega(k^{1/4})}$ shown in \cite{Yankovitz24}. This is incomparable to \cref{mthm:designs}, as it proves a weaker (and possibly not tight) lower bound, as compared to the sharp statement in \cref{mthm:designs}, but it applies for all linear $3$-LCCs over $\F_2$, not just design $3$-LCCs. The work of~\cite{AlrabiahG24} does not prove any lower bound for nonlinear codes.

%%%%%%%%%%%%%%%%%%%%%%%%%%%%%%%%%%%%%%%%%%%%%%%%%%%%%%%%%%%%%%%%%%%%%%%%%%%%%%%%
%%%%%%%%%%%%%%%%%%%%%%%%%%%%%%%%%%%%%%%%%%%%%%%%%%%%%%%%%%%%%%%%%%%%%%%%%%%%%%%%
%%%%%%%%%%%%%%%%%%%%%%%%%%%%%%%%%%%%%%%%%%%%%%%%%%%%%%%%%%%%%%%%%%%%%%%%%%%%%%%%

%%%%%%%%%%%%%%%%%%%%%%%%%%%%%%%%%%%%%%%%%%%%%%%%%%%%%%%%%%%%%%%%%%%%%%%%%%%%%%%%
%%%%%%%%%%%%%%%%%%%%%%%%%%%%%%%%%%%%%%%%%%%%%%%%%%%%%%%%%%%%%%%%%%%%%%%%%%%%%%%%
%%%%%%%%%%%%%%%%%%%%%%%%%%%%%%%%%%%%%%%%%%%%%%%%%%%%%%%%%%%%%%%%%%%%%%%%%%%%%%%%

\section{Proof Overview}
\label{sec:techniques}
In this section, we summarize the key conceptual ideas that we use in the proofs. We start by recalling the approach of~\cite{KothariM23} for proving lower bounds for \emph{linear} $3$-LCCs. Then, we explain the technical barriers to proving \cref{mthm:designs} encountered in the works of~\cite{KothariM23,Yankovitz24}. Finally, we discuss our approach to handling the nonlinear case.

%%%%%%%%%%%%%%%%%%%%%%%%%%%%%%%%%%%%%%%%%%%%%%%%%%%%%%%%%%%%%%%%%%%%%%%%%%%%%%%%
%%%%%%%%%%%%%%%%%%%%%%%%%%%%%%%%%%%%%%%%%%%%%%%%%%%%%%%%%%%%%%%%%%%%%%%%%%%%%%%%
\subsection{The approach of~\cite{KothariM23}}
The proof of~\cite{KothariM23} gives a transformation that takes any linear $3$-LCC $\Lincode \colon \Bits^k \to \Bits^n$ and turns it into a $2$-LDC $\Lincode' \colon \Bits^k \to \Bits^N$, where $N = n^{O(\polylog(n))}$. By applying known $2$-LDC lower bounds (\cref{fact:2ldclb}), we can then conclude that $k \leq O(\polylog(n)) \cdot \log n$. Obtaining better lower bounds thus boils down to optimizing the $\polylog(n)$ factor here and/or removing the extra $\log n$ factor from \cref{fact:2ldclb}.

For intuition, let us think of $N$ as $N = {n \choose s}^r$ for some choice of parameters $s$ and $r$, and $\Lincode'$ as the very simple transformation: for sets $(S_1, \dots, S_r)$, each in ${[n] \choose s}$, we set $\Lincode'(b)_{(S_1, \dots, S_r)} = \sum_{h = 1}^r \sum_{v \in S_h} x_v$, where $x = \Lincode(b)$. Namely, we just take the XOR of the bits across all the sets. If we can show that $\Lincode'$ is indeed a $2$-LDC, then we can apply known $2$-LDC lower bounds (\cref{fact:2ldclb}) to conclude that $k \leq O(\log N) = O(r s \log n)$. If we can then take $rs = O(\log n)$, or $r = s = O(\log n)$ while removing the extra $\log n$ factor from \cref{fact:2ldclb}, we will get an $O(\log^2 n)$ bound, i.e, $n \geq 2^{\Omega(\sqrt{k})}$.

Recall that in a linear $3$-LCC, we are given $3$-uniform hypergraph matchings $\{H_u\}_{u \in [n]}$, such that for each $C \in H_u$, $\sum_{v \in C} x_v = x_u$ for all codewords $x \in \Lincode$. To show that $\Lincode'$ is a $2$-LDC, we need to find, for each $i \in [k]$, many pairs of vertices $((S_1, \dots, S_r), (T_1, \dots, T_r))$ such that $\sum_{h = 1}^r \sum_{v \in S_h} x_v + \sum_{h = 1}^r \sum_{v \in T_h} x_v = b_i$ for all $x = \Lincode(b)$. The key idea of \cite{KothariM23} is to build many such constraints by building \emph{long chain derivations} out of the original constraints $H_u$.

\begin{definition}[$r$-chains]
\label{def:chainintro}
Let $H_1, \dots, H_n$ be the $3$-uniform hypergraph matchings defined from the $3$-LCC $\Lincode$. An $r$-chain with \emph{head $u_0$} is an ordered sequence of vertices of length $3r + 1$, given by $C = (u_0, v_1, v_2, u_1, v_3, v_4, u_2, \dots, v_{2(r-1) + 1}, v_{2(r-1) + 2}, u_r)$ and for each $h = 0, \dots, r-1$, it holds that $\{v_{2h + 1}, v_{2h + 2}, u_{h+1}\} \in H_{u_h}$. We let $\cH^{(r)}_u$ denote the set of $r$-chains with head $u$.

We let $C_L = (v_1, v_3, v_5, \dots, v_{2(r-1) + 1})$ denote the ``left half'' of the chain, and $C_R = (v_2, v_4, v_6, \dots, v_{2(r-1) + 2})$ denote the ``right half''. We call $u_r$ the ``tail''.

The number of chains $\abs{\cH_u^{(r)}}$ is at most $(6 \delta n)^r$.
\end{definition}
The $2$-chains can be interpreted as deriving constraints by taking two constraints $x_{v_1} + x_{v_2} +  x_{u_1} = x_{u_0}$ in $H_{u_0}$ and $x_{v_3} + x_{v_4} +  x_{u_2} = x_{u_1}$ in $H_{u_1}$, and then adding them together to produce the constraint $x_{v_1} + x_{v_2} +  x_{v_3} + x_{v_4} +  x_{u_2} = x_{u_0}$; the set of $r$-chains is formed by repeating this operation. We have $(6 \delta n)^r$ chains in $\cH_u^{(r)}$ because there are $6 \delta n$ \emph{ordered} hyperedges in $H_u$.

The next idea of~\cite{KothariM23} is to use a Kikuchi graph to (1) define $N$ and the map $\Lincode'$, and (2) define the $2$-LDC decoding constraints for $\Lincode'$. For this overview, we will start with the following graph due to~\cite{Yankovitz24}, which is a bit simpler than the graphs used in~\cite{KothariM23} as it saves a use of the Cauchy-Schwarz inequality.

\begin{definition}[Imbalanced Kikuchi graph]
\label{def:kikuchiintro}
Let $s$ be a parameter, and let $G_u$ be the graph with left vertex set $L = {[n] \choose s}^r \times [n]$ and right vertex set $R = {[n] \choose s}^r$. For a chain $C \in \cH_u^{(r)}$, we add an edge between $((S_1, \dots, S_r),w)$ and $(T_1, \dots, T_r)$ in $G_u$ ``labeled'' by $C$ if $w = u_r$ and for each $h = 1, \dots, r$, we have $S_h = \{v_{2(h-1)+1}\} \cup U_h$ 
and $T_h = \{v_{2(h-1) + 2}\} \cup U_h$
for some $U_h \subseteq [n] \setminus \{v_{2(h-1)+1}, v_{2(h-1)+2}\}$
of size $s - 1$. Two distinct chains may produce the same edge --- we add edges with multiplicity.
\end{definition}
To show that $\Lincode'$, defined now as a map from $\Bits^k \to \Bits^{L \cup R}$ in an analogous way, is a $2$-LDC, we need to show that for each $u$, $G_u$ admits a large matching. An obvious barrier to this is that the graph is bipartite and imbalanced, and so the largest matching can only have size at most $\abs{R} = \abs{L}/n$. This be fixed with a simple trick: for each right vertex $(T_1, \dots, T_r)$, we can add $n$ copies of the vertex to the graph and then split its edges evenly across the copies, thereby decreasing the degree by a factor of $n$.\footnote{Technically, the degree might not be divisible exactly by $n$, so reduces the degree by a factor of $n(1-o(1))$, which is sufficient.}

\parhead{Extracting a large matching from $G_u$.} Let us now explain the approach of~\cite{KothariM23} to show that $G_u$ admits a large matching. We note that this is the \emph{key technical difficulty} in the proof of~\cite{KothariM23}, and in some sense, this has to be the difficult step because it will prove that $\Lincode'$ is a $2$-LDC! 

Let $d_{u,L}$ denote the average left degree of $G_u$, and let $d_{u,R}$ denote the average right degree. For $G_u$ to have a large matching, it should, at the very least, have at least $\abs{L} = n {n \choose t}^r$ edges! 

Some simple combinatorics shows that each chain $C \in \cH_u^{(r)}$ contributes exactly ${n - 2 \choose s - 1}^r$ edges to the graph $G_u$. Therefore, as long as we have
\begin{flalign*}
d_{u,L} = \abs{\cH_u^{(r)}} \frac{{n - 2 \choose s - 1}^r}{n {n \choose s}^r} = (1 \pm o(1)) (6 \delta n)^r \frac{1}{n} \left(\frac{s}{n}\right)^{r} = (1 \pm o(1)) (6 \delta s)^r \frac{1}{n} \gg 1 \mcom
\end{flalign*}
then we can hope to find a large matching. Note that for this to hold, we \emph{must} set $r = O(\log n/\log s)$.

One simple way to find a matching in $G_u$ is to argue that $G_u$ is approximately regular, meaning that most vertices have degree $\leq O(d_{u,L})$. If this were the case, then (after the ``vertex splitting trick'') we get a matching of size $\abs{E(G_u)}{O(d_{u,L})} \geq \Omega(\abs{L})$, which would finish the proof. Unfortunately, this is not true: there can be left (right) vertices in $G_u$ of degree $\gg d_{u,L}$ ($\gg d_{u,R}$). The ``row pruning'' strategy of~\cite{KothariM23} is to show that such vertices are rare so that by removing them we obtain a graph $G'_u$ with $\Omega(\abs{E(G_u)})$ edges that has bounded left degree $\leq O(d_{u,L})$ and bounded right degree $\leq O(d_{u,R})$.

More formally, the proof of~\cite{KothariM23} uses a form of a Kim--Vu concentration inequality~\cite{KimV00, SchudyS12} for polynomials to argue that, with high probability, a \emph{random} left (right) vertex has degree at most $O(d_{u,L})$ ($O(d_{u,R})$), which finishes the proof. This is the key technical ``row pruning'' step in the proof, and the proof uses the moment method with high moments. Unfortunately, to prove this,~\cite{KothariM23} requires $s = O(r^3 \log n)$, which loses several extra $\log n$ factors.

The clever trick of~\cite{Yankovitz24}, when phrased in the language of probability, can be interpreted as follows: we can achieve the same goals by only bounding the \emph{second} moments of the degrees instead of the higher moments. This is similar in spirit to the ``edge deletion'' technique of~\cite{HsiehKM23}, which used a similar argument to remove some of the $\log n$ factors from a different ``row pruning'' argument of~\cite{GuruswamiKM22} that used Kim--Vu concentration inequalities~\cite{KimV00, SchudyS12} in the context of CSP refutation.

Specifically, if $\deg_{u,L}(S_1, \dots, S_r, w)$ is the left degree of the vertex $(S_1, \dots, S_r, w)$ and $\deg_{u,R}(T_1, \dots, T_r)$ is the right degree of $(T_1, \dots, T_r)$, then~\cite{Yankovitz24} shows that 
\begin{flalign*}
&\E_{S_1, \dots, S_r, w}[\deg_{u,L}(S_1, \dots, S_r, w)^2] \leq (1 + o(1)) d_{u,L}^2 \mcom\\
&\E_{T_1, \dots, T_r}[\deg_{u,R}(T_1, \dots, T_r)^2] \leq (1 + o(1)) d_{u,R}^2 \mcom
\end{flalign*}
where $d_{u,L}$ and $d_{u,R}$ are the first moments, and we only need $s = \Gamma r$ and $r = O(\log n)$, where $\Gamma$ is a large enough constant, for this to hold.

Thus, applying Chebyshev's inequality, we can show that after removing a small number of vertices, we can find a subgraph $G'_u$ of $G_u$ where each vertex has left degree $\leq (1+o(1)) d_{u,L}$ and right degree $\leq (1+o(1))d_{u,R}$, which, after a few more straightforward steps, finishes the proof.

In total, the final lower bound is $k \leq O(rs \log n) = O(\log^3 n)$, as we have set $r, s = O(\log n)$.

\begin{remark}
This sketches the proof of Theorem 1.6 in~\cite{Yankovitz24} for design $3$-LCCs. We note that the reason~\cite{Yankovitz24} obtains a weaker $k \leq O(\log^4 n)$ bound for linear $3$-LCCs is because a general $3$-LCC can have ``heavy pairs'' --- pairs of variables $(v_1, v_2)$ that appear in many hyperedges across all the $H_u$'s --- and this loses the extra $\log n$ factor. Indeed, overcoming this issue to produce \emph{any} lower bound better than the $k \leq \tilde{O}(n^{1/3})$ of~\cite{AlrabiahGKM23} is one of two key technical difficulties in~\cite{KothariM23} (the above row pruning argument is the other one). However, as \cref{mthm:designs} is only for designs, this issue does not arise.
\end{remark}

%%%%%%%%%%%%%%%%%%%%%%%%%%%%%%%%%%%%%%%%%%%%%%%%%%%%%%%%%%%%%%%%%%%%%%%%%%%%%%%%
%%%%%%%%%%%%%%%%%%%%%%%%%%%%%%%%%%%%%%%%%%%%%%%%%%%%%%%%%%%%%%%%%%%%%%%%%%%%%%%%
\subsection{Tight bounds for designs: proof sketch of \cref{mthm:designs}}
\label{sec:techsdesigns}
To beat the $O(\log^3 n)$ of~\cite{Yankovitz24} for designs and get $O(\log^2 n)$, we need to find a $\log n$ factor to remove. At the very least, we know  we cannot hope to take $r$ much smaller than $\log n$, as we need $s^r \gg n$ for the entire approach to even make sense. So, there are two possibilities: either we can take $s = O(1)$, or the $O(r s \log n)$ bound coming from \cref{fact:2ldclb} is not tight for the $2$-LDC that we produce, and the truth is really $O(r s)$. Let us now investigate the first case, as if we could take $s = O(1)$ this would be the easiest route towards proving \cref{mthm:designs}.

\parhead{Second moments of degrees are large for $s \ll r$.} Unfortunately, as we shall show, we cannot take $s = O(1)$, or even $s$ much smaller than $r$, and still have the following moment bounds:
\begin{flalign*}
&\E_{S_1, \dots, S_r, w}[\deg_{u,L}(S_1, \dots, S_r, w)^2] \leq O(d_{u,L}^2) \mcom\\
&\E_{T_1, \dots, T_r}[\deg_{u,R}(T_1, \dots, T_r)^2] \leq O(d_{u,R}^2) \mper
\end{flalign*}
Indeed, let us compute $\E_{T_1, \dots, T_r}[\deg_{u,R}(T_1, \dots, T_r)^2]$. Recall that we will compare it to $d_{u,R}^2$, and we have already computed $d_{u,R} = n \cdot d_{u,L} = (1 \pm o(1)) (6 \delta s)^r$.
To do the computation, we will need to use the number of pairs of chains $C, C' \in \cH_u^{(r)}$ with $\abs{C_R \cap C'_R} = t$ is at most ${r \choose t} \cdot 2^{2r} (3 \delta n)^{2r - t}$. Let us denote $C_R = (v_2, v_4, v_6, \dots, v_{2(r-1) + 2})$ and $C'_R = (v'_2, v'_4, v'_6, \dots, v'_{2(r-1) + 2})$
\begin{flalign*}
&\E_{T_1, \dots, T_r}[\deg_{u,R}(T_1, \dots, T_r)^2] \leq \sum_{t = 0}^r \sum_{C,C' \in \cH_u^{(r)} : \abs{C_R \cap C'_R} = t} \Pr[v_{2h +2}, v'_{2h+2} \in T_{h+1} \ \forall h \in \{0, \dots, r-1\}] \\
&\leq \sum_{t = 0}^r \sum_{C,C' \in \cH_u^{(r)} : \abs{C_R \cap C'_R} = t} \frac{ {n \choose s - 1}^t {n \choose s - 2}^{r - t}}{ {n \choose s}^{t}} \ \text{(if $v_{2h+2} = v'_{2h+2}$, then $T_h$ has $\leq {n \choose s-1}$ choices, else $\leq {n \choose s - 2}$ choices)} \\
&\leq \sum_{t = 0}^r \sum_{C,C' \in \cH_u^{(r)} : \abs{C_R \cap C'_R} = t} (1+o(1)) \left(\frac{s}{n}\right)^t \left(\frac{s}{n}\right)^{2r - 2t} \ \ \text{(by binomial coefficient estimates)} \\
&\leq (1 + o(1)) \sum_{t = 0}^r {r \choose t} \cdot 2^{2r} (3 \delta n)^{2r - t} \left(\frac{s}{n}\right)^{2r - t} \ \ \text{(from the bound on number of pairs $(C,C')$)} \\
&\leq (1 + o(1)) (6 \delta s)^{2r} \sum_{t = 0}^r {r \choose t} (3 \delta s)^{- t} \\
&\leq (1 + o(1)) d_{u,R}^2 \sum_{t = 0}^r {r \choose t} (3 \delta s)^{- t} \mper
\end{flalign*}
The problem with the moment bound is now readily apparent. For small $t$, ${r \choose t}$ is roughly $r^t$, and so we need $s \gg r/3\delta$ so that $(3 \delta s)^{-t} r^t \ll 1$. Hence, if we take $s = o(r)$, the second moment is large. 

Still, one may wonder if our estimate on the second moment here is tight. Perhaps it is simply that our bound on the moment is large, but the true second moment is not. It turns out that when the $H_u$'s are near-perfect matchings (which is the case for design $3$-LCCs!), i.e., $3 \delta = 1 - o(1)$, then this bound is tight up to a $1 + o(1)$ factor. This implies that the right degrees truly have high variance (a similar calculation shows this for left degree also). So, it is unlikely that one can find a subgraph $G'_u$ with $\Omega(\abs{E(G_u)})$ edges and, say, right degree bounded by $O(d_{u,R})$.

\begin{remark}
\label{rem:newmatrix}
In \cref{mthm:designs}, our notion of designs requires the matchings $H_u$ to be perfect matchings, which is a stronger definition from the one used in~\cite{Yankovitz24}. One might be worried that our improvement in the lower bound for designs is therefore primarily due to the initial assumption being a bit stronger, rather than for any real technical improvements. The above observation that the second moment bound is tight for perfect matchings not only shows that this is not the case, but also that perfect matchings are \emph{the} case where the second moments are too large.
\end{remark}

It is still potentially possible that the graph $G_u$ admits a large matching, even though the second moments are large. While we have not formally ruled this out, the second moment calculation informally tells us that one probably needs a substantially different approach to prove this.

\parhead{Removing a $\log n$ factor from \cref{fact:2ldclb}?} We have argued that we cannot take $s = O(1)$ so that $O(r s \log^2 n)$. What about the other approach, where we try to shave off the extra $\log n$ coming from \cref{fact:2ldclb} to get a bound of $O(rs)$? This is not something that can be done \emph{generically} for all $2$-LDCs, as of course the Hadamard code is a $2$-LDC with $k = \log_2 n$.

Shaving this $\log n$ factor is closely related to removing the $\log n$ factor from Matrix Khintchine (\cref{fact:matrixkhintchine}), a task studied in many different contexts~\cite{BatsonSS14,MarcusSS15,BansalJM23}.

One such example is the hypergraph Moore bound: the task of showing that a $q$-uniform hypergraph on $n$ vertices with $(n/\ell)^{q/2} \ell$ must have a cycle (also called an even cover) of length $O(\ell \log(n/\ell))$. The best bound for this problem is due to the methods of~\cite{GuruswamiKM22,HsiehKM23}, which uses Kikuchi graphs similar to \cref{def:kikuchiintro} to show the existence of a length $O(\ell \log n)$ cycle when the hypergraph has $(n/\ell)^{q/2} \ell \cdot \log n$ hyperedges, a $\log n$ factor larger than the conjectured threshold. A more complicated argument manages to reduce the $\log n$ factor to $(\log n)^{\frac{1}{q+1}}$ when $q$ is odd~\cite{HsiehKMMS24}, and this proof requires a rather technical modification of the Kikuchi graph.% in rather technical and involved ways.

One might thus naturally surmise that a rather complicated modification of the graphs $G_u$, perhaps along the lines of~\cite{HsiehKMMS24}, is necessary for us to obtain a sharp lower bound via the Kikuchi matrix method. 

\parhead{Our new Kikuchi graph.} To our surprise, and perhaps contrary to the intuition developed above, it turns out that the following \emph{simple} modification of the graphs succeeds: instead of indexing the vertices by collections of sets $(S_1, \dots, S_r)$ or $(T_1, \dots, T_r)$, each of size $s$, we index by ``big'' sets $S$ and $T$ of size $\ell$ (\cref{def:designkikuchi}). This graph is essentially the same as the previous graph if we $\ell = r s$ --- the advantage of this new graph is that if we take $s = O(1)$, then the sets have size $\ell = O(r)$, which is still large, in contrast to the previous graph where the constituent sets $S_h$ would only have size $s = O(1)$. We shall call this new graph the ``uncolored graph'', as we think of the graphs in \cref{def:kikuchiintro} as consisting of a big set $S = S_1 \cup \dots \cup S_r$, where $S_h$ contains vertices from $[n]$ with \emph{color} $h$, which makes the $S_h$'s disjoint by fiat as they use different colors.

In some sense, the above graph is \emph{more} natural than even the ones appearing in~\cite{KothariM23,Yankovitz24}. Of course, the reason those works use the colored graph rather than the uncolored one is that the colored graph makes some of the combinatorial analysis more tame! Indeed, this is because the $r$ colors specify which vertices should appear in the $r$ ``links'' of the chain, namely, vertices that are in $S_h$ appear in the ``$h$-th link'', and correspond to choices only of $v_{2(h-1) + 1}$. One can note (as observed in \cref{def:designkikuchi}) that the uncolored graph does not have edges for all chains $C \in \cH_u^{(r)}$. Indeed, if the left half $C_L$ or the right half $C_R$ contains duplicate vertices, then $C$ does not contribute \emph{any} edges! And, if they share vertices, then they contribute many more edges than a chain where all vertices in $C_L$,$C_R$ are distinct.
Fortunately, as we observe in \cref{def:designchains}, there are at least $(1 - o(1))\abs{\cH_u^{(r)}}$ chains such that the vertices in $C_L$ and $C_R$ are distinct, so we can ignore these issues by working with this large subset of chains.

In spite of these technical challenges, we can make the analysis work for the uncolored graph. This allows us to take $\ell = 2r$ for $r = \frac{1}{2} \log_2 n + O(\log \log n)$, and gives us a final bound of $k \leq (1+o(1))2r \log_2 n$, i.e., $((1-o(1))k)^2 \leq \log_2 n$, which gives us \cref{mthm:designs}. We note that to get the sharp constant in \cref{mthm:designs}, we have to show that $G_u$ admits a \emph{near-perfect} matching.

The calculations involved here are sensitive and require sharp bounds on binomial coefficients. To give the reader a sense of how precise these bounds are, we observe that our proof shows that
\begin{flalign*}
&\E_{(S,w)}[\deg_{u,L}(S,w)^2] = (1 \pm o(1)) d_{u,L}^2\cdot  (3 \delta)^{-1} \sum_{t = 0}^{r-1}   (3 \delta)^{-t} \frac{{r \choose t} {\ell - r \choose r - t}}{{\ell \choose r}} \mcom \\
&\E_{T}[\deg_{u,R}(T)^2] = (1 \pm o(1))  d_R^2  \sum_{t = 0}^r (3 \delta)^{-t} \frac{{r \choose t}{\ell - r \choose r - t}}{{\ell \choose r}} \mper
\end{flalign*}
How do we bound the sum $\sum_{t = 0}^r (3 \delta)^{-t} \frac{{r \choose t}{\ell - r \choose r - t}}{{\ell \choose r}}$? First of all, $ \frac{{r \choose t}{\ell - r \choose r - t}}{{\ell \choose r}}$ is the probability mass function of a hypergeometric distribution, and so $\sum_{t = 0}^r \frac{{r \choose t}{\ell - r \choose r - t}}{{\ell \choose r}} = 1$. Note that the mean of this distribution is $r^2/\ell$ and it has good concentration, so we should expect to lose a factor of $(3 \delta)^{-t^*}$ where $t^* = r^2/\ell$.

In particular, if $3 \delta$ is bounded away from $1$, say, $1/2$, then we need to take $\ell = O(r^2)$ to mitigate this factor, and we so get \emph{no improvement}. But, if we have a \emph{design}, then $3 \delta = 1 - \frac{1}{n}$, and so $(3 \delta)^{-r}$ is only a $1 + o(1)$ factor and thus does not matter!

Finally, we note that since $\sum_{t = 0}^r \frac{{r \choose t}{\ell - r \choose r - t}}{{\ell \choose r}} = 1$, we do need to be precise in our estimates above. In particular, standard estimates on binomial coefficients such as ${\ell \choose r} \geq \left(\frac{\ell}{r} \right)^r$ are insufficient.

We give the full proof of \cref{mthm:designs} in \cref{sec:designs}.

%%%%%%%%%%%%%%%%%%%%%%%%%%%%%%%%%%%%%%%%%%%%%%%%%%%%%%%%%%%%%%%%%%%%%%%%%%%%%%%%
%%%%%%%%%%%%%%%%%%%%%%%%%%%%%%%%%%%%%%%%%%%%%%%%%%%%%%%%%%%%%%%%%%%%%%%%%%%%%%%%
\subsection{Exponential lower bounds for nonlinear smooth $3$-LCCs}
\label{sec:techsnonlin}
\peter{here}
We now explain the key ideas in the proof of \cref{mthm:nonlin}. In this section, we will let $\Code \colon \Fits^k \to \Fits^n$ be a nonlinear code, namely we will use $\Fits$-notation rather than $\Bits$-notation, as it is more convenient for the proof.

\parhead{Existing reductions do not work with the long chain derivation method of~\cite{KothariM23}.}
The standard starting point for lower bounds for nonlinear $q$-LDCs or $q$-LCCs is a reduction from the original work of Katz and Trevisan~\cite{KatzT00}. For $3$-LCCs, this reduction takes any $\delta$-smooth code $\Code$ with completeness even as low as $\frac{1}{2} + \eta$ and outputs $3$-uniform hypergraph matchings $H_u$ for $u \in [n]$ with the following property: for every $u \in [n]$ and hyperedge $C \in H_u$, $\E_{x}[x_u \prod_{v \in C} x_v] \geq \Omega(\eta)$, where the expectation is over a uniformly random codeword $x \in \Code$. That is to say, \emph{every} hyperedge decodes correctly with some constant advantage \emph{in expectation} over a random codeword.

We can now form chains (\cref{def:chainintro}) on the hypergraphs $H_u$. This gives us a ``chain XOR instance'' $\Psi_u(x)$ defined as $\sum_{C \in \cH_u^{(r)}} x_u x_w \prod_{v \in C_L} x_v \prod_{v \in C_R} x_v$, where $C_L$ and $C_R$ are the left and right halves of the chain $C$ and $w$ is the tail of $C$. Unlike in the linear case, it is not guaranteed that these equations are all satisifiable, and so the ``reduction-based approach of~\cite{KothariM23,Yankovitz24} to $2$-LDCs breaks down.\footnote{We note here that one of $\log n$ factors saved by~\cite{Yankovitz24} is in optimizing the ``hypergraph decomposition'' step in~\cite{KothariM23}, but this optimization is specific to linear codes. In this paper, we give a slightly tighter analysis of the decomposition in~\cite{KothariM23} that also saves this $\log n$ factor, and additionally generalizes to the nonlinear setting.} However, the ``spectral refutation approach'' of~\cite{KothariM23} is sufficiently general and resilient enough that, if we could show that $\E_{x \in \Code}[\Psi_u(x)] \geq \eta \abs{\cH_u^{(r)}}$ for $\eta$ even as small as $\eta \gg \frac{1}{n^{1/3}}$, then we could at the very least beat the cubic lower bound of~\cite{AlrabiahGKM23}. And, if we could take $\eta$ to be constant, we would get an exponential lower bound.

Unfortunately, we cannot show any lower bound on $\E_{x}[\Psi_u(x)]$. Indeed, let us even consider the simple case of length $2$-chains, and let us try to bound, for $C = \{v_1, v_2, u_1\}\in H_u$ and $C' = \{v_3, v_4, u_2\} \in H_{u_1}$, the quantity $\E_{x \in \Code}[x_{u} x_{v_1} x_{v_2} x_{v_3} x_{v_4} x_{u_2}]$. This is clearly $\E_{x \in \Code}[(x_{u} x_{v_1} x_{v_2} x_{u_1}) (x_{u_1} x_{v_3} x_{v_4} x_{u_2})]$, and while we know that the expectation of each term is $\geq \eta$, this does not imply anything on the expectation of the product.

This now suggests the following simple way to recover a lower bound: simply assume that the completeness is $1 - \eps$, with the intuition being that $\eta$ is related to $1 - \eps$, and if $\eta$ is close to $1$ then we can apply a union bound. However, the reduction of~\cite{KatzT00} is lossy with respect to the completeness parameter. Indeed, this is because the reduction first makes the decoder nonadaptive by simulating the adaptive decoder by giving it random answers, and this takes $1 - \eps$ completeness to $\frac{1}{2} + \eta$ for $\eta = \frac{1}{8}(1 - \eps)$, which is too small for the union bound strategy to succeed.

Below, we will now explain our proof strategy. We will first explain a simpler strategy that only achieves a superpolynomial lower bound of $n \geq k^{\Omega(\log k)}$, even in the case of perfect completeness. Then, we will explain how to modify that strategy to obtain an exponential lower bound.

\parhead{Key idea 1: adaptive chains and the adaptive chain decoder.} As explained above, the reduction of~\cite{KatzT00} is lossy, so we need to rethink the whole approach. Our intuition is that our reduction should try to remember as much information about the decoder as possible. In particular, this means that we need to remember information about the (possibly adaptive) decoder, and cannot use standard reductions to convert adaptive decoders into nonadaptive ones. At a high level, our new strategy is to form chains \emph{before} applying the reduction of~\cite{KatzT00}.

The first insight we have is that we can form chains \emph{adaptively} by invoking the adaptive decoder $\Dec^{x}(u)$ in a structured way. Namely, define the ``adaptive chain decoder'' $\Dec_r^{x}(u)$ to be the decoder that works as follows:
\begin{enumerate}[(1)]
\item Simulate $\Dec^{x}(u)$ to generate the first query $v_1$. Then, read $a_1 = x_{v_1}$ and respond with $a_1$ to the simulated $\Dec^{x}(u)$ instance.
\item The simulated $\Dec^{x}(u)$ generates a second query $v_2$. Then, read $a_2 = x_{v_2}$ and respond with $a_2$ to the simulated $\Dec^{x}(u)$ instance.
\item The simulated $\Dec^{x}(u)$ generates a third query $u_1$, but $\Dec_r{x}(u)$ \emph{does not} make this query. Instead, $\Dec_r{x}(u)$ invokes $\Dec^{x}(u_1)$, and then proceeds starting from Step (1).
\end{enumerate}
After $r$ iterations of the loop, $\Dec_r^{x}(u)$ makes the final query $u_r$ to receive $a_r$, and then ``feeds answers backwards''. Namely, the simulated $\Dec^{x}(u_{r-1})$ now outputs some guess $a_{r-1}$ for $x_{u_{r-1}}$, which $\Dec_r^{x}(u)$ uses to answer the query $u_{r-1}$ made by the simulated $\Dec^{x}(u_{r-2})$, etc. Finally, $\Dec_r^{x}(u)$ outputs the same answer as the first simulated $\Dec^{x}(u)$.

We can think of the decoder $\Dec_r^{x}(u)$ as generating \emph{adaptive chains}, which are sequences of the form $(u_0, (v_1, a_1), (v_2, a_2), u_1, (v_3, a_3), (v_4, a_4), \dots)$.

\parhead{Key idea 2: representing the decoder as a polynomial.} Let us assume that the decoder succeeds with probability $1$ for simplicity. Now that we have adaptive chains, we need to define a polynomial $\Psi_u(x)$ using the adaptive chains $C$ in $\cH_u^{(r)}$ (now redefined to be the set of adaptive chains) so that $\Psi_u(x) x_u = 1$ for all $x \in \Code$. Our key idea is to encode the behavior of $\Dec^{x}(u)$ as a certain polynomial. Then, forming chains corresponds to taking certain ``chain products'' of these polynomials, which defines a ``chain polynomial'' $\Psi_u(x)$ that we will refute.

We can represent $\Dec^{x}(u)$ as a decision tree, and for simplicity, let us assume that $\Dec^{x}(u)$ makes exactly $3$ queries and has perfect completeness. First, it has some distribution that it uses to generate the first query $v_1$. Then, it has a branch for each answer $a_1 \in \Fits$ that it could receive. After the branch, it has another distribution to generate the second query $v_2$, and then another branch for each answer $a_2 \in \Fits$. It then has a final distribution for the query $v_3$, and then it has a ``decoding function'' $f_{v_1, a_1, v_2, a_2, v_3}(a_3)$. Notice that we are allowed to have a different decoding function for each choice of ``adaptive constraint'' $C = (v_1, a_1, v_2, a_2, v_3)$, and so $f_{v_1, a_1, v_2, a_2, v_3}(a_3)$ need only depend on $a_3$. Because $\Dec^{x}(u)$ decodes with probability $1$, we must have that for any $x \in \Code$ with $x_{v_1} = a_1, x_{v_2} = a_2$, it holds that $f_{v_1, a_1, v_2, a_2, v_3}(x_{v_3}) = x_u$. Additionally, this implies that $f_{v_1, a_1, v_2, a_2, v_3}$ is deterministic, and so it is one of the following $4$ functions: $1, -1, a_3, -a_3$. For simplicity, let us pretend that all such decoding functions are simply $a_3$.

The above analysis effectively gives us constraints of the form
\begin{equation*}
``x_{v_1} = a_1 \wedge x_{v_2} = a_2 \implies x_u = x_{v_3} \mper''
\end{equation*}
We can represent these constraints as polynomials by the $\AND$ polynomial (\cref{def:andpoly}): we then have polynomial constraints $\AND(a_1 x_{v_1}, a_2 x_{v_2}) x_{v_3} x_u = \AND(a_1 x_{v_1}, a_2 x_{v_2})$.

Unlike in the linear case, the constraints come with weights, corresponding to the probability that the decoder $\Dec^{x}(u)$ makes the queries $C = (v_1, a_1, v_2, a_2, v_3)$ where $x$ satisfies $x_{v_1} = a_1, x_{v_2} = a_2$. Let $\wt_u(C)$ be the weight of such a constraint. We then have for any $x \in \Code$:
\begin{flalign*}
&\sum_{C = (v_1, a_1, v_2, a_2, v_3)} \wt_u(C) \AND(a_1 x_{v_1}, a_2 x_{v_2}) = 1\\
&\sum_{C = (v_1, a_1, v_2, a_2, v_3)} \wt_u(C) = 4 \mper
\end{flalign*}
The first equation sums the probabilities of querying certain $C$'s, which must sum to $1$ because this is the query distribution of $\Dec^{x}(u)$. The second equation observes that $\wt(C)/4$ is the probability that $\Dec^{y}(u)$ queries $C$ when $y$ is chosen uniformly at random.

Finally, we have
\begin{flalign*}
\sum_{C = (v_1, a_1, v_2, a_2, v_3)} \wt_u(C) \AND(a_1 x_{v_1}, a_2 x_{v_2}) x_{v_3} x_u = \sum_{C = (v_1, a_1, v_2, a_2, v_3)} \wt_u(C) \AND(a_1 x_{v_1}, a_2 x_{v_2}) = 1 \mcom
\end{flalign*}
using the fact that the polynomial constraints are satisfied and the first of the $2$ previous equations. We also note that, more generally, the left-hand side exactly computes $\E[\Dec_r^{x}(u) x_u]$ for a fixed $x$, where the expectation is over the internal randomness of the adaptive decoder.

\parhead{Forming and refuting chain polynomials.} 
The next step of the proof is to use the polynomial representation of $\Dec^{x}(u)$ above to represent the behavior of the chain decoder $\Dec_r^{x}(u)$ as a polynomial as well. Concretely, the polynomial for $\Dec_2^{x}(u)$ is
\begin{flalign*}
&x_u \sum_{C = (v_1, a_1, v_2, a_2, u_1)} \left(\wt_u(C)\AND(a_1 x_{v_1}, a_2 x_{v_2}) \left(\sum_{C' = (v_3, a_3, v_4, a_4, u_2)} \wt_{u_1}(C')\AND(a_3 x_{v_3}, a_4 x_{v_4}) x_{u_2} \right)\right) \mcom
\end{flalign*}
and we let $\Psi_u(x)$ denote the polynomial for the length $r$-chains, defined analogously. Because of perfect completeness, $\Psi_u(x) = 1$ for all $x \in \Code$. Notice that $\Psi_u(x)$ exactly computes $\E[\Dec_r^{x}(u) x_u]$ where $\Dec_r^{x}(u)$ is the adaptive chain decoder and the expectation is over the internal randomness of $\Dec_r^{x}(u)$.

We then follow the strategy of~\cite{KothariM23} and let $\Psi_b(x) = \sum_{i = 1}^k b_i \Psi_i(x)$. Standard reductions (\cref{fact:bgt17}) allow us to assume that the code $\Code$ is \emph{systematic} with only a small loss in parameters, so that for a random $x \in \Code$, the bits $x_1, \dots, x_k \in \Fits$ are independent. To prove a lower bound, it suffices to then argue that $\E_{b \in \Fits^k}[\max_{y \in \Fits^n} \Psi_b(y)] < k$, as $\Psi_b(\Code(b)) = k$.

The spectral refutation of~\cite{KothariM23}, built on the CSP refutation algorithm of~\cite{GuruswamiKM22}, does not directly bound $\E_{b \in \Fits^k}[\max_{y \in \Fits^n} \Psi(y)]$, because the polynomials constructed here are quite a bit more general than the case handled in~\cite{KothariM23}. However, because the spectral methods of~\cite{GuruswamiKM22,KothariM23} are sufficiently resilient, one can succeed in using the techniques to bound $\E_{b \in \Fits^k}[\max_{y \in \Fits^n} \Psi(y)]$ (\cref{lem:chainxorref}). This requires a generalization of the proof technique in~\cite{KothariM23}; we will comment on the new ideas we require later.

We obtain a bound of $\E_{b \in \Fits^k}[\max_{y \in \Fits^n} \Psi(y)] \leq W \cdot O(\sqrt{k \ell r \log n})$ where $\ell, r$ are parameters with $\ell^r \gg n$, and $W$ is the total ``weight'' of the $\Psi_u$, i.e., the sum of the weights of the coefficients. The quantity $W$ should be considered some normalized analog of the ``number of XOR constraints'' present in the polynomial.

Because the sum of the weights in a single $\Dec(u)$ is $4$, the total weight in $\Psi_u$ is at most $4^r$. This gives us a bound of $4^r \cdot O(\sqrt{k \ell r \log n})$; we then set $r = \sqrt{\log n}$, and $\ell = 2^{O(\sqrt{\log n})}$. Rearranging then implies that $k \leq 2^{O(\sqrt{\log n})}$, i.e., $n \geq k^{\Omega(\log k)}$.

The reason for the factor of $4$ loss (which causes the $4^r$ loss and prevents us from taking $r = O(\log n)$ to get an exponential lower bound) can be observed by looking at how this reduction behaves when the code $\Code$ is actually a linear $3$-LCC. For each linear constraint $x_{v_1} x_{v_2} x_{v_3} = x_u$, we produce $4$ constraints
\begin{equation*}
``x_{v_1} = a_1 \wedge x_{v_2} = a_2 \implies x_u = a_1 a_2 x_{v_3} \mcom''
\end{equation*}
one for every $a_1, a_2 \in \Fits^k$. So, we have produced a factor of $4$ more equations than was needed. Indeed, the reason that this loss does not appear for linear codes is because of the equality
\begin{equation*}
\sum_{a_1, a_2} \AND(a_1 x_{v_1}, a_2 x_{v_2}) a_1 a_2 x_{v_3} = x_{v_1} x_{v_2} x_{v_3}\mper
\end{equation*}
However, in the adaptive case, the query $v_3$ can depend on the answers, and we also need not have $f_{v_1, a_1, v_2, a_2}(a_3) = a_1 a_2 a_3$, so this does not necessarily hold.

\peter{here}
\parhead{Obtaining an exponential lower bound.} We will now explain how to modify the above strategy to obtain an exponential lower bound. Recall that the issue above is that, even if $\Code$ is a linear code, the total weight of terms in our polynomial for $\Dec(u)$ is $4$, rather than $1$. The key observation we will now make is that the total weight of the \emph{homogeneous degree $3$ terms} in this polynomial is at most $1$. This is because the degree $3$ term in the Fourier expansion of $\AND(a_1 x_{v_1}, a_2 x_{v_2}) x_{v_3}$ is $\frac{1}{4} x_{v_1} x_{v_2} x_{v_3}$, which decreases the weight by a factor of $4$.

We now form chains using the polynomials directly, but we only replace the final query $x_{v_3}$ when it appears in a homogeneous degree $3$ term in the Fourier expansion of the polynomial for $\Dec^x(u)$. Because the weight of the homogeneous degree $3$ terms is always at most $1$, the total weight in the new polynomial $\Psi_b$ is at most $1^r = 1$. However, we now have additional polynomials $\Phi_b^{(t)}$ for $1 \leq t \leq r$, which denote the ``residual'' terms left over from the degree $\leq 2$ components of the polynomials. Each of these has total weight at most $4$, because $\Phi_b^{(t)}$ is formed by multiplying $t-1$ degree $3$ terms with a final degree $2$ term, and thus the total weight in $\Phi_b^{(t)}$ is at most $1^{t-1} \cdot 4$. This implies that the total weight across all $\Phi_b^{(t)}$ and $\Psi_b$ is at most $4(r+1)$, which is now \emph{linear} in $r$.

\parhead{Additional technical complications.} Let us now discuss the generalizations that we need of the theorems in~\cite{KothariM23} in order to prove \cref{lem:chainxorref}. First of all, we need to now handle weighted hypergraphs that are not necessarily matchings, rather than just matchings where all hyperedges have equal weight. The definitions and methods in~\cite{KothariM23} are sufficiently general that this can be done with some effort. A key part of the generalization is relying on the smoothness of the decoder, which conceptually generalizes the concept of hypergraph matchings to weighted hypergraphs. A second issue encountered is that $\AND(a_1 x_{v_1}, a_2 x_{v_2})$ 
is not a homogeneous degree $2$ polynomial and the decoding function $f_{v_1, a_1, v_2, a_2}(a_3)$ might have a negative sign and be $-a_3$. This means that $\Psi_b$ is not a homogeneous degree $2r$ polynomial with nonnegative coefficients. Fortunately, this issue can again be circumvented by adding 
extra dummy variables $y_{-v}, y_{1^{(v)}}$, and $y_{-1^{(v)}}$ for each $v \in [n]$, where we expect these variables to take the values
$y_{-v} = -x_v, y_{1^{(v)}} = 1$, and $y_{-1^{(v)}} - 1$.

A third and perhaps more pressing issue encountered is that we now have additional polynomials $\Phi_b^{(t)}(y)$. In fact, such polynomials appear even if we only consider the adaptive chains of the chain decoder $\Dec_r^{x}(u)$, as the decoder might terminate without making a third query. For example, suppose the initial invocation of $\Dec^{x}(u)$ by $\Dec_r^{x}(u)$ leads to the first $2$ queries and answers being $C = (v_1, a_1, v_2, a_2)$. Then, $\Dec^{x}(u)$ generates a third query $u_1$. However, the decoding function $f_{v_1, a_1, v_2, a_2, u_1}(a_3)$ might not depend on $a_3 = x_{u_1}$, i.e., it could be the constant function $1$ or the constant function $-1$. In this case, we do not have a way to continue the chain as normal, and $\Dec_r^{x}(u)$ does not make any more queries.

It turns out that the $\Phi_b^{(t)}$'s, i.e., the polynomials for the chains that stop early, are (modulo some tricks) homogeneous degree $2t$ polynomials, which is \emph{even}. This should be compared to the polynomial $\Psi_b$, which is (modulo some tricks) a homogeneous degree $2r+1$ polynomial, which is \emph{odd}. It turns out that the methods of~\cite{KothariM23}, and indeed the more general case of refuting XOR instances~\cite{GuruswamiKM22}, are easier to analyze for even degree instances. This allows us to refute these ``early stop'' instances, and thereby show our key refutation lemma \cref{lem:chainxorref}.

\parhead{Handling imperfect completeness.}
\peter{here}
Finally, we explain how to handle the case when the decoder only succeeds with probability $1 - \eps$. The polynomials that we constructed from the chain decoder $\Dec_r(u)$ were convenient because they had a simple interpretation using the chain decoder $\Dec_r(u)$. This means that when $\eps > 0$, it is simple to lower bound $\E_{b \in \Fits^k}[\max_{y \in \Fits^n} \Psi(y)]$, because one can invoke the fact that $\E[\Dec_r^{x}(u) x_u] \geq 1 - 2 \eps$ and then apply a union bound to conclude a lower bound of $1 - 2 \eps r$.

However, these new polynomials do not appear to have any nice interpretation in terms of the original decoder $\Dec(u)$, and so it might be possible that we are unable to lower bound  $\E_{b \in \Fits^k}[\max_{y \in \Fits^n} (\Phi_b^{(1)}(y) + \dots +\Phi_b^{(r)}(y) + \Psi(y)]$. Fortunately, it turns out that, with a bit of a trickier proof, one can argue the same lower bound of $1 - 2 \eps r$. 

Either way, we can only take $r \approx 1/2\eps$ length chains while still showing a lower bound on $\E_{b \in \Fits^k}[\max_{y \in \Fits^n} \Psi_b(y)]$. In this parameter regime, when $\eps$ is a small constant, say, our final bound will be about $k \leq n^{1/r} \approx n^{2 \eps}$.

One may wonder why we cannot obtain a better lower bound of, say, $(1 - \eps)^r$, on either $\E[x_u \Dec_r^{x}(u)]$ or our final chain polynomial. Indeed, if $\E[x_u \Dec^{x}(u)]$ only depends on $x$, or $u$, but not both, then we could make the analysis work. The problem is that this is not something that we can enforce without loss of generality, as $\E[x_u \Dec^{x}(u)]$ could depend on both $x$ and $u$.

The reason this causes an issue can be seen from the $2$-chain decoder. Consider the $2$-chain polynomial from before:
\begin{flalign*}
&x_u \sum_{C = (v_1, a_1, v_2, a_2, u_1)} \left(\wt_u(C)\AND(a_1 x_{v_1}, a_2 x_{v_2}) \left(\sum_{C' = (v_3, a_3, v_4, a_4, u_2)} \wt_{u_1}(C')\AND(a_3 x_{v_3}, a_3 x_{v_3}) x_{u_2} \right)\right) \mper
\end{flalign*}
We have 
\begin{flalign*}
x_{u_1} \sum_{C' = (v_3, a_3, v_4, a_4, u_2)} \wt_{u_1}(C')\AND(a_3 x_{v_3}, a_3 x_{v_3}) x_{u_2} = \E[x_{u_1} \Dec^{x}(u_1)] = p_{x, u_1} \geq 1 - \eps \mcom
\end{flalign*}
so that
\begin{flalign*}
&x_u \sum_{C = (v_1, a_1, v_2, a_2, u_1)} \left(\wt_u(C)\AND(a_1 x_{v_1}, a_2 x_{v_2}) \left(\sum_{C' = (v_3, a_3, v_4, a_4, u_2)} \wt_{u_1}(C')\AND(a_3 x_{v_3}, a_3 x_{v_3}) x_{u_2} \right)\right) \\
&= x_u \sum_{C = (v_1, a_1, v_2, a_2, u_1)} \left(\wt_u(C)\AND(a_1 x_{v_1}, a_2 x_{v_2}) x_{u_1} p_{x, u_1} \right) \mper
\end{flalign*}
Now, 
\begin{flalign*}
& x_u \sum_{C = (v_1, a_1, v_2, a_2, u_1)} \left(\wt_u(C)\AND(a_1 x_{v_1}, a_2 x_{v_2}) x_{u_1} \right) = p_{x,u} \geq 1 - 2\eps \mcom
\end{flalign*}
for all $x \in \Code$, and so we would like conclude that 
\begin{flalign*}
&= x_u \sum_{C = (v_1, a_1, v_2, a_2, u_1)} \left(\wt_u(C)\AND(a_1 x_{v_1}, a_2 x_{v_2}) x_{u_1} p_{x, u_1} \right) \geq (1 - 2\eps)^2 \mper
\end{flalign*}
Unfortunately, the reweightings caused by the different $p_{x,u_1}$'s cannot be ignored because the terms in the sum are not nonnegative. In particular, it could be that $p_{x,u_1} = 1$ for the negative terms and $p_{x, u_1} = 1 - 2\eps$ for the positive terms, which would then (if $p_{x,u} = 1 - 2\eps$) make the sum $(1 - \eps)(1 - 2\eps) - \eps = (1 - 2\eps)^2 - 2\eps^2$, for example.
%%%%%%%%%%%%%%%%%%%%%%%%%%%%%%%%%%%%%%%%%%%%%%%%%%%%%%%%%%%%%%%%%%%%%%%%%%%%%%%%
%%%%%%%%%%%%%%%%%%%%%%%%%%%%%%%%%%%%%%%%%%%%%%%%%%%%%%%%%%%%%%%%%%%%%%%%%%%%%%%%
\subsection{Roadmap}
The rest of the paper is organized as follows. First, in \cref{sec:prelims}, we introduce some notation and recall basic facts about LCCs that we shall use in the proof. In \cref{sec:designs}, we prove \cref{mthm:designs}. In \cref{sec:chainpolys,sec:graphref,sec:decomp,sec:kikuchi,sec:rowpruning}, we prove \cref{mthm:nonlin}. This proof is broken into two stages: the reduction from the adaptive smooth decoder to the chain XOR instances is done in \cref{sec:chainpolys}, and in \cref{sec:graphref,sec:decomp,sec:kikuchi,sec:rowpruning} we refute these instances by proving \cref{lem:chainxorref}.

Finally, in \cref{app:reedmuller} we recall the folklore construction of design $3$-LCCs from Reed--Muller codes.

%%%%%%%%%%%%%%%%%%%%%%%%%%%%%%%%%%%%%%%%%%%%%%%%%%%%%%%%%%%%%%%%%%%%%%%%%%%%%%%%
%%%%%%%%%%%%%%%%%%%%%%%%%%%%%%%%%%%%%%%%%%%%%%%%%%%%%%%%%%%%%%%%%%%%%%%%%%%%%%%%
%%%%%%%%%%%%%%%%%%%%%%%%%%%%%%%%%%%%%%%%%%%%%%%%%%%%%%%%%%%%%%%%%%%%%%%%%%%%%%%%
\section{Preliminaries}
\label{sec:prelims}
\subsection{Basic notation}
We let $[n]$ denote the set $\{1, \dots, n\}$. For two subsets $S, T \subseteq [n]$, we let $S \oplus T$ denote the symmetric difference of $S$ and $T$, i.e., $S \oplus T \coloneqq \{i : (i \in S \wedge i \notin T) \vee (i \notin S \wedge i \in T)\}$. For a natural number $t \in \N$, we let ${[n] \choose t}$ be the collection of subsets of $[n]$ of size exactly $t$. Given variables $x_1, \dots, x_n$ and a subset $C \subseteq [n]$, we let $x_S \coloneqq \prod_{v \in S} x_v$.

For a rectangular matrix $A \in \R^{m \times n}$, we let $\norm{A}_2 = \coloneqq \max_{x \in \R^m, y \in \R^n: \norm{x}_2 = \norm{y}_2 = 1} x^{\top} A y$ denote the spectral norm of $A$, and $\boolnorm{A} \coloneqq \max_{x \in \Fits^m, y \in \Fits^n} x^{\top} A y$. We note that $\boolnorm{A} \leq \sqrt{nm} \norm{A}_2$.

\subsection{XOR formulas}
A (weighted) XOR instance $\psi$ on $n$ variables $x_1, x_2,\ldots, x_n$ taking values in $\Fits$ is a collection of constraints of the form $\{x_C = b_C\}$ where $C \in \cH$ where $\cH \subseteq 2^{[n]}$ is the \emph{constraint hypergraph}, along with weights $\wt(C) \geq 0$ for each $C \in \cH$. The \emph{arity} of a constraint $\{x_C = b_C\}$ equals $|C|$. The arity of $\psi$ is the maximum arity of any constraint in it. The XOR formula associated with $\psi$ is the expression $\psi(x) = \sum_{C \in \cH} \wt(C) b_C x_C$ seen as a polynomial over $\Fits^n$. Notice that $\psi(x) = \sum_{C \in \cH} \wt(C)$ if $x$ satisfies all the constraints of $\psi$ and in general, evaluates to (weight of constraints satisfied by $x$) - (weight of constraints violated by $x$). The \emph{value} $\val(\psi)$ of a XOR instance $\psi$ (or, of the associated formula $\psi(x)$) is the maximum of $\psi(x)$ as $x$ ranges over $\Fits^n$. More generally, for a function $f(x)$, we shall let $\val(f) \coloneqq \max_{x \in \Fits^n} f(x)$.

%%%%%%%%%%%%%%%%%%%%%%%%%%%%%%%%%%%%%%%%%%%%%%%%%%%%%%%%%%%%%%%%%%%%%%%%%%%%%%%%
%%%%%%%%%%%%%%%%%%%%%%%%%%%%%%%%%%%%%%%%%%%%%%%%%%%%%%%%%%%%%%%%%%%%%%%%%%%%%%%%
\subsection{Locally correctable codes}
We refer the reader to the survey~\cite{Yekhanin12} for background. 

\begin{definition}[Locally correctable code]
\label{def:LCC}
A map $\Code \colon \Fits^k \to \Fits^n$ is a $(q, \delta, \eps)$-locally correctable code if there exists a randomized decoding algorithm $\Dec(\cdot)$ that takes input an oracle access to some $y \in \Fits^n$ and a $u \in [n]$, and has the following properties:
\begin{enumerate}[(1)]
\item ($q$ queries) For any $y \in \Fits^n$ and $u \in [n]$, $\Dec^{y}(u)$ makes at most $q$ queries to the string $y$;
\item ($(1-\eps)$-correction with $\delta n$ errors) For all $b \in \Fits^k$, $u \in [n]$, and all $y \in \Fits^n$ such that $\Delta(y, \Code(b)) \leq \delta n$, $\Pr[\Dec^{y}(u) = \Code(b)_u] \geq 1 - \eps$. Here, $\Delta(x,y)$ denotes the Hamming distance between $x$ and $y$, i.e., the number of indices $v \in [n]$ where $x_v \ne y_v$.
\end{enumerate}
We say that $\Code$ is \emph{linear} if the map $\Code$, when viewed as a map from $\Bits^k \to \Bits^n$ via the mapping $0 \leftrightarrow 1$ and $1 \leftrightarrow -1$, is a linear map. We note that for linear codes, $k = \dim(\cV)$, where $\cV$ is the image of $\Bits^k$ under the map $\Code$. We will typically let $\Lincode$, as opposed to $\Code$, denote a linear code, and view $\Lincode$ as a map $\Lincode \colon \Bits^k \to \Bits^n$.

We say that $\Code$ is systematic if for every $b \in \Fits^k$, $\Code(b) \vert_{[k]} = b$.
\end{definition}
For a code $\Code \colon \Fits^k \to \Fits^n$, we will write $x \in \Code$ to denote an $x = \Code(b)$ for some $b \in \Fits^k$.

\begin{definition}[Smooth LCCs~\cite{KatzT00}]
\label{def:smoothLCC}
A map $\Code \colon \Fits^k \to \Fits^n$ is a $\delta$-smooth $q$-locally correctable code with completeness $1 - \eps$ if there exists a randomized decoding algorithm $\Dec(\cdot)$ that takes input an oracle access to some $y \in \Fits^n$ and a $u \in [n]$, and has the following properties:
\begin{enumerate}[(1)]
\item ($q$ queries) For any $y \in \Fits^n$ and $u \in [n]$, $\Dec^{y}(u)$ makes at most $q$ queries to the string $y$;
\item ($(1-\eps)$-completeness) For all $b \in \Fits^k$, $u \in [n]$, $\Pr[\Dec^{\Code(b)}(u) = \Code(b)_u] \geq 1 - \eps$.
\item ($\delta$-smoothness) For all $b \in \Fits^k$, $u \in [n]$, $x = \Code(b)$, $v \in [n]$, $\Pr[\Dec^{\Code(b)}(u) \ \text{ queries $v$}] \leq \frac{1}{\delta n}$.
\end{enumerate}
We will call such codes $(q, \delta,\eps)$-smooth LCCs.
\end{definition}

\begin{remark}
\label{rem:smoothLCC}
Any $\delta$-smooth $q$-LCC with completeness $1 - \eps$ is a $(q, \eta\delta, \eps + \eta)$-LCC for any $\eta > 0$. Indeed, this follows because if we let $y \in \Fits^n$ be a corruption of a codeword $x \in \Code$ with $\eta \delta n$ errors, then the probability that the smooth decoder queries a corrupted entry is at most $\eta$.
\end{remark}

\begin{fact}[Systematic Nonlinear Codes, Lemma A.5, Thm A.6 in~\cite{BhattacharyyaGT17}]
\label{fact:bgt17}
Let $\Code \colon \Fits^k \to \Fits^n$ be a $\delta$-smooth $q$-LCC with completeness $1 - \eps$. Then, there is a \emph{systematic} code $\Code' \colon \Fits^{k'} \to \Fits^n$ that is a $\delta$-smooth $q$-LCC with completeness $1 - \eps$, where $k' = \Omega(k/\log(1/\delta))$.
\end{fact}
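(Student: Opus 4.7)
The plan is to find an information set for $\Code$ and re-parameterize accordingly. Let $\cV := \Code(\Fits^k) \subseteq \Fits^n$ denote the image. It suffices to produce a subset $S \subseteq [n]$ with $|S| = k' = \Omega(k/\log(1/\delta))$ such that the projection $\pi_S : \cV \to \Fits^S$ is surjective: given such an $S$, pick any right-inverse $\phi : \Fits^{k'} \to \Fits^k$ of $\pi_S \circ \Code$ (well-defined by surjectivity of $\pi_S$), let $\sigma$ be a permutation of $[n]$ sending $S$ to $[k']$, and set $\Code'(b) := \sigma(\Code(\phi(b)))$. The decoder $\Dec'$ for $\Code'$ on input $u \in [n]$ runs $\Dec$ on $\sigma^{-1}(u)$ and translates its queries through $\sigma^{-1}$. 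Since only coordinates are permuted, both the $\delta$-smoothness and the $(1-\eps)$-completeness are inherited from $\Code$, and by construction $\Code'$ is systematic.

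The main task is therefore to construct $S$, and this is where smoothness is essential. I would build $S$ greedily: initialize $S_0 = \emptyset$ and at step $j$ add any coordinate $i \in [n] \setminus S_j$ for which $\pi_{S_j \cup \{i\}}(\cV) = \Fits^{S_j \cup \{i\}}$; equivalently, for every assignment $b \in \pi_{S_j}(\cV) = \Fits^{S_j}$ there exist codewords extending $b$ with both signs at position $i$. Iterate until no admissible $i$ remains, yielding a terminal set $S_{j^*}$. The invariant $\pi_{S_j}(\cV) = \Fits^{S_j}$ is maintained throughout. At termination, every $i \notin S_{j^*}$ carries a ``forbidden extension'' $(b_i, a_i) \in \Fits^{S_{j^*} \cup \{i\}}$ missing from $\pi_{S_{j^*} \cup \{i\}}(\cV)$, i.e., the code has a local constraint that forces $x_i$ once $x|_{S_{j^*}} = b_i$.

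To derive $|S_{j^*}| = \Omega(k/\log(1/\delta))$ from this terminal structure, I would use smoothness to bound how much information each such forcing constraint can carry. The $\delta$-smooth decoder $\Dec(i)$ makes $q$ queries each touching a given coordinate with probability at most $1/(\delta n)$, so (conditioned on correctness via completeness $1 - \eps$) its effective output is determined by a query pattern drawn from a distribution supported on roughly $(\delta n)^{O(q)}$ tuples. Coupled with surjectivity of $\pi_{S_{j^*}}$, this should translate into the bound $|\cV| \leq 2^{|S_{j^*}| \cdot O(\log(1/\delta))}$ after an appropriate encoding argument (essentially: specifying $x$ requires $|S_{j^*}|$ bits of ``free'' information on $S_{j^*}$ plus $O(\log(1/\delta))$ bits per remaining coordinate from the smooth decoder's support), yielding $k \leq |S_{j^*}| \cdot O(\log(1/\delta))$ as required.

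The main obstacle will be making the termination analysis rigorous and quantitatively tight. Roughly, smoothness controls the decoder's ``query bandwidth'' per position, and this must translate into the stated $\log(1/\delta)$ factor rather than a weaker $\log n$ or $\log k$ loss. Achieving the optimal factor requires carefully linking the greedy-forcing structure at termination to the smoothness-bounded decoder support --- this is the technical heart of the reduction in~\cite{BhattacharyyaGT17}. An alternative route, which may be cleaner, is to replace the greedy construction with a random choice of $S$ of size $k'$ together with a smoothness-based union bound showing that $\pi_S$ is surjective with positive probability; either approach must extract the $\log(1/\delta)$ factor from the smoothness parameter rather than from the ambient code length.
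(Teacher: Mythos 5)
The paper itself does not prove \cref{fact:bgt17}; it cites Lemma A.5 and Theorem A.6 of \cite{BhattacharyyaGT17} as a black box, so there is no in-paper proof to compare against. Evaluating your proposal on its own merits: the first half of your argument (given a subset $S$ with $\pi_S$ surjective, re-parameterize via a right-inverse and a permutation to get a systematic $\Code'$ inheriting $\delta$-smoothness and $(1-\eps)$-completeness) is fine modulo a minor bookkeeping issue in how $\sigma$ is applied to queries. But the second half, which is where all the content is, has genuine gaps, and you flag some of them yourself.

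Concretely, three things break. First, the greedy as stated can terminate at a \emph{locally} maximal shattered set that is far smaller than the largest one --- there is no matroid structure guaranteeing that ``add any admissible $i$'' reaches a set of size within a constant of the maximum, so you cannot infer anything about the VC dimension of $\cV$ from where the greedy stalls. Second, the termination condition only yields, for each $i \notin S_{j^*}$, a \emph{single} forbidden pattern $(b_i,a_i)$; it does not say $x_i$ is determined by $x|_{S_{j^*}}$ for all values of $x|_{S_{j^*}}$, so it places essentially no upper bound on $\abs{\cV}$. Your proposed encoding argument (``$\abs{S_{j^*}}$ free bits plus $O(\log(1/\delta))$ bits per remaining coordinate'') would give $\abs{\cV} \le 2^{\abs{S_{j^*}} + (n - \abs{S_{j^*}})\, O(\log(1/\delta))}$, which is hopelessly weak, not $2^{\abs{S_{j^*}}\, O(\log(1/\delta))}$. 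Third, the claim that a $\delta$-smooth decoder's query pattern is ``supported on roughly $(\delta n)^{O(q)}$ tuples'' is backwards: smoothness \emph{lower}-bounds the spread (each query's marginal support must have size at least $\delta n$), it does not shrink the support below $n^q$, so there is no small-support encoding to exploit in the direction you need. The randomized alternative you sketch at the end has the same missing ingredient: nothing in the proposal extracts the $\log(1/\delta)$ factor from smoothness. A correct argument has to turn smoothness into a genuine cardinality or entropy bound on $\cV$ restricted to a small coordinate set (e.g.\ a Sauer--Shelah-type bound over an effectively smaller universe), and that step is absent.
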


We next discuss a combinatorial characterization of \emph{linear} locally correctable codes. To begin with, we recall basic notions about hypergraphs.  

\begin{definition}
\label{def:hypergraph}
A weighted (and undirected) hypergraph $\cH$ on vertex set $[n]$ is a weight function $\wt_{\cH} \colon 2^{[n]} \to \R_{\geq 0}$, i.e., a function from unordered sets $C \subseteq [n]$ to $\R_{\geq 0}$. The hypergraph is $\leq q$-uniform if $\abs{C} > q$ implies that $\wt_{\cH}(C) = 0$ and $q$-uniform if $\abs{C} \ne q$ implies that $\wt_{\cH}(C) = 0$.

A weighted directed hypergraph $\cH$ on vertex set $[n]$ is a weight function $\wt_{\cH} \colon S \to \R_{\geq 0}$, where $S$ denotes the set of all \emph{ordered} subsets of $[n]$. The hypergraph is $\leq q$-uniform if for any ordered set $C \subseteq [n]$, $\abs{C} > q$ implies that $\wt_{\cH}(C) = 0$ and $q$-uniform if $\abs{C} \ne q$ implies that $\wt_{\cH}(C) = 0$.

For a subset $Q \subseteq [n]$, we define the degree of $Q$ in $\cH$, denoted $\deg_{\cH}(Q)$, to be $\sum_{C \in [n]^q : Q \subseteq C} \wt_{\cH}(C)$, where we say that $Q \subseteq C$ if this containment holds as sets.
\end{definition}

LCCs admit a standard combinatorial characterization (formalized in the definition below).

\begin{definition}[Linear LCC in normal form]
\label{def:normalLCC}
A linear code $\Lincode \colon \Bits^k \to \Bits^n$ is $(q, \delta)$-normally correctable if for each $u \in [n]$, there is a $q$-uniform hypergraph matching $\cH_u$ with at least $\delta n$ hyperedges such that for every $C \in \cH_u$ and $b \in \Fits^k$, it holds that $\prod_{v \in C} x_v = x_u$ where $x = \Code(b)$.
\end{definition}

Every linear LCC can be transformed into a linear LCC in normal form with only a small loss in parameters.
\begin{fact}[Reduction to LCC normal form, Theorem 8.1 in~\cite{Dvir16}]\label{fact:normalform}
Let $\Lincode \colon \Bits^k \to \Bits^n$ be a linear code that is $(q, \delta, \eps)$-locally correctable. Then, there is a linear code $\Lincode' \colon \Bits^k \to \Bits^{2n}$ that is $(q, \delta')$-normally correctable, with $\delta' \geq \delta/2q$.
\end{fact}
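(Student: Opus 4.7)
The plan is to extract, for each coordinate $u$, a large approximately smooth distribution over valid parity constraints from the LCC decoder, then use a greedy argument to carve out a $q$-uniform hypergraph matching, and finally use the doubling from $n$ to $2n$ as a ``packaging'' trick that lets us (i) pad constraints to have exactly $q$ vertices and (ii) handle a decoder that sometimes queries $u$ itself. Concretely, for each $u \in [n]$ I would produce the desired matching in $\cH_u$, and then for $u \in \{n+1,\dots,2n\}$ reuse the matching built for $u - n$ (since $\Lincode'(b)_{u-n} = \Lincode'(b)_u$).

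To extract the distribution: because $\Lincode$ is linear, the decoder $\Dec^{y}(u)$ can be assumed without loss of generality to sample a set $C \subseteq [n]$ with $\abs{C} \le q$ and output $\sum_{v \in C} y_v$. Applying the $(q,\delta,\eps)$-LCC property with $0$ errors shows that, with probability at least $1 - \eps$ over $C \sim \cD_u$, the equation $\sum_{v \in C} x_v = x_u$ holds for every codeword $x$. Call such $C$'s \emph{valid}; the total mass on valid $C$'s is at least $1 - \eps$. Next I would derive approximate smoothness from the $\delta n$-error-correction property: if some position $v$ were queried with probability much larger than $1/(\delta n)$, then flipping $x_v$ alone (a single-bit, hence admissible, corruption) would swing the decoder's output with probability strictly greater than $\eps$, contradicting correctness. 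This yields $\Pr_{C \sim \cD_u}[v \in C] \le O(1/(\delta n))$ for every $v$.

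Having such a smooth distribution over valid $q$-sets, I would extract a matching greedily. Pick any valid $C$, add it to $\cH_u$, and delete from $\cD_u$ every $C'$ meeting $C$. By smoothness, deleting a single vertex removes at most $O(1/(\delta n))$ mass, so adding a hyperedge of size at most $q$ removes at most $O(q/(\delta n))$ mass. Continuing while the residual mass exceeds $1/2$ gives a matching of size $\Omega(\delta n/q)$. Finally, I would handle the doubling: define $\Lincode'(b) = (\Lincode(b), \Lincode(b))$, and use the identity $x_v + x_{v+n} = 0$ (for all codewords $x$ of $\Lincode'$) to pad any constraint of arity $q' < q$ of the right parity to exact arity $q$ by appending disjoint pairs $\{v,v+n\}$ drawn from positions outside the existing matching; the doubling ensures enough ``free'' positions to carry out this padding while keeping the matching structure. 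A similar trick (swapping $u$ for its copy $u+n$) takes care of the case $u \in C$. Tallying the losses (a factor of $q$ from greedy, a factor of $2$ from normalizing against the new length $2n$) gives $\delta' \ge \delta/(2q)$.

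The main obstacle is the smoothness step: the hypothesis only guarantees correctness against worst-case $\delta n$ corruptions with error $\eps$, so the ``flip a heavy query'' argument must be made uniform over $u$ and must cope with the fact that the decoder's sampled set $C$ can depend on the received word $y$. One needs the Katz--Trevisan-style transformation (or a direct union bound over the bad positions together with a careful redefinition of the decoder's query distribution) to produce a $v \mapsto O(1/(\delta n))$ smoothness guarantee that survives all later steps. All remaining steps --- linear extraction, greedy matching, padding via doubling --- are then essentially mechanical and contribute only constant and $q$ factors to the final bound.
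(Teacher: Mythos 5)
The paper gives no proof of this statement; it is cited directly as Theorem 8.1 of Dvir's notes, so there is no internal proof to compare against. Your outline does reconstruct the standard Katz--Trevisan-style normal-form reduction (linearize the decoder, smooth it, greedily extract a matching, double the block to pad arities), and that is indeed the route taken in the cited source. Two points in your sketch deserve more care, however.

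First, the step ``with probability at least $1-\eps$ over $C\sim\cD_u$, the equation $\sum_{v\in C}x_v=x_u$ holds for every codeword $x$'' does not follow directly from running the decoder on an uncorrupted word. Running the decoder on $\Code(b)$ for a single $b$ only tells you that a $(1-\eps)$ mass of $C$'s decode that particular $b$ correctly. To upgrade this to ``valid for all codewords'' you need linearity: for a fixed linear query set $C$, the constraint $\sum_{v\in C}x_v=x_u$ either holds on all of $\Lincode$ or on at most half of it. Averaging over a uniformly random $b$ then gives
$1-\eps\leq \Pr_C[\text{valid}]+\tfrac{1}{2}\Pr_C[\text{not valid}]$, i.e.\ $\Pr_C[\text{valid}]\geq 1-2\eps$. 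This is weaker than $1-\eps$ and requires $\eps<1/2$ (which is the implicit standing assumption in the cited source, since the conclusion of the fact has no $\eps$-dependence at all).

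Second, the smoothness argument ``if $v$ is heavy, flip $x_v$ and the decoder swings with probability $>\eps$'' is not sound on its own: a decoder can query $v$ with high probability yet ignore the answer, so a single-bit flip at $v$ need not cause any error. The correct argument (which you gesture at by invoking Katz--Trevisan) is a simultaneous one: by counting, at most $O(q\,\delta n / c)$ positions have query probability exceeding $c/(\delta n)$, so for a suitable constant $c$ one can place \emph{all} heavy positions inside an admissible error pattern, and then define a new decoder that answers queries to heavy positions from thin air rather than reading $y$. This new decoder is still $(1-\eps)$-correct and is $c/(\delta n)$-smooth by construction; it is this smoothed decoder, not the original one, that feeds into your greedy matching step. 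Since you flag the smoothness step as the main obstacle and name the right tool, I would count your sketch as essentially the standard proof with one imprecise bound (the $1-\eps$ vs.\ $1-2\eps$) and one argument that must be replaced rather than merely tightened.
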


Below, we define \emph{design $3$-LCCs}, which are an idealized form of linear $3$-LCCs in normal form. We note that Reed--Muller codes, the best known construction of $3$-LCCs, are designs (see \cref{app:reedmuller}).
\begin{definition}[Design $3$-LCCs]
\label{def:designs}
Let $H \subseteq {[n] \choose 4}$ denote a collection of subsets of $n$ of size exactly $4$. We say that $H$ is a \emph{design} if, for every pair of vertices $u \ne v \in [n]$, there exists \emph{exactly} one $C \in H$ with $\{u,v\} \subseteq C$.

We say that such an $H$ is a design $3$-LCC of dimension $k$ if the subspace $\cV \defeq \{x \in \Bits^n : \sum_{v \in C} x_v = 0 \ \forall C \in H\} \subseteq \Bits^n$ has dimension $k$. 
\end{definition}

\begin{remark}[Connection between \cref{def:designs} and \cref{def:normalLCC}]
\label{rem:designtonormalform}
Given a design $3$-LCC $H$, we can construct the hypergraphs $H_u$ for $u \in [n]$ in \cref{def:designs} by letting $H_u \defeq \{C \setminus \{u\} : C \in H \text{ and } u \in C\}$ be the set of $C \in H$ that contain $u$ (and then remove $u$). Because $H$ is a design, for every pair $u \ne v \in [n]$, there exists $C \in H$ containing $u$ and $v$. So, there is exactly one $C' \in H_u$ containing $v$, which implies that $H_u$ is a perfect $3$-uniform hypergraph matching on $[n] \setminus \{u\}$, i.e., $\abs{H_u} = \frac{n - 1}{3}$.
\end{remark}

Finally, we recall the lower bound for linear $2$-LDCs from~\cite{GoldreichKST06}.
\begin{fact}[Lemma 3.3, Lemma 3.5 in~\cite{GoldreichKST06}]
\label{fact:2ldclb}
Let $\Lincode \colon \Bits^k \to \Bits^n$ be a linear map, and let $G_1, \dots, G_k$ be matchings on $n$ vertices such that for every $b \in \Bits^k$ and every $i \in [k]$ and every $(u,v) \in G_i$, it holds that $x_u + x_v = b_i$, where $x = \Lincode(b)$. Suppose that $\frac{1}{k} \sum_{i = 1}^k \abs{G_i} \geq \delta n$. Then, $2\delta k \leq \log_2 n$.
\end{fact}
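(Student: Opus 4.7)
The plan is to translate the matching constraints into a combinatorial problem on the Boolean hypercube $\F_2^k$ and then apply an edge-isoperimetric inequality. Since $\Lincode$ is $\F_2$-linear, each output bit is a linear functional on the message, so there exist vectors $\ell_v \in \F_2^k$ for $v \in [n]$ with $\Lincode(b)_v = \langle \ell_v, b \rangle$ for every $b \in \F_2^k$. The identity $x_u + x_v = b_i$ holding for every $b$ and every $(u,v) \in G_i$ is then equivalent to the algebraic constraint $\ell_u + \ell_v = e_i$ in $\F_2^k$. Define $V = \{\ell_v : v \in [n]\} \subseteq \F_2^k$ and $n_a = \abs{\{v : \ell_v = a\}}$, so that $\sum_{a \in V} n_a = n$ and $\abs{V} \leq n$. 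It therefore suffices to prove $\abs{V} \geq 2^{2\delta k}$.

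Each edge $(u,v) \in G_i$ corresponds to an unordered pair $\{a, a+e_i\}$ with $a, a+e_i \in V$, i.e., a direction-$e_i$ edge of the subgraph $Q_k[V]$ induced on $V$ in the Boolean hypercube $Q_k$. Because $G_i$ is a matching on $[n]$, the number of edges of $G_i$ mapped to any such pair is at most $\min(n_a, n_{a+e_i})$. Summing over $i$ and using $\sum_i \abs{G_i} \geq \delta k n$ yields
\begin{equation*}
\delta k n \leq \sum_{\{a,b\} \in E(Q_k[V])} \min(n_a, n_b) \mper
\end{equation*}
I would then combine this with Harper's edge-isoperimetric inequality, which bounds the number of edges of $Q_k$ induced on any $V$ by $\tfrac{1}{2} \abs{V} \log_2 \abs{V}$. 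In the uniform case $n_a \equiv n/\abs{V}$, the right-hand side above equals $(n/\abs{V}) \cdot \tfrac{1}{2}\abs{V} \log_2 \abs{V} = \tfrac{n}{2} \log_2 \abs{V}$, immediately giving $\log_2 \abs{V} \geq 2\delta k$ and hence $n \geq \abs{V} \geq 2^{2\delta k}$.

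The main obstacle will be the non-uniform case: if $n_a$ concentrates on high-degree vertices of $Q_k[V]$, the crude bound $\min(n_a, n_b) \leq \tfrac{1}{2}(n_a + n_b)$ produces $\tfrac{1}{2}\sum_a n_a \deg_V(a)$, which can be as large as $\tfrac{n}{2}\max_a \deg_V(a)$ and may far exceed $\tfrac{n}{2}\log_2 \abs{V}$. I would handle this by either (i) tightening $\min(n_a, n_b) \leq \sqrt{n_a n_b}$ and analyzing the resulting quadratic form $f^{\top} A_{Q_k[V]} f$ with $f(a) = \sqrt{n_a}$ via spectral bounds on induced subgraphs of the hypercube, or (ii) restricting attention to the balanced level set $V' = \{a \in V : n_a \geq n/(2\abs{V})\}$, observing that edges touching $V \setminus V'$ contribute negligibly (since they have at least one light endpoint) and then applying the uniform argument to $V'$. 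Either approach should yield the final bound $2\delta k \leq \log_2 \abs{V} \leq \log_2 n$, which is sharp at the Hadamard code where $n = 2^k$ and $\delta = 1/2$.
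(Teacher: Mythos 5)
The paper does not prove this fact; it imports it from \cite{GoldreichKST06}, so there is no in-paper proof to compare against. Your reduction to the hypercube is correct and is the right first move: linearity gives $\ell_v \in \F_2^k$ with $\Lincode(b)_v = \langle \ell_v, b\rangle$, the constraint forces $\ell_u + \ell_v = e_i$ for $(u,v) \in G_i$, and since $G_i$ is a matching at most $\min(n_a, n_{a+e_i})$ of its edges can land on the pair $\{a, a+e_i\}$, so indeed $\delta k n \leq \sum_{\{a,b\}\in E(Q_k[V])}\min(n_a,n_b)$. The gap is in both of your proposals for the non-uniform case.

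Approach (i) is broken. The bound $\lambda_{\max}(A_{Q_k[V]}) \leq \log_2 |V|$ that your spectral route would require is false: for $V = \{0, e_1, \dots, e_k\}$ the induced subgraph is the star $K_{1,k}$, so $\lambda_{\max} = \sqrt{k}$, which exceeds $\log_2(k+1)$ once $k$ is moderately large. Worse, the relaxation $\min(n_a,n_b) \leq \sqrt{n_a n_b}$ is itself too lossy, not merely hard to analyze. In that same example with $n_0 = n/2$ and $n_{e_i} = n/(2k)$, one computes $\sum_{\{a,b\}} \sqrt{n_a n_b} = n\sqrt{k}/2$, which would only yield $2\delta k \leq \sqrt{k}$; but $\sum_{\{a,b\}}\min(n_a,n_b) = n/2$, giving the correct $2\delta k \leq 1$. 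The geometric mean throws away exactly the asymmetry that $\min$ exploits. Approach (ii) is also not enough as stated: the set $V' = \{a : n_a \geq n/(2|V|)\}$ only excludes very light vertices, so $n_a$ can still vary by a factor of $|V|$ inside $V'$ (e.g.\ weight $n/2$ and weight $n/(2|V|)$ both survive), and the uniform argument does not apply to $V'$. Moreover the discarded edges contribute up to $(n/(2|V|))\cdot |E(Q_k[V])| \leq \frac{n}{4}\log_2 |V|$, which is the same order as the target bound, not negligible.

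What closes the gap cleanly from your step is an entropy argument replacing Harper's inequality. Set $\mu(a) = n_a/n$ and note $\sum_{\{a,b\}\in E(Q_k[V])}\min(n_a,n_b) = \tfrac{n}{2}\sum_{i,a}\min\bigl(\mu(a),\mu(a+e_i)\bigr)$. View $\mu$ as the law of $A = (A_1,\dots,A_k)\in\F_2^k$; then the chain rule and ``conditioning reduces entropy'' give $H(\mu) = \sum_i H(A_i \mid A_{<i}) \geq \sum_i H(A_i \mid A_{-i})$. For each $i$ and each value $a_{-i}$ with conditional masses $p \geq q$, concavity of $H_2$ gives $H_2(x) \geq 2x$ on $[0,1/2]$, hence $(p+q)H_2\bigl(q/(p+q)\bigr) \geq 2q = 2\min(p,q)$; summing over $a_{-i}$ yields $H(A_i \mid A_{-i}) \geq \sum_a \min\bigl(\mu(a),\mu(a+e_i)\bigr)$. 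Therefore $\sum_{i,a}\min(\mu(a),\mu(a+e_i)) \leq H(\mu) \leq \log_2 n$, and plugging back gives $\delta k n \leq \tfrac{n}{2}\log_2 n$, i.e.\ $2\delta k \leq \log_2 n$, with no lost constants. This is essentially Harper's inequality in its functional/entropic form, proved by tensorizing the one-dimensional bound $H_2(x)\geq 2x$ rather than by passing through spectral norms.
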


%%%%%%%%%%%%%%%%%%%%%%%%%%%%%%%%%%%%%%%%%%%%%%%%%%%%%%%%%%%%%%%%%%%%%%%%%%%%%%%%
%%%%%%%%%%%%%%%%%%%%%%%%%%%%%%%%%%%%%%%%%%%%%%%%%%%%%%%%%%%%%%%%%%%%%%%%%%%%%%%%
\subsection{Concentration inequalities}
We will use the following non-commutative Khintchine inequality~\cite{LustPiquardG91}.
\begin{fact}[Rectangular Matrix Khintchine inequality, Theorem 4.1.1 of \cite{Tropp15}]
\label{fact:matrixkhintchine}
Let $X_1, \dots, X_k$ be fixed $d_1 \times d_2$ matrices and $b_1, \dots , b_k$ be i.i.d.\ from $\Fits$. Let $\sigma^2 \geq \max(\norm{\sum_{i = 1}^k X_i X_i^{\top}]}_2, \norm{\sum_{i = 1}^k X_i^{\top} X_i]}_2)$. Then
\begin{equation*}
\E\Bigl[\ \Norm{\sum_{i = 1}^k b_i X_i}_2\ \Bigr] \leq \sqrt{2\sigma^2 \log(d_1 + d_2)} \enspace.
\end{equation*}
\end{fact}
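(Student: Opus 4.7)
The plan is to reduce the rectangular case to a Hermitian (symmetric) case via dilation and then invoke the symmetric matrix Khintchine inequality, whose standard proof goes through the matrix Laplace transform / Ahlswede--Winter argument. For $Y \in \R^{d_1 \times d_2}$, define the Hermitian dilation $\tilde Y \defeq \left( \begin{smallmatrix} 0 & Y \\ Y^{\top} & 0 \end{smallmatrix} \right) \in \R^{(d_1+d_2) \times (d_1+d_2)}$. The map $Y \mapsto \tilde Y$ is linear and preserves spectral norm ($\norm{\tilde Y}_2 = \norm{Y}_2$), so bounding $\E \norm{\sum_i b_i X_i}_2$ is the same as bounding $\E \Norm{\sum_i b_i \tilde X_i}_2$, where now the summands are symmetric matrices of side $d \defeq d_1 + d_2$.

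Next I would compute the ``matrix variance'' of the symmetric sum. A block computation gives $\tilde X_i^2 = \left( \begin{smallmatrix} X_i X_i^{\top} & 0 \\ 0 & X_i^{\top} X_i \end{smallmatrix} \right)$, so $\sum_i \tilde X_i^2$ is block-diagonal with blocks $\sum_i X_i X_i^{\top}$ and $\sum_i X_i^{\top} X_i$. Hence $\Norm{\sum_i \tilde X_i^2}_2 = \max\bigl(\Norm{\sum_i X_i X_i^{\top}}_2,\ \Norm{\sum_i X_i^{\top} X_i}_2\bigr) \leq \sigma^2$. With this in hand, I would apply the matrix Laplace transform bound to $Z = \sum_i b_i \tilde X_i$: for any $\theta > 0$, $\norm{Z}_2 \leq \frac{1}{\theta} \log \Tr(e^{\theta Z} + e^{-\theta Z})$. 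Taking expectations and using Lieb's concavity theorem to peel off the Rademacher variables one at a time, together with the spectral bound $\E_{b_i}[e^{\theta b_i \tilde X_i}] = \cosh(\theta \tilde X_i) \preceq \exp(\theta^2 \tilde X_i^2 / 2)$, gives $\E \Tr e^{\pm \theta Z} \leq d \cdot \exp(\theta^2 \sigma^2 / 2)$. Substituting, $\E \norm{Z}_2 \leq \tfrac{\log(2d)}{\theta} + \tfrac{\theta \sigma^2}{2}$, and optimizing $\theta = \sqrt{2 \log d / \sigma^2}$ yields the claimed $\sqrt{2 \sigma^2 \log(d_1+d_2)}$ bound (absorbing the constant factor $\log 2$ into the logarithm, or removing it entirely via a slightly more careful one-sided tail argument).

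The main obstacle is the matrix MGF peeling step: independence of the $b_i$ does not imply the matrix MGFs factor multiplicatively because the $\tilde X_i$ do not commute, so the naive scalar chain rule fails. Lieb's theorem on the concavity of $A \mapsto \Tr \exp(H + \log A)$ is the essential ingredient that makes the inductive peeling go through, and it was the historical breakthrough enabling the modern matrix concentration framework. The dilation reduction and the block variance computation, by contrast, are routine bookkeeping once Lieb's inequality is available as a black box.
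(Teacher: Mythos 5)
The paper does not supply a proof of this Fact at all; it is quoted verbatim as Theorem~4.1.1 of Tropp's monograph~\cite{Tropp15}, and the paper simply invokes it as a black box. Your sketch is, however, precisely the argument that appears in Tropp's text, so there is no discrepancy to flag — only the observation that you have reconstructed the cited proof rather than compared against one in the paper.

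On the substance: the chain of ideas you lay out — self-adjoint dilation $Y \mapsto \tilde Y$, the block identity $\tilde X_i^2 = \operatorname{diag}(X_i X_i^{\top},\, X_i^{\top} X_i)$ giving matrix variance $\max(\norm{\sum_i X_i X_i^{\top}}_2, \norm{\sum_i X_i^{\top} X_i}_2)$, the matrix Laplace transform, Lieb's concavity theorem to make the MGF peeling work despite noncommutativity, and the Rademacher MGF bound $\cosh(\theta \tilde X_i) \preceq \exp(\theta^2 \tilde X_i^2/2)$ — is exactly the modern proof. One small point worth tightening: you do not in fact need the two-sided trick $\Tr(e^{\theta Z} + e^{-\theta Z})$ and the resulting $\log(2d)$. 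Since each $\tilde X_i$ is a Hermitian dilation, so is $Z = \sum_i b_i \tilde X_i$, and a dilation has spectrum symmetric about zero; hence $\norm{Z}_2 = \lambda_{\max}(Z)$ and the one-sided bound $\E[\lambda_{\max}(Z)] \leq \theta^{-1}\log \E \Tr e^{\theta Z}$ already yields $\sqrt{2\sigma^2 \log(d_1+d_2)}$ with no spurious factor of $2$ inside the logarithm. You gesture at this at the end; it is cleaner to lead with it.
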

 
 %%%%%%%%%%%%%%%%%%%%%%%%%%%%%%%%%%%%%%%%%%%%%%%%%%%%%%%%%%%%%%%%%%%%%%%%%%%%%%%%
%%%%%%%%%%%%%%%%%%%%%%%%%%%%%%%%%%%%%%%%%%%%%%%%%%%%%%%%%%%%%%%%%%%%%%%%%%%%%%%%
\subsection{A fact about binomial coefficients}
 \begin{fact}
 \label{fact:binest}
Let $n, r, t, \ell$ be integers with $t \leq r$ and $\ell \geq r$. Then, it holds that
 \begin{equation*}
 \frac{{r \choose t} t! {n \choose \ell} {n \choose {\ell - (2r - t)}} }{{n - 2r \choose \ell - r}  {n - 2r \choose \ell - r}} \leq \left(1 + \frac{O(\ell^2)}{n}\right) n^t \frac{{\ell - r \choose r - t}}{{\ell \choose r}}
  \end{equation*}
 \end{fact}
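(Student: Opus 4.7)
My plan is to reduce the inequality to comparing the ratio
\[
\frac{\bigl(n^{\underline{2r}}\bigr)^{2}}{n^{t}\cdot (n-\ell)^{\underline{r}} \cdot (n-\ell+2r-t)^{\underline{3r-t}}}
\]
to $1 + O(\ell^{2}/n)$, where $n^{\underline{k}} \defeq n(n-1)\cdots(n-k+1)$ is the falling factorial. The whole proof is a direct manipulation of factorials followed by one Bernoulli-type estimate.

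First, I would expand every binomial coefficient on both sides into factorials and use $\binom{r}{t}t! = r!/(r-t)!$. After cancellation, one sees that the combinatorial factor
\[
\frac{r!\,\bigl((\ell-r)!\bigr)^{2}}{(r-t)!\,(\ell-2r+t)!\,\ell!}
\]
appears on both sides; dividing it out leaves only the ``$n$-dependent'' factorials, and collecting these into the falling-factorial form above is routine. As a sanity check, the numerator is of leading order $n^{4r}$ and the denominator of order $n^{t+r+(3r-t)} = n^{4r}$, so the ratio is indeed $n$-balanced.

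Next, I factor $(n-a)^{\underline{k}} = n^{k}\prod_{i=0}^{k-1}\bigl(1-(a+i)/n\bigr)$ and cancel the common $n^{4r}$ from top and bottom. The remaining product in the numerator is $\prod_{i=0}^{2r-1}(1-i/n)^{2} \leq 1$, while the remaining product in the denominator consists of $r+(3r-t) \leq 4r \leq 4\ell$ factors of the form $1 - y/n$ with $0 \leq y \leq \ell + r - 1 \leq 2\ell$ (using $r \leq \ell$). Hence each such factor is at least $1 - 2\ell/n$, and the full product is at least $(1-2\ell/n)^{4r} \geq 1 - 8r\ell/n \geq 1 - O(\ell^{2}/n)$ by Bernoulli's inequality, provided $\ell^{2}/n$ is below a small absolute constant (which is automatic, because otherwise the claimed bound $1 + O(\ell^{2}/n)$ is vacuous).

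The main step is essentially bookkeeping: the key is that the $n^{4r}$ factors cancel \emph{exactly}, and what remains is a product of only $O(\ell)$ many correction factors, each within $1 - O(\ell/n)$, so the accumulated slack is the desired $O(\ell^{2}/n)$. A minor subtlety is that when $\ell < 2r - t$ the factor $\binom{n}{\ell-(2r-t)}$ in the LHS is zero, so the inequality holds trivially, and when $n < r+\ell$ or $n < 2r$ the right-hand falling factorials can vanish and the claim holds for the same reason; in the non-degenerate regime $n \geq r+\ell$ and $\ell \geq 2r - t$, the three steps above give the bound directly.
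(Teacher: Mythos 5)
Your proof is correct and follows essentially the same strategy as the paper's: both arguments observe that the purely $\ell$-dependent combinatorial factor $\frac{r!\,((\ell-r)!)^{2}}{(r-t)!\,(\ell-2r+t)!\,\ell!}$ is identical on both sides (using $\binom{r}{t}t! = r!/(r-t)!$ and then the identity $\frac{r!}{(r-t)!}\cdot\frac{(\ell-r)!}{\ell!}\cdot\frac{(\ell-r)!}{(\ell-2r+t)!} = \binom{\ell-r}{r-t}/\binom{\ell}{r}$), so the inequality reduces to comparing only the $n$-dependent parts, which you then control with a $1+O(\ell^2/n)$ factor. Where the paper estimates each binomial term-by-term via $\binom{n}{k}\approx n^{k}/k!$ and is terse about how the error accumulates, your version is slightly more transparent: expressing the $n$-dependent ratio as a quotient of falling factorials makes the cancellation of the leading $n^{4r}$ exact, and then only $O(r)$ residual correction factors remain, each of size $1-O(\ell/n)$, so a single Bernoulli estimate closes the argument. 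Your handling of the degenerate boundary cases ($\ell < 2r-t$ or $n < \ell+r$) is also a reasonable addition that the paper leaves implicit.
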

 \begin{proof}
 First, we have that
 \begin{flalign*}
 &\frac{ {n \choose \ell} {n \choose {\ell - (2r - t)}} }{{n - 2r \choose \ell - r}  {n - 2r \choose \ell - r}} \leq \left(1 + \frac{O(\ell^2)}{n}\right) \frac{n^{\ell}}{\ell!} \cdot \frac{n^{\ell -(2r - t)}}{(\ell - (2r - t))!} \cdot \frac{(\ell - r)!}{n^{\ell - r}}  \frac{(\ell - r)!}{n^{\ell - r}} \\
 &\leq \left(1 + \frac{O(\ell^2)}{n}\right) n^t \frac{(\ell - r)!}{\ell!} \cdot \frac{(\ell - r)!}{(\ell - (2r - t))!} \mper
  \end{flalign*}
  We now observe that
\begin{flalign*}
&{r \choose t} t! \frac{(\ell - r)!}{\ell!} \cdot \frac{(\ell - r)!}{(\ell - (2r - t))!}  =  \frac{r!}{(r-t)!} \cdot \frac{(\ell - r)!}{\ell!} \cdot \frac{(\ell - r)!}{(\ell - (2r - t))!} \\
&=\frac{1}{{\ell \choose r}} \cdot \frac{1}{(r-t)!} \cdot  \frac{(\ell - r)!}{(\ell - (2r - t))!} \\
&= \frac{{\ell - r \choose r - t}}{{\ell \choose r}}  \mcom
\end{flalign*}
 which finishes the proof.
 \end{proof}

%%%%%%%%%%%%%%%%%%%%%%%%%%%%%%%%%%%%%%%%%%%%%%%%%%%%%%%%%%%%%%%%%%%%%%%%%%%%%%%%
%%%%%%%%%%%%%%%%%%%%%%%%%%%%%%%%%%%%%%%%%%%%%%%%%%%%%%%%%%%%%%%%%%%%%%%%%%%%%%%%
%%%%%%%%%%%%%%%%%%%%%%%%%%%%%%%%%%%%%%%%%%%%%%%%%%%%%%%%%%%%%%%%%%%%%%%%%%%%%%%%

\section{Proof of \cref{mthm:designs}}
\label{sec:designs}
In this section, we prove \cref{mthm:designs}. The proof is substantially simpler than the proof for general linear codes (\cite{KothariM23}) or \cref{mthm:nonlin}. The proof here will be self-contained, and will also serve as a partial warmup to \cref{mthm:nonlin}.

The proof presented follows the overall blueprint of the proof in~\cite{KothariM23}. Namely, we will use the design $3$-LCC $\Lincode$ to construct a $2$-query linear locally decodable code, and then we will apply the lower bound of~\cite{GoldreichKST06}.\footnote{The proof in~\cite{KothariM23} is presented using the perspective of spectral refutation and matrix concentration bounds, even though the final proof eventually is a reduction to a $2$-LDC. Here, we present the proof as a reduction as it is a more accessible and combinatorial analysis, although we note that one could prove the same result using matrix concentration as well.} As mentioned in \cref{sec:techsdesigns}, we will incorporate the clever second moment method proof of the row pruning step due to~\cite{Yankovitz24}, which is very similar to the edge deletion method of~\cite{HsiehKM23} done in the context of semirandom and smoothed CSP refutation~\cite{GuruswamiKM22}. The key reason that we save the final $\log n$ factor is by using a more carefully chosen Kikuchi graph, a sharp accounting of binomial coefficients, and the crucial use of the fact that in the design case, the hypergraph matchings are perfect.

Let us now proceed with the proof. Let $\Lincode \colon \Bits^k \to \Bits^n$ be a design $3$-LCC. Namely, there exists a $4$-uniform hypergraph design $H \subseteq {[n] \choose 4}$ such that for all $C \in H$, $\sum_{v \in C} x_v = 0$ for all $x \in \Lincode$. Without loss of generality, we may assume that $\Lincode$ is systematic, i.e., for each $b \in \Bits^k$, $\Lincode(b)_i = b_i$. To bound $k$, we will give another linear map $\Lincode' \colon \Bits^n \to \Bits^{2nN}$, where $N =  {n \choose \ell}$ for some parameter $\ell = (1 + o(1))\log_2 n$, and we will show that $\Lincode' \circ \Lincode \colon \Bits^k \to \Bits^N$ is a $2$-query linear locally decodable code with $\delta = \frac{1}{2}(1 - o(1))$. We can then apply \cref{fact:2ldclb} to conclude that $(1 - o(1)) k \leq 2 \delta k \leq \log_2 N \leq (\ell + 1) \log_2 n$ where $\ell = (1 +o(1))\log_2 n$. 

For each $u \in [n]$, we let $H_u$ denote the $3$-uniform hypergraph defined from $H$ as specified in \cref{rem:designtonormalform}, i.e., $H_u = \{C : C \cup \{u\} \in H\}$. As shown in \cref{rem:designtonormalform}, $H_u$ is a matching of size $\delta n = \frac{n-1}{3}$, i.e., $\delta \defeq \frac{1}{3} - \frac{1}{3n}$.

\parhead{Step 1: forming long chain derivations.} In the first step of the proof, we use the initial system of constraints $H$ to define a larger system of constraints, called long chain derivations. This is the key idea of \cite{KothariM23} that yields the first exponential lower bound for linear $3$-LCCs, and is the starting point of our proof.

\begin{definition}
\label{def:designchains}
Let $H_1, \dots, H_n$ be the $3$-uniform hypergraph matchings defined from the $4$-design $H$. An $r$-chain with \emph{head $u_0$} is an ordered sequence of vertices of length $3r + 1$, given by $C = (u_0, v_1, v_2, u_1, v_3, v_4, u_2, \dots, v_{2(r-1) + 1}, v_{2(r-1) + 2}, u_r)$, such that all the $v_h$'s are \emph{distinct}\footnote{In this section only, we will enforce that all the $v_h$'s are distinct, as this will be slightly more convenient.} and for each $h = 0, \dots, r-1$, it holds that $\{v_{2h + 1}, v_{2h + 2}, u_{h+1}\} \in H_{u_h}$. We let $\cH^{(r)}_u$ denote the set of $r$-chains with head $u$.

We let $C_L = (v_1, v_3, v_5, \dots, v_{2(r-1) + 1})$ denote the ``left half'' of the chain, and $C_R = (v_2, v_4, v_6, \dots, v_{2(r-1) + 2})$ denote the ``right half''. We call $u_r$ the ``tail''.
\end{definition}

We observe that $\cH^{(r)}_u$ has size at most $(6 \delta n)^r$ and size at least $(6 \delta n - 4r)^r$. Indeed, the upper bound follows because, given a partial chain $(u_0, v_1, v_2, \dots, u_h)$, there are exactly $6 \delta n$ choices of $(v_{2h+1}, v_{2h+2}, u_{h+1})$ (which we note are ordered), and the lower bound follows because there are always at least $6 \delta n - 4h \geq 6 \delta n - 4r$ choices, as each vertex $v$ can appear in either the first or second spot in at most $2$ \emph{ordered} hyperedges in $H_{u'}$ for any $u' \in [n]$.

The following observation asserts that the system of linear equations given by the chains are satisfied by every $x \in \Lincode$.
\begin{observation}
\label{obs:chaincompleteness}
Let $C  = (u_0, v_1, v_2, u_1, v_3, v_4, u_2, \dots, v_{2(r-1) + 1}, v_{2(r-1) + 2}, u_r) \in \cH_u^{(r)}$ be an $r$-chain, with left half $C_L$ and right half $C_R$. Then, for any $x \in \Lincode$, it holds that $ x_{u_r} + \sum_{v \in C_L} x_v + \sum_{v \in C_R} x_v = x_{u_0}$.
\end{observation}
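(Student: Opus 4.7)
The plan is a straightforward telescoping/induction argument using the design property together with the definition of the matchings $H_{u_h}$ given in \cref{rem:designtonormalform}. The key observation is that each link of the chain corresponds to a length-$4$ block of $H$, on which the codewords satisfy a parity equation, and chaining these equations together causes the intermediate ``head'' variables $x_{u_1}, \dots, x_{u_{r-1}}$ to cancel.

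More concretely, I would first unpack the definition. Fix an $r$-chain $C \in \cH_u^{(r)}$ as in the statement. For each $h \in \{0, \dots, r-1\}$, by definition of an $r$-chain we have $\{v_{2h+1}, v_{2h+2}, u_{h+1}\} \in H_{u_h}$, and by \cref{rem:designtonormalform} this is equivalent to $\{v_{2h+1}, v_{2h+2}, u_{h+1}, u_h\} \in H$. Since $H$ is a design $3$-LCC and $x \in \Lincode$ satisfies $\sum_{v \in C'} x_v = 0$ over $\F_2$ for every $C' \in H$ (equivalently, $\prod_{v \in C'} x_v = 1$ in $\Fits$-notation), we obtain the local parity relation
\begin{equation*}
x_{u_h} = x_{v_{2h+1}} + x_{v_{2h+2}} + x_{u_{h+1}} \quad \text{in } \F_2,
\end{equation*}
for every $h \in \{0,\dots,r-1\}$.

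Next, I would iterate this identity. Starting from $h = 0$ we get $x_{u_0} = x_{v_1} + x_{v_2} + x_{u_1}$; substituting the $h=1$ relation into $x_{u_1}$ yields $x_{u_0} = x_{v_1} + x_{v_2} + x_{v_3} + x_{v_4} + x_{u_2}$; continuing by induction on $h$ (or equivalently summing all $r$ relations and cancelling the intermediate $x_{u_1}, \dots, x_{u_{r-1}}$, each of which appears exactly twice), we arrive at
\begin{equation*}
x_{u_0} \;=\; \sum_{h=0}^{r-1} \bigl(x_{v_{2h+1}} + x_{v_{2h+2}}\bigr) + x_{u_r} \;=\; \sum_{v \in C_L} x_v + \sum_{v \in C_R} x_v + x_{u_r},
\end{equation*}
using the definitions $C_L = (v_1, v_3, \dots, v_{2(r-1)+1})$ and $C_R = (v_2, v_4, \dots, v_{2(r-1)+2})$. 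This is exactly the claimed identity.

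There is no real obstacle here; the only thing to be careful about is invoking \cref{rem:designtonormalform} correctly so that an edge in $H_{u_h}$ is indeed identified with a $4$-tuple of $H$ containing $u_h$, and observing that the $v_h$'s are distinct (so no $\F_2$ cancellation happens in $C_L \cup C_R$). A formal induction on $r$ could also be written, but the telescoping sum above already makes the cancellation transparent.
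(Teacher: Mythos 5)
Your proof is correct and matches the paper's argument: both derive the per-link relation $x_{u_h} = x_{v_{2h+1}} + x_{v_{2h+2}} + x_{u_{h+1}}$ from membership in $H_{u_h}$ and then sum (telescope) these over $h = 0, \dots, r-1$ to cancel the intermediate pivots. The extra detail you add (explicitly invoking \cref{rem:designtonormalform} and noting distinctness of the $v_h$'s) is harmless but not needed for the identity itself.
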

\begin{proof}
For any chain $C$, we have that for all $h = 0,\dots, r-1$, it holds that $\{v_{2h + 1}, v_{2h + 2}, u_{h+1}\} \in H_{u_h}$, which implies that $x_{v_{2h+1}} + x_{v_{2h+2}} + x_{u_{h+1}} = x_{u_h}$ for all $x \in \Lincode$. By taking the product over all these equations, \cref{obs:chaincompleteness} follows.
\end{proof}

\parhead{Step 2: defining the Kikuchi graphs.}
In this step, we will define two linear maps $\Lincode_1 \colon \Bits^n \to \Bits^{L}$ and $\Lincode_2 \colon \Bits^n \to \Bits^R$, where $L = {[n] \choose \ell} \times [n]$, $R = {[n] \choose \ell}$, and $\ell$ is a parameter, as follows. Let $\Lincode_1(x)_{(S,v)} \defeq x_v + \sum_{v' \in S} x_{v'}$, and let $\Lincode_2(x)_T \defeq \sum_{v' \in T} x_{v'}$. 
Note that $\abs{L} = nN$ and $\abs{R} = N$, where $N = {n \choose \ell}$. 

Now, for each $u \in [n]$, we will use the set of $r$-chains $\cH_u^{(r)}$ to define a bipartite graph $G_u$ with left vertices $L$ and right vertices $R$ such that, for each edge $((S,v), T)$ in $G_u$, it holds that $\Lincode_1(x)_{(S,v)} + \Lincode_2(x)_T = x_u$. This graph $G_u$ will be the following Kikuchi graph.

\begin{definition}[Kikuchi graph]
\label{def:designkikuchi}
Let $\ell$ be a parameter, to be determined later, and let $G_u$ be the graph with left vertex set $L = {[n] \choose \ell} \times [n]$ and right vertex set $R = {[n] \choose \ell}$. For a chain $C = (u_0, v_1, v_2, u_1, v_3, v_4, u_2, \dots, v_{2(r-1) + 1}, v_{2(r-1) + 2}, u_r) \in \cH_u^{(r)}$ with left half $C_L$ and right half $C_R$, we add an edge $((S,w), T)$ to $G_u$ ``labeled'' by $C$ if $S = C_L \cup U$, $T = C_R \cup U$ where $\abs{U} = \ell - r$\footnote{Note that here we will use that all the $v_h$'s are distinct, so that $\abs{C_L} = \abs{C_R} = r$ and $\abs{C_L} + \abs{C_R} = 2r$.} and $w = u_r$. Two distinct chains may produce the same edge --- we add edges with multiplicity.
\end{definition}

We now make the following simple observations about the graph $G_u$.
\begin{observation}
\label{obs:designedgemult}
For any chain $C = (u_0, v_1, v_2, u_1, v_3, v_4, u_2, \dots, v_{2(r-1) + 1}, v_{2(r-1) + 2}, u_r) \in \cH_u^{(r)}$, the number of edges in $G_u$ ``labeled'' by $C$ is exactly ${n - 2r \choose \ell - r}$.

In particular, the average left degree of $G_u$, denoted by $d_{u,L}$ is ${n - 2r \choose \ell - r}/nN$, and the average right degree, denoted by $d_{u,R}$ is ${n - 2r \choose \ell - r}/N$.
\end{observation}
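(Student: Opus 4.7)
The plan is to establish the edge-count-per-chain claim first, and then obtain both average-degree statements as immediate consequences by summing over all chains. The key structural fact I will use throughout is the distinctness condition built into \cref{def:designchains}, namely that all $v_h$'s in a chain are distinct, which forces $|C_L \cup C_R| = 2r$ and makes every chain contribute the same number of edges.

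For the edge count, fix a chain $C = (u_0, v_1, v_2, u_1, \ldots, u_r) \in \cH_u^{(r)}$. By \cref{def:designkikuchi}, an edge $((S,w), T)$ of $G_u$ labeled by $C$ must satisfy $w = u_r$, $S = C_L \cup U$, and $T = C_R \cup U$ for some $U \subseteq [n]$ with $|U| = \ell - r$. The vertex $w$ is thus determined, and the only freedom is the choice of $U$. In order for both $|S| = \ell$ and $|T| = \ell$, the set $U$ must be disjoint from $C_L \cup C_R$; the distinctness condition ensures $|C_L \cup C_R| = 2r$, so $U$ can be any $(\ell - r)$-subset of the $n - 2r$ vertices outside $C_L \cup C_R$, giving exactly ${n - 2r \choose \ell - r}$ choices.

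For the average-degree statement, sum the per-chain contribution over $\cH_u^{(r)}$ to obtain a total (multi-)edge count of $|\cH_u^{(r)}| \cdot {n - 2r \choose \ell - r}$. Dividing by $|L| = nN$ yields the average left degree, and dividing by $|R| = N$ yields the average right degree.

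I do not expect any substantive obstacle: the proof is entirely combinatorial bookkeeping, and the only point requiring real care is tracking that $|C_L \cup C_R| = 2r$. The careful setup of \cref{def:designchains} (excluding chains with repeated $v_h$'s) ensures that every chain contributes the same number of edges and that the counting is uniform across chains, which is precisely the property needed for the stated formulas.
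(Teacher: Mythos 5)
Your argument matches the paper's exactly: the distinctness of the $v_h$'s enforced in \cref{def:designchains} gives $\abs{C_L \cup C_R} = 2r$, so the only freedom in an edge labeled by $C$ is the choice of $U \subseteq [n] \setminus (C_L \cup C_R)$ with $\abs{U} = \ell - r$, yielding ${n-2r \choose \ell - r}$ edges. One side remark: summing over chains as you do gives $d_{u,L} = \abs{\cH_u^{(r)}} {n-2r \choose \ell - r}/(nN)$ and $d_{u,R} = \abs{\cH_u^{(r)}} {n-2r \choose \ell - r}/N$; the formula as printed in the observation drops the $\abs{\cH_u^{(r)}}$ factor and appears to be a typo, since the paper's subsequent degree bounds (leading into \cref{lem:designrowpruning}) use the corrected expression with that factor restored.
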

\begin{proof}
Let $C_L$ be the left half of $C$ and let $C_R$ be the right half. Because all the $v_h$'s are distinct, we have $\abs{C_L} = \abs{C_R} = r$ and $\abs{C_L \cup C_R} = 2r$. It follows that the number of pairs $((S,w),T)$ such that $((S,w),T)$ is an edge in $G_u$ labeled by $C$ is simply the number of choices for the set $U$, which is a subset of $[n] \setminus (C_L \cup C_R)$ of size $\ell - r$. Thus, there are exactly ${n - 2r \choose \ell - r}$ choices.
\end{proof}

\begin{observation}
\label{obs:designcompleteness}
For every edge $((S,w), T)$ in $G_u$ and $x \in \Lincode$, it holds that $\Lincode_1(x)_{(S,w)} + \Lincode_2(x)_{T} = x_u$.
\end{observation}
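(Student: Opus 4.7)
The plan is a direct unfolding of the definitions, exploiting the fact that the ``common piece'' $U$ shared between the left vertex $(S,w)$ and the right vertex $T$ cancels over $\F_2$. This reduces the claim to the chain-derived linear constraint already established in \cref{obs:chaincompleteness}.

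First I would invoke \cref{def:designkikuchi}: any edge $((S,w),T)$ in $G_u$ is labeled by some chain $C \in \cH_u^{(r)}$ for which $S = C_L \cup U$, $T = C_R \cup U$, $w = u_r$, and $U \subseteq [n] \setminus (C_L \cup C_R)$ has size $\ell - r$. Since the $v_h$'s in a chain are required to be distinct (see \cref{def:designchains}), the sets $C_L$, $C_R$, and $U$ are pairwise disjoint, so $\sum_{v' \in S} x_{v'}$ and $\sum_{v' \in T} x_{v'}$ each cleanly decompose into a chain-half contribution plus a copy of $\sum_{v' \in U} x_{v'}$. Then I would compute, using the definitions of $\Lincode_1$ and $\Lincode_2$, that $\Lincode_1(x)_{(S,w)} + \Lincode_2(x)_T = x_{u_r} + \sum_{v' \in C_L} x_{v'} + \sum_{v' \in C_R} x_{v'} + 2 \sum_{v' \in U} x_{v'}$, where the factor of $2$ on the $U$-sum arises because $U$ appears in both $S$ and $T$. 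Over $\F_2$ the $U$-sum cancels, leaving exactly the left-hand side of \cref{obs:chaincompleteness}, which equals $x_{u_0} = x_u$.

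There is essentially no obstacle here: this observation is the ``design promise'' of \cref{def:designkikuchi}, confirming that the graph $G_u$ was built precisely so that each edge encodes a chain-derived linear constraint certifying $x_u$, with the shared subset $U$ serving only as combinatorial multiplicity (to be exploited in the forthcoming matching/row-pruning arguments) while carrying no information over $\F_2$.
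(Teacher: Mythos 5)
Your proposal is correct and matches the paper's proof essentially verbatim: unfold the definitions of $\Lincode_1$, $\Lincode_2$, and the edge-labeling from \cref{def:designkikuchi}, cancel the shared $U$-sum over $\F_2$, and invoke \cref{obs:chaincompleteness}. Nothing to add.
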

\begin{proof}
Suppose that $((S,w), T)$ in $G_u$ is an edge labeled by the chain $C$, which has left half $C_L$ and right half $C_R$. We then have that $w = u_r$, $u = u_0$, and $S = C_L \cup U$, $T = C_R \cup U$. Therefore,
\begin{flalign*}
&\Lincode_1(x)_{(S,w)} + \Lincode_2(x)_{T} = x_{u_r} + \sum_{z \in S} x_z + \sum_{z \in T} x_z \\
&= x_{u_r} + \sum_{z \in C_L} x_z + \sum_{z \in C_r} x_z + \sum_{z \in U} (x_z + x_z) = x_{u_r} + \sum_{z \in C_L} x_z + \sum_{z \in C_r} x_z = x_u \mcom
\end{flalign*}
where the last equality uses \cref{obs:chaincompleteness}.
\end{proof}

\parhead{The plan for the remainder of the proof.}
 Let us now take a brief moment to outline the steps for the remainder of the proof. To construct a $2$-LCC, it suffices to show that $G_u$ admits a matching $M_u$ of size $\Omega(N)$. Indeed, if this were the case, then the matching $M_u$ would be the matching that we require to invoke \cref{fact:2ldclb} and thus finish the proof.

To show that $G_u$ has a large matching, it suffices bound the maximum degree of the graph by $d$, as then $G_u$ must admit a matching of size at least $\abs{E(G_u)}/d$. However to do this, there are two issues to resolve. The most obvious issue is that the bipartite graph is unbalanced, i.e., $\abs{L} = n\abs{R}$, and so this prevents us from obtaining a matching of size $\Omega(\abs{L})$. This issue can be easily fixed by the following trick:\footnote{This is a nice trick of \cite{Yankovitz24} that, while it does not affect the final bounds, saves a use of the Cauchy--Schwarz inequality and thus makes the graph $G_u$ a bit simpler to describe.} for each right vertex $T \in R$, we can create $n$ copies of $T$, denoted by $T^{(1)}, \dots, T^{(n)}$, and split the edges adjacent to $T$ evenly across the copies. This decreases the average (and maximum) right degree by a factor of $(1-o(1))n$, and fixes the issue.

The second, and much more challenging problem, is that the graph $G_u$ need not be approximately biregular. Indeed, if the graph $G_u$ was exactly \emph{biregular}, then apply the above ``splitting trick'' would imply that the resulting graph has a \emph{perfect} matching of size $nN/2$.

This irregularity issue is a common problem for Kikuchi matrices and has arisen in many prior works~\cite{GuruswamiKM22,HsiehKM23,AlrabiahGKM23,KothariM23,Yankovitz24}.
The way to handle this issue is to show that $G_u$ admits a subgraph $G'_u$ that is \emph{approximately} biregular and still contains a significant fraction of the edges of $G_u$, i.e., $\abs{E(G'_u)} \geq \Omega(\abs{E(G_u)})$. We follow the terminology of prior work and call this step the ``row pruning'' step, which is so named because it involves pruning rows (and columns) of the adjacency matrix of $G_u$. This row pruning step is the crucial, and by far the most technical, component of the proof.

\parhead{Step 3: Finding a near-perfect matching in $G_u$.} We now argue that $G_u$ admits a degree-bounded subgraph $G'_u$ containing $(1-o(1))\abs{E(G_u)}$ edges. The strategy in~\cite{KothariM23} is to use the moment method to argue that with high probability, a random left (or right) vertex of the graph has degree at most $O(d_{u,L})$ (or $O(d_{u,R})$) with high probability. Here, we will diverge from the technical implementation of the proof in~\cite{KothariM23} and follow the approach of~\cite{HsiehKM23, Yankovitz24}, which is the observation that it suffices to compute first and second moments only. Indeed, it is computing higher moments that causes the loss of several extra $\log n$ factors in the proof of~\cite{KothariM23}, as compared to~\cite{Yankovitz24}.

The key reason we shall save the final $\log n$ factor is because the matchings $H_u$ are nearly perfect, i.e., they have size $\delta n$ where $\delta = \frac{1}{3} - \frac{1}{3n}$. This, combined with the careful choice of the matrix (see \cref{rem:newmatrix}) allows us to take $\ell = O(r)$ instead of $\ell = O(r^2)$, which saves a $\log n$ factor. We note that in order to get the sharp constant achieved in \cref{mthm:designs}, we need to show that $G_u$ contains a \emph{near-perfect} matching.

Let $\deg_{u,L}(S,w)$ denote the left degree of $(S,w)$ in $G_u$, and let $\deg_{u,R}(T)$ denote the right degree of $T$ in $G_u$.
In the following lemma, we compute the first\footnote{Note that \cref{obs:designedgemult} computes the first moments already.} and second moments of the degree functions. This lemma is the key technical lemma of the proof, and immediately implies the existence of a degree-bounded subgraph of $G_u$ of comparable density, as we shall shortly see.
\begin{lemma}[Second moment bounds for the left and right degree]
\label{lem:designrowpruning}
Let $\ell$ be a parameter with $\ell \geq r$ such that $r, \ell = o(n^{1/4})$. Let $G_u$ be the graph defined in \cref{def:designkikuchi}. Then, it holds that
\begin{flalign*}
&\E_{(S,w)}[\deg_L(S,w)^2] \leq (1 + o(1) + \eta)\E_{(S,w)}[\deg_L(S,w)] \mcom \\
&\E_{T}[\deg_R(T)^2] \leq (1 + o(1))\E_{T}[\deg_R(T)] \mper
\end{flalign*}
Here, the $o(1)$ is $O(\ell^2)/n$ and $\eta = n/{\ell \choose r}$.
\end{lemma}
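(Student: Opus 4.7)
The plan is to expand each second moment as a double sum over pairs of chains and group terms by the overlap structure of the ``right halves'' (for $\deg_R$) or ``left halves together with the tails'' (for $\deg_L$). For the right-degree bound, I observe that given a right vertex $T$ and a chain $C \in \cH_u^{(r)}$ with $C_R \subseteq T$, the edge of $G_u$ labeled by $C$ incident to $T$ is uniquely determined by $w = u_r$ and $S = C_L \cup (T \setminus C_R)$. Hence $\deg_R(T) = |\{C \in \cH_u^{(r)} : C_R \subseteq T\}|$, and summing over $T$ gives
\begin{align*}
\E_T[\deg_R(T)^2] = \binom{n}{\ell}^{-1}\sum_{t = 0}^{r} P_t \, \binom{n - (2r-t)}{\ell - (2r-t)}, \qquad P_t := |\{(C, C') : |C_R \cap C'_R| = t\}|.
\end{align*}

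The heart of the argument is the combinatorial bound $P_t \leq |\cH_u^{(r)}| \cdot \binom{r}{t}^2 t! \cdot 2^t (6\delta n)^{r-t}$. Fixing $C$: one chooses $t$ positions of $C'_R$ and the bijection to elements of $C_R$ (contributing $\binom{r}{t}^2 t!$); since each hypergraph $H_{u'_{h-1}}$ is a \emph{matching}, each pinned vertex lies in at most one unordered hyperedge, giving at most $2$ orderings per constrained position ($2^t$ total); and the remaining $r - t$ unconstrained positions each admit at most $6\delta n$ ordered hyperedges. Combining with the lower bound $|\cH_u^{(r)}| \geq (1-o(1))(6\delta n)^r$ and dividing by $d_R^2$, an application of \cref{fact:binest} converts the ratios of binomials at scale $n$ into ones at scale $\ell$, yielding
\begin{align*}
\frac{\E_T[\deg_R(T)^2]}{d_R^2} \leq (1+o(1)) \sum_{t=0}^r (3\delta)^{-t} \frac{\binom{r}{t}\binom{\ell-r}{r-t}}{\binom{\ell}{r}}.
\end{align*}
By Vandermonde--Chu, $\sum_{t=0}^r \binom{r}{t}\binom{\ell-r}{r-t}/\binom{\ell}{r} = 1$, and because $3\delta = 1 - 1/n$ for a design, $(3\delta)^{-t} \leq (3\delta)^{-r} = 1 + o(1)$ uniformly for $t \leq r$. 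Hence the sum is $1+o(1)$, which establishes the right-degree bound.

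The left-degree bound follows the same template. Here $\deg_L(S, w) = |\{C : C_L \subseteq S,\ u_r = w\}|$, and pairs are grouped by $t = |C_L \cap C'_L|$ under the additional restriction $u_r = u'_r$. This tail-coincidence forces one more vertex match and produces an extra factor of $(3\delta n)^{-1}$, corresponding to the outer $(3\delta)^{-1}$ seen in \cref{sec:techsdesigns}; accordingly the sum ranges only to $r-1$, and the residual additive $\eta = n/\binom{\ell}{r}$ accounts for diagonal-type pairs ($C = C'$), which only become negligible once $\binom{\ell}{r} \gg n$. The principal technical obstacle is the need for sharp---not crude---binomial estimates: the usual $\binom{n}{k} \approx (n/k)^k$ approximations are too weak to produce the $(1+o(1))$ prefactor, which is why the paper isolates \cref{fact:binest}. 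Once the reduction passes to the hypergeometric form, the proof collapses because of two design-specific features: $3\delta \to 1$ (so $(3\delta)^{-t}$ never accumulates over $t \leq r$), and Vandermonde--Chu (so the hypergeometric masses telescope exactly to $1$).
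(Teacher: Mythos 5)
Your proposal is correct and follows essentially the same route as the paper: decompose the second moment by the overlap parameter $t$, bound the number of overlapping chain pairs using the matching property (your $\binom{r}{t}^2 t! \cdot 2^t(6\delta n)^{r-t}$ is identical to the paper's $\binom{r}{t}\cdot\binom{r}{t} t!(3\delta n)^{r-t}2^r$ in \cref{claim:designsmoothness}), pass to the hypergeometric form via \cref{fact:binest}, and use $3\delta = 1 - 1/n$ so that $(3\delta)^{-r} = 1 + o(1)$. Two small imprecisions: the equality $\deg_R(T) = \lvert\{C : C_R \subseteq T\}\rvert$ should be $\leq$ (one also needs $T \setminus C_R$ disjoint from $C_L$), which is harmless for an upper bound; more importantly, the $\eta$ term in the left-degree bound comes from the $t = r$ contribution, and the relevant pairs are \emph{not} only $C = C'$ — they are all $C'$ with $C'_L = C_L$ as an unordered set and the same tail $u_r$, of which there can be up to $r!\,2^r$ per $C$ by \cref{claim:designsmoothness}. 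It is precisely the ratio $r!\,2^r / d_L \approx n/\binom{\ell}{r}$ that produces $\eta$; if only genuinely diagonal pairs $C = C'$ contributed, the residual term would be $\approx 1/d_L \approx \eta/(r!\,2^r)$, far smaller. You landed on the correct value of $\eta$, so your computation is sound, but the parenthetical ``$(C=C')$'' mischaracterizes what \cref{claim:designsmoothness} is actually controlling in that regime.
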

We note that when we apply \cref{lem:designrowpruning}, we will take $r = \frac{1}{2}\log_2 n + O(\log \log n)$ and $\ell = 2r - 1$, which will end up satisfying the conditions with $\eta = 1/\polylog(n)$.

We postpone the proof of \cref{lem:designrowpruning} to \cref{sec:designrowpruning}. Let us now use \cref{lem:designrowpruning} to extract a near-perfect matching from $G_u$. We will assume that $\ell, r$ are chosen so that $\eta \leq 1/O(\log^2 n) = o(1)$, which will be the case when we choose parameters.

Using \cref{lem:designrowpruning}, we apply Chebyshev's inequality to observe that for the graph $G_u$:
\begin{enumerate}
\item There are at least $(1 - o(1))\abs{L}$ left vertices with degree $d_{u,L}(1 \pm o(1))$. Let $L'_u$ denote these left vertices.
\item There are at least $(1 - o(1))\abs{R}$ right vertices with degree $d_{u,R}(1 \pm o(1))$. Let $R'_u$ denote these right vertices.
\end{enumerate}
Let $G'_u = G_u[L'_u, R'_u]$ be the induced subgraph. First, we observe that $\abs{E(G'_u)} \geq (1 - o(1))\abs{E(G_u)}$. This is because there are at least $( 1 - o(1)) d_{u,L}\abs{L'_u} \geq (1 - o(1)) ( 1 - o(1)) d_{u,L}\abs{L} \geq (1 - o(1)) \abs{E(G)}$ edges in $G[L', R]$ and at least $(1 - o(1)) d_{u,R} \abs{R'_u} \geq (1 - o(1)) (1-o(1)) d_{u,R} \abs{R} \geq (1 - o(1)) \abs{E(G)}$ edges in $G[L, R']$, and therefore $G[L', R']$ must have at least $(1 - o(1)) \abs{E(G)}$ edges. Furthermore, each left vertex in $G'$ has degree at most $(1 + o(1)) d_{u,L}$, and similarly each right vertex has degree at most $(1 + o(1)) d_{u,R}$.

Recall that $n \cdot d_{u,L} = d_{u,R}$ and $\abs{L} = \abs{R} \cdot n$. Therefore, by making $n$ copies $T^{(1)}, \dots, T^{(n)}$ of each vertex $T$ in $R$ and splitting the edges equally across all copies (and doing the same induced transformation on $G'_u$), we can create a new bipartite graph $G''_u$ with left vertex set $L$ and right vertex set $R \times [n]$ where $G''_u$ has max left (or right!) degree $(1 + o(1))d_{u,L}$ and at least $(1 - o(1)) \abs{E(G)}$ edges. Therefore, $G''_u$ contains a matching $M_u$ of size at least $(1 - o(1)) \abs{E(G)}{d_{u,L}} \geq (1 - o(1)) \abs{L}$. Note that this matching is \emph{nearly perfect}, as the graph $G''_u$ has $2\abs{L}$ vertices, $\abs{L}$ left vertices and $\abs{L}$ right vertices.

\parhead{Step 4: proving the final bound.}
Recall that we began with a linear map $\Lincode \colon \Bits^k \to \Bits^n$ that is a design $3$-LCC. We then built the maps $\Lincode_1 \colon \Bits^n \to \Bits^{L}$ and $\Lincode_2 \colon \Bits^n \to \Bits^R$, where $L = {[n] \choose \ell} \times [n]$ and $R = {[n] \choose \ell}$, and the matchings $M_u$ for each $u \in [n]$ on the left vertex set $L$ and the right vertex set $R \times [n]$. To do this, we needed to apply \cref{lem:designrowpruning}, which requires that $\ell, r = o(n^{1/4})$. We thus set $r = \ceil{\frac{1}{2}\log_2 n + \Gamma \log_2 \log_2 n}$ for a sufficiently large constant $\Gamma$ and $\ell = 2r - 1$, which satisfies the conditions. We additionally have $\eta = 1/\log_2^2 n$, as
\begin{flalign*}
{\ell \choose r} = {2r - 1 \choose r} \geq \frac{2^{2r - 1}}{2r} \geq \frac{n \cdot 2^{\Gamma \log_2 \log_2 n}}{O(\log n)} \geq n \cdot (\log_2 n)^{\Gamma - 1 - o(1)} \geq n (\log_2^2 n) \mcom
\end{flalign*}
where we use that ${2r - 1 \choose t}$ is maximized at $t = r$ and $t = r - 1$.

Let $\Lincode'_2 \colon \Bits^n \to \Bits{R \times [n]}$ be the map where $\Lincode'_2(x)_{T^{(h)}} = \Lincode_2(x)_T$, where $T^{(h)}$ is the $h$-th copy of $T$ in $R \times [n]$. A simple corollary of \cref{obs:designcompleteness} is that, for any $x \in \Lincode$, $u \in [n]$, and edge $((S,w), T^{(h)})$ in $M_u$, it holds that $\Lincode_1(x)_{(S,w)} + \Lincode'_2(x)_{T^{(h)}} = x_u$. In particular, since $\Lincode$ is systematic, for any $i \in [k]$, edge $((S,w), T^{(h)})$ in $M_u$, and $b \in \Bits^k$, it holds that $\Lincode_1(x)_{(S,w)} + \Lincode'_2(x)_{T^{(h)}} = x_i = b_i$.

Let $\Lincode' \colon \Bits^n \to \Bits^{L \cup (R \times [n])} \cong \Bits^{2nN}$ be the map where $\Lincode'(x)_{(S,w)} = \Lincode_1(x)$ and $\Lincode'(x)_{T^{(h)}} = \Lincode'_2(x)_{T^{(h)}}$. We have that $\Lincode \circ \Lincode'$ is linear map from $\Bits^k \to \Bits^{2nN}$ and that $M_i$ is a matching of size $\geq (1 - o(1)) nN = \frac{1}{2}(1 - o(1)) \cdot 2 nN$ that decodes $b_i$. Therefore, by \cref{fact:2ldclb}, we conclude that $(1 - o(1)) k \leq \log_2 N \leq (\ell + 1) (\log_2 n) = 2 r \log_2 n = (1+o(1)) (\log_2 n)^2$, which proves \cref{mthm:designs}.

%%%%%%%%%%%%%%%%%%%%%%%%%%%%%%%%%%%%%%%%%%%%%%%%%%%%%%%%%%%%%%%%%%%%%%%%%%%%%%%%
%%%%%%%%%%%%%%%%%%%%%%%%%%%%%%%%%%%%%%%%%%%%%%%%%%%%%%%%%%%%%%%%%%%%%%%%%%%%%%%%
\subsection{Bounding the second moment of the left and right degrees: proof of \cref{lem:designrowpruning}}
\label{sec:designrowpruning}
In this subsection, we compute upper bounds on the second moments of degree functions. This constitutes the main technical component of the proof.

As one can imagine, computing second moments requires counting the number of chains $C \in \cH_u^{(r)}$ where the left half $C_L$ (or right half $C_R$) contains a particular set $Z$. Because of this, we first prove the following claim.

\begin{claim}[Ideal smoothness of chains from designs]
\label{claim:designsmoothness}
Let $H$ be a design $3$-LCC and let $H_1, \dots, H_n$ be the $3$-uniform hypergraphs defined in \cref{rem:designtonormalform}. Let $r \geq 1$ be an integer, and let $Z \subseteq [n]$ be a subset of size $t$, for some $0 \leq t \leq r$. Then, the number of chains $C \in \cH_u^{(r)}$ with $Z \subseteq C_R$ is at most ${r \choose t} t! (3 \delta n)^{r - t} \cdot 2^r$. And, for any $w \in [n]$, the number of chains $C \in \cH_u^{(r)}$ with tail $w$ and $Z \subseteq C_L$ is at most ${r \choose t} t! (3 \delta n)^{r - t - 1} \cdot 2^r$ if $t \leq r - 1$ and $r! \cdot 2^r$ if $\abs{Z} = r$. 
\end{claim}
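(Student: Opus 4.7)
The plan is to parametrize each chain by the unique injection $\phi \colon Z \to [r]$ recording the position of each $z \in Z$ within $C_R$ (for Part 1) or $C_L$ (for Part 2), which is well-defined because the $v_h$'s in a chain are distinct by definition. The total count is then $\sum_\phi N(\phi)$, where $N(\phi)$ is the number of chains consistent with $\phi$.

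For Part 1, I would build the chain forward from $u_0 = u$ and multiply the number of choices at each step. At step $h$, the ordered triple $(v_{2h+1}, v_{2h+2}, u_{h+1})$ with $\{v_{2h+1}, v_{2h+2}, u_{h+1}\} \in H_{u_h}$ has $6 \delta n$ choices if $h + 1 \notin \phi(Z)$; if $\phi(z) = h + 1$, the constraint $v_{2h+2} = z$ forces the unordered hyperedge to come from the unique design block containing $\{u_h, z\}$, leaving only the $2$ orderings of $\{u_{h+1}, v_{2h+1}\}$. Multiplying gives $N(\phi) \leq 2^t (6 \delta n)^{r-t} = 2^r (3 \delta n)^{r-t}$, and summing over the $\binom{r}{t} t!$ injections yields Part 1.

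For Part 2 I would argue by induction on $r$, with $r = 1$ handled by direct inspection, and at each inductive step split on whether $r \in \phi(Z)$. In Case 1 ($r \notin \phi(Z)$), the hyperedge in $H_{u_{r-1}}$ containing $u_r = w$ is unique and only the $2$ orderings of $(v_{2r-1}, v_{2r})$ remain, so the last step contributes $2$ rather than $6 \delta n$; this gives $N(\phi) \leq 2^r (3 \delta n)^{r - t - 1}$ per $\phi$ and $\binom{r-1}{t} t! \cdot 2^r (3 \delta n)^{r-t-1}$ in total. In Case 2 ($\phi(z_0) = r$), the simultaneous constraints $v_{2r-1} = z_0$ and $u_r = w$ force the block at step $r-1$ to be the unique design block $B^* = \{z_0, w, a, b\}$ containing the pair $\{z_0, w\}$, so $u_{r-1} \in \{a, b\}$ and the last step contributes exactly $1$ chain once $u_{r-1}$ is chosen. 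Decomposing $N(\phi)$ as a sum of length-$(r-1)$ chain counts from $u$ to each candidate $u_{r-1}$ (with constraint set $Z \setminus \{z_0\}$ and position map $\phi|_{[r-1]}$), reparametrizing Case 2 injections as pairs $(z_0, \phi')$ with $z_0 \in Z$ and $\phi' \colon Z \setminus \{z_0\} \to [r-1]$, and invoking the inductive hypothesis at $(r-1, t-1)$ gives
\[
\sum_{\phi \in \text{Case 2}} N(\phi) \;\leq\; t \cdot 2 \cdot \binom{r-1}{t-1}(t-1)! \cdot 2^{r-1} (3 \delta n)^{r-t-1} \;=\; \binom{r-1}{t-1} t! \cdot 2^r (3 \delta n)^{r-t-1}.
\]
Pascal's rule $\binom{r-1}{t} + \binom{r-1}{t-1} = \binom{r}{t}$ then combines the two cases into the stated bound; when $\abs{Z} = r$, Case 1 is vacuous and the same recursion (bottoming out at length-$0$ chains, which trivially contribute $1$) yields $r! \cdot 2^r$.

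The main obstacle is Case 2 of Part 2: a direct per-$\phi$ product bound that treats the steps independently overcounts by a factor of roughly $\delta n$, because it ignores how the forced constraint $u_{r-1} \in \{a, b\}$ back-propagates to tighten the choices available at step $r-2$. The resolution is not to try to sharpen the per-$\phi$ estimate but rather to notice that the sum over Case 2 injections factors cleanly through the summed-over-$\phi'$ inductive statement at length $r - 1$, which is precisely what the inductive hypothesis at $(r-1, t-1)$ supplies.
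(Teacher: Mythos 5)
Your proof is correct, and it takes a genuinely different route from the paper for Part~2. The paper handles Case~2 (the position $r$ of $C_L$ is one of the forced positions, in your notation $r \in \phi(Z)$) by fixing a single choice of positions $Q$, then doing a \emph{backward pass} from the tail $w$: since $w$ and $Q_r$ are both known, the design property pins down the last block up to two orderings, and the paper continues backwards paying a factor of $2$ per step until it reaches the last $\star$ position $h^*$, at which point $u_{h^*}$ is fully determined; a forward pass from the head then pays $6\delta n$ only at the remaining $r-t-1$ free positions. This gives a \emph{per-$Q$} bound of $2^r (3\delta n)^{r-t-1}$, which is then multiplied by $\binom{r}{t} t!$. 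You instead induct on $r$: you peel off the last link, observe that $\{z_0, w\}$ pins down the block $B^*$ and hence $u_{r-1}$ up to two choices, and then bound the remaining $(r-1)$-chain count by the inductive hypothesis at $(r-1, t-1)$. Your argument is only a \emph{summed-over-$\phi$} bound, not a per-$\phi$ bound, and this is essential: as you correctly note, a naive forward per-$\phi$ product in Case~2 overcounts by $\sim \delta n$ because it cannot see the back-propagated restriction $u_{r-1} \in \{a,b\}$. You resolve this by routing through the inductive hypothesis (itself already summed over positions) and reassembling with Pascal's rule $\binom{r-1}{t} + \binom{r-1}{t-1} = \binom{r}{t}$; the paper resolves it by the backward pass, which effectively determines $u_{r-1}$ \emph{before} the forward pass needs it. Both exploit the same core fact (two elements of a design block determine the block), and neither is obviously simpler than the other: yours trades the bookkeeping of the last-$\star$-index $h^*$ for the bookkeeping of an induction and a binomial identity. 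One minor remark: the paper's direct argument yields a per-$Q$ statement that might be marginally more reusable, but for the stated claim the two are equivalent in strength.
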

\begin{proof}
First, let us count the number of chains $C \in \cH_u^{(r)}$ with $Z \subseteq C_R$. We compute this in a similar way to our upper bound on $\abs{\cH_u^{(r)}}$. First, we pick the ${r \choose t}$ locations in $C_R$ (recall that $C_R$ is implicitly ordered by the order that the vertices appear in the chain) that will contain $Z$, and then we pick one of the $t!$ ways of ordering the entries of $Z$ in these locations. Formally, we view this as fixing an ordered tuple $Q \in \{[n] \cup \star\}^{r}$, where the set of non-$\star$ elements of $Q$ is equal to $Z$. The notation $Q_h = \star$ means that the element $v_{2(h-1) + 2}$ in the chain $C$ is ``free'', and $Q_h = v$ means that we must have $v_{2(h-1) + 2} = v$.

Next, we count the number of chains as follows. We start with $u_0 = u$, and then we choose an ordered constraint $(v_1, v_2, u_{1}) \in H_{u_0}$ as follows. If $Q_1 \ne \star$, then we clearly have at most $2$ choices, as we have forced $v_2 = v$ for where $v = Q_1$, which leaves at most one (unordered) $C \in H_{u_0}$ that contains $v$, and then we have $2$ ways to order $C$. If this is not one of the locations where we have placed an entry of $Z$, i.e., $Q_1 = \star$, then we have at most $6\delta n$ choices. In total, we pay at most ${r \choose t} t! (6 \delta n)^{r - \abs{Z}} 2^{\abs{Z}} = {r \choose t} t! (3 \delta n)^{r - \abs{Z}} 2^{r}$.

Now, we fix $w \in [n]$ and count the number of chains $C \in \cH_u^{(r)}$ with tail $w$ and $Z \subseteq C_L$. We first observe that if $\abs{Z} = r$, then we have at most $2^r \cdot r!$ choices. Indeed, this means that $Z = C_L$, so we first pick an ordering on $Z$ (to determine the ordering of the vertices in $C_L$), and then we pay a factor of $2$ per step in the chain (as in the analysis in the previous paragraph). In total, there are $2^r \cdot r!$ choices.

Next, suppose that $\abs{Z} \leq r - 1$. As before, we pay ${r \choose t} \cdot t!$ to determine $Q$, i.e., the locations and ordering of $Z$ within the (ordered) set $C_L$. Let us now consider a fixed choice of the locations and ordering. We have two cases.

In the first case, suppose that $Q_r = \star$, i.e., the vertex of $C_L$ in the ``last link'' (namely, $v_{2(r-1) + 1}$), is not one of the locations chosen. Then, we can proceed as in the case of $C_R$, where we pay a factor of $2$ to choose a link where $v_{2h+1}$ is determined by $Q$, and a factor of $6 \delta n$ on the other steps. There is one exception, which is the last step of the chain. Now, because we have also fixed the tail $w$, there are again only $2$ choices for this step, even though $Q_r = \star$. Thus, in total, we have paid at most $2^{\abs{Z} + 1} (6 \delta n)^{r - \abs{Z} - 1} = (3 \delta n)^{r - \abs{Z}} \cdot 2^r$.

In the second case, suppose that $Q_r \ne \star$, so that the vertex $v_{2(r-1) + 1}$ is one of the locations chosen. Let $h^*$ denote the index of the last $\star$ in $Q$, so $Q_{h^*} = \star$ and $Q_h \ne \star$ for all $h^* < h \leq r$. We now start \emph{at the tail} of the chain and work our way backwards until we reach the $h$-th link in the chain. In the first step, we have already fixed the tail $w$ and the vertex $v_{2(r-1) + 1}$, and so because $H$ is a \emph{design}, there are at most $2$ \emph{ordered} tuples $(v, v', v_{2(r-1) + 1}, w)$ where $\{v, v', v_{2(r-1) + 1}, w\} \in H$, as there is one such unordered tuple and then we can swap the locations of $v$ and $v'$. We continue backwards along the chain in this way until we reach the location $h^*$, so that $v_{2(h^* - 1) + 1}$ is not determined by $Q$ since $Q_{h^*} = \star$. In particular, we have completely determined $u_{h^*}$, along with the all elements \emph{after} $u_{h^*}$ in the chain, namely $(v_{2h^* + 1}, v_{2h^* + 2}, \dots, u_r)$.

Next, we proceed from the start of the chain, again paying $2$ for each non-$\star$ entry and $6 \delta n$ for each $\star$ entry, until we reach the $h^*$-th link. We have thus determined the chain up until (and including) $u_{h^* - 1}$, i.e., $(u_0, v_1, v_2, \dots, u_{h^* - 1})$. For the final $2$ vertices $(v_{2(h^* - 1) + 1}, v_{2(h^* - 1) + 2})$, we have at most $2$ choices, because there is at most one hyperedge in $H_{u_{h^* - 1}}$ that contains $u_{h^*}$, and then we have $2$ ways to order the vertices. In total, we have paid $(6 \delta n)^{r - \abs{Z} - 1} \cdot 2^{\abs{Z} + 1} = (3 \delta n)^{r - \abs{Z} - 1} \cdot 2^r$, the same as in the other case.

In total, when $\abs{Z} = t \leq r - 1$, we have at most ${r \choose t} t! (3 \delta n)^{r - \abs{Z} - 1} \cdot 2^r$ choices.
\end{proof}

With \cref{claim:designsmoothness} in hand, we are almost ready to compute the second moments. To begin, we will first compute good upper bounds on the first moments $\E_{(S,w)}[\deg_{u,L}(S,w)]$ and $\E_{T}[\deg_{u,R}(T)]$. For the remainder of the proof, we may omit the subscript $u$ in some places for convenience.

We have
\begin{flalign*}
&\frac{1}{{n \choose \ell}} {n - 2r \choose \ell - r} (6 \delta n - 4r)^r \leq d_R = \E_{T}[\deg_R(T)] \leq  \frac{1}{{n \choose \ell}} {n - 2r \choose \ell - r} (6 \delta n)^r \mcom \\
& \frac{1}{n \cdot {n \choose \ell}} {n - 2r \choose \ell - r}  \cdot (6 \delta n - 4r)^r \leq d_L = \E_{(S,v)}[\deg_L(S,v)] \leq  \frac{1}{n \cdot {n \choose \ell}} {n - 2r \choose \ell - r}  \cdot (6 \delta n)^r \mper
\end{flalign*}
This is because each chain $C$ contributes ${n - 2r \choose \ell - r}$ edges to the graph $G$, and we have already computed $(6 \delta n - 4r)^r \leq \abs{\cH_u^{(r)}} \leq (6 \delta n)^r$. We also clearly have $(6 \delta n - 4r)^r \geq (6 \delta n)^r (1 - O(r^2/n))$, and so we have:
\begin{flalign}
&\left(1 - \frac{O(r^2)}{n}\right)\frac{1}{{n \choose \ell}} {n - 2r \choose \ell - r} (6 \delta n)^r \leq d_R = \E_{T}[\deg_R(T)] \leq  \frac{1}{{n \choose \ell}} {n - 2r \choose \ell - r} (6 \delta n)^r \mcom \label{eq:rightmean}\\
&\left(1 - \frac{O(r^2)}{n}\right) \frac{1}{n \cdot {n \choose \ell}} {n - 2r \choose \ell - r}  \cdot (6 \delta n)^r \leq d_L = \E_{(S,v)}[\deg_L(S,v)] \leq  \frac{1}{n \cdot {n \choose \ell}} {n - 2r \choose \ell - r}  \cdot (6 \delta n)^r \label{eq:leftmean} \mper
\end{flalign}

\parhead{Computing second moment of the right degree.} We now compute the second moments. We will begin with $\E_{T}[\deg_R(T)^2]$, as this case is simpler. We have
 \begin{flalign*}
 &\E_{T}[\deg_R(T)^2] \\
 &\leq  \sum_{\substack{C = (C_L, C_R, w) \\ C' = (C'_L, C'_R, w')}} \Pr[C_R, C'_R \subseteq T] \ \ \text{($T$ adjacent to edge labeled by $C$ implies $C_R \subseteq T$)}\\
 &= \sum_{C = (C_L, C_R, w)} \sum_{t = 0}^r \sum_{\substack{C' = (C'_L, C'_R, w')  \\ \abs{C_R \cap C'_R} = t}} \Pr[C_R, C'_R \subseteq T] \\
 &= \sum_{C = (C_L, C_R, w)} \sum_{t = 0}^r \sum_{\substack{C' = (C'_L, C'_R, w')  \\ \abs{C_R \cap C'_R} = t}} \frac{{n \choose \ell - (2r - t)}}{{n \choose \ell}} \ \ \text{(as $C_R \cup C'_R \subseteq T$ and $\abs{C_R \cup C'_R} = 2r - t$)}\\
 &\leq \sum_{C = (C_L, C_R, w)} \sum_{t = 0}^r {r \choose t} \cdot {r \choose t} t! (3 \delta n)^{r - t} \cdot 2^r \cdot \frac{{n \choose \ell - (2r - t)}}{{n \choose \ell}} \ \text{(by \cref{claim:designsmoothness} and ${r \choose t}$ to pick $Z \subseteq C_R$ where $C_R \cap C'_R = Z$)}\\
 &\leq \sum_{t = 0}^r (6 \delta n)^r {r \choose t} {r \choose t} t!  (3 \delta n)^{r - t} \cdot 2^r \cdot \frac{{n \choose \ell - (2r - t)}}{{n \choose \ell}} \\
 &\leq \left(1 + \frac{O(r^2)}{n}\right) d_R^2  \sum_{t = 0}^r  {r \choose t} {r \choose t} t! (3 \delta n)^{- t} \frac{{n \choose \ell}{n \choose \ell - (2r - t)}}{{n - 2r \choose \ell - r} {n - 2r \choose \ell - r}} \ \ \text{(by \cref{eq:rightmean}} \mper
 \end{flalign*}
 Now, we apply \cref{fact:binest} to conclude that
 \begin{flalign*}
 &\E_{T}[\deg_R(T)^2] \leq \left(1 + \frac{O(\ell^2)}{n}\right)  d_R^2  \sum_{t = 0}^r  {r \choose t} (3 \delta n)^{- t} n^t \frac{{\ell - r \choose r - t}}{{\ell \choose r}}  \\
 &= \left(1 + \frac{O(\ell^2)}{n}\right)  d_R^2  \sum_{t = 0}^r (3 \delta)^{-t} \frac{{r \choose t}{\ell - r \choose r - t}}{{\ell \choose r}} \mper
\end{flalign*}
Now, we observe that $\sum_{t = 0}^r \frac{{r \choose t}{\ell - r \choose r - t}}{{\ell \choose r}}  = 1$, as this is the probability mass function of a hypergeometric distribution, and that $3 \delta = 1 - \frac{1}{n}$ (as $H$ is a \emph{design}), and so $(3 \delta)^{-t} \leq (3 \delta)^{-r} \leq \left(1 + \frac{O(r)}{n}\right)$.
Thus, 
 \begin{flalign*}
 &\E_{T}[\deg_R(T)^2] \leq \left(1 + \frac{O(\ell^2)}{n}\right)  d_R^2 \mcom
 \end{flalign*}
 which gives the desired bound on the second moment.
 
 \parhead{Computing second moment of left degree.} We now compute $\E_{(S,v)}[\deg_L(S,v)^2]$. We have
 \begin{flalign*}
 &\E_{(S,v)}[\deg_L(S,v)^2] \leq  \sum_{C = (C_L, C_R, w), C' = (C'_L, C'_R, w)} \Pr[C_L, C'_L \subseteq S \wedge v = w] \ \ \text{(both chains have same fixed tail $w$)} \\
 &=  \sum_{C = (C_L, C_R, w)} \sum_{t = 0}^r \sum_{\substack{C' = (C'_L, C'_R, w)  \\ \abs{C_L \cap C'_L} = t}} \Pr[C_L, C'_L \subseteq S \wedge v = w] \\
 &=   \left(\sum_{C = (C_L, C_R, w)} \sum_{t = 0}^{r-1} \sum_{\substack{C' = (C'_L, C'_R, w)  \\ \abs{C_L \cap C'_L} = t}}\Pr[C_L, C'_L \subseteq S \wedge v = w]\right)  + \frac{{n - 2r \choose \ell - r}}{n \cdot {n \choose \ell}}  \cdot (6 \delta n)^r \cdot r! 2^r  \mcom
 \end{flalign*}
 where the last equality is because when $t = r$, then $C_L = C'_L$, and so $\Pr[C_L \subseteq S \wedge v = w] = \frac{{n - 2r \choose \ell - r}}{n \cdot {n \choose \ell}}$, and by \cref{claim:designsmoothness}, there are $r! 2^r$ choices for $C'$.

 Let us quickly handle this second term. We have by \cref{eq:leftmean},
 \begin{flalign*}
\frac{{n - 2r \choose \ell - r}}{n \cdot {n \choose \ell}}  \cdot (6 \delta n)^r \cdot r! 2^r  \leq \left(1 + \frac{O(r^2)}{n}\right) d_L \cdot r! 2^r \mper
 \end{flalign*}
We now compare $d_L$ and $r! 2^r$. By \cref{eq:leftmean}, we have
 \begin{flalign*}
 d_L \geq \left(1 - \frac{O(r^2)}{n}\right)\frac{\ell!}{n^{\ell + 1}} \cdot \frac{(n - 2r)^{\ell - r}}{(\ell - r)!} \cdot (6 \delta n)^r \geq \left(1 - \frac{O(r^2)}{n} - \frac{O(r \ell)}{n}\right) (6 \delta)^r \cdot \frac{1}{n} \cdot \frac{\ell!}{(\ell - r)!} \mper
 \end{flalign*}
 Therefore, 
  \begin{flalign*}
 &\frac{d_L}{2^r r!} \geq  \left(1 - \frac{O(r^2)}{n} - \frac{O(r \ell)}{n}\right) (3 \delta)^r \cdot \frac{1}{n} \cdot \frac{\ell!}{(\ell - r)!r!} =  \left(1 - \frac{O(r^2)}{n} - \frac{O(r \ell)}{n}\right) \left(1 - \frac{1}{n}\right)^r \cdot \frac{1}{n} \cdot {\ell \choose \ell - r} \\
 &=  \left(1 - \frac{O(r \ell)}{n}\right) \left(1 - \frac{1}{n}\right)^r \cdot \frac{1}{n} \cdot {\ell \choose \ell - r} \mper
 \end{flalign*}
As ${\ell \choose \ell - r} = \eta n$ is the definition of $\eta$ in \cref{lem:designrowpruning}, we conclude that 
  \begin{flalign*}
 \frac{d_L}{2^r r!} \geq \eta \left(1 - \frac{O(r \ell)}{n}\right) \mcom
 \end{flalign*}
 and so the second term is $\eta d_L^2 \left(1 + \frac{O(r \ell)}{n}\right)$.

 We now return to the main calculation. We have
\begin{flalign*}
 &\E_{(S,v)}[\deg_L(S,v)^2] \leq  \left(\sum_{C = (C_L, C_R, w)} \sum_{t = 0}^{r-1} \sum_{\substack{C' = (C'_L, C'_R, w)  \\ \abs{C_L \cap C'_L} = t}}\Pr[C_L, C'_L \subseteq S \wedge v = w]\right)  + \eta d_L^2 \left(1 + \frac{O(r \ell)}{n}\right) \\
 &\leq \eta d_L^2 \left(1 + \frac{O(r \ell)}{n}\right)  + \sum_{C = (C_L, C_R, w)} \sum_{t = 0}^{r-1} \sum_{\substack{C' = (C'_L, C'_R, w)  \\ \abs{C_L \cap C'_L} = t}} \frac{{n \choose {\ell - (2r - t)}}}{n {n \choose \ell}}  \ \ \text{(as $C_L \cup C'_L \subseteq S$ and $\abs{C_L \cup C'_L} = 2r - t$)} \\
 &\leq \eta d_L^2 \left(1 + \frac{O(r \ell)}{n}\right)  + \sum_{C = (C_L, C_R, w)} \sum_{t = 0}^{r-1} 
{r \choose t} {r \choose t} t! 2^r (3 \delta n)^{r - t - 1} 
 \frac{{n \choose {\ell - (2r - t)}}}{n {n \choose \ell}}  \ \ \text{(by \cref{claim:designsmoothness} and ${r \choose t}$ to pick $Z = C_L \cap C'_L$)} \\
 &\leq \eta d_L^2 \left(1 + \frac{O(r \ell)}{n}\right)  + \sum_{t = 0}^{r-1} 
(6 \delta n)^r {r \choose t} {r \choose t} t! 2^r (3 \delta n)^{r - t - 1} 
 \frac{{n \choose {\ell - (2r - t)}}}{n {n \choose \ell}}  \\
  &\leq \eta d_L^2 \left(1 + \frac{O(r \ell)}{n}\right)+  \frac{(6 \delta n)^{2r}}{3 \delta n} \sum_{t = 0}^{r-1}  {r \choose t} {r \choose t} t!  (3 \delta n)^{-t}  \frac{{n \choose {\ell - (2r - t)}}}{n {n \choose \ell}} \\
  &\leq \eta d_L^2 \left(1 + \frac{O(r \ell)}{n}\right)+ \left(1 + \frac{O(r^2)}{n}\right)d_L^2 \cdot  (3 \delta)^{-1} \sum_{t = 0}^{r-1}  {r \choose t} {r \choose t} t!  (3 \delta n)^{-t}  \frac{{n \choose \ell} {n \choose {\ell - (2r - t)}}}{ {n - 2r \choose \ell - r}   {n - 2r \choose \ell - r} }  \ \ \text{(by \cref{eq:leftmean})} \\
  &\leq \eta d_L^2 \left(1 + \frac{O(r \ell)}{n}\right)+\left(1 + \frac{O(\ell^2)}{n}\right)d_L^2 \cdot  (3 \delta)^{-1} \sum_{t = 0}^{r-1}  {r \choose t}   (3 \delta n)^{-t} n^t \frac{{\ell - r \choose r - t}}{{\ell \choose r}} \ \ \text{(by \cref{fact:binest})} \\
   &\leq \eta d_L^2 \left(1 + \frac{O(r \ell)}{n}\right)+ \left(1 + \frac{O(\ell^2)}{n}\right)d_L^2\cdot  (3 \delta)^{-1} \sum_{t = 0}^{r-1}   (3 \delta)^{-t} \frac{{r \choose t} {\ell - r \choose r - t}}{{\ell \choose r}} \mper
 \end{flalign*}
Now, we have $\sum_{t = 0}^{r} \frac{{r \choose t} {\ell - r \choose r - t}}{{\ell \choose r}} = 1$ as this is the probability mass function of a hypergeometric distribution. As $3 \delta = 1 - 1/n$, it follows that $(3 \delta)^{-t - 1} \leq (3 \delta)^{-r} \leq 1 + O(r/n)$, and therefore we conclude that $\E_{(S,v)}[\deg_L(S,v)^2] \leq \left(1 + \frac{O(\ell^2)}{n} + \eta \right)d_L^2$.

%%%%%%%%%%%%%%%%%%%%%%%%%%%%%%%%%%%%%%%%%%%%%%%%%%%%%%%%%%%%%%%%%%%%%%%%%%%%%%%%
%%%%%%%%%%%%%%%%%%%%%%%%%%%%%%%%%%%%%%%%%%%%%%%%%%%%%%%%%%%%%%%%%%%%%%%%%%%%%%%%
%%%%%%%%%%%%%%%%%%%%%%%%%%%%%%%%%%%%%%%%%%%%%%%%%%%%%%%%%%%%%%%%%%%%%%%%%%%%%%%%
\section{From Adaptive Decoders to Chain XOR Polynomials}
\label{sec:chainpolys}
In this section, we begin the proof of \cref{mthm:nonlin}. We start with a (possibly nonlinear) smooth $3$-LCC with a (possibly adaptive) decoder. First, we define an abstract notion of a (smooth) $3$-LCC hypergraph collection and use it to define a polynomial that we call a ``chain XOR instance''. The chain XOR instances that we introduce are a generalization of chain XOR derivations constructed in~\cite{KothariM23}, and are able to handle the case of weighted and nonuniform hypergraphs.
Next, we will argue (\cref{lem:chainxorref}) that any ``chain XOR instance'' from a $3$-LCC hypergraph collection must have small value.

Finally, we will show that, given a $3$-LCC, we can extract a $3$-LCC hypergraph collection such that the resulting chain XOR instance has high value, which finishes the proof.

We begin by defining a ($\delta$-smooth) $3$-LCC hypergraph collection. One should view this as a generalization of the standard ``combinatorial'' definition of (linear) $3$-LCCs. In the below definition, the hypergraph $H_u$ is $3$-uniform and intuitively captures decoding constraints that make $3$ queries; the hypergraph $G_u$ is $2$-uniform, i.e., it is a graph, and it intuitively captures decoding constraints that only make at most $2$ queries. 
\begin{definition}[$3$-LCC hypergraph collection]
\label{def:hypergraphcollection}
A $3$-LCC hypergraph collection on $[n]$ vertices is a collection of pairs $(H_u, G_u)$, one for each $u \in [n]$, where $G_u$ is a (weighted and directed) $2$-uniform  hypergraph and $H_u$ is a (weighted and directed) $3$-uniform hypergraph\footnote{Note that \cref{def:hypergraph} requires that each tuple with nonzero weight has \emph{distinct} vertices.} such that for every $u \in [n]$, $\sum_{C \in [n]^2} \wt_{G_u}(C) + \sum_{C \in [n]^3} \wt_{H_u}(C) \leq 4$ and $\sum_{C \in [n]^3} \wt_{H_u}(C) \leq 1$.

For each $u \in [n]$, we define the polynomial $f_u(x) = \phi_u(x) + \psi_u(x)$, where $\phi_u(x) = \sum_{C \in [n]^2} \wt_{G_u}(C) x_C$ is the homogeneous degree-$2$ component of $f_u$ and $\psi_u(x) = \sum_{C \in [n]^3} \wt_{H_u}(C) x_C$ is the homogeneous degree-$3$ component of $f_u$.

We furthermore say that the hypergraph collection is $\delta$-smooth if for every $u, v \in [n]$, $\sum_{C \in [n]^2 : v \in C} \wt_{G_u}(C) + \sum_{C \in [n]^3 : v \in C} \wt_{H_u}(C) \leq \frac{1}{\delta n}$
\end{definition}

We now use the above collection of polynomials to construct \emph{chain XOR polynomials}, a generalization of chain XOR instances defined in~\cite{KothariM23}. To define these polynomials, we first define the $t$-chain hypergraphs $\cH_u^{(t)}$ and $\cG_u^{(t)}$.
\begin{definition}[$t$-chain hypergraph $\cH_u^{(t)}$]
\label{def:tchainhypergraph}
Let $t \geq 1$ be an integer, and let $(G_u, H_u)_{u \in [n]}$ denote a $3$-LCC hypergraph collection. For any $u \in [n]$, let $\cH_u^{(t)}$ denote
the weight function $\wt_{\cH_u^{(t)}} \colon [n]^{3t+1} \to \R_{\geq 0}$, i.e., from length $3t+1$ tuples of the form $C = (u_0,v_1, v_2, u_1, v_3, v_4, u_2, \dots, u_{t-1}, v_{2(t-1) + 1}, v_{2(t-1) + 2}, u_t)$ to $\R_{\geq 0}$, where $\wt_{\cH_u^{(t)}}(C)= 0$ if $u_0 \ne u$, and otherwise:
\begin{equation*}
\wt_{\cH_u^{(t)}}(C)= \prod_{h = 0}^{t-1} \wt_{H_{u_h}}(v_{2h + 1}, v_{2h + 2}, u_{h+1}) \mper
\end{equation*}
For a $t$-chain $C$, we call $u_0$ the head, the $u_h$'s the \emph{pivots} for $1 \leq h \leq t-1$, and $u_t$ the \emph{tail} of the chain $C$. The monomial associated to $C$, which we denote by $g_{C}$, is defined to be $x_{u_t} \prod_{h = 0}^{t-1} x_{v_{2h + 1}} x_{v_{2h+2}}$. We call the $t$-chain hypergraph $\cH_u^{(t)}$ ``hypergraph-tailed'', as the last link uses one of the hypergraphs $H_v$.
\end{definition}
We note that for any $u \in [n]$, $\cH_u^{(1)}$ is equivalent to $H_u$, i.e., $\cH_u^{(1)} = \{u\} \times H_u$.

\begin{definition}[$t$-chain hypergraph $\cG_u^{(t)}$]
\label{def:tchaingraph}
Let $t \geq 1$ be an integer, and let $(G_u, H_u)_{u \in [n]}$ denote a $3$-LCC hypergraph collection. For any $u \in [n]$, let $\cG_u^{(t)}$ denote
the weight function $\wt_{\cG_u^{(t)}} \colon [n]^{3t} \to \R_{\geq 0}$, i.e., from length $3t$ tuples of the form $C = (u_0,v_1, v_2, u_1, v_3, v_4, u_2, \dots, u_{t-1}, v_{2(t-1) + 1}, v_{2(t-1) + 2})$ to $\R_{\geq 0}$, where $\wt_{\cG_u^{(t)}}(C)= 0$ if $u_0 \ne u$, and otherwise:
\begin{equation*}
\wt_{\cH_u^{(t)}}(C)= \wt_{G_{u_{t-1}}}(v_{2(t-1) + 1}, v_{2(t-1) + 2}) \cdot \prod_{h = 0}^{t-2} \wt_{H_{u_h}}(v_{2h + 1}, v_{2h + 2}, u_{h+1}) \mper
\end{equation*}
Note that the chains in $\cG^{(t)}$ have no tail vertex $u_t$.
The monomial associated to $C$, which we denote by $x_{C}$, is defined to be $g_C = \prod_{h = 0}^{t-1} x_{v_{2h + 1}} x_{v_{2h+2}}$. We call the $t$-chain hypergraph $\cG_u^{(t)}$ ``graph-tailed'', as the last link uses one of the graphs $G_v$.
\end{definition}
We note that for any $u \in [n]$, $\cG_u^{(1)}$ is equivalent to $G_u$, i.e., $\cG_u^{(1)} = \{u\} \times G_u$.

We are now ready to define the chain XOR instances.
\begin{definition}[Chain XOR instance]
\label{def:chainxor}
Let $(G_u, H_u)_{u \in [n]}$ denote a $3$-LCC hypergraph collection. Let $k \leq n$ and $r \geq 0$ be an integer. For each $1 \leq t \leq r+1$, we define the ``graph-tailed'' polynomial
\begin{equation*}\Phi_b^{(t)}(x) = \sum_{i = 1}^k \sum_{C \in [n]^{3t}}  \wt_{\cG_i^{(t)}}(C) \cdot b_i g_C \mcom
\end{equation*}
and we also define the ``hypergraph-tailed'' polynomial
\begin{equation*}\Psi_b(x) = \sum_{i = 1}^k \sum_{C \in [n]^{3(r+1)+1}}  \wt_{\cH_i^{(r+1)}}(C) \cdot b_i g_C \mper
\end{equation*}
We will omit the subscript $b$ when it is clear from context. We note that in the above definitions, each $g_C$ is the monomial associated with the chain $C$, as defined in \cref{def:tchainhypergraph,def:tchaingraph}.
\end{definition}

\begin{remark}[Iterative view of the chain construction]
\label{rem:chainiterative}
We can view the chains as being constructed iteratively in the following way. We start with a fixed $u_0$, and have $2$ choices. We either pick a hyperedge $(v_1, a_2, v_2, a_2, u_1) \in H_{u_0}$, and then recurse onto $u_1$, or else we pick an edge $(v_1, a_2, v_2, a_2) \in G_{u_0}$, in which case the chain is in $\cG_u^{(1)}$ and we stop.
\end{remark}

With the above setup in hand, we can now state the main technical lemma.
\begin{lemma}[Refuting the chain XOR instances]
\label{lem:chainxorref}
Let $(G_u, H_u)_{u \in [n]}$ denote a $\delta$-smooth $3$-LCC hypergraph collection and let $k \leq n$. Let $\ell, d, r \geq 1$ be parameters such that $d^{r+1} \geq n$, $\ell \geq 6 d (r+1) / \delta$, and $\ell r = o(n)$. Furthermore, suppose that $k \geq 1/\delta$. Then, for each $1 \leq t \leq r + 1$, it holds that 
\begin{flalign*}
&\E_{b \gets \Fits^k}[\val(\Phi^{(t)}_b)] \leq O(\sqrt{k \ell r \log n}) \mcom \\
&\E_{b \gets \Fits^k}\left[\val(\Psi_b)\right] \leq \left(\frac{k(r+1)}{\delta}  O(\sqrt{k \ell r \log n})\right)^{1/2} \mper
\end{flalign*}
\end{lemma}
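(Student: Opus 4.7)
\medskip

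\textbf{Proof proposal for \cref{lem:chainxorref}.}

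The plan is to set up, for each polynomial $\Phi^{(t)}_b$ and for $\Psi_b$, an associated Kikuchi matrix $A_b = \sum_{i=1}^k b_i A_i$ so that $\val$ of the polynomial is controlled by $\boolnorm{A_b}$, and then to apply the rectangular matrix Khintchine inequality (\cref{fact:matrixkhintchine}) to bound $\E_b \norm{A_b}_2$. The extra $\sqrt{nm}$ factor converting $\norm{\cdot}_2$ to $\boolnorm{\cdot}$ is absorbed by the number of Kikuchi sets, while the $\log n$ in \cref{fact:matrixkhintchine} produces the $\sqrt{\log n}$ factor in the final bound.

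I will first handle the even-degree polynomials $\Phi^{(t)}_b$, which are homogeneous of degree $2t$. Index Kikuchi sets by $S \subseteq [n]$ of size $\ell$. For each chain $C \in \cG_i^{(t)}$ with monomial $g_C = x_{v_1} x_{v_2} \cdots x_{v_{2t-1}} x_{v_{2t}}$ I split the $2t$ variables into a ``left half'' $L_C = \{v_1, v_3, \ldots, v_{2t-1}\}$ and a ``right half'' $R_C = \{v_2, v_4, \ldots, v_{2t}\}$, respecting the natural chain ordering. The Kikuchi matrix $A_i$ assigns weight $\wt_{\cG_i^{(t)}}(C)$ to the entry $(S, S \oplus L_C \oplus R_C)$ whenever $L_C \subseteq S$ and $R_C \cap S = \emptyset$. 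Using the ``lift'' $x \mapsto (x_S)_{S}$ with $x_S = \prod_{v \in S} x_v$, a standard Cauchy--Schwarz argument (exactly as in~\cite{KothariM23,GuruswamiKM22}) gives $\val(\Phi^{(t)}_b) \leq K_t \cdot \boolnorm{A_b}$, where $K_t$ is the normalization arising from the number of $S$'s containing a fixed $L_C$, computable via binomial coefficients analogous to \cref{fact:binest}. I then apply \cref{fact:matrixkhintchine} to bound $\E_b \norm{A_b}_2 \leq \sqrt{2 \sigma^2 \log n}$, where $\sigma^2$ upper-bounds the maximum row norm of $\sum_i A_i A_i^{\top}$ and of $\sum_i A_i^{\top} A_i$. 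Combined with $\boolnorm{A_b} \leq \binom{n}{\ell} \norm{A_b}_2$, this gives the desired $O(\sqrt{k \ell r \log n})$ bound after the row-norm estimate.

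For the hypergraph-tailed $\Psi_b$, which has odd degree $2(r+1)+1$, I will apply a Cauchy--Schwarz step to reduce to an even-degree instance. Writing $g_C = x_{u_{r+1}} \cdot \tilde g_C$ where $\tilde g_C$ has even degree $2(r+1)$, I group the chains by their tail:
\begin{equation*}
\Psi_b(x) \;=\; \sum_{w \in [n]} x_w \, T_{b,w}(x), \qquad T_{b,w}(x) = \sum_{i=1}^k b_i \sum_{\substack{C \in \cH_i^{(r+1)} \\ \mathrm{tail}(C) = w}} \wt_{\cH_i^{(r+1)}}(C) \, \tilde g_C.
\end{equation*}
By Cauchy--Schwarz on the sum over $w$ and $\delta$-smoothness (which bounds the effective ``number'' of active tails weighted by $1/\delta n$), $\val(\Psi_b)^2 \leq \frac{k(r+1)}{\delta} \cdot \val\!\bigl(\sum_w T_{b,w}^2\bigr)$ up to constants; the factor $k(r+1)/\delta$ arises because each $\cH_i^{(r+1)}$ has total weight at most $1 \cdot (r+1)$ after telescoping the chain-link weights, and smoothness produces the $1/\delta$. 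The squared polynomial $\sum_w T_{b,w}^2$ is even-degree, and essentially the same Kikuchi construction as for $\Phi^{(t)}_b$ (with $t = r+1$) applies, yielding the inner $O(\sqrt{k \ell r \log n})$ bound; taking the square root gives the claimed estimate.

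The main obstacle is the row pruning step: bounding the maximum row norms of $\sum_i A_i A_i^{\top}$ (and its transpose) by roughly $k \ell r$, uniformly across Kikuchi sets. A naive bound is much larger because chains sharing a vertex, a pivot, or a pair cluster heavily. I will proceed as in~\cite{KothariM23,HsiehKM23} by a hypergraph decomposition of the chain hypergraphs $\cG_i^{(t)}$ and $\cH_i^{(t)}$: decompose so that at each ``link index'' $h \in \{1, \ldots, t\}$ the pair $(v_{2h-1}, v_{2h})$ and the pivot $u_h$ have bounded degree, paying a $O(\log n)$ factor per decomposition level. Here the \emph{key input} is $\delta$-smoothness of the collection, which gives a $\frac{1}{\delta n}$ bound on the total weight of constraints through any vertex and lets the decomposition terminate with $O(\log n)$ pieces. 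The hypothesis $\ell \geq 6 d (r+1)/\delta$ ensures that after decomposition and Cauchy--Schwarz the row-norm bound stabilizes at the target $k \ell r$. The assumption $\ell r = o(n)$ keeps binomial-coefficient ratios close to 1, so that the normalization $K_t$ and the row-norm computation do not lose constant factors that would accumulate over $r$ levels.
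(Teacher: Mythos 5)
The graph‐tailed part of your proposal is close in spirit to the paper, but with a different Kikuchi matrix: you use a \emph{single} set $S$ of size $\ell$ per row, with $L_C \subseteq S$ and $R_C \cap S = \emptyset$, whereas the paper's \cref{def:graphkikuchi} indexes rows by $t$-\emph{tuples} of sets $(S_0,\dots,S_{t-1})$, one per link, with $S_h \oplus T_h = \{v_{2h+1}, v_{2h+2}\}$. The tuple (``colored'') construction sidesteps a real issue with your single-set (``uncolored'') version: nothing in a general $\delta$-smooth hypergraph collection forces the vertices across different links of a chain to be distinct, so $|L_C| < t$ or $L_C \cap R_C \neq \emptyset$ can happen, in which case your edge-count $\binom{n-2t}{\ell-t}$ is wrong. (The paper's single-set Kikuchi graph in \cref{sec:designs} is only used for \emph{designs}, where chains with repeated vertices are explicitly excluded in \cref{def:designchains} and argued to be a $o(1)$ fraction.) This is fixable — smoothness makes the bad chains a $O(r^2/\delta n)$ fraction of the weight — but you would need to say so and carry the extra error term.

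The hypergraph-tailed part is where your proposal has genuine gaps. First, your Cauchy--Schwarz step groups $\Psi_b = \sum_w x_w T_{b,w}$ and asserts $\val(\Psi_b)^2 \leq \frac{k(r+1)}{\delta} \cdot \val\bigl(\sum_w T_{b,w}^2\bigr)$; the unweighted Cauchy--Schwarz over $w \in [n]$ only gives a factor of $n$, and the $\frac{k(r+1)}{\delta}$ factor you want cannot come merely from ``$\delta$-smoothness bounds the effective number of tails'' — the tails of chains in $\cH_i^{(r+1)}$ range over essentially all of $[n]$. The paper achieves this factor via a structurally different device: a \emph{smooth partition} of the $r$-chains into classes indexed by contiguous suffix patterns $Q$ (\cref{def:partition}), each with weight $\wt(Q) \geq n d^t (\delta n)^{-t-1}$ and with total weight $\sum_Q \wt(Q) \leq n(r+1)$; then \emph{new variables} $y_Q$ replace $x_Q$, and the Cauchy--Schwarz in \cref{lem:cauchy-schwarz} is over the $y_Q$'s, giving exactly the $n(r+1)$ prefactor, combined with pairing heads $(i,j)$ by a random matching $M$ to produce an even-degree cross-term polynomial $f_M$. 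Your proposal has no analog of the $y_Q$ variables, no smooth partition, and no random-matching device.

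Second, your row-pruning plan — ``decompose so that at each link index $h$ the pair $(v_{2h-1}, v_{2h})$ and the pivot $u_h$ have bounded degree, paying a $O(\log n)$ factor per decomposition level'' — cannot give the claimed bound. There are $r$ levels, so a per-level $O(\log n)$ cost would contribute $(\log n)^{\Theta(r)}$, catastrophically worse than the single $\sqrt{\log n}$ in the target. The paper avoids any per-level decomposition loss: the only $\log n$ comes from a single application of matrix Khintchine, and the row-pruning uses \emph{first and conditional first moments} (\cref{lem:rowpruning}, in the spirit of~\cite{Yankovitz24,HsiehKM23}) to delete a constant fraction of heavy rows/columns, with smoothness of the partitioned chains (\cref{lem:chainsmoothness}) supplying the conditional moment bound $(1 + o(1)) \cdot \frac{4}{N\delta n}$. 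The hypotheses $d^{r+1} \geq n$ and $\ell \geq 6d(r+1)/\delta$ enter precisely to make the geometric series in the conditional moment bound converge — not, as you suggest, to ``stabilize a per-level decomposition.'' Without these specific ingredients, I don't see how your outline reaches the bound.
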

The proof of \cref{lem:chainxorref} has two steps. First, in \cref{sec:graphref}, we refute the graph-tailed instances. Then, in \cref{sec:decomp,sec:kikuchi,sec:rowpruning}, we refute the hypergraph-tailed instances.

To finish the proof of \cref{mthm:nonlin}, it remains to argue that, given any $(3, \delta, \eps)$-smooth LCC, one can extract a $3$-LCC hypergraph collection such that the resulting chain XOR polynomials (\cref{def:chainxor}) have large value. This is captured by the following lemma.

\begin{lemma}
\label{lem:adaptivepolys}
Let $\Code \colon \Fits^k \to \Fits^n$ be a $3$-LCC. Let $\Code' \colon \Fits^k \to \Fits^{4n}$ be defined as $\Code'(b) = (\Code(b), -\Code(b), 1^n, (-1)^n)$, i.e., $\Code'$ is a ``padded'' version of $\Code$, and let $\Dec(\cdot)$ denote its (possibly adaptive) decoder.

Then, there exists a $3$-LCC hypergraph collection $(H_u, G_u)_{u \in [4n]}$ with the following properties.
\begin{enumerate}
\item For every $1 \leq u \leq 4n$ and every codeword $x \in \Code'$, we have $f_u(x) x_u = \E[\Dec^{(x)}(u) x_u]$, where the expectation is taken over the randomness of the decoder. In particular, if $\Code$ has completeness $1 - \eps$, then $f_u(x) x_u \geq 1 - 2 \eps$ for all $x \in \Code'$.
\item If $\Code$ is systematic and has completeness $1 - \eps$, then for any $r$ such that $1 - 2(r+1)\eps > 0$, it holds that for every $b \in \Fits^k$ and $x = \Code'(b)$, $\Psi_b(x) + \sum_{t = 1}^{r+1} \Phi^{(t)}_b(x) \geq k(1 - 2(r+1)\eps)$.
\item If $\Code$ is $\delta$-smooth, then $\Code'$ is $\delta/4$-smooth, and  $(H_u, G_u)_{u \in [n']}$ is a $(\delta/c)$-smooth hypergraph collection for some constant $c \geq 4$. 
\end{enumerate}
\end{lemma}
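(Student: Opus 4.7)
The plan is to construct the hypergraph collection by padding the code to absorb signs and constant outputs, representing each decoder as a degree-at-most-$3$ polynomial via the $\AND$-indicator expansion, and then establishing Part 2 by identifying the chain polynomial $\Psi_b + \sum_{t=1}^{r+1} \Phi^{(t)}_b$ with the expectation of the adaptive chain decoder from \cref{sec:techsnonlin}.

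I would first assume WLOG that the decoder $\Dec_0$ of $\Code$ makes exactly three queries on every coin toss (padding missing queries by queries to uniformly random vertices of $[n]$, which preserves smoothness up to a constant factor). For each $u \in [4n]$, I define the decoder $\Dec$ of $\Code'$ by simulating $\Dec_0^x(u)$ on the first $n$ coordinates of $x'$ when $u \in [n]$; simulating $-\Dec_0^x(u-n)$ when $u \in [n+1,2n]$; and outputting the constant $\pm 1$ without queries when $u$ lies in one of the $\pm 1^n$ blocks. At any 3-query leaf $(v_1,a_1,v_2,a_2,v_3)$ of $\Dec_0$ with path probability $\pi$, the decoding function, viewed as a function of $a_3$ alone, must be one of $\{a_3, -a_3, +1, -1\}$; I rewire the third query to $v_3' = v_3 + jn$ for the appropriate $j \in \{0,1,2,3\}$, so that $x'_{v_3'}$ always equals the desired output. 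Expanding $[x'_{v_i} = a_i] = (1 + a_i x'_{v_i})/2$ then gives
\begin{flalign*}
\E[\Dec^{x'}(u)] = \sum_{\text{leaves}} \tfrac{\pi}{4}\bigl(x'_{v_3'} + a_1 x'_{v_1} x'_{v_3'} + a_2 x'_{v_2} x'_{v_3'} + a_1 a_2 x'_{v_1} x'_{v_2} x'_{v_3'}\bigr)\mper
\end{flalign*}
To fit the homogeneous structure of \cref{def:hypergraphcollection}, I absorb the scalars $a_i$ via $a_i x'_{v_i} = x'_{v_i^{a_i}}$ (with $v_i^{+1} = v_i$, $v_i^{-1} = v_i + n$) and pad the lone degree-$1$ summand by writing $x'_{v_3'} = x'_{v_3'} \cdot \tfrac{1}{n}\sum_{w \in [2n+1,3n]} x'_w$, distributing the weight evenly across the $1^n$ block. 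The three resulting degree-$2$ monomials go into $G_u$ and the single degree-$3$ monomial into $H_u$, each contributing $\pi/4$ total weight per leaf; for $u$ in a constant block, $f_u$ is a single degree-$2$ monomial in the $\pm 1^n$ variables.

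For Part 3, summing over leaves gives $\sum_C \wt_{G_u}(C) \leq 3/4$ and $\sum_C \wt_{H_u}(C) \leq 1/4$. For smoothness at any $z \in [4n]$, the original bound $1/(\delta n)$ splits among the four rewired third-query blocks and the two sign-indexed blocks of the first two queries, while the uniform spreading of the padded $w$ over the $1^n$ block keeps the degree-$1$ padding at $\leq 1/(4n)$ per vertex; this yields $(\delta/c)$-smoothness for an absolute constant $c \geq 4$. For Part 2 I define $W^{(r)}_u(x) \coloneqq \phi_u(x) + \sum_{(v_1,v_2,u_1)} \wt_{H_u}(v_1,v_2,u_1)\, x_{v_1} x_{v_2}\, W^{(r-1)}_{u_1}(x)$ with $W^{(0)}_u = f_u$, and unroll the recursion to check that $\sum_i b_i W^{(r)}_i(x) = \Psi_b(x) + \sum_{t=1}^{r+1} \Phi^{(t)}_b(x)$.

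The crucial identification is that for any codeword $x = \Code'(b)$ we have $W^{(r)}_u(x) = \E[\Dec_{r+1}^x(u)]$, where $\Dec_{r+1}^x$ is the adaptive chain decoder that iteratively substitutes the third query of each sub-invocation of $\Dec$ by a recursive call; this follows by induction on $r$ since the substitution $x_{u_1} \mapsto W^{(r-1)}_{u_1}(x)$ inside $\psi_u$ is exactly what the chain decoder computes in expectation, both for $u_1$ in the $\pm\Code$ blocks (nontrivial recursion) and in the $\pm 1^n$ blocks (where $W^{(r-1)}_{u_1} = \phi_{u_1}$ evaluates to the constant $\pm 1$). A union bound over the $r+1$ sub-invocations yields $\Pr[\Dec_{r+1}^x(u) = x'_u] \geq 1 - (r+1)\eps$, so $W^{(r)}_u(x)\, x'_u \geq 1 - 2(r+1)\eps$; summing over the $k$ systematic coordinates (where $b_i = x'_i$) gives the $k(1 - 2(r+1)\eps)$ bound. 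The main technical subtlety I anticipate is verifying that the four cases of the decoding function and the four possible blocks of each intermediate $u_h$ compose consistently under the chain decoder, so that the sign flips from the $-\Code$ block and the constant outputs from the $\pm 1^n$ blocks propagate correctly through the recursive polynomial substitution while preserving the union-bound completeness.
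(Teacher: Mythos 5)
Your construction for Part~1 is a legitimate variant of the paper's: you absorb the decoding-function sign by rewiring the third query into one of the four padded blocks, whereas the paper absorbs the sign $\sigma_C$ into the first query's index and keeps $v_3$ fixed. Both produce valid hypergraph pairs (modulo the arithmetic slip that $\sum_C \wt_{G_u}(C) \le 3$ and $\sum_C \wt_{H_u}(C) \le 1$, not $3/4$ and $1/4$, and the minor technicality that in your $\tfrac{1}{n}\sum_w x'_w$ padding you must exclude $w = v_3'$ to keep vertices distinct, which perturbs $f_u$ by an $O(1/n)$ term that the paper's construction avoids by pairing the degree-$1$ monomial with $x_{v_3}$ itself).

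The real gap is in Part~2. The identification $W^{(r)}_u(x) = \E\bigl[\Dec_{r+1}^x(u)\bigr]$ is \emph{false} for $\eps > 0$. The recursion $W^{(r)}_u = \phi_u + \sum_{(v_1,v_2,u_1)} \wt_{H_u}(v_1,v_2,u_1)\,x_{v_1}x_{v_2}\,W^{(r-1)}_{u_1}$ (which does correctly unroll to $\Psi_b + \sum_{t=1}^{r+1}\Phi^{(t)}_b$) substitutes $W^{(r-1)}_{u_1}$ for the tail variable \emph{only inside the degree-$3$ component} $\psi_u$; the degree-$\le 2$ component $\phi_u$ retains the literal $x'_{v_3'}$. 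By contrast, the adaptive chain decoder replaces the oracle answer $x'_{v_3'}$ with the recursive output $\Dec_r^x(v_3')$ on every branch of the $\AND$, so
\begin{flalign*}
\E\bigl[\Dec_{r+1}^x(u)\bigr] = \sum_C \wt(C)\,\AND(a_1 x'_{v_1}, a_2 x'_{v_2})\, W^{(r-1)}_{v_3'}(x) \mper
\end{flalign*}
The two expressions agree on the $\tfrac14 a_1 a_2 x'_{v_1}x'_{v_2}$ summand of the $\AND$ but disagree on the remaining $\tfrac14 (1 + a_1 x'_{v_1} + a_2 x'_{v_2})$ summand, where $W^{(r)}_u$ keeps $x'_{v_3'}$ while the chain decoder uses $W^{(r-1)}_{v_3'}(x)$. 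When $v_3'$ lands in the $\pm\Code$ blocks (i.e., the decoding function at the leaf is $\pm a_3$), $W^{(r-1)}_{v_3'}(x) \ne x'_{v_3'}$ in general, so the identity fails and the union-bound step does not apply. This is exactly the subtlety the paper flags in \cref{sec:techsnonlin}: the split polynomials $\Psi_b, \Phi^{(t)}_b$ ``do not appear to have any nice interpretation in terms of the original decoder.''

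The fix, which the paper uses and which also applies to your $(H_u,G_u)$, is the telescoping argument in its proof of Item~(2): write
\begin{flalign*}
p_{\cH_u^{(r+1)}}(x) + p_{\cG_u^{(r+1)}}(x) - p_{\cH_u^{(r)}}(x) = \sum_{v}\sum_{C\,:\,\mathrm{tail}(C)=v}\wt_{\cH_u^{(r)}}(C)\,x_u g_C\,\bigl(x_v f_v(x) - 1\bigr) \mcom
\end{flalign*}
bound its magnitude by $2\eps$ using $\abs{x_v f_v(x) - 1} \le 2\eps$ (which follows because $x_v f_v(x) = \E[x_v\Dec^x(v)] \in [1-2\eps,\,1]$) and $\sum_C \wt_{\cH_u^{(r)}}(C) \le 1$, then sum $r+1$ telescope steps starting from the base case $x_u f_u(x) \ge 1-2\eps$. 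Your construction satisfies the two needed hypotheses, so Part~2 does hold for it — you just need to replace the chain-decoder-identification step with this telescope.
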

We prove \cref{lem:adaptivepolys} in \cref{sec:adaptivesmooth}.

Let us now finish the proof of \cref{mthm:nonlin}.

\begin{proof}[Proof of \cref{mthm:nonlin}]
Let $\Code$ be a $3$-LCC that is $\delta$-smooth and has completeness $1 - \eps$. By \cref{fact:bgt17}, by adjusting $k$ by a factor of $\log(1/\delta)$, we can assume that $\Code$ is additionally systematic. By \cref{lem:adaptivepolys}, the padded code $\Code'$ is $(\delta/4)$-smooth with completeness $1 - \eps$ and has a $(\delta/4)$-smooth uniform hypergraph collection $(H_u, G_u)_{u \in [4n]}$. Let $r$ be such that $1 - 2(r+1)\eps > 0$. We have that for every $b \in \Fits^k$ and $x = \Code'(b)$, $\Psi_b(x) + \sum_{t = 1}^{r+1} \Phi^{(t)}_b(x) \geq k(1 - 2(r+1)\eps)$.

On the other hand, by \cref{lem:chainxorref}, it holds that \begin{flalign*}
&\E_{b \gets \Fits^k}[\val(\Phi^{(t)}_b)] \leq O(\sqrt{k \ell r \log n}) \mcom \\
&\E_{b \gets \Fits^k}\left[\val(\Psi_b)\right] \leq \left(\frac{k(r+1)}{\delta}  O(\sqrt{k \ell r \log n})\right)^{1/2} \mcom
\end{flalign*}
where $d$ and $\ell$ are parameters chosen so that $d^r \geq n$, $\ell \geq 6 d r / \delta$, and $\ell r = o(n)$.

First, let us handle the case in \cref{mthm:nonlin} when $\eps = 0$. Here, we set $r = O(\log n)$, $d = 2$, and $\ell = O(d r/\delta) = \delta^{-1} O(\log n)$. We clearly have that all the conditions of \cref{lem:chainxorref} are satisfied. Hence, we have that
\begin{flalign*}
&k = k(1 - 2(r+1)\eps) \leq \E_{b}[\Psi_b(\Code'(b)) + \sum_{t = 1}^{r+1} \Phi^{(t)}_b(\Code'(b))] \\
&\leq (r+1) \cdot O(\sqrt{k \ell r \log n}) + \left(\frac{k(r+1)}{\delta}  O(\sqrt{k \ell r \log n})\right)^{1/2} \leq O\left(\sqrt{\frac{k \log^5 n}{\delta}} + \frac{k^{3/4} \log^{5/4} n}{\delta^{3/4}}\right) \\
&\implies k \leq O(\log^5 n/\delta^3) \mcom
\end{flalign*}
which proves the statement when $\eps = 0$.

Now, let us consider the case when $\eps > 0$. When we apply \cref{lem:chainxorref}, we now set parameters as follows. Let $\eta > 0$, and set $r_0$ be such that $r_0 +1  = \floor{\frac{1 - \eta}{2 \eps}}$ and $r_1 = \log _2 n$. We then let $r = \min(r_0, r_1)$. Note that by choice of $r$, $\frac{1 - \eta}{2 \eps} \geq r+1$, and so $1 - 2(r+1)\eps \geq 2\eta$, and we also have $r \leq O(\log n)$.

Now, we set $d$ to be such that $d^{r+1} \geq n$, so we have to set $d = n^{1/(r+1)}$. Finally, we set $\ell = d r / \delta$. We thus have that
\begin{flalign*}
&2 \eta k = k(1 - 2(r+1)\eps) \leq \E_{b}[\Psi_b(\Code'(b)) + \sum_{t = 1}^{r+1} \Phi^{(t)}_b(\Code'(b))] \\
&\leq (r+1) \cdot O(\sqrt{k \ell r \log n}) + \left(\frac{k(r+1)}{\delta}  O(\sqrt{k \ell r \log n})\right)^{1/2} \\
&\leq (r+1) \cdot O(\sqrt{k n^{1/(r+1)} r^2 \log n/\delta}) + \left(\frac{k(r+1)}{\delta}  O(\sqrt{k n^{1/(r+1)} r^2 \log n/\delta})\right)^{1/2} \mper
\end{flalign*}
This implies that either
\begin{flalign*}
\eta^2 k \leq \frac{1}{\delta} \cdot O( n^{1/(r+1)}  \log^5 n) \mcom
\end{flalign*}
or
\begin{flalign*}
\eta^4 k \leq \frac{1}{\delta^3}  O( n^{1/(r+1)}  \log^5 n) \mper
\end{flalign*}
The second equation is always the dominant term, which finishes the proof. Note that the final $\log(1/\delta)$ loss comes from \cref{fact:bgt17}.
\end{proof}

The remainder of the paper is dedicated to proving \cref{lem:adaptivepolys,lem:chainxorref}. First, we show \cref{lem:adaptivepolys} in \cref{sec:adaptivesmooth}. We break the proof of \cref{lem:chainxorref} across \cref{sec:graphref,sec:decomp,sec:kikuchi,sec:rowpruning}, which will complete the proof of \cref{mthm:nonlin}.

%%%%%%%%%%%%%%%%%%%%%%%%%%%%%%%%%%%%%%%%%%%%%%%%%%%%%%%%%%%%%%%%%%%%%%%%%%%%%%%%
%%%%%%%%%%%%%%%%%%%%%%%%%%%%%%%%%%%%%%%%%%%%%%%%%%%%%%%%%%%%%%%%%%%%%%%%%%%%%%%%
\subsection{Additional terminology}
In the proof of \cref{lem:chainxorref}, we will often refer to the set of chains where some of the links, i.e., pairs $(v_{2h+1}, v_{2h+2})$ are forced to contain some $v \in [n]$. Towards this, we introduce the following terminology. 
\begin{definition}[Chains containing $Q$]
Let $t, r$ be integers with $t \leq r$. For any $Q = (Q_1, \dots, Q_t, Q_{t+1}) \in \{[n] \cup \star\}^{t+1}$, we say that a length $3r+1$ tuple $C = (u_0,v_1, v_2, u_1, v_3, v_4, u_2, \dots, u_{t-1}, v_{2(r-1) + 1}, v_{2(r-1) + 2}, u_r)$ contains $Q$, denoted by $Q \subseteq C$, if $Q_{t+1} \in \{\star, u_r\}$ and for $1 \leq h \leq t$, if $Q_h \neq \star$, then either $Q_h = v_{2(r - 1 - t + h) + 1}$ or $Q_h = v_{2(r - 1 - t + h) + 2}$.

We say that a $Q$ is \emph{contiguous} if there exists $s \leq t$ such that $Q_{h} \neq \star$ for every $h \geq s+1$ and $Q_{h} = \star$ for every $1 \leq h \leq s$, i.e., the first $s$ entries are $\star$, and the remaining entries are non-$\star$. We note that by definition, $Q_{t+1} \ne \star$ always.

We say that $Q$ is \emph{complete} if $Q$ does not contain any $\star$. We say that $Q' \supseteq Q$ if whenever $Q_h \neq \star$, $Q'_h = Q_h$. We define the size $\abs{Q}$ to be the number of coordinates in $Q$ that do not equal $\star$.
\end{definition}

Finally, we prove a simple bound on the total weight of the hyperedges in $\cH_u^{(t)}$ and $\cG_u^{(t)}$.
\begin{observation}
\label{obs:weightpreserved}
Let $(G_u, H_u)_{u \in [n]}$ denote a $3$-LCC hypergraph collection. Then, for any $t \geq 1$ and $u \in [n]$, it holds that $\sum_{C \in [n]^{3t + 1}} \wt_{\cH_u^{(t)}}(C) \leq 1$ and $\sum_{C \in [n]^{3t}} \wt_{\cG_u^{(t)}}(C) \leq 4$.
\end{observation}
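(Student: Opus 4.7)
\textbf{Proof proposal for \cref{obs:weightpreserved}.}

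The plan is to prove both bounds by induction on $t$, exploiting the multiplicative structure of the chain weight functions given in \cref{def:tchainhypergraph,def:tchaingraph}. The key observation is that summing the weight of a chain over all choices of its first link factors cleanly: the contribution of the first link is bounded by a weight in $H_u$, and the remainder is a chain weight of length $t-1$ starting from the new head $u_1$, to which the inductive hypothesis applies.

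For the hypergraph-tailed chains, the base case $t=1$ is immediate: $\sum_{C \in [n]^{4}} \wt_{\cH_u^{(1)}}(C) = \sum_{(v_1,v_2,u_1) \in [n]^3} \wt_{H_u}(v_1,v_2,u_1) \leq 1$ by \cref{def:hypergraphcollection}. For $t \geq 2$, I would rewrite
\begin{equation*}
\sum_{C \in [n]^{3t+1}} \wt_{\cH_u^{(t)}}(C) = \sum_{(v_1,v_2,u_1) \in [n]^3} \wt_{H_u}(v_1,v_2,u_1) \sum_{C' \in [n]^{3(t-1)+1}} \wt_{\cH_{u_1}^{(t-1)}}(C') \mcom
\end{equation*}
apply the inductive hypothesis to bound the inner sum by $1$, and then use $\sum_{(v_1,v_2,u_1)} \wt_{H_u}(v_1,v_2,u_1) \leq 1$ to conclude that the whole expression is at most $1$.

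For the graph-tailed chains, the base case $t=1$ follows from $\sum_{C \in [n]^3} \wt_{\cG_u^{(1)}}(C) = \sum_{(v_1,v_2)} \wt_{G_u}(v_1,v_2) \leq 4$, again by \cref{def:hypergraphcollection}. For $t \geq 2$, the same factorization gives
\begin{equation*}
\sum_{C \in [n]^{3t}} \wt_{\cG_u^{(t)}}(C) = \sum_{(v_1,v_2,u_1) \in [n]^3} \wt_{H_u}(v_1,v_2,u_1) \sum_{C' \in [n]^{3(t-1)}} \wt_{\cG_{u_1}^{(t-1)}}(C') \mcom
\end{equation*}
and the inductive hypothesis bounds the inner sum by $4$. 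Using $\sum_{(v_1,v_2,u_1)} \wt_{H_u}(v_1,v_2,u_1) \leq 1$ once more yields a bound of $4$ on the entire sum.

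There is no substantial obstacle here; the argument is a routine induction. The only subtlety to flag is that the ``factor of $4$'' from $G$ is incurred exactly once, at the tail, because every intermediate link uses an $H$ edge whose total weight bound is $1$ rather than $4$. This is precisely the structural reason (discussed in the overview of \cref{mthm:nonlin}) that the total weight of the chain polynomial scales linearly rather than exponentially in $r$.
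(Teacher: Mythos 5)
Your proof is correct and takes essentially the same approach as the paper: a straightforward induction on $t$ using the multiplicative structure of the chain weights, peeling off one link at a time (you peel off the first link, the paper peels off the last, and the paper deduces the $\cG$ bound from the $\cH$ bound rather than via a separate induction — these are cosmetic differences). The argument and the resulting bounds are identical.
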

\begin{proof}
Let us first prove the statement for $\cH_u^{(t)}$. This follows by induction. The base case of $t = 1$ is simple, as by definition we have 
\begin{flalign*}
&\sum_{C \in [n]^4} \wt_{\cH_u^{(1)}}(C) = \sum_{(u,C) \in [n]^4} \wt_{\cH_u^{(1)}}(u, C) = \sum_{C\in [n]^3} \wt_{H_u}(C) \leq 1 \mper
\end{flalign*}
We now show the induction step. Let $C \in [n]^{3t + 1}$ have tail $u_t$. Let $S$ denote the set of tuples in $[n]^{3t + 4}$ that extend $C$, i.e., the first $3t+1$ coordinates are $C$. We observe that $S = C \times [n]^3$. Moreover, we have
\begin{flalign*}
&\sum_{C' \in S} \wt_{\cH_u^{(t+1)}}(C') =  \sum_{C' \in [n]^3} \wt_{\cH_u^{(t)}}(C)\wt_{H_{u_t}}(C') \leq  \wt_{\cH_u^{(t)}}(C) \mper
\end{flalign*}
Summing over $C$ and applying the induction hypothesis proves the claim.

Now, we prove the statement for $\cG_u^{(t)}$. Let $C \in [n]^{3t + 1}$ have tail $u_t$. Let $S$ denote the set of tuples in $[n]^{3t + 3}$ that extend $C$, i.e., the first $3t+1$ coordinates are $C$. We observe that $S = C \times [n]^2$. We have
\begin{flalign*}
&\sum_{C' \in S} \wt_{\cG_u^{(t+1)}}(C') =  \sum_{C' \in [n]^2} \wt_{\cH_u^{(t)}}(C)\wt_{G_{u_t}}(C') \leq  4\wt_{\cH_u^{(t)}}(C) \mper
\end{flalign*}
Summing over $C$ and applying the claim for $\cH_u^{(t)}$ then proves the claim for $\cG_u^{(t)}$.
\end{proof}

%%%%%%%%%%%%%%%%%%%%%%%%%%%%%%%%%%%%%%%%%%%%%%%%%%%%%%%%%%%%%%%%%%%%%%%%%%%%%%%%
%%%%%%%%%%%%%%%%%%%%%%%%%%%%%%%%%%%%%%%%%%%%%%%%%%%%%%%%%%%%%%%%%%%%%%%%%%%%%%%%
\subsection{Constructing Polynomials from Adaptive Smoothed Decoders}
\label{sec:adaptivesmooth}

In this subsection, we prove \cref{lem:adaptivepolys}. 
Let $\Code \colon \Fits^k \to \Fits^n$ be a $3$-LCC with an adaptive decoder $\Dec(\cdot)$. The vast majority of the proof will be for proving Item (1), which will be done in two steps. First, we will prove the following lemma, which is an analogue of Item (1) in \cref{lem:adaptivepolys} but for the $\AND$ polynomial, which is defined below.
\begin{definition}[$\AND$ polynomial]
\label{def:andpoly}
Let $\AND \colon \Fits^2 \to \Bits$ be the function where $\AND(\sigma, \sigma') = 1$ if $\sigma = \sigma' = 1$, and $0$ otherwise. We note that $\AND(\sigma, \sigma') = \frac{1}{2}(1 + \sigma) \cdot \frac{1}{2}(1 + \sigma')$.
\end{definition}
\begin{lemma}
\label{lem:adaptiveANDpolys}
Let $\Code \colon \Fits^k \to \Fits^n$ be a $3$-LCC with an adaptive decoder $\Dec(\cdot)$ that uses at most $r$ bits of randomness. Then, for every $u \in [n]$, there are weight functions $\wt_{H_u} \colon [n] \times \Fits \times [n] \times \Fits \times [n] \times \Bits^r \to \R_{\geq 0}$ and $\wt_{G_u} \colon [n] \times \Fits \times [n] \times \Fits \times \Bits^r \to \R_{\geq 0}$ and bits $\sigma_{(u,v_1, a_1, v_2, a_2, v_3, \rand)} \in \Fits$, $\sigma_{(u,v_1, a_1, v_2, a_2, \rand)} \in \Fits$ such that for every $x \in \Code$,
\begin{flalign}
&\sum_{C = (v_1, a_1, v_2, a_2, \rand)} \left(\wt_{G_u}(C) + \sum_{v_3 \in [n]}\wt_{H_u}(C, v_3)\right) = 4 \mcom \label{eq:wtbound}\\
&\sum_{C = (v_1, a_1, v_2, a_2, \rand)} \left(\wt_{G_u}(C) + \sum_{v_3 \in [n]}\wt_{H_u}(C, v_3)\right) \cdot \AND(a_1 x_{v_1}, a_2 x_{v_2}) = 1 \mcom \label{eq:wtcompl}\\
&\sum_{C = (v_1, a_1, v_2, a_2, \rand)} \left(\wt_{G_u}(C) \sigma_{(u,C)} + \sum_{v_3 \in [n]}\wt_{H_u}(C, v_3) \sigma_{(u,C, v_3)} x_{v_3}\right) \cdot \AND(a_1 x_{v_1}, a_2 x_{v_2}) = \E[\Dec^{x}(u)] \mcom\label{eq:polycompl}
\end{flalign}
where the expectation $\E[\Dec^{x}(u)]$ is over the internal randomness of the decoder.

Furthermore, if $\Dec(\cdot)$ is $\delta$-smooth, then for any $v \in [n]$, we have
\begin{flalign*}
\sum_{\substack{(C,v_3) = (v_1, a_1, v_2, a_2, v_3)\\v_1 = v \vee v_2 = v \vee v_3 = v}}  \wt_{H_u}(C, v_3) + \sum_{\substack{C =(v_1, a_1, v_2, a_2)\\v_1 = v \vee v_2 = v}} \wt_{G_u}(C) \leq \frac{4}{\delta n} \mper
\end{flalign*}
\end{lemma}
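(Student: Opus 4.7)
The plan is to read the weight functions directly off the decision-tree representation of $\Dec^{(\cdot)}(u)$. For each fixed $\rand \in \Bits^r$, the decoder is a depth-$\leq 3$ deterministic tree: it queries $v_1 = v_1(\rand)$, then $v_2 = v_2(\rand, a_1)$ based on the received bit $a_1$, and then either terminates with constant output $\sigma \in \Fits$ or queries $v_3 = v_3(\rand, a_1, a_2)$ and outputs $\sigma\, x_{v_3}$. Without loss of generality, if the decoder queries a third vertex then its output genuinely depends on $a_3$ (otherwise we collapse that leaf into a 2-query leaf, since there are only four maps $\Fits \to \Fits$). For each $(\rand, a_1, a_2) \in \Bits^r \times \Fits^2$ I assign weight $2^{-r}$ to the leaf reached along that hypothetical path, placing it in $\wt_{H_u}$ or $\wt_{G_u}$ according to whether the leaf is ``long'' (3-query) or ``short'' (2-query), and I set the associated sign $\sigma$ to the sign of the leaf's output.

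Verifying \eqref{eq:wtbound}--\eqref{eq:polycompl} is then a direct computation. For \eqref{eq:wtbound}, each $\rand$ contributes exactly $4$ leaves of weight $2^{-r}$, summing to $4$. For \eqref{eq:wtcompl}, the factor $\AND(a_1 x_{v_1}, a_2 x_{v_2})$ vanishes unless $(a_1, a_2) = (x_{v_1}, x_{v_2})$, which selects the single path the decoder actually traverses on input $x$; summing $2^{-r}$ over the $2^r$ values of $\rand$ yields $1$. For \eqref{eq:polycompl}, the same restriction picks out the actual execution, and the signed contribution along that path is $\sigma$ (short leaf) or $\sigma\, x_{v_3}$ (long leaf), which matches $\Dec^{(x)}(u;\rand)$; averaging over $\rand$ gives $\E[\Dec^{(x)}(u)]$.

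For the smoothness bound, I would write the left-hand side as
\[
\sum_{(a_1,a_2)\in\Fits^2}\ \sum_\rand 2^{-r}\, \mathbb{1}\!\left[v \in \mathrm{path}(\rand, a_1, a_2)\right]
\]
and bound each of the four inner sums by $1/(\delta n)$. Each inner sum is the probability that the decoder queries $v$ on a hypothetical execution in which the first two answers are forced to be $(a_1, a_2)$. Decomposing this event into ``$v$ is the first query'' (independent of $a_1,a_2$), ``$v$ is the second query'' (depends only on $a_1$), and ``$v$ is the third query'' (depends on $(a_1,a_2)$), and applying $\delta$-smoothness of $\Dec$ to codewords realizing the required hypothetical answers on the relevant queries, yields the bound $1/(\delta n)$ per choice of $(a_1,a_2)$; summing the four contributions gives $4/(\delta n)$.

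\textbf{Main obstacle.} The bookkeeping for \eqref{eq:wtbound}--\eqref{eq:polycompl} is essentially forced once one commits to the decision-tree viewpoint, so the real difficulty is the smoothness estimate. It asks us to control query probabilities for \emph{hypothetical} executions indexed by arbitrary $(a_1,a_2) \in \Fits^2$, while decoder smoothness directly bounds only executions on actual codewords. Establishing the clean constant $4/(\delta n)$ requires an averaging argument over the codebook (or exploiting additional structure, such as the padding $\Code \leadsto \Code'$ of \cref{lem:adaptivepolys}, which makes all four answer pairs simultaneously realizable on any pair of queried vertices); one can absorb any residual slack into the universal constant $c \geq 4$ appearing in \cref{lem:adaptivepolys}.
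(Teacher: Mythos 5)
Your construction of the weight functions and the verification of \eqref{eq:wtbound}--\eqref{eq:polycompl} are correct and essentially match the paper's own decision-tree argument, modulo the harmless reparametrization ``weight $2^{-r}$ per $(\rand,a_1,a_2)$ leaf.'' You are also right that the smoothness inequality is the one nontrivial remaining step --- notably, the paper's own proof of this lemma ends immediately after verifying \eqref{eq:polycompl} and never verifies the smoothness bound at all.

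The gap is where you suspect, and your sketch does not close it. Bounding each inner sum $\sum_{\rand} 2^{-r}\mathbb{1}[v \in \mathrm{path}(\rand,a_1,a_2)]$ by $1/(\delta n)$ requires applying $\delta$-smoothness of $\Dec$ on an oracle realizing the forced answers $(a_1,a_2)$ across the \emph{entire} support of the first and second queries, but \cref{def:smoothLCC} constrains $\Dec$ only on codewords, and such a codeword need not exist: the first- and second-query supports may overlap, and coordinates in the support may be constant over the codebook. In particular, contrary to your parenthetical claim, the padding does not make all four answer pairs simultaneously realizable --- the padded coordinates $\{2n+1,\dots,4n\}$ are literal constants, and coordinates $\{n+1,\dots,2n\}$ are sign-coupled to $\{1,\dots,n\}$. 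On such ``dead'' branches the forced-answer decoder can hit $v$ with probability far exceeding $1/(\delta n)$ while $\Dec$ remains $\delta$-smooth on every codeword. Your averaging fallback bounds $\sum_{a_1}\Pr_\rand[v_2(\rand,a_1)=v]$ by roughly $\bigl(\min_w \Pr_{x\in\Code}[x_w=a_1]\bigr)^{-1}(\delta n)^{-1}$, which is $O(1/(\delta n))$ only if the code is bit-balanced across the entire query support --- again not guaranteed. A genuine fix would require either a WLOG preprocessing forcing the decoder to behave smoothly (e.g., query uniformly) on branches unreachable by any codeword, or strengthening the smoothness hypothesis to all oracles $y \in \Fits^n$ combined with an argument about disjointness of the query supports; neither is present in your sketch.
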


We postpone the proof of \cref{lem:adaptiveANDpolys} to \cref{sec:adaptiveANDpolys}, and now finish the rest of the proof of \cref{lem:adaptivepolys}.

Let $\Code' \colon \Fits^k \to \Fits^{4n}$ be the ``padded'' version of $\Code$, i.e., for each $b \in \Fits^k$, $\Code'(b) = (\Code(b), -\Code(b), 1^n, (-1)^n)$. Note that if $\Code$ is systematic, then so is $\Code'$.

Let us extend $\Dec(\cdot)$ to be a decoder $\Dec'(\cdot)$ for $\Code'$ by defining its behavior on $u \in \{2n+1, \dots, 4n\}$ to be: (1) if $u$ is a ``$1$ bit'', i.e., $u \in \{2n + 1, \dots, 3n\}$, then $\Dec'(u)$ queries a random pair of the ``padded'' bits of the same sign (namely, it queries either two bits that are supposed to be $1$ or two bits that are $-1$), and (2) if $u$ is a ``$-1$ bit'', i.e., $u \in \{3n + 1, \dots, 4n\}$, then $\Dec('u)$ queries a random pair of the ``padded'' bits that have opposite signs. We note that if the original decoder $\Dec(\cdot)$ has completeness $1 - \eps$, then so does the padded decoder $\Dec'(\cdot)$, and if the original decoder is $\delta$-smooth, then the padded decoder $\Dec'(\cdot)$ is $(\delta/3)$-smooth.

\parhead{Proof of Item (1).} We are now ready to prove Item (1) in \cref{lem:adaptivepolys}. Fix $u \in [3n]$. We will now construct the desired hypergraph pair $(H'_u, G'_u)$ as follows. 

First, if $u \in [4n] \setminus [2n]$ is one of the ``constant'' padded vertices, then this is simple. We let $H'_u$ be empty, i.e., all weights are $0$, and if $u \in \{n +1, \dots, 2n\}$ is one of the ``$1$ bit'' padded vertices, then we let $G'_u$ denote the graph with weight $1/2n(n-1)$ on all ordered pairs of vertices $(v_1, v_2)$ where $v_1, v_2 \in \{2n + 1, \dots, 3n\}$ or $v_1, v_2 \in \{3n + 1, \dots, 4n\}$. If $u \in \{2n + 1, \dots, 3n\}$ is one of the ``$-1$ bit'' padded vertices, then we let $G'_u$ denote the graph with weight $1/2n^2$ on all ordered pairs of vertices $(v_1, v_2)$ where $v_1 \in \{2n + 1, \dots, 3n\}$, $v_2 \in \{3n + 1, \dots, 4n\}$ or vice-versa. It is straightforward to observe that this satisfies the desired condition, as for every $x \in \Code'$, $x_{v_1} x_{v_2} = 1$ if $v_1,v_2 \in \{n + 1, \dots, 2n\}$ or $v_1, v_2 \in \{2n + 1, \dots, 3n\}$, and in the other case $x_{v_1} x_{v_2} = -1$ holds for all codewords.

It remains to handle the case when $u \in [2n]$. We will do this for the case when $u \in [n]$, and then observe that we can handle the case of $u \in [2n] \setminus [n]$ by flipping the ``sign'' of the first query.

Let $u \in [n]$. We construct $(H'_u, G'_u)$ from the pair $(H_u, G_u)$ given to us in \cref{lem:adaptiveANDpolys}, as follows. Recall that each term $C = (v_1, a_1, v_2, a_2, \rand)$ with $\wt_{G_u}(C) > 0$ contributes the term $\wt_{G_u}(C) \sigma_{C} \AND(a_1 x_{v_1}, a_2 x_{v_2})$ in \cref{eq:polycompl}. We have that for any $x \in \Code'$,
\begin{flalign*}
&\sigma_C \AND(a_1 x_{v_1}, a_2 x_{v_2}) = \frac{1}{4} \sigma_C \left(1 + a_1 x_{v_1} + a_2 x_{v_2} + a_1 a_2 x_{v_1} x_{v_2}\right) \\
&= \frac{1}{4} \left(x_{\sigma_C^{(v_1)}} x_{1^{(v_2)}} + x_{a_1 v_1} x_{\sigma_C^{(v_2)}} + x_{\sigma_C^{(v_1)}} x_{a_2 v_2} + x_{\sigma_C a_1 v_1} x_{a_2 v_2}\right) \mcom
\end{flalign*}
where (1) for any $\sigma \in \Fits$, $x_{\sigma^{(v_1)}}$ refers to the $v_1$-th copy of $\sigma$, i.e., if $\sigma = 1$ then $x_{\sigma^{(v_1)}} = x_{2n + v_1}$ and if $\sigma = -1$ then $x_{\sigma^{(v_1)}} = x_{3n + v_1}$, and (2) $x_{a_1 v_1}$ is $x_{v_1}$ if $a_1 = 1$ and $x_{n + v_1}$, i.e., the copy of $-x_{v_1}$, if $a_1 = -1$, and similar notation is used for $x_{v_2}$.  

Now, we add $4$ edges to $G'_u$ for each such edge in $G_u$. Namely, for any $C = (v_1, a_1, v_2, a_2, \rand)$ with $\wt_{G_u}(C) > 0$ and term $\wt_{G_u}(C) \sigma_{C} \AND(a_1 x_{v_1}, a_2 x_{v_2})$, we add the $4$ edges $(\sigma_C^{(v_1)}, 1^{(v_2)})$, $(a_1 v_1, \sigma_C^{(v_2)})$, $(\sigma_C^{(v_1)}, a_2 v_2)$, $(\sigma_C a_1 v_1, a_2 v_2)$ to $G_u$, each with weight $\frac{1}{4}\wt_{G_u}(C)$. We note that it is possible to add the same edge to $G'_u$ multiple times: in this case, we ``merge'' the edges by adding their weights together.

We now process the edges in $H_u$ to form $H'_u$ in a similar way. The difference here is that we will add the ``degree $3$ term'' to $H'_u$, and all other terms will again be added to $G'_u$. More formally, each term $C = (v_1, a_1, v_2, a_2, v_3, \rand)$ with $\wt_{H_u}(C) > 0$ contributes the term $\wt_{H_u}(C) \sigma_{C} \AND(a_1 x_{v_1}, a_2 x_{v_2}) x_{v_3}$ in \cref{eq:polycompl}. From this term, we add the edge $(\sigma_C a_1 v_1, a_2 v_2, v_3)$ to $H'_u$ with weight $\frac{1}{4}\wt_{H_u}(C)$, and we add the $3$ edges $(\sigma_C^{(v_1)}, v_3)$, $(\sigma_C a_1 v_1, v_3)$, $(\sigma_C a_2 v_2, v_3)$ to $G'_u$ with weight $\frac{1}{4}\wt_{G_u}(C)$.

This defines the pair $(H'_u, G'_u)$ for all $u \in [4n] \setminus \{n+1, \dots, 2n\}$. To define the pair for $u \in \{n+1, \dots, 2n\}$, we simply observe that $-u \in [n]$, and so we flip the ``sign'' of the first vertex in each edge in $H'_{-u}$ or $G'_{-u}$, and this defines $(H'_u, G'_u)$ for $u \in \{n+1, \dots, 2n\}$.

We now need to show that the pair $(H'_u, G'_u)$ satisfies the normalization conditions of \cref{def:hypergraphcollection}, namely that $\sum_{C \in [n]^2} \wt_{G'_u}(C) + \sum_{C \in [n]^3} \wt_{H'_u}(C) \leq 4$ and $\sum_{C \in [n]^3} \wt_{H'_u}(C) \leq 1$. We note that for $u \in [4n] \setminus [2n]$, this clearly holds, so it suffices to argue this for $u \in [n]$ (which then implies the statement for $u \in [2n]$).
The first inequality follows from \cref{eq:wtbound}, as the total weight of all edges is preserved. The second inequality follows because the total weight in $H'_u$ is at most $1/4$ of the total weight in $H_u$, which is at most $4$.

Finally, to finish the proof of Item (1), let $f_u$ be the polynomial defined in \cref{def:hypergraphcollection} from $(H'_u, G'_u)$. We clearly have that for any $x \in \Code'$, $f_u(x) = \E[\Dec'(u)]$, as by construction $f_u(x)$ is equal to the left hand side of \cref{eq:polycompl}.

\parhead{Proof of Item (2).} We are now ready to prove Item (2). For simplicity, we will replace $4n$ with $n$. Let $p_{\cG_u^{(t)}}(x) = \sum_{C \in [n]^{3t}}  \wt_{\cG_u^{(t)}}(C) \cdot x_u g_C$ and $p_{\cH_u^{(t)}} = \sum_{C \in [n]^{3t+1}}  \wt_{\cH_u^{(t)}}(C) \cdot x_u g_C$ denote the ``graph-tailed'' and ``hypergraph-tailed'' polynomials \emph{with head $u$}, defined in a similar manner to the polynomials in \cref{def:chainxor}. We will show by induction on $r$ that if $1 - 2(r+1) \eps > 0$, then $p_{\cH_u^{(r+1)}}(x) + \sum_{t = 1}^{r+1} p_{\cG_u^{(r+1)}}(x)  \geq 1 - 2(r+1) \eps$.

For the base case, we observe that when $r = 0$, $p_{\cH_u^{(1)}}(x) = x_u \psi_u(x)$ and $p_{\cG_u^{(1)}}(x) = x_u \phi_u(x)$ (see \cref{def:hypergraphcollection}). Therefore, for any $x \in \Code'$, $p_{\cH_u^{(1)}}(x) + p_{\cG_u^{(1)}}(x) = x_u f_u(x) = \E[x_u \Dec^{(x)}(u)] \geq 1 - 2 \eps$, as $\Code$ (and therefore $\Code'$) has completeness $1 - \eps$. Note that we also have $\abs{1 - x_u f_u(x)} = \abs{x_u - f_u(x)} \leq 2 \eps$ for all $x \in \Code'$.

For the induction step, we have by definition (see \cref{rem:chainiterative}) that for any $x \in \Code'$,
\begin{flalign*}
&p_{\cH_u^{(r+1)}}(x) + p_{\cG_u^{(r+1)}(x)} - p_{\cH_u^{(r)}}(x) = \sum_{v \in [n]} \sum_{C \in [n]^{3r+1} : \mathrm{tail}(C) = v}  \wt_{\cH_u^{(r)}}(C) \cdot x_u g_C \cdot (x_v f_v(x) - 1) \\
&\leq  \left(\sum_{v \in [n]} \sum_{C \in [n]^{3r+1} : \mathrm{tail}(C) = v}  \wt_{\cH_u^{(r)}}(C) \cdot \abs{x_u g_C} \cdot \abs{1 - (x_v f_v(x))}\right) \\
&\leq \left(\sum_{v \in [n]} \sum_{C \in [n]^{3r+1} : \mathrm{tail}(C) = v}  \wt_{\cH_u^{(r)}}(C) \cdot 1 \cdot 2 \eps \right)  \\
&\leq 2 \eps\mper
\end{flalign*}
The final inequality uses the fact that $\sum_{C \in [n]^{3r+1}}  \wt_{\cH_u^{(r)}}(C) \leq 1$, which follows by induction using that $\sum_{C \in [n]^3} \wt_{H'_u}(C) \leq 1$.

Hence, we conclude that 
\begin{flalign*}
p_{\cH_u^{(r+1)}}(x) + \sum_{t = 1}^{r+1} p_{\cG_u^{(r+1)}}(x) = \left(p_{\cH_u^{(r+1)}}(x) + p_{\cG_u^{(r+1)}}(x) - p_{\cH_u^{(r)}}(x) \right) + \left(p_{\cH_u^{(r)}}(x)+ \sum_{t = 1}^{r} p_{\cG_u^{(r+1)}}(x)\right) \geq - 2 \eps + 1 - 2r \eps \mcom
\end{flalign*}
which finishes the proof of Item (2).

\parhead{Proof of Item (3).} The proof of smoothness is straightforward. First, if $u \in [4n] \setminus [2n]$, then the condition immediately holds by construction. Let us now consider the interesting case of $u \in [n]$. By \cref{lem:adaptiveANDpolys}, the pair $(H_u, G_u)$ satisfies the smoothness condition. Thus, it remains to verify that the pair $(H'_u, G'_u)$ is $\delta/c)$-smooth for some constant $c$. This follows immediately because, for each edge in $H'_u$ or $G'_u$ that contains some $v' \in [4n]$, we can uniquely identify $v \in [n]$ such that the ``original edge'' in $(H_u, G_u)$ that the new edge ``comes from'' contains $v$. Hence, if we consider the total weight of all hyperedges in $(H'_u, G'_u)$ containing some vertex $v' \in [4n]$, there is a $v \in [n]$ such that the weight is upper-bounded by the total weight of all hyperedges in $(H_u, G_u)$ containing $v$. The extra constant factor $c$ comes from the fact that the number of vertices is now $4n$ and the constant factor loss in \cref{lem:adaptiveANDpolys}.
%%%%%%%%%%%%%%%%%%%%%%%%%%%%%%%%%%%%%%%%%%%%%%%%%%%%%%%%%%%%%%%%%%%%%%%%%%%%%%%%
%%%%%%%%%%%%%%%%%%%%%%%%%%%%%%%%%%%%%%%%%%%%%%%%%%%%%%%%%%%%%%%%%%%%%%%%%%%%%%%%
\subsection{Proof of \cref{lem:adaptiveANDpolys}}
\label{sec:adaptiveANDpolys}
In this subsection, we prove \cref{lem:adaptiveANDpolys}.
Let $\Code \colon \Fits^k \to \Fits^n$ be a $3$-LCC with an adaptive decoder. For each $u \in [n]$, we use the decoding algorithm $\Dec(u)$ to define weight functions $\wt_{H_u}$ and $\wt_{G_u}$. In what follows, we consider a fixed $u \in [n]$.

First, without loss of generality, we may assume that the decoder $\Dec(u)$ makes \emph{exactly} $3$ queries. We can view the decoder as a decision tree: first, $\Dec(u)$ generates the first query $v_1$ from some distribution. Then, $\Dec(u)$ receives a bit $a_1 \in \Fits$, the answer to the query $v_1$. This answer selects the branch of the decision tree, which determines the distribution of the next query $v_2$. Then, the decoder receives another answer $a_2 \in \Fits$, which selects the branch of the decision tree, and gives the distribution of the final query $v_3$. Finally, the decoder first selects some randomness $\rand \in \Bits^r$, receives an answer $a_3$, and then it computes a (deterministic) function $f_{(v_1, a_1, v_2, a_2, v_3, \rand)}$ of $a_3$ to produce its output. This function is deterministic because the randomness is handled in $\rand$.
We note that there are exactly $4$ valid deterministic functions: $1$, $-1$, $a_3$, and $-a_3$, so $f_{(v_1, a_1, v_2, a_2, v_3, \rand)}$ must be one of these.

 For each choice of $C = (v_1, a_1, v_2, a_2, v_3, \rand) \in ([n] \times \Fits)^2 \times [n] \times \Bits^r$, we let $\wt_u(C)$ be the probability that the decoder makes the set of queries $C$ (with the appropriate answers) when given oracle access to \emph{any} $x$ that is consistent with $C$, meaning that $x_{v_1} = a_1$ and $x_{v_2} = a_2$. Indeed, this does not depend on the choice of $x$, as there is some probability $p_{v_1}$ that the decoder queries $v_1$ (which does not depend on $x$), and then given $x_{v_1} = a_1$, there is a probability $p_{v_2}$ that the decoder queries $v_2$, etc.

We now partition the query sets into two types. If $C$ is such that $f_{(v_1, a_1, v_2, a_2, v_3, \rand)}$ is a constant function $\sigma \in \Fits$ (so it does not depend on $a_3$), then we set $\wt_{G_u}(v_1, a_1, v_2, a_2, \rand) = \wt_u(C)$ and $\sigma_{(v_1, a_1, v_2, a_2, \rand)} = \sigma$. Otherwise, we have that $C$ is such that $f_{(v_1, a_1, v_2, a_2, v_3, \rand)}= \sigma a_3$, and then we set $\wt_{H_u}(v_1, a_1, v_2, a_2, v_3, \rand) = \wt_u(C)$ and $\sigma_{(v_1, a_1, v_2, a_2, v_3, \rand)} = \sigma$.

We now show that this weight function has the desired properties. Indeed, we have essentially encoded the behavior of the arbitrary decoder as this system of polynomials. 

First, let us show that
\begin{flalign*}
&\sum_{C = (v_1, a_1, v_2, a_2, \rand)} \left(\wt_{G_u}(C) + \sum_{v_3 \in [n]}\wt_{H_u}(C, v_3)\right) = 4 \mper
\end{flalign*}
Consider the decoder $\Dec'(u)$ that simulates $\Dec_u$ by generating random bits as the answers to the queries of $\Dec(u)$. It follows that the probability that $\Dec'(u)$ queries a particular $C$ is $\wt(C)/4$, and hence \cref{eq:wtbound} holds.

Next, let us show that for any $x \in \Code$
\begin{flalign*}
&\sum_{C = (v_1, a_1, v_2, a_2, \rand)} \left(\wt_{G_u}(C) + \sum_{v_3 \in [n]}\wt_{H_u}(C, v_3)\right) \cdot \AND(a_1 x_{v_1}, a_2 x_{v_2}) = 1 \mper 
\end{flalign*}
Indeed, we observe that for any $x \in \Code$ and any $C$, $\wt_{H_u}(C, v_3)\cdot \AND(a_1 x_{v_1}, a_2 x_{v_2})$ is $0$ if $C$ is inconsistent with $x$, and otherwise it is the probability that $\Dec^{x}(u)$ queries $C$, and the same statement holds for $\wt_{G_u}(C)\AND(a_1 x_{v_1}, a_2 x_{v_2})$. Hence, the sum must be $1$.

Finally, we have
\begin{flalign*}
&x_u \sum_{C = (v_1, a_1, v_2, a_2, \rand)} \left(\wt_{G_u}(C) \sigma_{C} + \sum_{v_3 \in [n]}\wt_{H_u}(C, v_3) \sigma_{(C, v_3)} x_{v_3}\right) \cdot \AND(a_1 x_{v_1}, a_2 x_{v_2}) = \E[\Dec^{x}(u) x_u] \mper
\end{flalign*}
Indeed, this is because for any $C = (v_1, a_1, v_2, a_2, v_3, \rand)$ and any $x \in \Code$, then the execution of $\Dec^{(x)}(u)$ queries $C$ with probability $\wt_{H_u}(C, v_3) \AND(a_1 x_{v_1}, a_2 x_{v_2})$, and then the output of the decoder is the decoding function, which is $\sigma_{(C, v_3)} x_{v_3}$. A similar statement holds for $C =(v_1, a_1, v_2, a_2, \rand)$ as well, which finishes the proof.

\ignore[old proof]{
%%%%%%%%%%%%%%%%%%%%%%%%%%%%%%%%%%%%%%%%%%%%%%%%%%%%%%%%%%%%%%%%%%%%%%%%%%%%%%%%
%%%%%%%%%%%%%%%%%%%%%%%%%%%%%%%%%%%%%%%%%%%%%%%%%%%%%%%%%%%%%%%%%%%%%%%%%%%%%%%%
%%%%%%%%%%%%%%%%%%%%%%%%%%%%%%%%%%%%%%%%%%%%%%%%%%%%%%%%%%%%%%%%%%%%%%%%%%%%%%%%
\section{From Adaptive Decoders to Chain Polynomials}
\label{sec:chainpolys}
In this section, we begin the proof of \cref{mthm:nonlin}. We will transform a $3$-LCC with an adaptive decoder into a system of satisfiable polynomial constraints that we call ``chain polynomials''. The polynomials will be products of $\AND$ polynomials, which we recall below.
\begin{definition}[$\AND$ polynomial]
\label{def:andpoly}
Let $\AND \colon \Fits^2 \to \Bits$ be the function where $\AND(\sigma, \sigma') = 1$ if $\sigma = \sigma' = 1$, and $0$ otherwise. We note that $\AND(\sigma, \sigma') = \frac{1}{2}(1 + \sigma) \cdot \frac{1}{2}(1 + \sigma')$.
\end{definition}

The key structure that we shall extract from the $3$-LCC is captured by the following lemma. 
\begin{lemma}
\label{lem:adaptivepolys}
Let $\Code \colon \Fits^k \to \Fits^n$ be a $3$-LCC with an adaptive decoder $\Dec(\cdot)$. Then, for every $u \in [n]$, there are weight functions $\wt_{H_u} \colon [n] \times \Fits \times [n] \times \Fits \times [n] \to \R_{\geq 0}$ and $\wt_{G_u} \colon [n] \times \Fits \times [n] \times \Fits \to \R_{\geq 0}$ and bits $\sigma_{(u,v_1, a_1, v_2, a_2, v_3)} \in \Fits$, $\sigma_{(u,v_1, a_1, v_2, a_2)} \in \Fits$ such that for every $x \in \Code$,
\begin{flalign}
&\sum_{C = (v_1, a_1, v_2, a_2)} \left(\wt_{G_u}(C) + \sum_{v_3 \in [n]}\wt_{H_u}(C, v_3)\right) = 4 \mcom \label{eq:wtbound}\\
&\sum_{C = (v_1, a_1, v_2, a_2)} \left(\wt_{G_u}(C) + \sum_{v_3 \in [n]}\wt_{H_u}(C, v_3)\right) \cdot \AND(a_1 x_{v_1}, a_2 x_{v_2}) = 1 \mcom \label{eq:wtcompl}\\
&x_u \sum_{C = (v_1, a_1, v_2, a_2)} \left(\wt_{G_u}(C) \sigma_{(u,C)} + \sum_{v_3 \in [n]}\wt_{H_u}(C, v_3) \sigma_{(u,C, v_3)} x_{v_3}\right) \cdot \AND(a_1 x_{v_1}, a_2 x_{v_2}) = \E[\Dec^{x}(u) x_u] \mcom\label{eq:polycompl}
\end{flalign}
where the expectation $\E[\Dec^{x}(u) x_u]$ is over the internal randomness of the decoder. In particular, if $\Dec$ has perfect completeness, then $\E[\Dec^{x}(u) x_u] = 1$.

Furthermore, if $\Dec(\cdot)$ is $\delta$-smooth, then for any $v \in [n]$, we have
\begin{flalign*}
\sum_{\substack{(C,v_3) = (v_1, a_1, v_2, a_2, v_3)\\v_1 = v \vee v_2 = v \vee v_3 = v}}  \wt_{H_u}(C, v_3) + \sum_{\substack{C =(v_1, a_1, v_2, a_2)\\v_1 = v \vee v_2 = v}} \wt_{G_u}(C) \leq \frac{4}{\delta n} \mper
\end{flalign*}
\end{lemma}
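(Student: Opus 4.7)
My plan is to read off the weight functions $\wt_{G_u}, \wt_{H_u}$ and signs $\sigma$ directly from the decision-tree representation of the decoder. First I would realize the adaptive decoder $\Dec(u)$ as a depth-$3$ decision tree with internal randomness $\rand \in \Bits^r$: for each fixed $\rand$ the execution is deterministic, namely a first query $v_1 = v_1(\rand)$, then a second query $v_2 = v_2(\rand, a_1)$ chosen as a function of the first received answer $a_1 \in \Fits$, then a third query $v_3 = v_3(\rand, a_1, a_2)$, and finally an output function $f_{\rand,a_1,a_2}(a_3) \in \Fits$. Since $f \colon \Fits \to \Fits$, it must be one of $\{1, -1, a_3, -a_3\}$; equivalently it is either the constant $\sigma \in \Fits$ or $\sigma \cdot a_3$ for some $\sigma \in \Fits$. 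Without loss of generality I may assume that $\Dec(u)$ always makes exactly three queries (padding early-stop paths with ignored dummy queries) and that all randomness used for the output is absorbed into $\rand$, so that the quadruple $(v_1, v_2, v_3, f)$ is a deterministic function of $(\rand, a_1, a_2)$. Call a tuple $(v_1, a_1, v_2, a_2, v_3, \rand)$ \emph{consistent} with the tree if $v_1 = v_1(\rand)$, $v_2 = v_2(\rand, a_1)$, and $v_3 = v_3(\rand, a_1, a_2)$; exactly $4 \cdot 2^r$ consistent tuples exist, one per triple $(\rand, a_1, a_2)$.

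Next I would split the consistent tuples into those with constant output and those with $a_3$-dependent output, and use them to define the weights. For each consistent tuple whose output function equals the constant $\sigma \in \Fits$, I contribute $2^{-r}$ to $\wt_{G_u}(v_1, a_1, v_2, a_2)$ and record the sign $\sigma$; for each consistent tuple whose output function equals $\sigma a_3$, I contribute $2^{-r}$ to $\wt_{H_u}(v_1, a_1, v_2, a_2, v_3)$ and record the sign $\sigma$. For the sign to be well-defined as a single bit per tuple rather than per-$\rand$, I invoke the standard WLOG that the decoder pre-commits its output sign at each leaf of the tree, so that all $\rand$'s producing the same path also produce the same output function; equivalently, one can work with the $\rand$-indexed variant of the lemma and marginalize at the end, absorbing any residual sign conflicts into dummy vertices as in the construction of $(H'_u, G'_u)$ in \cref{lem:adaptivepolys}. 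With this setup, equation~(\ref{eq:wtbound}) follows from the count of consistent tuples: $\sum_{\rand, a_1, a_2} 2^{-r} = 2^r \cdot 4 \cdot 2^{-r} = 4$. Equation~(\ref{eq:wtcompl}) follows because $\AND(a_1 x_{v_1}, a_2 x_{v_2})$ selects exactly the tuples with $(a_1, a_2) = (x_{v_1(\rand)}, x_{v_2(\rand, a_1)})$, i.e., the \emph{actual} execution path of $\Dec^x(u)$ on coins $\rand$, giving $\sum_\rand 2^{-r} = 1$. Equation~(\ref{eq:polycompl}) follows by the same restriction, this time weighted by the output function, which sums to $\E_\rand[\Dec^x(u)]$, so multiplying by $x_u$ yields $\E_\rand[\Dec^x(u) x_u]$ as required.

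Finally, the smoothness claim is a per-answer-sequence bound. For each fixed $(a_1, a_2) \in \Fits^2$, the contribution to the left-hand side of the smoothness inequality from tuples with these answers is exactly $\Pr_\rand[v \in \{v_1(\rand), v_2(\rand, a_1), v_3(\rand, a_1, a_2)\}]$, with the convention that for paths producing a constant output one WLOG drops the superfluous third query (which is why $v_3$ is only summed over in the $\wt_{H_u}$ sum). Summing over the four pairs $(a_1, a_2)$ yields the claimed $4/(\delta n)$ bound \emph{provided} each per-sequence query probability is at most $1/(\delta n)$. This is the one genuinely delicate step: $\delta$-smoothness as stated in \cref{def:smoothLCC} is only guaranteed against actual codewords $\Code(b)$, not against arbitrary hypothetical answer sequences. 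I would bridge this gap by the standard Katz--Trevisan-style symmetrization, which replaces $\Dec$ by an equivalent decoder whose conditional query distribution along every partial answer path obeys the same smoothness bound up to an absolute constant factor; this constant is exactly the $c \geq 4$ appearing in \cref{lem:adaptivepolys}(3). Modulo this standard smoothness-upgrade, the rest of the lemma is pure bookkeeping on the decision tree.
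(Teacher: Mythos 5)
Your approach --- realize the adaptive decoder as a depth-$3$ decision tree, read off the weight of $C = (v_1,a_1,v_2,a_2,v_3)$ as the decision-tree probability of that query/answer path, split tuples by whether the output function is constant or $\pm a_3$, and verify \eqref{eq:wtbound}--\eqref{eq:polycompl} by counting --- is exactly the paper's approach. Two of your steps, however, need more care and one genuinely does not go through as written.

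First, the sign-consistency WLOG. The claim that one may assume ``the decoder pre-commits its output sign at each leaf'' is not a legitimate reduction: two coin strings $\rand \ne \rand'$ can drive the decoder down the same query path $(v_1,v_2,v_3)$ with the same answers $(a_1,a_2)$ and yet apply opposite output functions $a_3$ and $-a_3$, and merging them cannot be done consistently. If weight $p$ applies $+a_3$ and weight $q$ applies $-a_3$, the contribution to \eqref{eq:polycompl} is $(p-q)x_{v_3}$ but the contribution to \eqref{eq:wtbound} and \eqref{eq:wtcompl} must remain $p+q$, and no nonnegative weight paired with a single sign reproduces both. ``Absorbing sign conflicts into dummy vertices'' does not help either --- the padding is for homogenizing the $\AND$ polynomial, not for reconciling conflicting output functions on the same path. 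The paper's actual remedy is to carry $\rand\in\Bits^r$ as an extra coordinate of the weight and sign functions (indexing by $(v_1,a_1,v_2,a_2,v_3,\rand)$), whence each indexed tuple has a \emph{deterministic} output function by construction and there is nothing to merge. The $\rand$-free version you were asked to prove is only sound under perfect completeness, where every valid output function on a shared path must agree on the realized answer and one can select a single deterministic representative.

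Second, the smoothness step. You correctly identify this as the delicate point, but the symmetrization you invoke is not what the argument wants and does not obviously preserve the constant $4$. The cleaner observation is that the left-hand side of the smoothness inequality equals $4\cdot\Pr[\Dec'(u)\text{ queries }v]$, where $\Dec'$ simulates $\Dec$ by answering all queries with independent uniform bits; after the standard WLOG that the decoder never re-queries a coordinate, running $\Dec'$ is distributionally identical to running $\Dec^{y}(u)$ on $y$ drawn uniformly from $\Fits^n$. Hence the bound $4/(\delta n)$ follows once the smoothness property $\Pr[\Dec^{y}(u)\text{ queries }v]\le \frac{1}{\delta n}$ is taken to hold for all $y\in\Fits^n$ (or at least in expectation over uniform $y$), not merely for codewords. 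For nonadaptive decoders the two readings coincide, but for an adaptive decoder they genuinely differ, and \cref{def:smoothLCC} as literally written only quantifies over codewords. The correct repair is therefore to strengthen (or re-read) the smoothness hypothesis, not to symmetrize the decoder; your observation that there is a gap here is a fair reading of the text as stated.
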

We prove \cref{lem:adaptivepolys} in \cref{sec:adaptivesmooth}.

We now continue and use the above collection of polynomials to construct \emph{polynomial chains}, a generalization of chain XOR instances defined in~\cite{KothariM23}.

\begin{definition}[$t$-chain hypergraph $\cH_u^{(t)}$]
\label{def:tchainpolyhypergraph}
Let $t \geq 1$ be an integer. For any $u \in [n]$, let $\cH_u^{(t)}$ denote
the weight function $\wt_{\cH_u^{(t)}} \colon \left([n] \times ([n] \times \Fits)^2\right)^{t} \times [n] \to \R_{\geq 0}$, i.e., from tuples of the form $C = (u_0,v_1, a_1, v_2, a_2, u_1, v_3, a_3, v_4, a_4, u_2, \dots, u_{t-1}, v_{2(t-1) + 1},a_{2(t-1) + 1}, v_{2(t-1) + 2}, a_{2(t-1) + 1}, u_t)$ to $\R_{\geq 0}$, where $\wt_{\cH_u^{(t)}}(C)= 0$ if $u_0 \ne u$, and otherwise:
\begin{equation*}
\wt_{\cH_u^{(t)}}(C)= \prod_{h = 0}^{t-1} \wt_{H_{u_h}}(v_{2h + 1}, a_{2h+1}, v_{2h + 2}, a_{2h+2}, u_{h+1}) \mper
\end{equation*}
For a $t$-chain $C$, we call $u_0$ the head, the $u_h$'s the \emph{pivots} for $1 \leq h \leq t-1$, and $u_t$ the \emph{tail} of the chain $C$. We call $C_L = (v_1, a_1, v_3, a_3, \dots, v_{2(t-1) + 1},a_{2(t-1) + 1})$ the \emph{left half} of the chain and $C_R = (v_2, a_2, v_4, a_4, \dots, v_{2(t-1) + 2},a_{2(t-1) + 2})$ the \emph{right half}.

The $h$-th link in defined to be $(u_h, v_{2h + 1}, a_{2h+1}, v_{2h + 2}, a_{2h+2}, u_{h+1})$.
\end{definition}

\begin{definition}[$t$-chain hypergraph $\cG_u^{(t)}$]
\label{def:tchainpolygraph}
Let $t \geq 1$ be an integer. For any $u \in [n]$, let $\cG_u^{(t)}$ denote
the weight function $\wt_{\cG_u^{(t)}} \colon \left([n] \times ([n] \times \Fits)^2\right)^{t} \to \R_{\geq 0}$, i.e., from tuples of the form $C = (u_0,v_1, a_1, v_2, a_2, u_1, v_3, a_3, v_4, a_4, u_2, \dots, u_{t-1}, v_{2(t-1) + 1},a_{2(t-1) + 1}, v_{2(t-1) + 2}, a_{2(t-1) + 1})$\footnote{Note the difference with \cref{def:tchainpolyhypergraph}: there is no final tail vertex here.} to $\R_{\geq 0}$, where $\wt_{\cG_u^{(t)}}(C)= 0$ if $u_0 \ne u$, and otherwise:
\begin{equation*}
\wt_{\cG_u^{(t)}}(C)= \wt_{G_{u_{t-1}}}(v_{2(t-1) + 1}, a_{2(t-1) + 1}, v_{2(t-1) + 2}, a_{2(t-1) + 2}) \cdot \prod_{h = 0}^{t-2} \wt_{H_{u_h}}(v_{2h + 1}, a_{2h+1}, v_{2h + 2}, a_{2h+2}, u_{h+1}) \mper
\end{equation*}
As before, we call $C_L = (v_1, a_1, v_3, a_3, \dots, v_{2(t-1) + 1},a_{2(t-1) + 1})$ the \emph{left half} of the chain and $C_R = (v_2, a_2, v_4, a_4, \dots, v_{2(t-1) + 2},a_{2(t-1) + 2})$ the \emph{right half}.

Note that the chains in $\cG^{(t)}$ have no tail vertex $u_t$. We call the $t$-chain hypergraph $\cG_u^{(t)}$ ``graph-tailed'', as the ``last link'' has $2$ vertices only.
\end{definition}
\begin{remark}[Iterative view of the chain construction]
We can view the chains as being constructed iteratively in the following way. We start with a fixed $u_0$. Then, we add $(v_1, a_2, v_2, a_2)$. We now have $2$ choices. We can either stop (and the chain is then in $\cG_u^{(1)}$), or we can add $u_1$ to the end of the chain (and then the chain is in $\cH_u^{(1)}$). For each chain in $\cH_u^{(1)}$, we can then continue by adding on a ``link'' at $u_1$. On the other hand, there is no way to continue a chain in $\cG_u^{(1)}$ in this way, as it does not contain $u_1$.
\end{remark}

\begin{definition}[Chain Polynomials]
\label{def:chainpolynomials}
Let $C = (u_0,v_1, a_1, v_2, a_2, u_1, \dots, u_t)$ be a $t$-chain in $\cH_u^{(t)}$. The chain polynomial, denoted by $f_{C}$ is a polynomial in the variables $x \vert_{C_L}, x \vert_{C_R}, x_{u_t}$ (where $C_L$ and $C_R$ are the right and left halves of the chains), defined as
\begin{flalign*}
f_C(x \vert_{C_L}, x \vert_{C_R}, x_{u_t}) = x_{u_t} \prod_{h = 0}^{t-1} \AND(a_{2h+1} x_{v_{2h + 1}}, a_{2h+2} x_{v_{2h + 2}}) \sigma_{(u_{h-1}, v_{2h+1}, a_{2h+1}, v_{2h+2}, a_{2h+2}, u_h)} \mper
\end{flalign*}
For a chain $C \in \cG_u^{(t)}$, we let
\begin{flalign*}
f_C(x \vert_{C_L}, x \vert_{C_R}) = \prod_{h = 0}^{t-1} \AND(a_{2h+1} x_{v_{2h + 1}}, a_{2h+2} x_{v_{2h + 2}})\sigma_{(u_{h-1}, v_{2h+1}, a_{2h+1}, v_{2h+2}, a_{2h+2})} \mper
\end{flalign*}
\end{definition}

We are now ready to state the key facts about the chain polynomials.
\begin{claim}[Key facts of chain polynomials]
\label{claim:chaincompleteness}
Let $\Code \colon \Fits^k \to \Fits^n$ be a systematic $3$-LCC with a (potentially adaptive) decoder. Fix $r \geq 0$, and for $1 \leq t \leq r + 1$, let $\cG_u^{(t)}, \cH_u^{(t)}$ for $u \in [n]$ be the chains defined in \cref{def:tchainpolyhypergraph,def:tchainpolygraph}, constructed from the polynomial system of equations in \cref{lem:adaptivepolys}. Then, for each $u \in [n]$, the following holds:
\begin{enumerate}[(1)]
\item The chain polynomials correctly decode $x_u$. Namely, for each $x \in \Code$, it holds that
\begin{flalign*}
&x_u \left(\sum_{C \in \cH_u^{(r+1)}} \wt_{\cH_u^{(r+1)}}(C) f_C(x \vert_{C_L}, x \vert_{C_R}, x_{u_{r+1}}) + \sum_{t = 1}^r \sum_{C \in \cG_u^{(t)}} \wt_{\cG_u^{(t)}}(C) f_C(x \vert_{C_L}, x \vert_{C_R})\right) = 1 \mcom
\end{flalign*}
\item The total weight of the chains of length at most $r+1$ is
\begin{flalign*}
&\sum_{C \in \cH_u^{(r+1)}} \wt_{\cH_u^{(t)}}(C)  + \sum_{t = 1}^{r+1} \sum_{C \in \cG_u^{(t)}} \wt_{\cG_u^{(t)}}(C) \leq 4^{r+1} \mper
\end{flalign*}
\end{enumerate}
\end{claim}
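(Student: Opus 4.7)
The plan is to prove both parts by induction on $r$, using the identity in \cref{eq:polycompl} (with perfect completeness) as the engine for part (1) and the mass bound \cref{eq:wtbound} as the engine for part (2). Since the RHS of part (1) equals $1$, I read the statement as implicitly assuming perfect completeness (so $\E[\Dec^x(u) x_u] = 1$); I also suspect the graph-tailed sum should range over $t=1,\dots,r+1$ rather than $t=1,\dots,r$, since at $r=0$ only the latter matches \cref{eq:polycompl}.

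For part (1), the base case $r=0$ is exactly \cref{eq:polycompl} of \cref{lem:adaptivepolys}: the $H_u$-part of \cref{eq:polycompl} is, by construction, $\sum_{C \in \cH_u^{(1)}} \wt_{\cH_u^{(1)}}(C)\, f_C$, and the $G_u$-part is $\sum_{C \in \cG_u^{(1)}} \wt_{\cG_u^{(1)}}(C)\, f_C$. For the inductive step, I will exploit the iterative view from \cref{rem:chainiterative}. Given a chain $C \in \cH_u^{(r)}$ with tail $u_r$, write its polynomial as $f_C = x_{u_r}\, \tilde f_C$ where $\tilde f_C$ collects the $\AND$ factors and signs from the first $r$ links. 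Applying the base-case identity at the vertex $u_r$ gives
\begin{equation*}
1 \;=\; x_{u_r}\!\!\sum_{(v,a,v',a',w)}\!\!\wt_{H_{u_r}}(v,a,v',a',w)\,\sigma\, x_w\, \AND(a x_v, a' x_{v'}) \;+\; x_{u_r}\!\!\sum_{(v,a,v',a')}\!\!\wt_{G_{u_r}}(v,a,v',a')\,\sigma\, \AND(a x_v, a' x_{v'}).
\end{equation*}
Multiplying $\wt_{\cH_u^{(r)}}(C)\, f_C$ by this $1$ and using $x_{u_r}^2 = 1$, the left term re-packages exactly into $\sum_{C'}\wt_{\cH_u^{(r+1)}}(C')\, f_{C'}$ where $C'$ ranges over the extensions of $C$ by a new hypergraph link, and the right term re-packages into $\sum_{C'}\wt_{\cG_u^{(r+1)}}(C')\, f_{C'}$. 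Summing over $C \in \cH_u^{(r)}$ and invoking the induction hypothesis yields the claim.

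For part (2), define $H_t \defeq \sum_{C \in \cH_u^{(t)}} \wt_{\cH_u^{(t)}}(C)$ and $G_t \defeq \sum_{C \in \cG_u^{(t)}} \wt_{\cG_u^{(t)}}(C)$, with $H_0 = 1$ by convention. Using the multiplicative definition of the chain weights together with \cref{eq:wtbound} applied at the tail of each length-$t$ hypergraph chain, I obtain the recursion $H_{t+1} + G_{t+1} \leq 4 H_t$, and in particular $H_t \leq 4^t$. Then
\begin{equation*}
H_{r+1} + \sum_{t=1}^{r+1} G_t \;\le\; H_{r+1} + \sum_{t=1}^{r+1}\bigl(4H_{t-1} - H_t\bigr) \;=\; 4H_0 + 3\sum_{t=1}^{r} H_t \;\le\; 4 + 3\sum_{t=1}^{r} 4^t \;=\; 4^{r+1},
\end{equation*}
where the telescoping cancels $H_{r+1}$ against the $-H_t$ term at $t=r+1$.

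The main obstacle will be part (1): I need to check that the substitution above produces exactly the chain polynomials of the extensions, with correct signs. Concretely, I must verify that the $\sigma$-constants produced by expanding the base-case identity at $u_r$ combine with $\tilde f_C$ to yield the products of $\sigma$'s prescribed by \cref{def:chainpolynomials} for the extended chain, and that the $\AND$ factor of the new link appears correctly. Both bookkeeping tasks are routine given the multiplicative definitions of $\wt_{\cH_u^{(t)}}$ and $\wt_{\cG_u^{(t)}}$ in \cref{def:tchainpolyhypergraph,def:tchainpolygraph}, but they are the only nontrivial point in the argument.
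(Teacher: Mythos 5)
Your proof is correct and takes essentially the same route as the paper's own argument. For part (1), both proceed by induction on $r$ with the base case being \cref{eq:polycompl} under perfect completeness, and the inductive step replacing the tail variable $x_{u_r}$ of each $r$-chain by the base-case identity applied at $u_r$; you correctly identified that the graph-tailed sum in the statement must range over $t=1,\dots,r+1$ (the paper's own base case includes the $\cG_u^{(1)}$ term, consistent with this correction), and that the $=1$ on the RHS implicitly requires perfect completeness. For part (2), the paper induces on the combined quantity $H_{r+1} + \sum_{t=1}^{r+1} G_t$, passing through the slightly wasteful bound $4H_r + \sum_{t\le r} G_t \le 4(H_r + \sum_{t\le r} G_t)$, whereas you telescope $G_t \le 4H_{t-1} - H_t$ directly; both give $4^{r+1}$, and your telescoping is arguably a bit cleaner (and shows the exact coefficient structure, since \cref{eq:wtbound} is actually an equality). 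The bookkeeping you flag as the only nontrivial point --- that the $\sigma$-signs and $\AND$ factors recombine into the chain polynomial of the extension --- does go through exactly as you describe, because $\wt_{\cH_u^{(t)}}$ and $\wt_{\cG_u^{(t)}}$ are defined multiplicatively over links.
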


\begin{proof}
Let us first show the first equation. We prove this by induction on $r$. The base case of $r = 0$ is simple, as we have
\begin{flalign*}
&x_u \left(\sum_{C \in \cH_u^{(1)}} \wt_{\cH_u^{(1)}}(C) f_C(x \vert_{C_L}, x \vert_{C_R}, x_{u_{1}}) + \sum_{C \in \cG_u^{(1)}} \wt_{\cG_u^{(1)}}(C) f_C(x \vert_{C_L}, x \vert_{C_R})\right) \\
&=x_u \left(\sum_{C = (v_1, a_1, v_2, a_2, v_3)} \wt_{H_u}(C) \AND(a_1 x_{v_1}, a_2 x_{v_2}) \sigma_{(u,C)} x_{v_3} + \sum_{C = (v_1, a_1, v_2, a_2)} \wt_{G_u}(C) \AND(a_1 x_{v_1}, a_2 x_{v_2}) \sigma_{(u,C)}\right) \\
&=\E[x_u \Dec^{x}(u)] = 1 \ \ \text{(by \cref{eq:polycompl})} \mper
\end{flalign*}

We now prove the induction step. Suppose that 
\begin{flalign*}
&x_u \left(\sum_{C \in \cH_u^{(r)}} \wt_{\cH_u^{(r)}}(C) f_C(x \vert_{C_L}, x \vert_{C_R}, x_{u_{r}}) + \sum_{t = 1}^r \sum_{C \in \cG_u^{(t)}} \wt_{\cG_u^{(t)}}(C) f_C(x \vert_{C_L}, x \vert_{C_R})\right) = 1 \mper
\end{flalign*}
We then have that for each $C \in \cH_u^{(r)}$ with tail $u_r$ and $\wt_{\cH_u^{(r)}}(C) > 0$,
\begin{flalign*}
&f_C(x \vert_{C_L}, x \vert_{C_R}, x_{u_{r}}) = \left( \prod_{h = 0}^{r - 1} \AND(a_{2h+1} x_{v_{2h + 1}}, a_{2h+2} x_{v_{2h + 2}}) \sigma_{(u_{h-1}, v_{2h+1}, a_{2h+1}, v_{2h+2}, a_{2h+2}, u_h)} \right) \cdot x_{u_r} \mcom
\end{flalign*}
and we have (via the base case) that
\begin{flalign*}
&x_{u_r} = \sum_{C = (v_{2r+1}, a_{2r+1}, v_{2r+2}, a_{2r+2}, u_{r+1})} \wt_{H_u}(C) \AND(a_{2r+1} x_{v_{2r+1}}, a_{2r+2} x_{v_{2r+2}}) \sigma_{(u_r,C, u_{r+1})} x_{u_{r+1}} \\
&+ \sum_{C = (v_{2r+1}, a_{2r+1}, v_{2r+2}, a_{2r+2})} \wt_{G_u}(C) \AND(a_{2r+1} x_{v_{2r+1}}, a_{2r+2} x_{v_{2r+2}}) \sigma_{(u_r,C)} \mper
\end{flalign*}
We now simply multiply the two polynomials and sum over $C \in \cH_u^{(r)}$ to finish the proof of Item (1).

We now turn to Item (2). We will again prove this by induction, where the base case follows from \cref{eq:wtbound}. To prove the induction step, we observe that
\begin{flalign*}
\sum_{C \in \cH_u^{(r+1)}} \wt_{\cH_u^{(r+1)}}(C) + \sum_{C \in \cG_u^{(r+1)}} \wt_{\cG_u^{(r+1)}}(C) = 4 \sum_{C \in \cH_u^{(r)}} \wt_{\cH_u^{(r)}}(C) \mper
\end{flalign*}
Indeed, this follows by (1) picking a length $r$-chain $C$, (2) extending it to a length $r+1$ chain (that is either in $\cH_u^{(r+1)}$ or $\cG_u^{(r+1)}$), and then applying \cref{eq:wtbound}. We then have by the induction hypothesis
\begin{flalign*}
&\sum_{C \in \cH_u^{(r+1)}} \wt_{\cH_u^{(r+1)}}(C)  + \sum_{t = 1}^{r+1} \sum_{C \in \cG_u^{(t)}} \wt_{\cG_u^{(t)}}(C) = 4\sum_{C \in \cH_u^{(r)}} \wt_{\cH_u^{(r)}}(C)  + \sum_{t = 1}^{r} \sum_{C \in \cG_u^{(t)}} \wt_{\cG_u^{(t)}}(C) \\
&\leq 4\left(\sum_{C \in \cH_u^{(r)}} \wt_{\cH_u^{(r)}}(C)  + \sum_{t = 1}^{r} \sum_{C \in \cG_u^{(t)}} \wt_{\cG_u^{(t)}}(C)\right) \leq 4^{r+1} \mcom
\end{flalign*}
which finishes the proof of Item (2).
\end{proof}

We are now ready to define the chain polynomial instances.
\begin{definition}[Chain polynomial instance]
Let $r \geq 1$ be an integer and let $b \in \Fits^k$. For each $1 \leq t \leq r$, we define the ``graph-tailed'' polynomial
\begin{equation*}\Phi_b^{(t)}(x) = \sum_{i = 1}^k \sum_{C \in \cG_i^{(t)}}  \wt_{\cG_i^{(t)}}(C) \cdot b_i f_C(x)\mcom
\end{equation*}
and we also define the ``hypergraph-tailed'' polynomial
\begin{equation*}\Psi_b(x) = \sum_{i  = 1}^k \sum_{C \in \cH_i^{(r+1)}}  \wt_{\cH_i^{(r+1)}}(C) \cdot b_i f_C(x) \mper
\end{equation*}
We will omit the subscript $b$ when it is clear from context. We note that in the above definitions, each $f_C$ is the chain polynomial as defined in \cref{def:chainpolynomials}.
\end{definition}

With the above setup in hand, we can now state the main technical lemmas.
\begin{lemma}[Refuting the chain polynomial instances]
\label{lem:chainpolyref}
Let $\ell, d, r$ be parameters such that $d^r \geq n$, $\ell \geq 6 d r / \delta$, and $\ell r = o(n)$.\footnote{Note that this is achievable by setting $\ell = 6 d r / \delta$ and $r = O(\log n/\log d)$.} Furthermore, suppose that $k \geq 1/\delta$. Then, for every $1 \leq t \leq r + 1$, it holds that 
\begin{flalign*}
&\E_{b \gets \Fits^k}[\val(\Phi^{(t)}_b)] \leq 4^t O(\sqrt{k \ell r \log n}) \mcom \\
&\E_{b \gets \Fits^k}\left[ \val(\Psi_b)\right] \leq 4^r \left(\frac{k(r+1)}{\delta}  O(\sqrt{k \ell r \log n})\right)^{1/2} \mper
\end{flalign*}
\end{lemma}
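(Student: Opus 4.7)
The plan is to prove both bounds via the spectral refutation method with Kikuchi matrices, following the framework of~\cite{WeinAM19,GuruswamiKM22,HsiehKM23,KothariM23}. Since $\Phi_b^{(t)}$ has even degree $2t$ it is refuted directly, while $\Psi_b$ has odd degree $2(r+1)+1$ and requires a preliminary Cauchy--Schwarz reduction to an even-degree instance, which accounts for the square root in its bound. For the graph-tailed case, each chain $C \in \cG_i^{(t)}$ has a natural split into its left half $C_L = (v_1, v_3, \ldots, v_{2t-1})$ and right half $C_R = (v_2, v_4, \ldots, v_{2t})$; define a Kikuchi matrix $B_i$ indexed on both sides by $\binom{[n]}{\ell}$ whose entry $B_i[S,T]$ aggregates (with the normalization $1/\binom{n-2t}{\ell-t}$) the weights $\wt_{\cG_i^{(t)}}(C)$ over chains $C$ satisfying $S = C_L \sqcup U$, $T = C_R \sqcup U$ for a common $U$ of size $\ell - t$. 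A direct computation shows that for $x \in \Fits^n$, $\Phi_b^{(t)}(x) = v(x)^\top B(b) v(x)$ where $v(x) \in \Fits^{\binom{[n]}{\ell}}$ has coordinates $v(x)_S = \prod_{v \in S} x_v$ and $B(b) = \sum_i b_i B_i$. Hence $\val(\Phi_b^{(t)}) \leq \binom{n}{\ell} \cdot \|B(b)\|_2$, and matrix Khintchine (\cref{fact:matrixkhintchine}) yields $\E_b \|B(b)\|_2 \leq O(\sqrt{\sigma^2 \, \ell \log n})$ where $\sigma^2 = \max\bigl(\|\sum_i B_i B_i^\top\|_2, \|\sum_i B_i^\top B_i\|_2\bigr)$.

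Bounding $\sigma^2$ is the main technical step and reduces to showing that the rows/columns of the Kikuchi matrices have $\ell^1$-weight close to the average, i.e., a row-pruning statement. The plan is a two-stage argument: first, decompose each weighted chain hypergraph $\cG_i^{(t)}$ (and later $\cH_i^{(r+1)}$) into $O(\log n)$ sub-hypergraphs on which the pair-weight is approximately uniform --- a weighted adaptation of the hypergraph decomposition of \cite{KothariM23} --- and then, within each piece, apply a second-moment argument in the spirit of \cref{lem:designrowpruning} and \cite{Yankovitz24,HsiehKM23} to show that after deleting a small-measure set of rows/columns, all surviving row/column $\ell^1$-weights are within a constant factor of the average. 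The average is controlled by the $\delta$-smoothness of the chain hypergraph, which follows iteratively from smoothness of the $G_u, H_u$'s as in \cref{claim:designsmoothness}. Tracking through the normalization and the $\ell \log n$ factor from matrix Khintchine then delivers the target $\E_b[\val(\Phi_b^{(t)})] \leq O(\sqrt{k \ell r \log n})$.

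For the hypergraph-tailed case, the plan is to first perform a Cauchy--Schwarz chain-splitting step at the tail variable. Writing $\Psi_b(x) = \sum_{w \in [n]} x_w \, Q_{b,w}(x)$ with $Q_{b,w}(x) = \sum_i b_i \sum_{C \in \cH_i^{(r+1)}, \mathrm{tail}(C) = w} \wt(C) \prod_h x_{v_{2h+1}} x_{v_{2h+2}}$, iterated $\delta$-smoothness along the $r+1$ links gives $\sum_w \mu_w \leq k(r+1)/\delta$ for an appropriate measure $\mu_w$ bounding the chain-mass per tail (extending \cref{obs:weightpreserved} with smoothness). A weighted Cauchy--Schwarz then delivers $\val(\Psi_b)^2 \leq (k(r+1)/\delta) \cdot \val(\Psi'_b)$ where $\Psi'_b$ is an even-degree polynomial of degree $4(r+1)$ with coefficients quadratic in $b$. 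A standard Rademacher decoupling trick (cf.~\cite{GuruswamiKM22}) reduces the task to bounding an instance linear in an independent copy $b'$, whereupon the Kikuchi machinery from the graph-tailed case applies with parameter $2(r+1)$ in place of $t$. Taking expectations and applying Jensen's inequality yields $\E_b[\val(\Psi_b)] \leq \bigl((k(r+1)/\delta) \cdot O(\sqrt{k \ell r \log n})\bigr)^{1/2}$.

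The main obstacle is expected to be the row-pruning step for weighted, non-matching, non-uniform chain hypergraphs. In the design case (\cref{mthm:designs}) the perfect-matching structure of the $H_u$'s and the automatic absence of heavy pairs enabled sharp moment computations (cf.~\cref{claim:designsmoothness}); here both properties fail and must be recovered via the preliminary hypergraph decomposition. The additional complication that chain weights accumulate multiplicatively along their links --- and that the decomposition must be performed while preserving the recursive chain structure that is essential to the Kikuchi construction --- means that ensuring the combined loss from decomposition, Cauchy--Schwarz, and pruning is only polylogarithmic in $n$ is the principal technical work of \cref{sec:decomp,sec:kikuchi,sec:rowpruning}.
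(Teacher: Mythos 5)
Your high-level plan — Kikuchi matrices plus matrix Khintchine for the even-degree graph-tailed instances, a Cauchy--Schwarz reduction to an even-degree instance for the odd-degree hypergraph-tailed instances, with row pruning controlled by smoothness of the chain hypergraphs — matches the overall skeleton of the paper's argument. But there are two substantive problems with the proposal as written.

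The first is a genuine gap. The polynomials $\Phi_b^{(t)}$ and $\Psi_b$ in this lemma are \emph{chain polynomial instances}: each $f_C$ is a signed product of $t$ (or $r{+}1$) $\AND$ functions (cf.~\cref{def:chainpolynomials}), not a single monomial $x_{C_L} x_{C_R}$. Since $\AND(a_1 x_{v_1}, a_2 x_{v_2}) = \tfrac14 (1 + a_1 x_{v_1})(1 + a_2 x_{v_2})$, the product $f_C$ has monomials of every degree from $0$ to $2t$ (or $2t{+}1$), of both parities. The quadratic form $v(x)^{\top} B(b) v(x)$ with $v(x)_S = \prod_{v\in S} x_v$, $|S|=\ell$, produces only $x_{S\oplus T}$ with $|S\oplus T|$ even, so the identity $\Phi_b^{(t)}(x) = v(x)^{\top} B(b) v(x)$ simply does not hold. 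The paper closes this gap before the Kikuchi step: it homogenizes each $\AND$ product by passing to an extended variable set containing copies of $-x_v$, $1$, $-1$, converting the chain polynomial instances into genuine \emph{chain XOR instances} on a $3$-LCC hypergraph collection (this is the content of \cref{lem:adaptivepolys} and the surrounding reduction, and in the ignored proof of this lemma, of the explicit map to $(H'_u, G'_u)$). That conversion carries a factor of $4$ per link in the total weight of the resulting hypergraph --- which is exactly where the $4^t$ and $4^{r}$ in the lemma's statement come from, and also exactly what your deliverable $O(\sqrt{k\ell r\log n})$ is silently dropping. Without this homogenization step the Kikuchi machinery cannot be invoked, and the weight accounting cannot recover the claimed bound.

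The second issue is the decomposition for the hypergraph-tailed case. Splitting $\Psi_b = \sum_w x_w Q_{b,w}$ and applying weighted Cauchy--Schwarz over the tail vertex $w$ will establish $\sum_w \mu_w \le k(r{+}1)/\delta$, but the resulting cross-term polynomial is $\sum_w \mu_w^{-1} Q_{b,w}^2$, and the reweighting by $\mu_w^{-1}$ can badly distort the smoothness of the associated chain hypergraph when the chain mass is heavily concentrated on a few tails. The paper's actual decomposition (``smooth partition'', \cref{def:partition,def:inducedpartition}) groups chains by their maximal heavy \emph{contiguous suffix} $Q$ rather than by their tail alone, and \cref{lem:chainsmoothness} shows that after the reweighting by $1/\wt(Q)$ the partitioned chains still satisfy a smoothness estimate in terms of $d$ and $\delta$; this is precisely what lets the conditional-moment row pruning (\cref{lem:rowpruning}) go through. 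A tail-only split is the $P_0$ piece of this partition and does not handle heavy longer suffixes. Your proposal to adapt the KM23 heavy-pair hypergraph decomposition is also not the right object here: the paper's partition is over chain suffixes, not over pairs in the base hypergraph, because it is the iterated chain structure that needs to be tamed before Cauchy--Schwarz. Finally, your use of the ``uncolored'' Kikuchi graph indexed by a single set $S \in \binom{[n]}{\ell}$ is what the paper uses in the tightly constrained design case; for this lemma the paper uses the ``colored'' Kikuchi matrix indexed by $t$-tuples (resp.\ $(2r{+}2)$-tuples) of sets, which is more forgiving for arbitrary weighted chain hypergraphs — your choice is not clearly wrong but is a nontrivial additional thing you would need to justify.
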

We now use \cref{lem:chainxorref} to finish the proof.
\begin{proof}[Proof of \cref{mthm:nonlin} from \cref{lem:chainxorref}]
By construction of the chain polynomials, i.e., \cref{claim:chaincompleteness}, we have that for every $b \in \Fits^k$ and $x = \Code(b)$, $\Psi_b(x) + \sum_{t = 1}^{r+1} \Phi_b^{(t)}(x) = k$. This is because $b_i = x_i$ (as the code is \emph{systematic}), and by \cref{claim:chaincompleteness}, the ``$b_i$ part'' of the polynomial is equal to $x_i$ when $x$ is a codeword. Therefore, there must exist $t$ such that $\E_b[\val(\Phi_b^{(t)})] \geq k/(r+2)$, or else $\E_b[\val(\Psi_b(x))] \geq k/(r+2)$.

Let us take $r = O(\sqrt{\log n})$, $d = 2^{O(\sqrt{\log n})}$, and $\ell = O(d r/\delta) = \delta^{-1} 2^{O(\sqrt{\log n})} \sqrt{\log n}$. We clearly have that all the conditions of \cref{lem:chainxorref} are satisfied. 

If $\E_b[\val(\Phi_b^{(t)})] \geq k/(r+2)$ for some $t$, then we have
\begin{flalign*}
&\frac{k}{r+2} \leq \E_b[\val(\Phi_b^{(t)})] \leq 4^t O(\sqrt{k \ell r \log n}) \\
&\implies k \leq r^2 4^{O(t)} O(\ell r \log n) \leq r^2 4^{O(r)} O(\ell r \log n) \\
&\leq  2^{O(\sqrt{\log n})} \cdot \frac{r^4 \log n}{\delta} \leq 2^{O(\sqrt{\log n})} \\
&\implies (\log \delta k) \leq O(\sqrt{\log n}) \mcom
\end{flalign*}
or equivalently, $n \geq 2^{\Omega((\log \delta k)^2)} = (\delta k)^{\Omega(\log (\delta k))}$.

Otherwise, if $\E_b[\val(\Psi_b(x))] \geq k/(r+2)$, then we have
\begin{flalign*}
&\frac{k}{r+2} \leq \E_b[\val(\Psi_b(x))] \leq 4^r \left(\frac{k(r+1)}{\delta}  O(\sqrt{k \ell r \log n})\right)^{1/2} \\
&\implies k \leq \frac{r^6}{\delta^2} 2^{O(r)}   O( \ell r \log n) = \frac{1}{\delta^3} 2^{O(\sqrt{\log n})} \\
&\implies \delta^3 k \leq 2^{O(\sqrt{\log n})} \\
&\implies n \geq (\delta^3 k)^{\Omega(\log \delta^3 k)} \mper
\end{flalign*}
This finishes the proof; we note that the additional $\log (1/\delta)$-factor comes from \cref{fact:bgt17}, we loses a factor of $\log(1/\delta)$ in $k$ when one makes the code systematic.
\end{proof}

The remainder of paper is dedicated to proving \cref{lem:adaptivepolys,lem:chainpolyref}. First, we show \cref{lem:adaptivepolys} in \cref{sec:adaptivesmooth}. Then, in \cref{sec:chainxor}, we generalize the chain XOR instances of~\cite{KothariM23} to weighted, directed, and nonuniform hypergraphs, and we show (in \cref{sec:chainxorandchainpoly} that refuting these ``nicer'' polynomials (\cref{lem:chainxorref}) suffices to prove \cref{lem:chainpolyref}. We then break the proof of \cref{lem:chainxorref} across \cref{sec:graphref,sec:decomp,sec:kikuchi,sec:rowpruning}, which will complete the proof of \cref{mthm:nonlin}.

%%%%%%%%%%%%%%%%%%%%%%%%%%%%%%%%%%%%%%%%%%%%%%%%%%%%%%%%%%%%%%%%%%%%%%%%%%%%%%%%
%%%%%%%%%%%%%%%%%%%%%%%%%%%%%%%%%%%%%%%%%%%%%%%%%%%%%%%%%%%%%%%%%%%%%%%%%%%%%%%%
\subsection{Constructing Polynomials from Adaptive Smoothed Decoders}
\label{sec:adaptivesmooth}

In this subsection, we prove \cref{lem:adaptivepolys}. 
Let $\Code \colon \Fits^k \to \Fits^n$ be a systematic $3$-LCC an adaptive decoder. For each $u \in [n]$, we use the decoding algorithm $\Dec(u)$ to weight functions $\wt_{H_u}$ and $\wt_{G_u}$. In what follows, we consider a fixed $u \in [n]$.

First, without loss of generality, we may assume that the decoder $\Dec(u)$ makes \emph{exactly} $3$ queries. We can view the decoder as a decision tree: first, $\Dec(u)$ generates the first query $v_1$ from some distribution. Then, $\Dec(u)$ receives a bit $a_1 \in \Fits$, the answer to the query $v_1$. This answer selects the branch of the decision tree, which determines the distribution of the next query $v_2$. Then, the decoder receives another answer $a_2 \in \Fits$, which selects the branch of the decision tree, and gives the distribution of the final query $v_3$. Finally, the decoder receives an answer $a_3$, and then it computes a (deterministic) function $f_{(v_1, a_1, v_2, a_2)}$ of $a_3$ to produce its output. This function must be deterministic as it must always output $x_u$, by perfect completeness.\footnote{We note that if the function is not deterministic then it is simply a convex combination of deterministic functions, and we can also handle this case. See \cref{app:imperfectcompleteness}.}
We note that there are exactly $4$ valid deterministic functions: $1$, $-1$, $a_3$, and $-a_3$, so $f_{(v_1, a_1, v_2, a_2)}$ must be one of these.

 For each choice of $C = (v_1, a_1, v_2, a_2, v_3) \in ([n] \times \Fits)^2 \times [n]$, we let $\wt_u(C)$ be the probability that the decoder makes the set of queries $C$ (with the appropriate answers) when given oracle access to \emph{any} $x$ that is consistent with $C$, meaning that $x_{v_1} = a_1$ and $x_{v_2} = a_2$. Indeed, this does not depend on the choice of $x$, as there is some probability $p_{v_1}$ that the decoder queries $v_1$ (which does not depend on $x$), and then given $x_{v_1} = a_1$, there is a probability $p_{v_2}$ that the decoder queries $v_2$, etc.

We now partition the query sets into two types. If $C$ is such that $f_{(v_1, a_1, v_2, a_2)}$ is a constant function $\sigma \in \Fits$ (so it does not depend on $a_3$), then we set $\wt_{G_u}(v_1, a_1, v_2, a_2) = \wt_u(C)$ and $\sigma_{(v_1, a_1, v_2, a_2)} = \sigma$. Otherwise, we have that $C$ is such that $f_{(v_1, a_1, v_2, a_2)}= \sigma a_3$, and then we set $\wt_{H_u}(v_1, a_1, v_2, a_2, v_3) = \wt_u(C)$ and $\sigma_{(v_1, a_1, v_2, a_2, v_3)} = \sigma$.

We now show that this weight function has the desired properties. Indeed, we have essentially encoded the behavior of the arbitrary decoder as this system of polynomials. 

First, let us show that
\begin{flalign*}
&\sum_{C = (v_1, a_1, v_2, a_2)} \left(\wt_{G_u}(C) + \sum_{v_3 \in [n]}\wt_{H_u}(C, v_3)\right) = 4 \mper
\end{flalign*}
Consider the decoder $\Dec'(u)$ that simulates $\Dec_u$ by generating random bits as the answers to the queries of $\Dec(u)$. It follows that the probability that $\Dec'(u)$ queries a particular $C$ is $\wt(C)/4$, and hence \cref{eq:wtbound} holds.

Next, let us show that for any $x \in \Code$
\begin{flalign*}
&\sum_{C = (v_1, a_1, v_2, a_2)} \left(\wt_{G_u}(C) + \sum_{v_3 \in [n]}\wt_{H_u}(C, v_3)\right) \cdot \AND(a_1 x_{v_1}, a_2 x_{v_2}) = 1 \mper 
\end{flalign*}
Indeed, we observe that for any $x \in \Code$ and any $C$, $\wt_{H_u}(C, v_3)\cdot \AND(a_1 x_{v_1}, a_2 x_{v_2})$ is $0$ if $C$ is inconsistent with $x$, and otherwise it is the probability that $\Dec^{x}(u)$ queries $C$, and the same statement holds for $\wt_{G_u}(C)\AND(a_1 x_{v_1}, a_2 x_{v_2})$. Hence, the sum must be $1$.

Finally, we have
\begin{flalign*}
&x_u \sum_{C = (v_1, a_1, v_2, a_2)} \left(\wt_{G_u}(C) \sigma_{(u,C)} + \sum_{v_3 \in [n]}\wt_{H_u}(C, v_3) \sigma_{(u,C, v_3)} x_{v_3}\right) \cdot \AND(a_1 x_{v_1}, a_2 x_{v_2}) = \E[\Dec^{x}(u) x_u] \mper
\end{flalign*}
Indeed, this is because for any $C = (v_1, a_1, v_2, a_2, v_3)$ and any $x \in \Code$, if $\AND(a_1 x_{v_1}, a_2 x_{v_2}) = 1$ then the output of the decoding function (which is $\sigma_{(u,C, v_3)} x_{v_3}$) is equal to $x_u$, by perfect completeness. And, a similar statement holds for $C =(v_1, a_1, v_2, a_2)$ as well. This finishes the proof.

%%%%%%%%%%%%%%%%%%%%%%%%%%%%%%%%%%%%%%%%%%%%%%%%%%%%%%%%%%%%%%%%%%%%%%%%%%%%%%%%
%%%%%%%%%%%%%%%%%%%%%%%%%%%%%%%%%%%%%%%%%%%%%%%%%%%%%%%%%%%%%%%%%%%%%%%%%%%%%%%%
%%%%%%%%%%%%%%%%%%%%%%%%%%%%%%%%%%%%%%%%%%%%%%%%%%%%%%%%%%%%%%%%%%%%%%%%%%%%%%%%
\section{Chain XOR Polynomials and the Main Technical Lemma}
\label{sec:chainxor}
In this section, we will introduce an abstract notion of chains that produces a polynomial that we call a ``chain XOR instance'' and state a technical lemma (\cref{lem:chainxorref}) that bounds the value of such instances. Then, in \cref{sec:chainxorandchainpoly} we show that this technical lemma implies \cref{lem:chainpolyref}. This notion of chain XOR instances is a generalization of chain XOR derivations constructed in~\cite{KothariM23}. The notions here handle the case of weighted and nonuniform hypergraphs.

We begin by defining a ($\delta$-smoothed) $3$-LCC hypergraph collection.

\begin{definition}[$3$-LCC hypergraph collection]
A $3$-LCC hypergraph collection on $[n]$ vertices is a collection of pairs $(H_u, G_u)$, one for each $u \in [n]$, where $G_u$ is a (weighted and directed) $2$-uniform  hypergraph and $H_u$ is a (weighted and directed) $3$-uniform hypergraph\footnote{Note that \cref{def:hypergraph} requires that each tuple with nonzero weight has \emph{distinct} vertices.} such that for every $u \in [n]$, $\sum_{C \in [n]^2} \wt_{G_u}(C) + \sum_{C \in [n]^3} \wt_{H_u}(C) = 1$.

We furthermore say that the hypergraph collection is $\delta$-smooth if for every $u, v \in [n]$, $\sum_{C \in [n]^2 : v \in C} \wt_{G_u}(C) + \sum_{C \in [n]^3 : v \in C} \wt_{H_u}(C) \leq \frac{1}{\delta n}$
\end{definition}

We now define the $t$-chain hypergraphs.
\begin{definition}[$t$-chain hypergraph $\cH_u^{(t)}$]
\label{def:tchainhypergraph}
Let $t \geq 1$ be an integer, and let $(G_u, H_u)_{u \in [n]}$ denote a $3$-LCC hypergraph collection. For any $u \in [n]$, let $\cH_u^{(t)}$ denote
the weight function $\wt_{\cH_u^{(t)}} \colon [n]^{3t+1} \to \R_{\geq 0}$, i.e., from length $3t+1$ tuples of the form $C = (u_0,v_1, v_2, u_1, v_3, v_4, u_2, \dots, u_{t-1}, v_{2(t-1) + 1}, v_{2(t-1) + 2}, u_t)$ to $\R_{\geq 0}$, where $\wt_{\cH_u^{(t)}}(C)= 0$ if $u_0 \ne u$, and otherwise:
\begin{equation*}
\wt_{\cH_u^{(t)}}(C)= \prod_{h = 0}^{t-1} \wt_{H_{u_h}}(v_{2h + 1}, v_{2h + 2}, u_{h+1}) \mper
\end{equation*}
For a $t$-chain $C$, we call $u_0$ the head, the $u_h$'s the \emph{pivots} for $1 \leq h \leq t-1$, and $u_t$ the \emph{tail} of the chain $C$. The monomial associated to $C$, which we denote by $g_{C}$, is defined to be $x_{u_t} \prod_{h = 0}^{t-1} x_{v_{2h + 1}} x_{v_{2h+2}}$. We call the $t$-chain hypergraph $\cH_u^{(t)}$ ``hypergraph-tailed'', as the last link uses one of the hypergraphs $H_v$.
\end{definition}
We note that for any $u \in [n]$, $\cH_u^{(1)}$ is equivalent to $H_u$, i.e., $\cH_u^{(1)} = \{u\} \times H_u$.

\begin{definition}[$t$-chain hypergraph $\cG_u^{(t)}$]
\label{def:tchaingraph}
Let $t \geq 1$ be an integer, and let $(G_u, H_u)_{u \in [n]}$ denote a $3$-LCC hypergraph collection. For any $u \in [n]$, let $\cG_u^{(t)}$ denote
the weight function $\wt_{\cG_u^{(t)}} \colon [n]^{3t} \to \R_{\geq 0}$, i.e., from length $3t$ tuples of the form $C = (u_0,v_1, v_2, u_1, v_3, v_4, u_2, \dots, u_{t-1}, v_{2(t-1) + 1}, v_{2(t-1) + 2})$ to $\R_{\geq 0}$, where $\wt_{\cG_u^{(t)}}(C)= 0$ if $u_0 \ne u$, and otherwise:
\begin{equation*}
\wt_{\cH_u^{(t)}}(C)= \wt_{G_{u_{t-1}}}(v_{2(t-1) + 1}, v_{2(t-1) + 2}) \cdot \prod_{h = 0}^{t-2} \wt_{H_{u_h}}(v_{2h + 1}, v_{2h + 2}, u_{h+1}) \mper
\end{equation*}
Note that the chains in $\cG^{(t)}$ have no tail vertex $u_t$.
The monomial associated to $C$, which we denote by $x_{C}$, is defined to be $g_C = \prod_{h = 0}^{t-1} x_{v_{2h + 1}} x_{v_{2h+2}}$. We call the $t$-chain hypergraph $\cG_u^{(t)}$ ``graph-tailed'', as the last link uses one of the graphs $G_v$.
\end{definition}
We note that for any $u \in [n]$, $\cG_u^{(1)}$ is equivalent to $G_u$, i.e., $\cG_u^{(1)} = \{u\} \times G_u$.

We now make the following observation.
\begin{observation}
\label{obs:weightpreserved2}
Let $(G_u, H_u)_{u \in [n]}$ denote a $3$-LCC hypergraph collection. Then, for any $t \geq 1$ and $u \in [n]$, it holds that $\sum_{C \in [n]^{3t + 1}} \wt_{\cH_u^{(t)}}(C) + \sum_{t' = 1}^{t} \sum_{C \in [n]^{3t'}} \wt_{\cG_u^{(t')}}(C) = 1$.
\end{observation}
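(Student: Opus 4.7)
The plan is a straightforward induction on $t$, using the crucial property that a $3$-LCC hypergraph collection satisfies $\sum_{C \in [n]^2} \wt_{G_u}(C) + \sum_{C \in [n]^3} \wt_{H_u}(C) = 1$ for every $u \in [n]$. This normalization is what changed from \cref{obs:weightpreserved} (which gave only inequalities with slack $4$) and is exactly what will make the equality close up in the induction.

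For the base case $t=1$, we simply unfold the definitions: $\cH_u^{(1)}$ is supported on tuples $(u, v_1, v_2, u_1)$ with weight $\wt_{H_u}(v_1,v_2,u_1)$, and $\cG_u^{(1)}$ is supported on tuples $(u, v_1, v_2)$ with weight $\wt_{G_u}(v_1,v_2)$, so the two sums combine to $\sum_{C \in [n]^3} \wt_{H_u}(C) + \sum_{C \in [n]^2} \wt_{G_u}(C) = 1$ by the definition of the collection.

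For the inductive step, I would use the iterative view from \cref{rem:chainiterative}: every $(t{+}1)$-chain is obtained from a unique $t$-chain $C \in \cH_u^{(t)}$ (with some tail $u_t$) by appending either a hyperedge $(v_{2t+1}, v_{2t+2}, u_{t+1}) \in H_{u_t}$ (producing an element of $\cH_u^{(t+1)}$) or an edge $(v_{2t+1}, v_{2t+2}) \in G_{u_t}$ (producing an element of $\cG_u^{(t+1)}$). By \cref{def:tchainhypergraph,def:tchaingraph}, the weight of the extended chain factors as $\wt_{\cH_u^{(t)}}(C)$ times the weight of the appended edge. Summing the extensions out of a fixed $C$ therefore contributes
\[
\wt_{\cH_u^{(t)}}(C)\Bigl(\sum_{C'\in [n]^3}\wt_{H_{u_t}}(C') + \sum_{C'\in [n]^2}\wt_{G_{u_t}}(C')\Bigr) = \wt_{\cH_u^{(t)}}(C),
\]
so summing over $C$ gives $\sum_{C\in [n]^{3(t+1)+1}}\wt_{\cH_u^{(t+1)}}(C) + \sum_{C\in [n]^{3(t+1)}}\wt_{\cG_u^{(t+1)}}(C) = \sum_{C\in[n]^{3t+1}}\wt_{\cH_u^{(t)}}(C)$.

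Finally, adding the terms $\sum_{t'=1}^{t}\sum_{C}\wt_{\cG_u^{(t')}}(C)$ to both sides and applying the induction hypothesis gives the desired equality for $t+1$. There is no real obstacle: the only place care is needed is in making sure the bijection between $(t{+}1)$-chains and (parent $t$-chain, appended edge) pairs is literally one-to-one, which is immediate from the product form of the weight functions.
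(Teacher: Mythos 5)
Your proof is correct and follows essentially the same approach as the paper's own proof: induction on $t$, unfolding the definitions for the base case, then in the inductive step fixing a $t$-chain $C$ with tail $u_t$, using the product form of $\wt_{\cH_u^{(t+1)}}$ and $\wt_{\cG_u^{(t+1)}}$ to sum the one-step extensions to $\wt_{\cH_u^{(t)}}(C)$, and summing over $C$ before invoking the induction hypothesis. The "iterative view" bijection you invoke is exactly the decomposition $S_1 = C \times [n]^2$, $S_2 = C \times [n]^3$ that the paper uses, so no comparison is needed.
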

\begin{proof}
This follows by induction. The base case of $t = 1$ is simple, as by definition we have 
\begin{flalign*}
&\sum_{C \in [n]^4} \wt_{\cH_u^{(1)}}(C) + \sum_{C \in [n]^3} \wt_{\cG_u^{(1)}}(C) = \sum_{(u,C) \in [n]^4} \wt_{\cH_u^{(1)}}(C) + \sum_{(u,C) \in [n]^3} \wt_{\cG_u^{(1)}}(C) = \sum_{C\in [n]^3} \wt_{H_u}(C) + \sum_{C \in [n]^2} \wt_{G_u}(C) \mper
\end{flalign*}
We now show the induction step. Let $C \in [n]^{3t + 1}$ have tail $u_t$. Let $S_1$ denote the set of tuples in $[n]^{3t + 3}$ that extend $C$, i.e., the first $3t+1$ coordinates are $C$, and similarly let $S_2$ denote the set of tuples in $[n]^{3t + 4}$ that extend $C$. We observe that $S_1 = C \times [n]^2$ and $S_2 = C \times [n]^3$. Moreover, we have
\begin{flalign*}
&\sum_{C' \in S_1} \wt_{\cG_u^{(t+1)}}(C') + \sum_{C' \in S_2} \wt_{\cH_u^{(t+1)}}(C') = \sum_{C' \in [n]^2} \wt_{\cH_u^{(t)}}(C)\wt_{G_{u_t}}(C') + \sum_{C' \in [n]^3} \wt_{\cH_u^{(t)}}(C)\wt_{H_{u_t}}(C') =  \wt_{\cH_u^{(t)}}(C) \mper
\end{flalign*}
Hence, it follows that 
\begin{flalign*}
&\sum_{C \in [n]^{3t + 4}} \wt_{\cH_u^{(t+1)}}(C) + \sum_{t' = 1}^{t+1} \sum_{C \in [n]^{3t'}} \wt_{\cG_u^{(t')}}(C) = \sum_{C \in [n]^{3t + 4}} \wt_{\cH_u^{(t+1)}}(C) + \sum_{C \in [n]^{3t+3}} \wt_{\cG_u^{(t')}}(C) + \sum_{t' = 1}^{t} \sum_{C \in [n]^{3t'}} \wt_{\cG_u^{(t')}}(C) \\
&=\sum_{C \in [n]^{3t + 4}} \wt_{\cH_u^{(t+1)}}(C) + \sum_{C \in [n]^{3t+3}} \wt_{\cG_u^{(t+1)}}(C) + \sum_{t' = 1}^{t} \sum_{C \in [n]^{3t'}} \wt_{\cG_u^{(t')}}(C) = \sum_{C \in [n]^{3t + 1}} \wt_{\cH_u^{(t)}}(C) + \sum_{t' = 1}^{t} \sum_{C \in [n]^{3t'}} \wt_{\cG_u^{(t')}}(C) = 1 \mcom
\end{flalign*}
where the last step is by the induction hypothesis.
\end{proof}

We are now ready to define the chain XOR instances.
\begin{definition}[Chain XOR instance]
Let $(G_u, H_u)_{u \in [n]}$ denote a $3$-LCC hypergraph collection. Let $k \leq n$ and $r \geq 1$ be an integer. For each $1 \leq t \leq r$, we define the ``graph-tailed'' polynomial
\begin{equation*}\Phi_b^{(t)}(x) = \sum_{i \in K} \sum_{C \in [n]^{3t}}  \wt_{\cG_i^{(t)}}(C) \cdot b_i g_C \mcom
\end{equation*}
and we also define the ``hypergraph-tailed'' polynomial
\begin{equation*}\Psi_b(x) = \sum_{i \in K} \sum_{C \in [n]^{3t+1}}  \wt_{\cH_i^{(r)}}(C) \cdot b_i g_C \mper
\end{equation*}
We will omit the subscript $b$ when it is clear from context. We note that in the above definitions, each $g_C$ is the monomial associated with the chain $C$, as defined in \cref{def:tchainhypergraph,def:tchaingraph}.
\end{definition}

With the above setup in hand, we can now state the main technical lemma.
\begin{lemma}[Refuting the chain XOR instances]
\label{lem:chainxorref}
Let $(G_u, H_u)_{u \in [n]}$ denote a $\delta$-smooth $3$-LCC hypergraph collection and let $k \leq n$. Let $\ell, d, r$ be parameters such that $d^r \geq n$, $\ell \geq 6 d r / \delta$, and $\ell r = o(n)$. Furthermore, suppose that $k \geq 1/\delta$. Then, for each $1 \leq t \leq r + 1$, it holds that 
\begin{flalign*}
&\E_{b \gets \Fits^k}[\val(\Phi^{(t)}_b)] \leq O(\sqrt{k \ell r \log n}) \mcom \\
&\E_{b \gets \Fits^k}\left[\val(\Psi_b)\right] \leq \left(\frac{k(r+1)}{\delta}  O(\sqrt{k \ell r \log n})\right)^{1/2} \mper
\end{flalign*}
\end{lemma}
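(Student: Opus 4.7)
The plan is to prove \cref{lem:chainxorref} via the spectral refutation method using Kikuchi matrices, following the blueprint of \cite{GuruswamiKM22, KothariM23} but with adaptations required by the weighted, nonuniform, and directed setting. The two bounds must be handled separately because the graph-tailed polynomial $\Phi_b^{(t)}$ has even degree $2t$ whereas the hypergraph-tailed polynomial $\Psi_b$ has odd degree $2(r{+}1){+}1$, and the Kikuchi method is more direct in the even-degree case.

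For the graph-tailed bound, I would directly construct a Kikuchi matrix. For each $i \in [k]$, define a matrix $A_i^{(t)}$ with rows and columns indexed by sets $S \in \binom{[n]}{\ell}$, where each chain $C = (u_0, v_1, v_2, \dots, v_{2t-1}, v_{2t}) \in \cG_i^{(t)}$ contributes weight $\wt_{\cG_i^{(t)}}(C)$ to every pair $(S,T)$ such that $S = C_L \cup U$ and $T = C_R \cup U$ for some free set $U$ of size $\ell - t$ disjoint from $C$. Standard arguments give $\val(\Phi_b^{(t)}) \lesssim \boolnorm{\sum_i b_i A_i^{(t)}} / \binom{n-2t}{\ell-t}$, and by \cref{fact:matrixkhintchine}, $\E_b \boldsymbol{\|}\sum_i b_i A_i^{(t)}\boldsymbol{\|}_2 \leq \sqrt{2\sigma^2 \log N}$ where $\sigma^2$ bounds the spectral norm of $\sum_i (A_i^{(t)})(A_i^{(t)})^\top$. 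The key quantity is the maximum row/column degree, which I control via the row pruning strategy: compute the first and second moments of the degree, use the $\delta$-smoothness of the hypergraph collection to check $s \gtrsim dr/\delta$ suffices for the second moment to be $(1+o(1))$ times the first moment squared, and conclude by Chebyshev that a $(1-o(1))$-fraction of edges can be retained in a subgraph of bounded max-degree. A careful accounting gives the stated $O(\sqrt{k \ell r \log n})$ bound after normalization.

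For the hypergraph-tailed bound, the odd degree forces an application of Cauchy-Schwarz to symmetrize. The natural move is to write $\Psi_b(x) = \sum_{u_{r+1} \in [n]} x_{u_{r+1}} \cdot Q_b(x, u_{r+1})$, where $Q_b(x, u_{r+1})$ aggregates all $(r{+}1)$-chains terminating at $u_{r+1}$; then $\val(\Psi_b)^2 \leq n \sum_{u_{r+1}} \val_x(Q_b(x, u_{r+1}))^2$. Expanding the square produces a polynomial indexed by pairs of chains sharing a tail, of even degree $4(r{+}1)$. The $\delta$-smoothness at the tail gives that for each $i \in [k]$, the total weight of pairs with fixed tail is at most $1/(\delta n)$ per source, which after summing over $u_{r+1}$ and using Cauchy-Schwarz on $b$ contributes the $k(r{+}1)/\delta$ prefactor. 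I then apply the even-degree refutation machinery to this squared polynomial, which gives the inner $\sqrt{k \ell r \log n}$ bound. Taking a final square root yields the claimed $\bigl(\tfrac{k(r+1)}{\delta} \cdot O(\sqrt{k\ell r \log n})\bigr)^{1/2}$.

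The main technical obstacle will be the row pruning step in both cases. Unlike in \cref{sec:designs}, here the hypergraphs are weighted and nonuniform, and moreover may contain \emph{heavy pairs}, i.e.\ pairs of vertices $(u_h, u_{h+1})$ shared by many chains. Before applying the Kikuchi machinery I would decompose each $\cH_u^{(r+1)}$ according to the level of ``heaviness'' of pivots (as in \cite{KothariM23}, but generalized to weighted directed hypergraphs using $\delta$-smoothness in place of matching structure). This gives $O(\log n)$ sub-instances, each approximately regular at each level; I then refute each separately and union-bound. The second moment calculation for the max degree of the Kikuchi matrix must be done carefully using the smoothness bound $\sum_{C \ni v} \wt(C) \leq 1/(\delta n)$ at every link, mirroring \cref{claim:designsmoothness} but with the stronger design structure replaced by smoothness. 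The condition $\ell \geq 6d(r+1)/\delta$ is precisely what makes the second-moment term $(\delta \ell/d)^{-t}$ geometric, so the sum over $t$ is a constant, mirroring the design calculation in \cref{lem:designrowpruning}.
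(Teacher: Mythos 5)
Your high-level plan correctly identifies the three ingredients the paper uses: a Kikuchi matrix for the even-degree graph-tailed polynomials, a Cauchy--Schwarz symmetrization for the odd-degree hypergraph-tailed polynomial, and a row-pruning step to obtain a degree-bounded submatrix. But two of the three pieces diverge from what the paper actually does, and one of those divergences is a genuine gap.

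For $\Phi_b^{(t)}$ the paper does \emph{not} use the single big-set Kikuchi graph that you propose (and that \cref{sec:designs} uses for designs). It uses a tuple-indexed matrix: rows and columns are indexed by $(S_0,\dots,S_{t-1}) \in \binom{[n]}{\ell}^t$, with the $h$-th link of the chain placed into the $h$-th coordinate (\cref{def:graphkikuchi}). This sidesteps the issue of repeated vertices within a chain. With a single big set $S = C_L \cup U$ of size $\ell$, a chain in which $v_1 = v_3$ contributes no edges at all, and a chain in which $v_1 \in C_R$ contributes a different number of edges; for arbitrary weighted directed hypergraphs you would need to argue via $\delta$-smoothness that the total weight of such degenerate chains is $O(r^2/\delta n)$ and discard them. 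That is doable, but you do not flag it, and the design-case filtering in \cref{def:designchains} relies on structure the nonlinear setting lacks. The tuple indexing is the path of least resistance here.

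The more serious issue is the hypergraph-tailed bound. You propose a KM23-style pre-decomposition of $\cH_u^{(r+1)}$ into $O(\log n)$ buckets by pivot heaviness, followed by a per-bucket refutation and a union bound, with the Cauchy--Schwarz step applied only to the tail vertex $u_{r+1}$. The paper explicitly avoids this (see the footnote in \cref{sec:techsnonlin}: ``we give a slightly tighter analysis of the decomposition in \cite{KothariM23} that also saves this $\log n$ factor''). A union bound over $O(\log n)$ sub-instances multiplies the expected value by $\Theta(\log n)$ on the outside, which after squaring inflates the bracket by $\log^2 n$ --- and the lemma's $O(\cdot)$ does not absorb that. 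What the paper does instead (\cref{sec:decomp,sec:kikuchi}) is build a single \emph{smooth partition}: it greedily identifies heavy contiguous suffixes $Q \in P_t$ of all lengths $0 \le t \le r$, each guaranteed weight $\wt(Q) \geq n d^t (\delta n)^{-t-1}$, introduces auxiliary variables $y_Q$, and applies one Cauchy--Schwarz over all $y_Q$ simultaneously, paying only $\sum_{t,Q} \wt(Q) \le (r+1)n$ rather than a count of buckets. Your ``Cauchy--Schwarz on the tail'' is exactly the $t=0$ slice ($P_0 = [n]$); the heavier suffixes $t \ge 1$ are what make the subsequent row pruning go through without the extra log factor. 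Without the full partition, the conditional degree bound in \cref{lem:rowpruning} fails, because a chain may have an arbitrarily heavy suffix that blows up the degree of the corresponding Kikuchi rows. So this is not an optional refinement --- it is load-bearing for the lemma as stated.

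Finally, the row pruning: you propose computing first and second moments and applying Chebyshev (as in \cref{lem:designrowpruning}). The paper's nonlinear argument (\cref{lem:rowpruning,lem:graphrowpruning}) instead bounds the first moment and a \emph{conditional} first moment (the expected degree of a row conditioned on it being nonzero in a fixed $A_{i,j}^{(C,C',Q)}$), then uses Markov to show that after deleting bad rows, each $A_{i,j}^{(C,C',Q)}$ retains a constant fraction of its $D_t$ entries. These are related, but the conditional-first-moment route is better adapted to the arbitrary-weights setting where the near-perfect-matching structure that makes the design second-moment bound tight is unavailable. This difference alone would not sink your proof, but it is worth being aware that a Chebyshev-style argument would need $(1+o(1))$ rather than $O(1)$ second-moment control, and it is not obvious the weaker smoothness hypotheses deliver that.
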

The proof of \cref{lem:chainxorref} has two steps. First, in \cref{sec:graphref}, we refute the graph-tailed instances. Then, in \cref{sec:decomp,sec:kikuchi,sec:rowpruning}, we refute the hypergraph-tailed instances.

As we shall show in \cref{sec:chainxorandchainpoly}, \cref{lem:chainxorref} implies \cref{lem:chainpolyref}. For now, we devote the rest of this section to establishing some shared terminology which will be useful in the later sections.

\parhead{Chains that fix some positions.} We will often refer to the set of chains where some of the links, i.e., pairs $(v_{2h+1}, v_{2h+2})$ are forced to contain some $v \in [n]$. Towards this, we introduce the following terminology. 
\begin{definition}[Chains containing $Q$]
Let $t, r$ be integers with $t \leq r$. For any $Q = (Q_1, \dots, Q_t, Q_{t+1}) \in \{[n] \cup \star\}^{t+1}$, we say that a length $3r+1$ tuple $C = (u_0,v_1, v_2, u_1, v_3, v_4, u_2, \dots, u_{t-1}, v_{2(r-1) + 1}, v_{2(r-1) + 2}, u_r)$ contains $Q$, denoted by $Q \subseteq C$, if $Q_{t+1} \in \{\star, u_r\}$ and for $1 \leq h \leq t$, if $Q_h \neq \star$, then either $Q_h = v_{2(r - 1 - t + h) + 1}$ or $Q_h = v_{2(r - 1 - t + h) + 2}$.

We say that a $Q$ is \emph{contiguous} if there exists $s \leq t$ such that $Q_{h} \neq \star$ for every $h \geq s+1$ and $Q_{h} = \star$ for every $1 \leq h \leq s$, i.e., the first $s$ entries are $\star$, and the remaining entries are non-$\star$. We note that by definition, $Q_{t+1} \ne \star$ always.

We say that $Q$ is \emph{complete} if $Q$ does not contain any $\star$. We say that $Q' \supseteq Q$ if whenever $Q_h \neq \star$, $Q'_h = Q_h$. We define the size $\abs{Q}$ to be the number of coordinates in $Q$ that do not equal $\star$.
\end{definition}

%%%%%%%%%%%%%%%%%%%%%%%%%%%%%%%%%%%%%%%%%%%%%%%%%%%%%%%%%%%%%%%%%%%%%%%%%%%%%%%%
%%%%%%%%%%%%%%%%%%%%%%%%%%%%%%%%%%%%%%%%%%%%%%%%%%%%%%%%%%%%%%%%%%%%%%%%%%%%%%%%
\subsection{Relating the chain polynomials and chain XOR instances}
\label{sec:chainxorandchainpoly}
Recall that the chain polynomials $\Phi_b^{(t)}$ and $\Psi_b$ in \cref{sec:chainpolys} are products of $\AND$ functions, which means that (1) they are inhomogeneous polynomials, and (2) some of the coefficients can be negative. This is contrast to the chain XOR instances produced in \cref{sec:chainxor}, which are homogeneous and with positive coefficients, and this is very helpful in the proof of \cref{lem:chainxorref}. 

The goal of this section is to show that, given the output of \cref{lem:adaptivepolys}, we can construct a $3$-LCC hypergraph collection $(H'_u, G'_u)_{u \in [n]}$ such that the chain XOR instances ${\Phi'}_b^{(t)}$ and ${\Psi'}_b$ produced from $(H'_u, G'_u)$ are (up to a scaling factor) equivalent to the chain polynomial instances $\Phi_b^{(t)}$ and $\Psi_b$ from \cref{sec:chainpolys}.

First, we explain how to convert the polynomials $\Phi_b^{(t)}$ and $\Psi_b$ into equivalent homogeneous polynomials $\tilde{\Phi}_b^{(t)}$ and $\tilde{\Psi}_b$ over a larger set of $4n$ variables. In particular, these new polynomials will have the following properties (1) $\val(\tilde{\Phi}_b^{(t)}) \geq \val(\Phi_b^{(t)})$ and $\val(\tilde{\Psi}_b) \geq \val(\Psi_b)$, (2) $\tilde{\Psi}_b$ is a degree $2(r+1) + 1$ homogeneous polynomial and $\tilde{\Phi}_b^{(t)}$ is a degree $2t$ homogeneous polynomial. Then, we will construct a $3$-LCC hypergraph collection $(H'_u, G'_u)_{u \in [4n]}$, and show that the chain XOR instances ${\Phi'}_b^{(t)}$ and ${\Psi'}_b$ produced from this collection are equal to $4^{-t} \Phi_b^{(t)}$ and $4^{-r} \tilde{\Psi}_b$.

\parhead{Defining the homogeneous polynomials.} This transformation to produce $\Phi_b^{(t)}$ and $\tilde{\Psi}_b$ is straightforward. First, we define a map $\pi \colon \Fits^n$ to $\Fits^{4n}$ as follows. For each $x \in \Fits^n$ we define $y = \pi(x)$ by adding, for each $v \in [n]$, the $4$ bits $x_v$, $-x_v$, $1$ and $-1$ to $y$. We refer to these bits as $+v, -v, 1^{(v)}, -1^{(v)}$, i.e., $y_{+v} = x_v$, $y_{-v} = -x_v$, $y_{1^{(v)}} = 1$, and $y_{-1^{(v)}} = -1$. We think of $1^{(v)}$ as the $v$-th copy of $1$, and similarly $-1^{(v)}$ is the $v$-th copy of $-1$.

Now, we transform the polynomials $\Phi_b^{(t)}$ and $\Psi_b$. Each term that contains a function $\AND(a_1 x_{v_1}, a_2 x_{v_2}) \cdot \sigma$ for $\sigma \in \Fits$ is replaced by the $8$ terms 
\begin{equation*}
\frac{1}{8} y_{\sigma^{(v_1)}} y_{1^{(v_2)}} + \frac{1}{8} y_{\sigma a_1 v_1} y_{1^{(v_2)}} + \frac{1}{8} y_{\sigma^{(v_1)}} y_{a_2 v_2}  + \frac{1}{4} y_{\sigma a_1 v_1} y_{a_2 v_2} + \frac{1}{8} y_{1^{(v_1)}} y_{\sigma^{(v_2)}} + \frac{1}{8} y_{a_1 v_1} y_{\sigma^{(v_2)}} + \frac{1}{8} y_{1^{(v_1)}} y_{\sigma a_2 v_2}  + \frac{1}{8} y_{ a_1 v_1} y_{\sigma a_2 v_2} \mcom
\end{equation*}
where, e.g., $y_{\sigma a_1 v_1}$ is $y_{+v}$ if $\sigma a_1 = 1$ and $y_{-v}$ if $a_1 = -1$, and $y_{\sigma^{(v)}}$ is either $y_{1^{(v)}}$ if $\sigma = 1$ or $y_{-1^{(v)}}$ if $\sigma = -1$. By construction, if $y = \pi(x)$, then $\AND(a_1 x_{v_1}, a_2 x_{v_2}) \cdot \sigma$ is equal to this new polynomial, and this polynomial is a homogeneous degree $2$ polynomial in $y$ with nonnegative coefficients. 
Furthermore, the coefficients of the new polynomial all sum to $1$.

\parhead{Defining the homogeneous polynomials via XOR chains on hypergraphs.}
In order to refute the homogeneous polynomials using \cref{lem:chainxorref}, we will need to write them as ``chain XOR instances'' on a certain set of hypergraphs. Now, because the coefficients of the new $\AND$ polynomials all sum to $1$, we can essentially replace each hyperedge $C = (v_1, a_1, v_2, a_2, v_3)$, e.g., with $8$ new hyperedges each of weight $1/8 \wt(C)$. This defines, for each $u \in [n]$, a pair $(H'_u, G'_u)$ of weighted $3$-uniform and $2$-uniform hypergraphs.

We can then form the chain XOR instances ${\Phi'}_b^{(t)}$ and ${\Psi'}_b$ from this hypergraph collection. It will be fairly immediate to observe that the resulting polynomials produced via this process are the same as $\tilde{\Phi}_b^{(t)}$ and $\tilde{\Psi}_b$ up to a scaling factor -- in other words, the operations of ``form chains'' and ''add extra variables'' commute. As a result, the ``chain XOR instances'' ${\Phi'}_b^{(t)}$ and ${\Psi'}_b$ we get from this process are equal to the polynomials $4^{-t} \tilde{\Phi}_b^{(t)}$ and $4^{-r} \tilde{\Psi}_b$ that we have just defined, and thus we can refute them using \cref{lem:chainxorref}, which will imply \cref{lem:chainpolyref}.

In the remainder of this section, we define the pairs $(H'_u, G'_u)$ of weighted $3$- and $2$-uniform hypergraphs. Then, we will finally observe that these XOR instances are equivalent to the polynomials $\tilde{\Phi}_b^{(t)}$ and $\tilde{\Psi}_b$.

\begin{definition}[The pairs $(H'_u, G'_u)$]
Let $u \in [n]$ and let $n' \defeq 4n$. We identify $4n$ with the $4n$ vertices $+v, -v, 1^{(v)}$, and $-1^{(v)}$.

We define the weight function $\wt_{H'_u}$ and $\wt_{G'_u}$ as follows. For each $C = (v_1, a_1, v_2, a_2, v_3)$ with bit $\sigma_C \in \Fits$, we set 
\begin{enumerate}[(1)]
\item $\wt_{H'_u}(\sigma_C a_1 v_1, a_2 v_2) = \frac{1}{8} \wt_{H_u}(C) \cdot \frac{1}{4}$,
\item $\wt_{H'_u}(\sigma_C^{(v_1)}, a_2 v_2) = \frac{1}{8} \wt_{H_u}(C) \cdot \frac{1}{4}$,
\item $\wt_{H'_u}(\sigma_C a_1 v_1, 1^{(v_2)}) = \frac{1}{8} \wt_{H_u}(C)\cdot \frac{1}{4}$,
\item $\wt_{H'_u}(\sigma_C^{(v_1)}, 1^{(v_2)}) = \frac{1}{8} \wt_{H_u}(C)\cdot \frac{1}{4}$,
\item $\wt_{H'_u}( a_1 v_1, \sigma_C a_2 v_2) = \frac{1}{8} \wt_{H_u}(C)\cdot \frac{1}{4}$,
\item $\wt_{H'_u}(1^{(v_1)}, \sigma_C a_2 v_2) = \frac{1}{8} \wt_{H_u}(C)\cdot \frac{1}{4}$,
\item $\wt_{H'_u}( a_1 v_1, \sigma_C^{(v_2)}) = \frac{1}{8} \wt_{H_u}(C)\cdot \frac{1}{4}$,
\item $\wt_{H'_u}(1^{(v_1)}, \sigma_C^{(v_2)}) = \frac{1}{8} \wt_{H_u}(C)\cdot \frac{1}{4}$,
\end{enumerate}
and we do the analogous transformation to define $\wt_{G'_u}$ from $\wt_{G_u}$. 

Furthermore, if the original decoder $\Dec(\cdot)$ was $\delta$-smooth, then for any new vertex $v' \in [n']$, it holds that
\begin{flalign*}
\sum_{C' = (v'_1, v'_2, v'_3) : v' \in C'} \wt_{H'_u}(C') + \sum_{C' = (v'_1, v'_2) : v' \in C'} \wt_{G'_u}(C') \leq \frac{4}{\delta n'} \mper
\end{flalign*}
\end{definition}

\begin{remark}
So far, we have only defined a pair $(H'_u,G'_u)$ for the original vertices $u$, not the new vertices $u' \in [n']$, so technically we have not defined a full hypergraph collection. However, we can easily define equivalent hypergraphs for all the new vertices, but this turns out to be unnecessary as the only hyperedges in $H'_u$ with nonzero weight have $C' = (v'_1, v'_2, v_3)$ where $v_3 \in [n]$ is one of the original vertices, and so the chain XOR instances formed will never use the hypergraphs $(H'_{u'}, G'_{u'})$ if $u'$ is a ``new vertex''. So, we do not need to define $H'_{u'}$ and $G'_{u'}$ where $u'$ is a new variable. This is also the reason for the upper bound of $\frac{16}{\delta n'}$ instead of $\frac{4}{\delta n'}$ -- a fixed third vertex $v_3 \in [n]$ could have $\frac{4}{\delta n}$-fraction of the weight in the original decoder, which is now $\frac{16}{\delta n'} \cdot \frac{1}{4}$ (as we scale down all weights by $1/4$).
\end{remark}

This now leads us to the following key observation.
\begin{observation}
Let $\tilde{\Phi}_b^{(t)}$ and $\tilde{\Psi}_b$ be the polynomials defined via the transformation to ${\Phi}_b^{(t)}$ and ${\Psi}_b$, and let ${\Phi'}_b^{(t)}$ and ${\Psi'}_b$ be the chain XOR instances of the $3$-LCC hypergraph collection $(H'_u, G'_u)_{u \in [n]}$. Then, $\tilde{\Phi}_b^{(t)} = 4^{-t} {\Phi'}_b^{(t)}$ and $\tilde{\Psi}_b = 4^{-r} \Psi'_b$.
\end{observation}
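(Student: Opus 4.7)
The plan is to prove the observation by induction on the chain length $t$, exploiting the fact that both constructions---expanding the AND products inside the chain polynomial $f_C$ and forming chain XOR instances directly on the expanded hypergraphs $(H'_u, G'_u)$---act multiplicatively across the links of a chain, so a per-link equivalence will immediately propagate to whole chains.

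First, I would establish the base case $t=1$ by a term-by-term comparison at a single hyperedge. For a hyperedge $C = (v_1, a_1, v_2, a_2, v_3)$ contributing $\wt_{H_u}(C)\,\sigma_C\,\AND(a_1 x_{v_1}, a_2 x_{v_2})\,x_{v_3}$ to the chain polynomial, I would expand $\AND$ using the substitution $x_v = y_{+v} = -y_{-v}$ and $1 = y_{1^{(v)}} = -y_{-1^{(v)}}$, and then symmetrically split each of the four $\pm\tfrac{\sigma_C}{4}$-weighted terms between the two vertices so that every sign $\sigma_C, a_1, a_2$ is absorbed into a $\pm$ label on a variable. The result is precisely eight monomials of the form $\tfrac{1}{8}\, y_{\ast} y_{\ast} y_{v_3}$ with nonnegative coefficients, which by construction of $H'_u$ are in bijection with the eight hyperedges that $H'_u$ receives from $C$, each carrying weight $\tfrac{1}{8}\,\wt_{H_u}(C)\cdot\tfrac{1}{4}$. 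The extra factor of $\tfrac{1}{4}$ is the normalization that makes $(H'_u, G'_u)$ a valid $3$-LCC hypergraph collection, and an identical analysis covers the graph-tailed contributions coming from $G_u$.

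For the inductive step, I would observe that the chain polynomial $f_C$ for a length-$t$ chain factors cleanly as a product of $t$ per-link AND polynomials, and the length-$t$ chain weight on the new hypergraphs factors as a product of $t$ per-link weights in $(H'_u, G'_u)$. After expansion, each link contributes eight monomial terms with coefficient $\tfrac{1}{8}$, so the product over $t$ links produces $8^t$ monomials in the $y$ variables; by the base-case bijection applied link-by-link, these $8^t$ monomials correspond exactly to the $8^t$ length-$t$ chains on $(H'_u, G'_u)$ extending a given chain on $(H_u, G_u)$, with matching weights up to the cumulative $(\tfrac{1}{4})^t$ normalization factor. Summing over all original chains and all heads $i \in [k]$ then yields the claimed identities between $\tilde{\Phi}_b^{(t)}$ and ${\Phi'}_b^{(t)}$, and between $\tilde{\Psi}_b$ and ${\Psi'}_b$, with the indicated powers of $4$.

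The main obstacle, though conceptually straightforward, is bookkeeping: one has to track how the signs $\sigma_C, a_1, a_2 \in \Fits$ propagate through the $x \to y$ substitution so that every monomial coefficient is nonnegative, which is required for the output to qualify as a valid weighted-hypergraph chain XOR instance. I plan to handle this by an exhaustive case analysis of the four $\AND$-expansion terms, verifying that each admits two symmetric splittings---one absorbing $\sigma_C$ into the first coordinate's $\pm$ label, the other into the second's---and that the eight resulting monomials and their coefficients coincide precisely with the eight hyperedges enumerated in the definition of $H'_u$.
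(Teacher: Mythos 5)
Your approach — expand each $\AND$ polynomial per link into eight equal-weight monomials in the $y$-variables, biject those with the eight hyperedges assigned to $H'_u$ (resp.\ $G'_u$), and propagate multiplicatively through the product structure of chain weights and chain polynomials — is exactly what the paper's one-sentence remark intends, and the induction on $t$ is a clean way to formalize it.

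However, your own arithmetic contradicts the observation as stated, and you should flag this rather than assert agreement. After expansion, each $y$-monomial coming from a link carries coefficient $\tfrac{1}{8}\wt(C)$, while the corresponding hyperedge of $H'_u$ carries weight $\tfrac{1}{8}\wt(C)\cdot\tfrac14=\tfrac{1}{32}\wt(C)$. The per-link ratio between the coefficients of $\tilde{\Phi}$ and the chain-weights feeding $\Phi'$ is therefore $4$, so over $t$ links one gets
\begin{equation*}
\Phi'^{(t)}_b = 4^{-t}\,\tilde{\Phi}^{(t)}_b,\qquad\text{equivalently}\qquad \tilde{\Phi}^{(t)}_b = 4^{t}\,\Phi'^{(t)}_b,
\end{equation*}
which is the reciprocal of the identity $\tilde{\Phi}^{(t)}_b = 4^{-t}\Phi'^{(t)}_b$ claimed in the observation. (The paragraph of the paper immediately preceding the observation says precisely $\Phi'^{(t)}_b=4^{-t}\tilde{\Phi}^{(t)}_b$, so the observation itself appears to have the factor inverted; your derivation is the correct one.) There is also an exponent issue for $\Psi_b$: it is built from $(r+1)$-chains in $\cH^{(r+1)}_i$, each contributing $r+1$ $\AND$ factors (the final $x_{u_{r+1}}$ contributes no further splitting), so the correct power is $r+1$, not $r$. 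Since your last paragraph says your $(\tfrac14)^t$ factor ``yields the claimed identities \ldots with the indicated powers of $4$,'' you are implicitly endorsing both the inverted direction and the $r$-versus-$(r+1)$ discrepancy without noticing the mismatch; the proof should either correct the statement of the observation or explain why the ratio you compute matches the claim.
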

This observation follows immediately from the definitions. Namely, for each $C = (v_1, a_1, v_2, a_2, v_3)$ in the original $H_u$, we have now added $8$ different constraints of weight $1/32$ times the original weight of $C$, such that the XOR instance on the new constraints is equal to the $\AND$ polynomial of the previous constraints.

The above observation immediately shows that \cref{lem:chainxorref} implies \cref{lem:chainpolyref}, and so it remains to prove \cref{lem:chainxorref}.
}
%%%%%%%%%%%%%%%%%%%%%%%%%%%%%%%%%%%%%%%%%%%%%%%%%%%%%%%%%%%%%%%%%%%%%%%%%%%%%%%%
%%%%%%%%%%%%%%%%%%%%%%%%%%%%%%%%%%%%%%%%%%%%%%%%%%%%%%%%%%%%%%%%%%%%%%%%%%%%%%%%
%%%%%%%%%%%%%%%%%%%%%%%%%%%%%%%%%%%%%%%%%%%%%%%%%%%%%%%%%%%%%%%%%%%%%%%%%%%%%%%%
\section{Refuting the Graph-Tail Instances}
\label{sec:graphref}
In this section, we prove the first equation of \cref{lem:chainxorref}. Let $r \geq 1$ and let $1 \leq t \leq r + 1$ be fixed.
We begin by defining the Kikuchi matrices.

\begin{definition}
\label{def:graphkikuchi}
Let $r \geq 1$ and $1 \leq t \leq r + 1$. Let $i \in [k]$. For a tuple $C = (i, v_1, v_2, u_1, v_3, v_4, \dots, v_{2(t-1) + 1}, v_{2(t-1) + 2}) \in [n]^{3t}$, we define the matrix $A_{i}^{(C)} \in \Bits^{N}$ where $N = {n \choose \ell}^{t}$, to be the matrix indexed by tuples of sets $\vec{S} = (S_0, \dots, S_{t-1})$, where $A_{i}^{(C)}((S_0, \dots, S_{t-1}), (T_0, \dots, T_{t-1})) = 1$ if for all $h = 0, \dots, t-1$, $S_h \oplus T_h = \{v_{2h + 1}, v_{2h + 2}\}$ with $v_{2h+1} \in S_h, v_{2h + 2} \in T_h$. If this does not hold, then the entry of the matrix is $0$.

We let $A_{i} = \frac{1}{D_t} \sum_{C \in [n]^{3t}} \wt_{\cG_{i}^{(t)}}(C) A_{i}^{(C)}$ and $A = \sum_{i = 1}^k b_i A_i$. Here, $D_t = {n - 2 \choose \ell - 1}^{t}$.
\end{definition}

Next, we relate $\Phi^{(t)}(x)$ to a quadratic form on the matrix $A$.

\begin{lemma}
\label{lem:graphcompleteness}
Let $x \in \Fits^n$, and let $x' \in \Fits^{N}$, where $N = {n \choose \ell}^{t}$, denote the vector where the $(S_0, S_1, \dots, S_{t-1})$-th entry of $x'$ is $\prod_{h = 0}^{t-1} x_{S_h}$.
Let $i \in [k]$ and $t \in \{0, \dots, r\}$. Then, for any $C = (i, v_1, v_2, u_1, v_3, v_4, \dots, v_{2(t-1) + 1}, v_{2(t-1) + 2}) \in [n]^{3t}$, it holds that
\begin{flalign*}
{x'}^{\top} A_{i}^{(C)} x' = D_t \prod_{h = 0}^{t-1} x_{v_{2h + 1}} x_{v_{2h + 2}} \mcom
\end{flalign*}
 i.e., the product of the monomials associated to $C$, where $D_t = {n - 2 \choose \ell - 1}^{t}$. Moreover, for any matrix $B_i^{(C)}$ obtained by ``zeroing out'' exactly $\alpha D_t$ entries of $A_{i}^{(C)}$, the equality holds with a factor of $1 - \alpha$ on the right.
 
 In particular, ${x'}^{\top} A x' = \Phi^{(t)}(x)$.
\end{lemma}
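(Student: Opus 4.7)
The proof is essentially a direct computation unfolding the definitions, so the plan is short.

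First, I will fix $i \in [k]$ and a tuple $C = (i, v_1, v_2, u_1, v_3, v_4, \dots, v_{2(t-1)+1}, v_{2(t-1)+2})$ and compute
\[
{x'}^\top A_i^{(C)} x' \;=\; \sum_{\vec S, \vec T} A_i^{(C)}(\vec S, \vec T) \prod_{h=0}^{t-1} x_{S_h} x_{T_h}.
\]
The key observation is that whenever the entry $A_i^{(C)}(\vec S, \vec T)$ is nonzero, we have $S_h \oplus T_h = \{v_{2h+1}, v_{2h+2}\}$ for every $h$, and therefore
\[
x_{S_h} x_{T_h} \;=\; \prod_{v \in S_h \cap T_h} x_v^2 \cdot \prod_{v \in S_h \oplus T_h} x_v \;=\; x_{v_{2h+1}} x_{v_{2h+2}},
\]
using $x_v \in \Fits$ so that $x_v^2 = 1$. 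Hence each nonzero contribution to the quadratic form equals the \emph{same} monomial $\prod_{h} x_{v_{2h+1}} x_{v_{2h+2}}$.

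Second, I will count the number of nonzero entries of $A_i^{(C)}$. For each coordinate $h$, the constraint $S_h \oplus T_h = \{v_{2h+1}, v_{2h+2}\}$ with $v_{2h+1}\in S_h$, $v_{2h+2}\in T_h$ forces $S_h = U_h \cup \{v_{2h+1}\}$ and $T_h = U_h \cup \{v_{2h+2}\}$ for a common subset $U_h \subseteq [n]\setminus\{v_{2h+1}, v_{2h+2}\}$ of size $\ell - 1$. The number of valid $U_h$ is exactly $\binom{n-2}{\ell-1}$, and the choices across different $h$ are independent, so the total number of nonzero entries is $\binom{n-2}{\ell-1}^t = D_t$. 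Combining with the previous step yields ${x'}^\top A_i^{(C)} x' = D_t \prod_h x_{v_{2h+1}} x_{v_{2h+2}}$. The claim for the truncated matrix $B_i^{(C)}$ follows immediately, since every surviving nonzero entry contributes the same monomial, so removing $\alpha D_t$ entries just scales the total by $(1-\alpha)$.

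Finally, for the ``in particular'' statement, I will substitute into the definition $A = \sum_i b_i A_i = \sum_i b_i \cdot \frac{1}{D_t}\sum_C \wt_{\cG_i^{(t)}}(C) A_i^{(C)}$ and apply linearity together with the identity just established:
\[
{x'}^\top A x' \;=\; \sum_{i=1}^{k} b_i \sum_{C \in [n]^{3t}} \wt_{\cG_i^{(t)}}(C) \prod_{h=0}^{t-1} x_{v_{2h+1}} x_{v_{2h+2}} \;=\; \Phi^{(t)}(x),
\]
recognizing the right-hand side as the definition of $\Phi^{(t)}_b$ via the associated monomial $g_C$ from \cref{def:tchaingraph}. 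There is no real obstacle here; the only thing to be careful about is getting the orientation condition $v_{2h+1}\in S_h$, $v_{2h+2}\in T_h$ correct when parameterizing the nonzero entries, so that the binomial count is $\binom{n-2}{\ell-1}$ per coordinate rather than twice that.
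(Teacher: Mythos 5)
Your proof is correct and follows essentially the same approach as the paper: observe that every nonzero entry contributes the same monomial via $x_{S_h}x_{T_h} = x_{S_h\oplus T_h}$, count the nonzero entries by parameterizing $(S_h, T_h)$ via a common $U_h$ of size $\ell-1$, and then conclude by linearity. The only small point you leave implicit is that $v_{2h+1}\ne v_{2h+2}$ (guaranteed by the hypergraph definition), which is needed for the count $\binom{n-2}{\ell-1}$ per coordinate to be exact.
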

\begin{proof}
Let  $\vec{S} = (S_0, S_1, \dots, S_{t-1})$ and $\vec{T} = (T_0,  \dots, T_{t-1})$ be such that $A_{i}^{(C)}(\vec{S},\vec{T}) = 1$. 
Then, we have that
\begin{flalign*}
x'_{\vec{S}} x'_{\vec{T}} &= \prod_{h = 0}^{t-1} x_{S_h} x_{T_h}  =  \prod_{h = 0}^{t - 1} x_{S_h \oplus T_h} = \prod_{h = 0}^{t-1} x_{v_{2h + 1}} x_{v_{2h + 2}} \enspace,
\end{flalign*}
which is equal to the product of monomials on the right-hand side of the equation we wish to show.

It thus remains to argue that $A_{i}^{({C})}$ has exactly $D_t$ nonzero entries. We observe that, for each $h = 0, \dots, t-1$, there are exactly ${n - 2 \choose \ell - 1}$ pairs $(S_h, T_h)$ such that $S_h \oplus T_h = C_h$ with $v_{2h + 1} \in S_h$ and $v_{2h + 2} \in T_h$. Indeed, this is because by \cref{def:hypergraph}, these vertices must be distinct, and then we must simply choose a set of size $\ell - 1$ that does not contain either of $v_{2h + 1}$ and $v_{2h + 2}$ and this determines $S_h$ and $T_h$. Thus, $D_t =  {n - 2 \choose \ell - 1}^{t}$, as required.
\end{proof}

We would like to now apply matrix Khintchine (\cref{fact:matrixkhintchine}) to bound $\E_b[\norm{A}_2]$ and thus bound $\E_b[\val(\Phi^{(t)}_b(x))]$. However, to do this, we need good bounds on the $\norm{A_i}_2$ of the individual matrices $A_i$. It turns out that the bounds we require for this approach to work are false, but one can find a submatrix $B_i$ of $A_i$ such that the bounds hold. To argue this, we will need the following first moment bounds.
\begin{lemma}[First and conditional moment bounds]
\label{lem:graphrowpruning}
Fix $r \geq 1$, $1 \leq t \leq r+1$, and $i \in [k]$. Let $A_{i}$ be the Kikuchi matrix defined in \cref{def:graphkikuchi}.

Let $\vec{S} = (S_0, \dots, S_{t-1}) \in {[n] \choose \ell}^{t}$ be a row of the matrix, and let $\deg_{i}(\vec{S})$ denote the $\ell_1$-norm of the $\vec{S}$-th row of $A_{i}$. Then, 
\begin{equation*}
\E_{\vec{S}}[\deg_{i}(\vec{S})] \leq \frac{4}{N} \mcom
\end{equation*}
where $N = {n \choose \ell}^{t}$.

Furthermore, let $C \in [n]^{3t}$ be a chain with head $i$. Let $\cD_{C}$ denote the uniform distribution over rows of $A_{i}^{(C)}$ that contain a nonzero entry. Then, it holds that
\begin{equation*}
\E_{\vec{S} \sim \cD_{C}}[\deg_{i}(\vec{S})] \leq \left(1 + \frac{O(\ell r)}{n} \right) \cdot \frac{16}{N} \mper
\end{equation*}

Finally, the same bounds hold for the columns of the matrix.
\end{lemma}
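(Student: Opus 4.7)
}

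The plan is to compute the first and conditional first moments by a direct counting argument, exploiting both the sparse structure of each $A_i^{(C)}$ and the smoothness of the hypergraph collection. For the first moment, observe that each matrix $A_i^{(C)}$ has exactly $D_t$ nonzero entries by \cref{lem:graphcompleteness}, so the total $\ell_1$-mass of $A_i$ equals $\frac{1}{D_t}\sum_{C\in[n]^{3t}}\wt_{\cG_i^{(t)}}(C)\cdot D_t = \sum_{C}\wt_{\cG_i^{(t)}}(C)$, which is at most $4$ by \cref{obs:weightpreserved}. Dividing by $N$ gives $\E_{\vec S}[\deg_i(\vec S)]\le 4/N$. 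The column bound is identical, swapping the roles of $v_{2h+1}$ and $v_{2h+2}$ inside each link.

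For the conditional moment, the starting observation is that the $\vec{S}$-th row of each $A_i^{(C')}$ contains at most one nonzero entry: once $\vec{S}$ is fixed, the tuple $\vec{T}$ is uniquely determined by the pairs $\{v'_{2h+1},v'_{2h+2}\}$. Hence $\deg_i(\vec S)\cdot D_t = \sum_{C':\vec S\text{ compatible}}\wt_{\cG_i^{(t)}}(C')$ where ``compatible'' means $v'_{2h+1}\in S_h$ and $v'_{2h+2}\notin S_h$ for every $h$. Averaging over $\vec S\sim\cD_C$ (uniform over the $D_t$ rows compatible with $C$) gives
\begin{equation*}
\E_{\vec S\sim\cD_C}[\deg_i(\vec S)] \;=\; \frac{1}{D_t^2}\sum_{C'\in[n]^{3t}}\wt_{\cG_i^{(t)}}(C')\cdot \prod_{h=0}^{t-1}\bigl|\{S_h:\, v_{2h+1},v'_{2h+1}\in S_h,\ v_{2h+2},v'_{2h+2}\notin S_h\}\bigr|.
\end{equation*}
A link-by-link case analysis on $(a_h,b_h)=\bigl(|\{v_{2h+1},v'_{2h+1}\}|,|\{v_{2h+2},v'_{2h+2}\}|\bigr)\in\{1,2\}^2$ gives $\binom{n-a_h-b_h}{\ell-a_h}$ valid choices of $S_h$ (or $0$ if $v'_{2h+1}=v_{2h+2}$ or $v'_{2h+2}=v_{2h+1}$). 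Dividing by $\binom{n-2}{\ell-1}$, each link contributes a ratio $r_h$ that equals $1$ when $v'_{2h+1}=v_{2h+1}$ and is bounded above by $r_3:=\frac{\ell-1}{n-2}$ whenever $v'_{2h+1}\neq v_{2h+1}$.

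The next step is to partition chains $C'$ by the set $S=\{h\in\{0,\dots,t-1\}:v'_{2h+1}=v_{2h+1}\}$ of ``matched'' links and bound the total weight of chains with prescribed $S$ using $\delta$-smoothness. Backward induction on the level $h$, summing out each link $(v'_{2h+1},v'_{2h+2},u'_{h+1})$ in turn, yields
\begin{equation*}
\sum_{C':\,v'_{2h+1}=v_{2h+1}\ \forall h\in S}\wt_{\cG_i^{(t)}}(C') \;\le\; 4\,(\delta n)^{-|S|},
\end{equation*}
since at a matched level smoothness restricts the new sum to $\le \frac{1}{\delta n}$, and at an unmatched level the sum is $\le 1$ (respectively $\le 4$ at the terminal graph step). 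Combining with the link-wise ratio bound and summing via the binomial theorem,
\begin{equation*}
\E_{\vec S\sim\cD_C}[\deg_i(\vec S)]\;\le\;\frac{4}{D_t}\sum_{s=0}^{t}\binom{t}{s}r_3^{\,t-s}(\delta n)^{-s} \;=\;\frac{4}{D_t}\Bigl(r_3+\tfrac{1}{\delta n}\Bigr)^{t}.
\end{equation*}

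The final step is to verify that this expression matches the target $16/N$ up to a $(1+O(\ell r/n))$ factor. A direct binomial coefficient computation gives $\frac{N\cdot r_3}{D_1}=\frac{(\ell-1)n(n-1)}{\ell(n-2)(n-\ell)}=1+O(\ell/n)$, and taking $t$-th powers (with $t\le r+1$) yields $r_3^{t}/D_t\le(1+O(\ell r/n))/N$. The hypothesis $\ell\ge 6d(r+1)/\delta$ forces $\delta n r_3\ge 6d(r+1)\ge 6t$, so $(1+\tfrac{1}{\delta n r_3})^{t}\le e^{t/(6t)}\le e^{1/6}<4$, which absorbs into the constant $16$. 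The column bound follows from the same argument after interchanging the two vertices of each link. The main technical step, and the only non-routine point, is the smoothness-plus-induction bound on $\sum_{C'}\wt(C')$ in terms of $|S|$: this is where the proof genuinely uses that the chain hypergraphs are generated by iteratively composing $\delta$-smooth steps, so that fixing any single vertex in a link costs a $(\delta n)^{-1}$ factor.
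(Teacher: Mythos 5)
Your proof is correct and takes essentially the same route as the paper's: compute the unconditional first moment from the normalization $\sum_C \wt_{\cG_i^{(t)}}(C)\le 4$, then for the conditional moment partition chains $C'$ by which left vertices agree with $C$, count rows per agreement pattern, bound the total chain-weight per pattern by $\delta$-smoothness, and close with a binomial sum. The only differences are cosmetic: the paper packages the agreement data as an explicit ``intersection pattern'' $Z\in\Bits^t$ and uses the slightly looser row count ${n\choose\ell-1}^z{n\choose\ell-2}^{t-z}$ (dropping the constraints $v_{2h+2},v'_{2h+2}\notin S_h$), whereas you carry the exact per-link count ${n-a_h-b_h\choose\ell-a_h}$; both collapse to the same $(\ell/n)^{t-z}(1+O(\ell r/n))$ after normalizing by $D_t$. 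Your closed form $\tfrac{4}{D_t}(r_3+\tfrac{1}{\delta n})^t$ is exactly the paper's geometric sum, and your verification that $\delta\ell\ge 6d(r+1)$ makes the $(1+1/(\delta n r_3))^t$ factor an absolute constant absorbed into the $16$ is the same observation the paper makes via the geometric series $\sum_z\bigl(\tfrac{(2r+2-t)d}{\delta\ell}\bigr)^z\le 2$. One small slip: in the matched case $v'_{2h+1}=v_{2h+1}$ the ratio $r_h$ is $\le 1$ but equals $1$ only when also $v'_{2h+2}=v_{2h+2}$; this does not affect the upper bound.
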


With \cref{lem:graphrowpruning}, we can now do the following. Let $\Gamma$ be a sufficiently large constant, let $\cB_1 = \{\vec{S} : \deg_i(\vec{S}) \geq \Gamma/N\}$ be the set of rows with $\ell_1$-norm at least $\Gamma/N$, and similarly let $\cB_2$ be defined for the columns. We observe that by the conditional moment bounds in \cref{lem:graphrowpruning} and Markov's inequality, each $A_{i}^{(C)}$ has at least $1 - O(1/\Gamma)$-fraction of its nonzero rows not in $\cB_1$, and similarly for columns and $\cB_2$. It thus follows that after setting all the rows in $\cB_1$ and columns in $\cB_2$ to $0$, the resulting matrix still has at least $1 - O(1/\Gamma)$-fraction of its original nonzero entries. By taking $\Gamma$ large enough, we can ensure that this fraction is at least $1/2$. Now, we let $B_i^{(C)}$ be the matrix where we have deleted all rows in $\cB_1$ and columns in $\cB_2$ from $A_i^{(C)}$, and we have additionally set more entries to $0$ so that $B_i^{(C)}$ has \emph{exactly} $D_t/2$ nonzero entries, where $t$ is such that $C \in [n]^{3t}$.

Let us define: $B_{i} =  \frac{1}{D_t} \sum_{C \in [n]^{3t}} \wt_{\cG_{i}^{(t)}}(C) B_{i}^{(C)}$ and $B = \sum_{i = 1}^k b_i B_i$. By \cref{lem:graphcompleteness} (and the ``moreover'' part), we have that for every $x \in \Fits^n$, there exists $x' \in \Fits^N$ such that ${x'}^{\top} B x' = \frac{1}{2} \Phi^{(t)}(x)$. By construction, we have that $\norm{B_i}_2 \leq \Gamma/N$, as this is an upper bound on the $\ell_1$-norm of any row/column in $B_i$.

Thus, applying matrix Khintchine (\cref{fact:matrixkhintchine}), we obtain
\begin{flalign*}
\E_b[\val(\Phi^{(t)}_b)] \leq \E_b[N \norm{B}_2] \leq N \cdot \frac{\Gamma}{N} O(\sqrt{k \log N}) = O(\sqrt{k \ell r \log n}) \mcom
\end{flalign*}
where we use that $\Gamma$ is constant. This finishes the proof of the first equation in \cref{lem:chainxorref}, up to the proof of \cref{lem:graphrowpruning}.

\begin{proof}[Proof of \cref{lem:graphrowpruning}]
We will only prove the statement for the rows. One can observe from the proof that it will immediately hold for the columns also.

We begin by estimating the first moment, i.e., $\E_{\vec{S}}[\deg_{i}(\vec{S})]$. By definition, we have that 
\begin{flalign*}
&\E_{\vec{S}}[\deg_{i}(\vec{S})] = \frac{1}{N} \frac{1}{D_t} \sum_{C \in [n]^{3t}} \wt_{\cG^{(t)}_i}(C) \cdot D_t \leq \frac{4}{N} \mcom
\end{flalign*}
as the sum of the weights of all chains is at most $4$ by \cref{obs:weightpreserved}.

We now fix $t \in \{1, \dots, r+1\}$, $C \in [n]^{3t}$ with head $i$. Let $\cD_{C}$ denote the uniform distribution over rows of $A_{i}^{(C)}$ that contain a nonzero entry. We compute the conditional expectation as follows. First, we shall bound, for $C' \in [n]^{3t}$ with head $i$, the number of rows $\vec{S}$ such that $A_i^{(C)}$ and $A_{i}^{(C')}$ both have a nonzero entry in the $\vec{S}$-th row, \emph{normalized} by the scaling factor $1/D_{t}$. This quantity will depend on some parameter $z$, which is the number of ``shared vertices'' between $C$ and $C'$. Then, we will bound, for each $z$, the total weight of all $C' \in [n]^{3t}$ that has at least $z$ ``shared vertices'' with $C$.

\parhead{Step 1: bounding the normalized number of entries for a fixed $C'$.} To begin, we define the number of ``shared vertices'' between two pairs of chains $C$ and $C'$.
\begin{definition}[Left vertices]
Let $C \in [n]^{3t}$. The tuple of \emph{left vertices} of $C$ is the sequence $L(C) = (v_1, v_3, v_5 ,\dots, v_{2(t-1) + 1})$. We note that if $\vec{S}$ is a row such that $A_{i}^{(C)}$ has nonzero entry in the $\vec{S}$-th row, then $v_{2h + 1} \in S_h$ for $h = 0, \dots, t-1$.
\end{definition}

\begin{definition}[Intersection patterns]
Let $C \in [n]^{3t}$ and $C' \in [n]^{3t}$.

The \emph{intersection pattern} of $C$ with $C'$, given by $Z \in \Bits^{t}$, is defined as $Z_h = 1$ if $L(C)_h = L(C')_h$, and it is $0$ otherwise.
\end{definition}

We now fix $C' \in [n]^{3t}$ and count the number of rows as a function of the intersection pattern $Z$. 
We observe that in order for a row $\vec{S}$ to have a nonzero entry for both pairs of chains, we must have $\{L(C)_h, L(C')_h\} \subseteq S_{h-1}$ for all $h = 1, \dots, t$.

We observe that for each intersection point, i.e., an $h$ such that $L(C)_h =  L(C')_h$, there are ${n \choose \ell - 1}$ choices for the corresponding set, as it needs to only contain one vertex. For each nonintersection point, i.e., an $h \in \{1, \dots, t\}$ where $L(C)_h \ne  L(C')_h$, we have ${n \choose \ell - 2}$ choices, because the set needs to contain both vertices.  In total, we have ${n \choose \ell - 1}^z {n \choose \ell - 2}^{t - z}$.

Now, this implies an upper bound of ${n \choose \ell - 1}^z {n \choose \ell - 2}^{t - z}/ D_{t}$ on the normalized number of entries, which we can compute as
\begin{flalign*}
&{n \choose \ell - 1}^z {n \choose \ell - 2}^{t - z}  / D_{t} = \frac{{n \choose \ell - 1}^z {n \choose \ell - 2}^{t - z} }{{n - 2 \choose \ell - 1}^{t} } = 2 \left(\frac{ {n \choose \ell - 2}} { {n \choose \ell - 1} }\right)^{t - z} \cdot \left( \frac{ {n \choose \ell - 1} }{ {n - 2 \choose \ell - 1}} \right)^{t} \\
&\leq  \left(\frac{\ell - 1} { n - \ell + 2 }\right)^{t - z} \cdot \left( \frac{ n(n-1)} {(n - \ell + 1)(n - \ell)} \right)^{t} \leq \left( \frac{\ell}{n} \right)^{t - z} \cdot \left(1 + \frac{O(\ell r)}{n} \right) \mper
\end{flalign*}

\parhead{Step 2: bounding the weight of $C'$ with a fixed intersection pattern $Z$.} Let us fix the intersection pattern $Z$. We observe that this determines a set of $\abs{Z}$ vertices that must be contained in $C'$. We will abuse notation and let $Z \in {[n] \cup \{\star\}}^{t}$ denote this sequence of vertices (with $\star$'s for the unfixed entries). Let $t''$ denote the largest $h \in \{1, \dots, t\}$ for which $Z_{t''} \ne \star$. We then have
\ignore[old calculation]{
\begin{flalign*}
& \sum_{C' \in [n]^{3t} : Z \subseteq C} \wt_{\cG_i^{(t)}}(C) \\
&=  \sum_{C'' \in [n]^{3t''} : Z \subseteq C''} \left(  \wt_{\cG_i^{(t'')}}(C'') + \sum_{C' \in [n]^{3(t - t'')}} \wt_{\cG_i^{(t)}}(C'', C') \right) \\
&=  \sum_{C'' \in [n]^{3t''} : Z \subseteq C''} \left(  \wt_{\cG_i^{(t'')}}(C'') + \sum_{(u,C') \in [n]^{3(t - t'')}} \wt_{\cH_i^{(t'')}}(C'',u) \wt_{\cG_u^{(t - t'')}}(u,C') \right) \\
&=  \sum_{C'' \in [n]^{3t''} : Z \subseteq C''} \left(  \wt_{\cG_i^{(t'')}}(C'') + \sum_{u \in [n]} \wt_{\cH_i^{(t'')}}(C'',u) \sum_{C' \in [n]^{3(t - t'') - 1}}  \wt_{\cG_u^{(t - t'')}}(u,C') \right) \\
&\leq \sum_{C'' \in [n]^{3t''} : Z \subseteq C''} \left(  \wt_{\cG_i^{(t'')}}(C'') + \sum_{u \in [n]} \wt_{\cH_i^{(t'')}}(C'',u) \right) \mper
\end{flalign*}
Above, we use that $\sum_{C' \in [n]^{3(t - t'') - 1}}  \wt_{\cG_u^{(t - t'')}}(u,C') \leq 1$, which follows by \cref{obs:weightpreserved}.
}
\begin{flalign*}
& \sum_{C' \in [n]^{3t} : Z \subseteq C} \wt_{\cG_i^{(t)}}(C) \\
&=  \sum_{C'' \in [n]^{3t''} : Z \subseteq C''} \left(  \sum_{C' \in [n]^{3(t - t'')}} \wt_{\cG_i^{(t)}}(C'', C') \right) \\
&=  \sum_{C'' \in [n]^{3t''} : Z \subseteq C''} \left( \sum_{(u,C') \in [n]^{3(t - t'')}} \wt_{\cH_i^{(t'')}}(C'',u) \wt_{\cG_u^{(t - t'')}}(u,C') \right) \\
&=  \sum_{C'' \in [n]^{3t''} : Z \subseteq C''} \left(  \sum_{u \in [n]} \wt_{\cH_i^{(t'')}}(C'',u) \sum_{C' \in [n]^{3(t - t'') - 1}}  \wt_{\cG_u^{(t - t'')}}(u,C') \right) \\
&\leq 4 \sum_{C'' \in [n]^{3t''} : Z \subseteq C''} \left(  \sum_{u \in [n]} \wt_{\cH_i^{(t'')}}(C'',u) \right) \mper
\end{flalign*}
Above, we use that $\sum_{C' \in [n]^{3(t - t'') - 1}}  \wt_{\cG_u^{(t - t'')}}(u,C') \leq 4$, which follows by \cref{obs:weightpreserved}.

We now clearly have that $\sum_{C'' \in [n]^{3t''} : Z \subseteq C''} \left(  \sum_{u \in [n]} \wt_{\cH_i^{(t'')}}(C'',u) \right)  \leq 4(\delta n)^{-\abs{Z}}$. This follows by $\delta$-smoothness, as when we sum over a link with no fixed vertex, it has weight $1$ (unless it is the last link, where it has weight $4$), and when we sum over a link where $Z_h \ne \star$, by $\delta$-smoothness it must have weight at most $1/\delta n$. We thus have a bound of $4 (\delta n)^{-\abs{Z}}$.

\parhead{Putting it all together.}
By combining steps (1) and (2) (and paying an additional ${t \choose z}$ factor to choose the nonzero entries of $Z$), we thus obtain the final bound of
\begin{flalign*}
&\E_{\vec{S} \sim \cD_{C}}[\deg_{i}(\vec{S})] \leq \frac{4}{D_t} \sum_{z = 0}^{t} {t \choose z} \cdot 2\left(1 + \frac{O(\ell r)}{n}\right) \cdot \left(\frac{\ell}{n}\right)^{t - z} \cdot (\delta n)^{-z} \\
&\leq  \left(1 + \frac{O(\ell r)}{n}\right) \frac{8}{D_t}  \left(\frac{\ell}{n}\right)^{t} \cdot \sum_{z = 0}^{t}  \cdot \left(\frac{t}{\delta \ell}\right)^{z} \\
&\leq \left(1 + \frac{O(\ell r)}{n}\right) \frac{8}{D_t}  \left(\frac{\ell}{n}\right)^{t} \cdot \sum_{z = 0}^{r}  \cdot \left(\frac{r}{\delta \ell}\right)^{z} \\
&\leq \left(1 + \frac{O(\ell r)}{n}\right) \frac{16}{D_t}  \left(\frac{\ell}{n}\right)^{t} \mcom
\end{flalign*}
where we use that $\ell \geq 2r/\delta$.

Finally, we need to compute $D_t/N$. We have
\begin{flalign*}
&\frac{D_t}{N} = \frac{ {n - 2 \choose \ell - 1}^{t} \cdot {n \choose \ell}^{r + 1 - t} }{ {n \choose \ell}^{r+1}} = \left(\frac{ {n - 2 \choose \ell - 1} }{ {n \choose \ell} }\right)^t \\
& \left(\frac{\ell(n - \ell) }{n(n-1) }\right)^t \geq \left(\frac{\ell}{n}\right)^t \left(1 - \frac{O(\ell r)}{n} \right) \mper
\end{flalign*}
Thus, we have
\begin{flalign*}
&\E_{\vec{S} \sim \cD_{C}}[\deg_{i}(\vec{S})] \leq \left(1 + \frac{O(\ell r)}{n}\right) \frac{16}{D_t}  \left(\frac{\ell}{n}\right)^{t} \\
&\leq \left(1 + \frac{O(\ell r)}{n}\right) \frac{16}{N} \mcom
\end{flalign*}
which finishes the proof.
\end{proof}

%%%%%%%%%%%%%%%%%%%%%%%%%%%%%%%%%%%%%%%%%%%%%%%%%%%%%%%%%%%%%%%%%%%%%%%%%%%%%%%%
%%%%%%%%%%%%%%%%%%%%%%%%%%%%%%%%%%%%%%%%%%%%%%%%%%%%%%%%%%%%%%%%%%%%%%%%%%%%%%%%
%%%%%%%%%%%%%%%%%%%%%%%%%%%%%%%%%%%%%%%%%%%%%%%%%%%%%%%%%%%%%%%%%%%%%%%%%%%%%%%%
\section{Smooth Partitions of Chains}
\label{sec:decomp}
In this section, we begin the proof of the second equation in \cref{lem:chainxorref}.

For notation, we let $\cH^{(t)}$ be the union, over $u$, of $\cH_u^{(t)}$, and $\wt_{\cH^{(t)}}(\cdot) = \sum_{u \in [n]} \wt_{\cH_u^{(t)}}(\cdot)$.

\begin{lemma}
Let $t \geq 1$ and $d \geq 1$ be integers. There is a subset $P_t \subseteq [n]^{t+1}$ and disjoint sets $\cT^{(Q)} \subseteq [n]^{3t + 1}$ for $Q \in P_t$ such that \begin{inparaenum}[(1)]\item  $Q \subseteq C$ for each $C \in \cT^{(Q)}$, and \item $\wt(Q) \defeq \sum_{C \in \cT^{(Q)}} \wt_{\cH^{(t)}}(C) \geq n d^t \cdot (\delta n)^{-t - 1}$.\end{inparaenum}

We say $Q$ is \emph{heavy} if $Q \in P_t$. Note that if $Q$ is heavy then $Q$ is contiguous and complete by definition.

Finally, as a trivial case, we let $P_0 = [n]$ and for $Q = (v) \in P_0$, we let $\cT^{(Q)} = (v)$. Here, we let $\wt(Q) = 1$.
\end{lemma}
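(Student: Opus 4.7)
The plan is a straightforward greedy/iterative extraction procedure. Define the weight threshold $\tau := n d^t (\delta n)^{-t-1}$. Initialize $\cR := \cH^{(t)}$ (the collection of all length-$t$ chains with nonzero $\wt_{\cH^{(t)}}$) and $P_t := \emptyset$. While there exists a complete tuple $Q \in [n]^{t+1}$ with $W_\cR(Q) := \sum_{C \in \cR,\, Q \subseteq C} \wt_{\cH^{(t)}}(C) \geq \tau$, pick one such $Q$, add it to $P_t$, define $\cT^{(Q)} := \{C \in \cR : Q \subseteq C\}$, and remove the chains of $\cT^{(Q)}$ from $\cR$. Repeat until no such $Q$ remains.

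By construction, the sets $\cT^{(Q)}$ for $Q \in P_t$ are pairwise disjoint, since each chain is removed from $\cR$ the first time a $Q$ containing it is extracted. Property (1) is immediate from the definition of $\cT^{(Q)}$. Property (2) holds because $Q$ was added to $P_t$ only at a moment when the remaining weight of chains containing $Q$ was at least $\tau$, and exactly these chains were placed in $\cT^{(Q)}$. Every $Q$ we choose is a complete tuple in $[n]^{t+1}$ (no $\star$ entries), hence it is automatically complete and contiguous, matching the claim that heavy patterns are contiguous and complete.

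Termination follows because each iteration removes chains of total weight at least $\tau$, while the total weight of all chains in $\cH^{(t)}$ is bounded by
\[
\sum_{u \in [n]} \sum_{C \in \cH_u^{(t)}} \wt_{\cH_u^{(t)}}(C) \;\leq\; n \mcom
\]
using \cref{obs:weightpreserved} and summing over $u \in [n]$. Consequently, the procedure runs for at most $n/\tau = \delta^{t+1} n^{t+1}/d^t$ rounds, giving $\abs{P_t} \leq n/\tau$, and the trivial base case $P_0 = [n]$, $\cT^{(v)} = (v)$ with $\wt((v))=1$ is consistent with the above (each head $u$ already has total chain weight at most $1$).

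There is no real obstacle: the argument is textbook greedy extraction, and the threshold $\tau$ is calibrated so that each extracted $Q$ meets (2) by fiat. The content of the lemma that will matter downstream (presumably in the Kikuchi matrix analysis of \cref{sec:kikuchi,sec:rowpruning}) is the companion smoothness property that falls out of the termination condition, namely that after the procedure halts, every $Q' \in [n]^{t+1}$ satisfies $W_\cR(Q') < \tau$ on the unassigned ``light'' portion of chains. This smoothness of the residual will be what allows the spectral bounds on the Kikuchi matrices to kick in; the extraction lemma itself, as stated, is just the existence of the heavy-pattern partition, and the greedy construction suffices.
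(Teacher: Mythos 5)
Your proof is correct and takes essentially the same route as the paper: a greedy extraction that repeatedly selects a complete pattern $Q$ with residual weight at least the threshold, assigns the matching chains to $\cT^{(Q)}$, and removes them, with disjointness coming from the removal step. The only addition is your explicit termination/count argument via \cref{obs:weightpreserved}, which the paper leaves implicit.
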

\begin{proof}
The proof follows by a simple greedy algorithm. Let $S = [n]^{3t+1}$. If there exists $Q$ such that $\sum_{C \in S : Q \subseteq C} \wt_{\cH^{(t)}}(C) \geq n d^t \cdot (\delta n)^{-t - 1}$, then we remove all such $C$ from $S$ and add them to $\cT^{(Q)}$. We repeat until there is no such $Q$ remaining. We note that $Q$ cannot be used twice in this sequence, as when we pick a $Q$ we remove all $C \in S$ containing $Q$.
\end{proof}

\begin{definition}[Partitions of the chains]
\label{def:partition}
Let $r \geq 1$ be an integer. For each $1 \leq t \leq r$ and heavy $Q \in P_t$, we let $\cH^{(r,Q)}$ denote the set of tuples $C \in [n]^{3r + 1}$ where:
\begin{enumerate}
\item $C$ is extends a tuple in $\cT^{(Q)}$ ``backwards'', i.e., $(C_{3(r-t) + 1}, \dots, C_{3r + 1}) \in \cT^{(Q)}$;
\item $Q$ is maximal: for any $t' > t$ and $Q' \in P_{t'}$, $(C_{3(r - t') + 1}, \dots, C_{3r + 1}) \notin \cT^{(Q')}$.
\end{enumerate}
\end{definition}

\begin{observation}
We have that for each $t = 0, \dots, r$, it holds that
$\sum_{Q \in P_t}  \wt(Q) \leq n$, and so $\sum_{t = 0}^r \sum_{Q \in P_t}  \wt(Q) \leq (r+1)n$.
\end{observation}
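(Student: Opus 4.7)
The plan is to prove this directly from the disjointness of the sets $\cT^{(Q)}$ combined with the total-weight bound from \cref{obs:weightpreserved}. The key observation is that $\wt(Q)$ was defined as $\sum_{C \in \cT^{(Q)}} \wt_{\cH^{(t)}}(C)$, so once we know the $\cT^{(Q)}$'s are pairwise disjoint subsets of $[n]^{3t+1}$, the sum $\sum_{Q \in P_t}\wt(Q)$ is bounded above by the total weight of all chains in $\cH^{(t)}$.

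First, I would handle the case $t \geq 1$. By \cref{obs:weightpreserved}, for each fixed $u \in [n]$, we have $\sum_{C \in [n]^{3t+1}} \wt_{\cH_u^{(t)}}(C) \leq 1$. Summing over $u \in [n]$ and unpacking the definition $\wt_{\cH^{(t)}}(\cdot) = \sum_{u \in [n]} \wt_{\cH_u^{(t)}}(\cdot)$ yields $\sum_{C \in [n]^{3t+1}} \wt_{\cH^{(t)}}(C) \leq n$. Since the lemma producing $P_t$ guarantees that the $\cT^{(Q)}$ are disjoint subsets of $[n]^{3t+1}$, we obtain
\[
\sum_{Q \in P_t} \wt(Q) \;=\; \sum_{Q \in P_t} \sum_{C \in \cT^{(Q)}} \wt_{\cH^{(t)}}(C) \;\leq\; \sum_{C \in [n]^{3t+1}} \wt_{\cH^{(t)}}(C) \;\leq\; n.
\]
For the base case $t=0$, by construction $P_0 = [n]$ and $\wt(Q) = 1$ for each $Q \in P_0$, so $\sum_{Q \in P_0}\wt(Q) = n$ holds trivially.

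Summing the bound $\sum_{Q \in P_t}\wt(Q) \leq n$ over $t = 0, 1, \dots, r$ then yields $\sum_{t=0}^{r}\sum_{Q \in P_t}\wt(Q) \leq (r+1)n$, as claimed. There is no real obstacle here; the content is entirely in the disjointness of the $\cT^{(Q)}$ (which is built into the greedy construction) and in the normalization $\sum_C \wt_{\cH_u^{(t)}}(C) \leq 1$ already established in \cref{obs:weightpreserved} by induction from the defining bound $\sum_C \wt_{H_u}(C) \leq 1$ in \cref{def:hypergraphcollection}.
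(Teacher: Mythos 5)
Your proof is correct and takes essentially the same route as the paper: unfolding $\wt(Q)$ via the disjoint $\cT^{(Q)}$'s, bounding the total by $\sum_C \wt_{\cH^{(t)}}(C) \leq n$ from \cref{obs:weightpreserved}, and summing over $t$. Your explicit treatment of the trivial $t=0$ case (where $\wt(Q) = 1$ is a convention rather than a chain weight) is slightly more careful than the paper, which writes the $\wt(Q) = \sum_{C \in \cT^{(Q)}} \wt_{\cH^{(t)}}(C)$ identity uniformly; otherwise the arguments coincide.
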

\begin{proof}
We observe that for any $t = 0, \dots, r$, it holds that
\begin{flalign*}
\sum_{Q \in P_t} \wt(Q) = \sum_{Q \in P_t} \sum_{C \in \cT^{(Q)}}  \wt_{\cH^{(t)}}(C) \leq \sum_{C \in [n]^{3t + 1}}  \wt_{\cH^{(t)}}(C) = n \mper \qedhere
\end{flalign*}
\end{proof}

We note that \cref{def:partition} gives a partition of the $r$-chains, but the polynomial $\Psi(x)$ uses a restricted set of $(r+1)$-chains, namely those that have their head in $[k]$. In the following definition, we use the partition of the $r$-chains to induce a partition of the special $(r+1)$-chains.
\begin{definition}[Induced partition of $\cH^{(r+1)}_i$]
\label{def:inducedpartition}
Let $r \geq 1$ be an integer. For each $0 \leq t \leq r$ and each $Q \in P_t$, we let $\cH^{(r+1,Q)}_i$ denote the set of length $3r+4$ tuples of the form $(i, w_1, w_2, C)$ where $C \in \cH^{(r,Q)}$.
\end{definition}

\begin{definition}[Bipartite XOR formulas from a contiguously regular partition] \label{def:bip-Psi}
Fix integers $r, d \geq 1$. For each $1 \leq t \leq r$ and $Q \in P_t$, we define $\Psi_{i,Q}$ as the following XOR formula with terms corresponding to $(r+1)$-chains in $\cH^{(r+1,Q)}$ with $x_Q$ ``modded out'' from the corresponding monomial.
\begin{flalign*}
\Psi_{i,Q}(x) =   \sum_{C = (i, v_1, v_2, u_1, \dots, u_{r+1}) \in \cH_i^{(r+1,Q)}}  \wt_{\cH_{i}^{(r+1)}}(C) \cdot x_{v_1} x_{v_2} \prod_{h = 1}^{r} x_{\{v_{2h + 1}, v_{2h + 2}\} \setminus Q_h} \enspace.
\end{flalign*}
Here, we use the convention that if $Q_h = \star$, then $\{v, v'\} \setminus Q_h \coloneqq \{v,v'\}$.

For each $0 \leq t \leq r$,  let $\Psi^{(t)}(x,y) = \sum_{i=1}^k \sum_{Q \in P_t} b_i y_{Q}  \Psi_{i,Q}(x)$. Finally, we let $\Psi(x,y) = \sum_{0 \leq t \leq r} \Psi^{(t)}(x,y)$; here, for every heavy $Q \in P_t$ for some $0 \leq t \leq r$ used in the contiguously regular partition, we introduce a new variable $y_{Q}$.
\end{definition}

We next observe that $\Psi(x,y)$ is a relaxation of the polynomial $\Psi(x)$. Indeed, we have abused notation and labeled them both as ``$\Psi$'' for this reason. This follows from the observation is that $\Psi(x,y)$ is produced by simply replacing the monomial $x_{Q}$ in $\Psi(x)$ with a new variable $y_{Q}$ for each heavy $Q$. More formally, the following holds.
\begin{lemma}
\label{lem:val-lb-labeled}
Fix $x \in \Fits^n$. Then, there is a $y \in \Fits^{\sum_{t = 0}^r \abs{P_t}}$ such that $\Psi(x,y) = \Psi(x)$.
\end{lemma}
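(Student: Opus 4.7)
The lemma says that $\Psi(x,y)$ is a relaxation of $\Psi(x)$ obtained by introducing fresh $\Fits$-valued variables $y_Q$ in place of the ``fixed'' monomial $x_Q$ for each heavy $Q$. The proof is therefore a direct construction: given $x \in \Fits^n$, we exhibit an explicit $y \in \Fits^{\sum_t |P_t|}$ that reproduces the original polynomial.

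\textbf{Construction of $y$.} For every $0 \leq t \leq r$ and every heavy $Q = (Q_1,\ldots,Q_{t+1}) \in P_t$, set
\[
y_Q \;\defeq\; \prod_{h \,:\, Q_h \neq \star} x_{Q_h} \;\in\; \Fits.
\]
Since every $Q \in P_t$ is complete by the partition construction, this is simply $y_Q = x_{Q_1} x_{Q_2} \cdots x_{Q_{t+1}}$. The product is well-defined and lies in $\Fits$.

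\textbf{Verification that $\Psi(x,y) = \Psi(x)$.} I will argue that, chain by chain, the bipartite polynomial produces the same term as the original. Fix $0 \leq t \leq r$, a heavy $Q \in P_t$, an index $i \in [k]$, and a chain $C = (i,v_1,v_2,u_1,v_3,v_4,\ldots,v_{2r+1},v_{2r+2},u_{r+1}) \in \cH_i^{(r+1,Q)}$. By Definition~\ref{def:chainxor} its monomial is
\[
g_C \;=\; x_{u_{r+1}} \prod_{h=0}^{r} x_{v_{2h+1}} x_{v_{2h+2}} \;=\; x_{v_1} x_{v_2} \cdot x_{u_{r+1}} \cdot \prod_{h=1}^{r} x_{v_{2h+1}} x_{v_{2h+2}}.
\]
By the definition of ``contains $Q$'' (applied to the $r$-chain that remains after removing the prefix $(i,v_1,v_2)$), the non-$\star$ entries of $Q$ are exactly the tail $u_{r+1}$ together with one element of $\{v_{2h+1},v_{2h+2}\}$ for each $h \in \{1,\ldots,r\}$ with $Q_h \ne \star$. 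Using $x_v^2 = 1$ in $\Fits$ to cancel each such entry, we obtain the factorization
\[
g_C \;=\; \Bigl(\prod_{h : Q_h \ne \star} x_{Q_h}\Bigr) \cdot x_{v_1} x_{v_2} \prod_{h=1}^{r} x_{\{v_{2h+1},v_{2h+2}\} \setminus Q_h} \;=\; y_Q \cdot M_{C,Q}(x),
\]
where $M_{C,Q}(x)$ is exactly the monomial attached to $C$ in the formula for $\Psi_{i,Q}(x)$ of Definition~\ref{def:bip-Psi}.

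\textbf{Summing up.} Multiply the displayed identity by $b_i \wt_{\cH_i^{(r+1)}}(C)$ and sum over all $C \in \cH_i^{(r+1,Q)}$: the right-hand side yields $b_i \, y_Q \, \Psi_{i,Q}(x)$, while the left-hand side aggregates the contributions of those $(r+1)$-chains with head $i$ that belong to the class indexed by $(t,Q)$. Summing further over $i \in [k]$, over $Q \in P_t$, and over $0 \le t \le r$ produces $\Psi(x,y)$ on one side. On the other side, since the sets $\{\cH_i^{(r+1,Q)}\}_{0\le t\le r,\,Q\in P_t}$ partition $\cH_i^{(r+1)}$ (this is the whole point of Definition~\ref{def:partition} and the maximality clause in Definition~\ref{def:inducedpartition}; the base case $P_0=[n]$ guarantees that every $(r+1)$-chain is captured by some $Q$), the sum of $g_C$ weighted by $\wt_{\cH_i^{(r+1)}}(C)$ over all classes recovers $\sum_{C \in \cH_i^{(r+1)}} \wt_{\cH_i^{(r+1)}}(C) \, g_C$, and summing over $i$ with the $b_i$'s gives $\Psi(x)$ by Definition~\ref{def:chainxor}. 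Hence $\Psi(x,y) = \Psi(x)$, completing the proof.

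\textbf{Expected difficulty.} There is no real obstacle here; the lemma is a bookkeeping identity. The only subtle point is matching the index conventions for ``contains $Q$'' between the length-$r$ chain buried inside a length-$(r{+}1)$ chain, and checking that the partition indeed covers every chain (trivial from $P_0 = [n]$).
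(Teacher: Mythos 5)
Your proof is correct and follows exactly the paper's approach: set $y_Q = x_Q = \prod_{h:Q_h\ne\star} x_{Q_h}$. The paper states this in one line and leaves the verification implicit; you fill in the chain-by-chain factorization $g_C = y_Q \cdot M_{C,Q}(x)$ using $x_v^2 = 1$ and the fact that the classes $\{\cH_i^{(r+1,Q)}\}$ partition $\cH_i^{(r+1)}$, which is precisely the check the paper omits.
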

\begin{proof}
For each $0 \leq t \leq r$, set $y_{Q} = x_Q$ for every $Q \in P_t$, where $x_Q \coloneqq \prod_{h : Q_h \ne \star} x_{Q_h}$. 
\end{proof}

We finish this section by proving the following statement, which intuitively shows that the partitions of the chains are smooth.
\begin{lemma}[Smoothness of partitioned chains]
\label{lem:chainsmoothness}
Fix $i \in [k]$ and $t \in \{0, \dots, r\}$. Let $Z \in ([n] \cup \{\star\})^{r+1} \times [n]$. Then, $\sum_{C \in \cH_i^{(r+1)} : Z \subseteq C} \wt_{\cH_i^{(r+1)}}(C) \leq (\delta n)^{-\abs{Z}}$.

Let $Q \in P_t$ and $\cH_i^{(r+1,Q)}$ be as defined in \cref{def:inducedpartition}. Let $Z \in ([n] \cup \{\star\})^{r+1} \times [n]$ be such that $Z$ extends $Q$, i.e., $Z_{r - t + h} = Q_h$ for all $1 \leq h \leq t + 1$. Then, $\sum_{C \in \cH_i^{(r+1,Q)} : Z \subseteq C} \wt_{\cH_i^{(r+1)}}(C)$ is at most $\wt(Q) d^{\abs{Z} - \abs{Q}} (\delta n)^{-\abs{Z} - 1 + \abs{Q}}$ if $\abs{Z} \leq r+1$, and at most $(\delta n)^{-r - 1}$ if $\abs{Z} = r+2$. Furthermore, if $d^{r+1} \geq n$, then $(\delta n)^{-r - 1} \leq \wt(Q) d^{\abs{Z} - \abs{Q}} (\delta n)^{-\abs{Z} - 1 + \abs{Q}}$.  
\end{lemma}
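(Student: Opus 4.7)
The plan is to establish the three parts of the lemma by combining iterated summation link-by-link with $\delta$-smoothness and the defining weight lower bound for heavy patterns from the greedy construction.

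For the first bound, I would expand $\wt_{\cH_i^{(r+1)}}(C) = \prod_{h=0}^{r} \wt_{H_{u_h}}(v_{2h+1}, v_{2h+2}, u_{h+1})$ (with $u_0 = i$) and iteratively sum over the links from head to tail. At link $h+1$, having fixed $u_h$ from the previous step, the partial sum over $(v_{2h+1}, v_{2h+2}, u_{h+1})$ restricted to the $Z$-compatible triples is bounded by $1/(\delta n)$ whenever the corresponding entry of $Z$ pins a vertex in that link (by $\delta$-smoothness of $H_{u_h}$) and by $1$ otherwise (by the normalization $\sum_{C'} \wt_{H_{u_h}}(C') \leq 1$). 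The tail fix $Z_{r+2}$ is absorbed by the final link, contributing one more factor of $1/(\delta n)$. Multiplying the contributions produces $(\delta n)^{-\abs{Z}}$.

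For the second bound in the regime $\abs{Z} \leq r+1$, I would factorize any $C \in \cH_i^{(r+1,Q)}$ at the pivot $p = u_{r+1-t}$ as
\begin{equation*}
\wt_{\cH_i^{(r+1)}}(C) = \wt_{\cH_i^{(r+1-t)}}(C_{\mathrm{pre}}) \cdot \wt_{\cH_p^{(t)}}(C_{\mathrm{suf}}) \mper
\end{equation*}
Since membership in $\cH_i^{(r+1,Q)}$ forces $C_{\mathrm{suf}} \in \cT^{(Q)}$, summing the suffix weight over all pivots $p$ and all suffixes in $\cT^{(Q)}$ gives exactly $\wt(Q)$. Because $Z$ extends $Q$, the residual $\abs{Z}-\abs{Q}$ vertex fixings of $Z$ lie entirely in the prefix. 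Applying the iterative smoothness argument of Part 1 to the prefix (an $(r+1-t)$-chain from $i$ to $p$ with $\abs{Z}-\abs{Q}$ additional fixings plus the tail fix $p$) bounds each pointwise prefix sum by $(\delta n)^{-(\abs{Z}-\abs{Q})-1}$, independently of $p$. Combining gives $\wt(Q)(\delta n)^{-(\abs{Z}-\abs{Q})-1}$; since $d \geq 1$, this is dominated by the claimed $\wt(Q) d^{\abs{Z}-\abs{Q}} (\delta n)^{-\abs{Z}-1+\abs{Q}}$.

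In the regime $\abs{Z} = r+2$, every link contains at least one fixed vertex (with the last link also pinning the tail), so iterating $\delta$-smoothness contributes $1/(\delta n)$ per link for a total of $(\delta n)^{-(r+1)}$. Finally, the comparison $(\delta n)^{-r-1} \leq \wt(Q) d^{\abs{Z}-\abs{Q}}(\delta n)^{-\abs{Z}-1+\abs{Q}}$ follows by substituting $\abs{Q} = t+1$ (heavy $Q$ is complete) and $\wt(Q) \geq n d^t (\delta n)^{-t-1}$; the inequality reduces after rearrangement to $d^{r+1} \geq \delta^2 n$, which is implied by the hypothesis $d^{r+1} \geq n$ together with $\delta \leq 1$. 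The main bookkeeping care will be in correctly attributing smoothness factors when pivots coincide with a fixed link-vertex and in verifying that the prefix/suffix factorization respects the $Z$-constraints; both reduce to clean applications of $\delta$-smoothness link-by-link.
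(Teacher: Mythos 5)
There is a genuine gap in your treatment of the case $\abs{Z} \leq r+1$. You factorize at the single pivot $p = u_{r+1-t}$ and bound the prefix sum by $(\delta n)^{-(\abs{Z}-\abs{Q})-1}$, claiming each of the $\abs{Z}-\abs{Q}$ residual fixings and the tail fix $p$ contributes an independent factor of $1/(\delta n)$. This fails whenever $Z$ fixes a vertex in the \emph{last} link of the prefix (link index $r-t$): that link then carries two fixed vertices — the $Z$-specified link vertex \emph{and} the pivot $p = u_{r+1-t}$ — but $\delta$-smoothness of $H_{u_{r-t}}$ only yields a single factor $1/(\delta n)$ for that link, not $1/(\delta n)^2$ (the hypergraph collection has no pairwise-smoothness guarantee). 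So when the contiguous non-$\star$ block of $Z$ immediately preceding the $Q$-part is nonempty, your prefix bound degrades to $(\delta n)^{-(\abs{Z}-\abs{Q})}$ and your total becomes $\wt(Q)\,(\delta n)^{-(\abs{Z}-\abs{Q})}$, which exceeds the target $\wt(Q)\, d^{\abs{Z}-\abs{Q}} (\delta n)^{-\abs{Z}-1+\abs{Q}}$ unless $d^{\abs{Z}-\abs{Q}} \geq \delta n$; for small $\abs{Z}-\abs{Q}$ and small $d$ (e.g.\ $d = 2$ in the perfect-completeness regime) this is false. Your closing remark that the pivot/link-vertex collision "reduces to clean applications of $\delta$-smoothness link-by-link" is exactly where the argument breaks: smoothness alone cannot absorb both fixings.

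The missing ingredient is the maximality condition built into $\cH_i^{(r+1,Q)}$ via \cref{def:partition}. The paper's proof splits $Z = (Z^{(1)}, \star, Z^{(2)}, Q)$ with $Z^{(2)}$ the $\star$-free block adjacent to $Q$, and correspondingly splits the chain into three pieces at two pivots, with the part carrying the $Z^{(2)}$ fixings absorbed into the suffix. Since $Q$ is maximal for $C$, the extended pattern $(Z^{(2)},Q)$ cannot be heavy, so the total weight of admissible length-$(t+\abs{Z^{(2)}})$ suffixes containing $(Z^{(2)},Q)$ is bounded by the heaviness threshold $n d^{t+\abs{Z^{(2)}}}(\delta n)^{-t - \abs{Z^{(2)}} - 1} \leq \wt(Q) d^{\abs{Z^{(2)}}}(\delta n)^{-\abs{Z^{(2)}}}$. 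This is what produces the $d^{\abs{Z}-\abs{Q}}$ factor in the claimed bound — it is not a slack in the statement you can drop, but the compensation for precisely the shortfall your two-way split creates. The remaining $(\delta n)^{-1}$ then comes cleanly from a middle connecting chain with fixed head and tail and no $Z$-constraints on its links, so there is no competition between fixings. Your argument for Part 1, the $\abs{Z} = r+2$ case, and the final comparison $d^{r+1} \geq \delta^2 n$ all track the paper, but the heart of Part 2 requires the heaviness/maximality mechanism that you omit.
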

\begin{remark}
We remark that this is place where we need the assumption that $d^{r+1} \geq n$.
\end{remark}
 \begin{proof}
The first statement follows immediately by $\delta$-smoothness of the original hypergraphs. Indeed, for any $u \in [n]$ and $v \in [n]$, we have that $\sum_{C \in [n]^3 : v \in C} \wt_{H_u}(C) \leq 1/\delta n$. We now have that
\begin{flalign*}
&\sum_{C \in \cH_i^{(r+1)} : Z \subseteq C} \wt_{\cH_i^{(r+1)}}(C) \\
&\leq  \sum_{\substack{(v_{1}, v_{2}, u_1) \\ Z_1 \in \{v_{1}, v_{2}\}}} \wt_{H_{i}}(v_{1}, v_{2}, u_{1}) \cdot \Big( \sum_{\substack{(v_{3}, v_{4}, u_2) \\ Z_2 \in \{v_{3}, v_{4}\}}} \wt_{H_{u_1}}(v_{3}, v_{4}, u_{2})  \Big( \cdots \Big(  \sum_{\substack{(v_{2r+1}, v_{2r+2}, u_{r+1}) \\ Z_r \in \{v_{2r+1}, v_{2r+2}\}}} \wt_{H_{u_{r}}}(v_{2r+1}, v_{2r+2}, u_{r+1}) \Big) \cdots \Big) \Big) \mper
\end{flalign*}
We notice that the $h$-th term is at most $1/\delta n$ if $Z_h \ne \star$, and otherwise it is at most $1$. So, in total, we get a bound of $(\delta n)^{-\abs{Z}}$.

We now prove the second part of the statement.
 Let $\abs{Q} = t+1$. We have two cases. 
 
 \parhead{Case 1: $Z$ does not contain a $\star$ entry.} This means that $\abs{Z} = r + 2$. Let $Z' \in [n]^{r+1} \times \{\star\}$ be $Z$ with the last entry replaced by a $\star$, i.e., $Z'_h = Z_h$ for all $1 \leq h \leq r + 1$, and $Z'_{r+2} = \star$. We observe that 
 \begin{flalign*}
&\sum_{C \in \cH_i^{(r+1,Q)} : Z \subseteq C} \wt_{\cH_i^{(r+1)}}(C) \leq \sum_{C \in \cH_i^{(r+1)} : Z \subseteq C} \wt_{\cH_i^{(r+1)}}(C) \leq \sum_{C \in \cH_i^{(r+1)} : Z' \subseteq C} \wt_{\cH_i^{(r+1)}}(C) \leq (\delta n)^{-\abs{Z'}} = (\delta n)^{-r - 1} \mcom
 \end{flalign*}
 where we use the first statement that we have already shown. To finish the argument in this case, we need to argue that $\wt(Q) d^{\abs{Z} - \abs{Q}} (\delta n)^{-\abs{Z} - 1 + \abs{Q}} \geq (\delta n)^{-r - 1}$.
 Indeed, we have by definition that $\wt(Q) \geq n d^{\abs{Q} - 1} (\delta n)^{-\abs{Q}}$, and so
 \begin{flalign*}
\wt(Q) d^{\abs{Z} - \abs{Q}} (\delta n)^{-\abs{Z} - 1 + \abs{Q}} \geq \frac{1}{\delta} d^{\abs{Z} - 1} (\delta n)^{-\abs{Z}} = (\delta n)^{-r - 1} \cdot \frac{1}{\delta^2 n} d^{r+1} \mper
 \end{flalign*}
 Thus, the desired inequality holds if $d^{r+1} \geq n$.

 \parhead{Case 2: $Z$ contains a $\star$ entry.} This means that $\abs{Z} \leq r + 1$. Then, we have that $Z = (Z^{(1)}, \star, Z^{(2)}, Q)$, where $Z^{(2)}$ contains no $\star$ entries. 
 
 We observe that each $C \in \cH_i^{(r+1,Q)}$ with $Z \subseteq C$ can be split into $3$ parts: $C = (i, C^{(1)}, C^{(2)}, C^{(3)})$, where $C^{(3)} \in \cT^{(Q)}$ is a length $t$ chain, $(i, C^{(1)})$ is a length $\abs{Z^{(1)}}$ chain with head $i$, and $C^{(2)}$ is a length $r - t - \abs{Z^{(1)}}$ chain whose head is the tail of $C^{(1)}$ and whose tail is the head of $C^{(3)}$. By $\delta$-smoothness, $\sum_{C^{(1)} : Z^{(1)} \subseteq C^{(1)}} \wt_{\cH_i^{(\abs{Z^{(1)}})}}(i, C^{(1)}) \leq (\delta n)^{-\abs{Z^{(1)}}}$.
 
We either have that $(Z^{(2)}, Q)$ is $Q$, i.e., $Z^{(2)}$ is empty, or that $(Z^{(2)}, Q)$ is not $Q$. In the first case, $\sum_{C^{(3)} \in \cT^{(Q)}} \wt_{\cH^{(t)}}(C^{(3)}) = \wt(Q)$ by definition (note that if $t = 0$, then $C^{(3)}$ is just the single vertex $v$ where $Q = (v)$, and we have defined $\wt(Q) = 1$). In the second case, we observe that by \cref{def:partition,def:inducedpartition}, $(Z^{(2)}, Q)$ cannot be heavy. Indeed, if it was, then either $C^{(3)} \in \cT^{(Z^{(2)}, Q)}$, and so $C \in \cH_i^{(r+1, (Z^{(2)}, Q))}$, or else there is some other $Q'$ with $\abs{Q'} = \abs{Z^{(2)}} + t + 1$ with $C^{(3)} \in \cT^{(Q')}$, in which case we would have $C \in \cH_i^{(r+1, Q')}$. We note that here we must use that $Z$ contains at least one $\star$, so that $\abs{Z^{(2)}} + \abs{Q} \leq r + 1$. This is because all heavy $Q'$ have $\abs{Q'} \leq r + 1$, as they are defined for the length $r$-chains.

Thus, $(Z^{(2)}, Q)$ cannot be heavy. It then follows that $\sum_{C^{(3)} : C^{(3)} \notin \cT^{(Q')} \ \forall Q' \in P_{t + \abs{Z^{(2)}}}} \wt_{\cH^{(t)}}(C^{(3)}) \leq n d^{\abs{Z^{(2)}} + t} (\delta n)^{-\abs{Z^{(2)}} - t - 1} \leq \wt(Q) d^{\abs{Z^{(2)}}} (\delta n)^{- \abs{Z^{(2)}}}$. We note that any $C \in \cH_i^{(r+1,Q)}$ must have $C^{(3)} \notin \cT^{(Q')} \ \forall Q' \in P_{t + \abs{Z^{(2)}}}$, as otherwise we would violate Item (2) in \cref{def:partition} since $\abs{Z^{(2)}} \geq 1$.

 To finish the proof, we observe that once $C^{(1)}$ and $C^{(3)}$ are chosen, the total weight of all ``valid'' $C^{(2)}$, i.e., $C^{(2)}$'s that could complete the chain to form $C \in \cH_i^{(r+1,Q)}$, is at most $1/\delta n$. Indeed, this is because the head of $C^{(2)}$ is the tail of $C^{(1)}$ and its tail is the head of $C^{(3)}$, and the total weight of all length $h$ chains, for any $h$, with a fixed head $u$ and fixed tail $v$ is at most $1/\delta n$ by $\delta$-smoothness. Thus, in total, we have shown that $\sum_{C \in \cH_i^{(r+1,Q)}} \wt_{\cH_i^{(r+1)}}(C) \leq (\delta n)^{-\abs{Z^{(2)}}} \cdot (\delta n)^{-1} \cdot \wt(Q) d^{\abs{Z^{(2)}}} (\delta n)^{- \abs{Z^{(2)}}} = \wt(Q) \cdot d^{\abs{Z} - \abs{Q}} (\delta n)^{-\abs{Z} - 1 + \abs{Q}}$.
 \end{proof}

%%%%%%%%%%%%%%%%%%%%%%%%%%%%%%%%%%%%%%%%%%%%%%%%%%%%%%%%%%%%%%%%%%%%%%%%%%%%%%%%
%%%%%%%%%%%%%%%%%%%%%%%%%%%%%%%%%%%%%%%%%%%%%%%%%%%%%%%%%%%%%%%%%%%%%%%%%%%%%%%%
%%%%%%%%%%%%%%%%%%%%%%%%%%%%%%%%%%%%%%%%%%%%%%%%%%%%%%%%%%%%%%%%%%%%%%%%%%%%%%%%
\section{Spectral Refutation via Kikuchi Matrices}
\label{sec:kikuchi}
In \cref{sec:decomp}, we defined polynomials $\Psi^{(t)}(x,y)$ and a map from $x \mapsto y$ such that $\Psi(x) = \sum_{t = 0}^r \Psi^{(t)}(x,y)$ when $y$ is the image of $x$ under this map. Thus, to prove \cref{lem:chainxorref}, we need to upper bound $\E_b[\val (\sum_{t = 0}^r \Psi^{(t)}(x,y))]$. In this section, we will use the Kikuchi matrix method to bound this quantity, proving the second half of \cref{lem:chainxorref}.
%%%%%%%%%%%%%%%%%%%%%%%%%%%%%%%%%%%%%%%%%%%%%%%%%%%%%%%%%%%%%%%%%%%%%%%%%%%%%%%%
%%%%%%%%%%%%%%%%%%%%%%%%%%%%%%%%%%%%%%%%%%%%%%%%%%%%%%%%%%%%%%%%%%%%%%%%%%%%%%%%
\subsection{Step 1: the Cauchy--Schwarz trick} First, we show that we can relate $\sum_{t = 0}^r \Psi^{(t)}(x,y)$ to a certain ``cross-term'' polynomial obtained via applying the Cauchy--Schwarz inequality.

\begin{lemma}[Cauchy--Schwarz trick] \label{lem:cauchy-schwarz}
Let $M$ be a maximum directed matching\footnote{A directed matching is a matching, only the edges are additionally directed}\footnote{This is a perfect matching if $k$ is even, and will leave one element of $[k]$ unmatched if $k$ is odd.} of $[k]$ and let $f_M$ be the cross-term polynomial defined as
\begin{flalign*}
&f_M^{(t)} = \sum_{\{i ,  j\} \in M} b_i b_j \sum_{Q \in P_t}\frac{1}{\wt(Q)}\Psi_{i,Q}(x) \Psi_{j,Q} (x) \mcom \\
&f_M = \sum_{t = 0}^r f^{(t)}_M \mper
\end{flalign*}
Then for every $x,y$ with $\pm 1$ values, it holds that
\begin{flalign*}
&\left(\sum_{t = 0}^r \Psi^{(t)}(x,y)\right)^2 \leq n(r+1)\Paren{\frac{k(r+1)}{\delta^2 n}  + 2k\E_M[f_M]} \mcom\end{flalign*}
where the expectation $\E_{M}$ is over a uniformly random maximum directed matching $M$.
\end{lemma}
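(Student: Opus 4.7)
The plan is a two-stage Cauchy--Schwarz. First, apply the standard inequality to the sum over $t$:
\[
\Bigl(\sum_{t=0}^{r} \Psi^{(t)}(x,y)\Bigr)^2 \leq (r+1) \sum_{t=0}^{r} \bigl(\Psi^{(t)}(x,y)\bigr)^2,
\]
which accounts for the outer factor of $r+1$. For each fixed $t$, I rewrite $\Psi^{(t)}(x,y) = \sum_{Q \in P_t} y_Q \alpha_Q$ with $\alpha_Q := \sum_{i=1}^{k} b_i \Psi_{i,Q}(x)$, and apply a \emph{weighted} Cauchy--Schwarz using the weights $\wt(Q)$:
\[
(\Psi^{(t)}(x,y))^2 \leq \Bigl(\sum_{Q} y_Q^2 \wt(Q)\Bigr) \Bigl(\sum_{Q} \tfrac{\alpha_Q^2}{\wt(Q)}\Bigr) \leq n \sum_{Q \in P_t} \frac{\alpha_Q^2}{\wt(Q)},
\]
using that $y_Q^2 = 1$ together with the observation from Section~6 that $\sum_{Q \in P_t} \wt(Q) \leq n$. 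This produces the remaining factor of $n$ on the right-hand side of the target.

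Next, I expand each $\alpha_Q^2$ into diagonal and off-diagonal pieces. For the diagonal $\sum_i \Psi_{i,Q}(x)^2$, I invoke the smoothness bound of \cref{lem:chainsmoothness} with $Z = Q$ (so $|Z| = |Q|$), which gives
\[
|\Psi_{i,Q}(x)| \leq \sum_{C \in \cH_i^{(r+1,Q)}} \wt_{\cH_i^{(r+1)}}(C) \leq \frac{\wt(Q)}{\delta n},
\]
since every monomial appearing in $\Psi_{i,Q}$ takes values in $\{\pm 1\}$. Consequently $\frac{1}{\wt(Q)} \sum_i \Psi_{i,Q}(x)^2 \leq k\,\wt(Q)/(\delta n)^2$, and summing over $(t,Q)$ together with $\sum_{t,Q} \wt(Q) \leq n(r+1)$ contributes at most $k(r+1)/\delta^2$ to $\sum_t (\Psi^{(t)})^2$. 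Multiplying by the outer factor $r+1$ produces $k(r+1)^2/\delta^2 = n(r+1)\cdot \tfrac{k(r+1)}{\delta^2 n}$, matching the first term of the claimed bound.

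For the off-diagonal, symmetry in $i,j$ gives
\[
\sum_{i \neq j} b_i b_j \Psi_{i,Q}(x)\Psi_{j,Q}(x) = 2(k-1)\,\E_M\!\Bigl[\sum_{\{i,j\} \in M} b_i b_j \Psi_{i,Q}(x)\Psi_{j,Q}(x)\Bigr],
\]
using the elementary fact that a uniformly random maximum matching on $[k]$ contains any fixed unordered pair with probability $1/(k-1)$. Interchanging the expectation with the sums over $t$ and $Q$, pulling out the $1/\wt(Q)$, and recognizing the inner expression as $f_M$ yields a contribution of at most $2(k-1)\E_M[f_M] \leq 2k\,\E_M[f_M]$. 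Combined with the outer factors of $n$ and $r+1$ collected earlier, this produces the second target term $2kn(r+1)\,\E_M[f_M]$.

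No step of this argument is expected to be delicate; the real work has already been done in the definition of the partition $P_t$ and in \cref{lem:chainsmoothness}. The main point of care is choosing the $\wt(Q)$-weighted (rather than unweighted) Cauchy--Schwarz in the second step, so that the ``norm'' factor is the modest $\sum_Q \wt(Q) \leq n$ rather than the potentially much larger $|P_t|$, and so that \cref{lem:chainsmoothness} applied at $Z=Q$ controls the diagonal at the precise scale $\wt(Q)/(\delta n)$ needed. The remaining effort is just bookkeeping the constants $(r+1)$, $n$, and the slack $k-1 \leq k$.
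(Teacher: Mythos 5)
Your proof is correct and takes essentially the same approach as the paper: the only cosmetic difference is that you apply Cauchy--Schwarz twice (first over $t$, then over $Q$ within each $t$), whereas the paper applies it once to the double sum over $(t,Q)$, but both yield the identical bound $n(r+1)\sum_{t,Q}\alpha_Q^2/\wt(Q)$ since $\sum_{Q\in P_t}\wt(Q)\leq n$ for each $t$. The diagonal bound via \cref{lem:chainsmoothness}, the off-diagonal extraction via a uniformly random matching, and the final assembly all match the paper's argument (with the same mild imprecision about the exact matching probability, which as in the paper does not affect the stated constant).
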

\begin{proof}
We will first apply the Cauchy--Schwarz inequality to eliminate the $y$ variables:
\begin{align*}
&\left(\sum_{t = 0}^r \Psi^{(t)}(x,y)\right)^2 = \Paren{\sum_{t = 0}^r \sum_{Q \in P_t} y_{Q} \cdot \sqrt{\wt(Q)} \Paren{\sum_{i=1}^k b_i \frac{\Psi_{i,Q}}{\sqrt{\wt(Q)}}}}^2\\
&\leq \Paren{\sum_{t = 0}^r \sum_{Q \in P_t} y_Q^2 \wt(Q)} \Paren{\sum_{t = 0}^r \sum_{Q \in P_t}\left(\sum_{i=1}^k b_i \frac{\Psi_{i,Q}}{\sqrt{\wt(Q)}}\right)^2}\\
&\leq \Paren{(r+1)n}\Paren{\sum_{t = 0}^r \sum_{Q \in P_t}\left(\sum_{i=1}^k b_i \frac{\Psi_{i,Q}}{\sqrt{\wt(Q)}}\right)^2}\\
&\leq n(r + 1)\Paren{\sum_{t = 0}^r \sum_{Q \in P_t} \frac{1}{\wt(Q)} \sum_{i, j = 1}^k b_i b_j \Psi_{i,Q} \Psi_{j,Q}} \\
&\leq n(r+1)\Paren{\sum_{t = 0}^r \sum_{Q \in P_t} \frac{1}{\wt(Q)} \sum_{i = 1}^k \Psi_{i,Q}^2 + \sum_{t = 0}^r \sum_{Q \in P_t} \frac{1}{\wt(Q)} \sum_{i \ne j \in [k]} b_i b_j \Psi_{i,Q} \Psi_{j,Q}} \mper
\end{align*}
By \cref{lem:chainsmoothness}, we have that
\begin{flalign*}
&\abs{\Psi_{i,Q}(x)} \leq  \sum_{C \in \cH_i^{(r+1,Q)}} \wt_{\cH_i^{(r+1)}}(C) \leq \wt(Q) \cdot (\delta n)^{-1} \mcom
\end{flalign*}
Hence, $\sum_{t = 0}^r \sum_{Q \in P_t} \frac{1}{\wt(Q)} \sum_{i = 1}^k \Psi_{i,Q}^2 \leq \frac{k}{\delta^2 n^2} \sum_{t = 0}^r \sum_{Q \in P_t} \wt(Q) \leq \frac{k(r+1)}{\delta^2 n}$.
\later{\peter{note: we can remove the factor of $r$ here if we count a bit more carefully}}

To finish the proof, we observe that the probability that a pair $(i,j)$ is contained in a directed matching $M$ is at least $\frac{1}{2k}$.
\end{proof}

%%%%%%%%%%%%%%%%%%%%%%%%%%%%%%%%%%%%%%%%%%%%%%%%%%%%%%%%%%%%%%%%%%%%%%%%%%%%%%%%
%%%%%%%%%%%%%%%%%%%%%%%%%%%%%%%%%%%%%%%%%%%%%%%%%%%%%%%%%%%%%%%%%%%%%%%%%%%%%%%%
\subsection{Step 2: defining the Kikuchi matrices} It thus remains to bound $\E_b[\val(f_M)]$ for an arbitrary directed maximum matching $M$. 

We define the Kikuchi matrices that we consider below.
\begin{definition}
\label{def:kikuchi}
Let $i,j \in [k]$ and $t \in \{0, \dots, r\}$. Let $Q \in P_t$.

Let $C = (i, v_1, v_2, u_1, v_3, v_4, \dots, u_{r+1}) \in \cH^{(r+1, Q)}_{i}$ and $C' = (j, v'_1, v'_2, u_1, v'_3, v'_4, \dots, u_{r+1}) \in \cH^{(r+1,Q)}_{j}$. We let $A_{i,j}^{(C, C', Q)} \in \Bits^{{{[n]} \choose \ell}^{2r+2}}$ be the matrix with rows and columns by indexed by $(2r+2)$-tuples of sets $(S_0, \dots, S_{r}, S'_0, \dots, S'_{r})$  of size exactly $\ell$ defined as follows.

We set $A_{i,j}^{({C}, {C'}, Q)}((S_0, \dots, S_{r}, S'_0, \dots, S'_{r}), (T_0,  \dots, T_{r}, T'_0, \dots T'_{r}))$ equal to $1$ if the following holds, and otherwise we set this entry to be $0$. In what follows, we let $C_h = \{v_{2h+1}, v_{2h+2}\}$, and we note that $\abs{C_h} = 2$ for any chain with nonzero weight, by \cref{def:hypergraph}.
\begin{enumerate}
\item For $h = 0, \dots, r - t$, we have $S_h \oplus T_h = C_h$ and $v_{2h+1} \in S_h$, $v_{2h+2} \in T_h$.
\item For $h = 0, \dots, r - t$, we have $S'_h \oplus T'_h = C'_h$ and $v'_{2h+1} \in S'_h$, $v'_{2h+2} \in T'_h$,
\item For $h = 1, \dots, t$, the following holds. Let $w_h = C_{r - t + h} \setminus Q_h$, and $w'_h = C'_{r - t + h} \setminus Q_h$. We have $S_{r - t + h} = R \cup \{w_h\}$, $T_{r - t + h} = R \cup \{w'_h\}$, and $S'_{r - t + h} = T'_{r - t + h}$.\footnote{It is possible that one could have $w_h = w'_h$ here. In that case, we pick a canonical extra vertex $v$, and require that $v \notin R$ as well. This is to ensure that the number of choices here for $S_{r - t + h}$ and $S'_{r - t + h}$ is \emph{exactly} ${n - 2 \choose \ell - 1} {n \choose \ell}$; otherwise it would be ${n - 1 \choose \ell -1} {n \choose \ell}$. The difference in the two cases is immaterial but it is convenient to have an exact count.}
\end{enumerate}
We let $A^{(t)}_{i,j} = \sum_{Q \in P_t} \frac{1}{\wt(Q)} \sum_{C \in \cH^{(r+1,Q)}_{i} ,{C'} \in \cH^{(r+1,Q)}_{j}} \wt_{\cH^{(r+1)}_i}(C)\wt_{\cH^{(r+1)}_j}(C') \cdot A_{i,j}^{({C}, {C'}, Q)}$ and $A_{i,j} = \sum_{t = 0}^r \frac{1}{D_t} A_{i,j}^{(t)}$, where $D_t = {n - 2 \choose \ell - 1}^{2r + 2 - t} \cdot {n \choose \ell}^t$. For any matching $M$ on $[k]$, let $ A_M = \sum_{(i,j) \in M} b_i b_j A_{i,j}$. We will abuse notation and let $A \defeq A_M$.
\end{definition}

The following lemma shows that we can express $f_M(x)$ as a (scaling of a) quadratic form on the matrix $A^{(t)}$.
\begin{lemma}
\label{lem:completeness}
Let $x \in \Fits^n$, and let $x' \in \Fits^{N}$, where $N = {n \choose \ell}^{2r + 2}$, denote the vector where the $(S_0, S_1, \dots, S_{r}, S'_0, S'_1, \dots, S'_{r})$-th entry of $x'$ is $\prod_{h = 0}^{r} x_{S_h} x_{S'_h}$.
Let $i,j \in [k]$ and $t \in \{0, \dots, r\}$. Let $Q \in P_t$, and let
let $C = (i, v_1, v_2, u_1, v_3, v_4, \dots, u_{r+1}) \in \cH^{(r+1, Q)}_{i}$ and $C' = (j, v'_1, v'_2, u_1, v'_3, v'_4, \dots, u_{r+1}) \in \cH^{(r+1,Q)}_{j}$. Then,  
\begin{flalign*}
{x'}^{\top} A_{i,j}^{(C, C', Q)} x' = D_t x_{v_1} x_{v_2} \prod_{h = 1}^{r} x_{\{v_{2h + 1}, v_{2h + 2}\} \setminus Q_h} \cdot x_{v'_1} x_{v'_2} \prod_{h = 1}^{r} x_{\{v'_{2h + 1}, v'_{2h + 2}\} \setminus Q_h} \mcom
\end{flalign*}
 i.e., the product of the monomials associated to $C$ and $C'$, modded out by $Q_h$, where $D_t = {n - 2 \choose \ell - 1}^{2r + 2 - t} \cdot {n \choose \ell}^t$.
Moreover, for any matrix $B_{i,j}^{(C, C', Q)}$ obtained by ``zeroing out'' exactly $\alpha D_t$ entries of $A_{i, j}^{(C, C', Q)}$, the equality holds with a factor of $1 - \alpha$ on the right.

 In particular, ${x'}^{\top} A x' = f_M(x)$.
\end{lemma}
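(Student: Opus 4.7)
The proof is a routine verification with two parts: establishing that each nonzero entry of $A_{i,j}^{(C,C',Q)}$ contributes the target monomial to the quadratic form, and counting the nonzero entries to show this count equals $D_t$.

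First I would verify the monomial identity. Fix a nonzero entry, i.e., a tuple $(\vec S, \vec T) = ((S_0,\dots,S_r, S'_0,\dots,S'_r),(T_0,\dots,T_r, T'_0,\dots,T'_r))$ satisfying conditions (1)-(3). Using $x_v^2 = 1$, we have $x_{S_h} x_{T_h} = x_{S_h \oplus T_h}$ and likewise for the primed sets, so
\begin{equation*}
x'_{\vec S}\, x'_{\vec T} \;=\; \prod_{h=0}^{r} x_{S_h \oplus T_h}\, x_{S'_h \oplus T'_h}\mper
\end{equation*}
For $h \in \{0,\dots,r-t\}$, condition (1) gives $S_h \oplus T_h = C_h = \{v_{2h+1}, v_{2h+2}\}$ and condition (2) gives $S'_h \oplus T'_h = C'_h = \{v'_{2h+1}, v'_{2h+2}\}$. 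For $h = r-t+h'$ with $h' \in \{1,\dots,t\}$, condition (3) gives $S_h \oplus T_h = \{w_{h'}, w'_{h'}\} = C_h \oplus C'_h \setminus$ (shared $Q_{h'}$), and $S'_h \oplus T'_h = \emptyset$. Substituting $w_{h'} = C_{r-t+h'} \setminus Q_{h'}$ and $w'_{h'} = C'_{r-t+h'} \setminus Q_{h'}$, and using the convention that $Q_h = \star$ for $h \leq r-t$, the product collapses exactly to
\begin{equation*}
x_{v_1} x_{v_2} \prod_{h=1}^{r} x_{\{v_{2h+1}, v_{2h+2}\} \setminus Q_h}\cdot x_{v'_1} x_{v'_2} \prod_{h=1}^{r} x_{\{v'_{2h+1}, v'_{2h+2}\} \setminus Q_h}\mcom
\end{equation*}
which matches the target monomial.

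Next I would count the number of nonzero entries. For each $h \in \{0,\dots,r-t\}$, condition (1) forces $S_h \cap T_h$ to be an $(\ell-1)$-subset of $[n] \setminus \{v_{2h+1}, v_{2h+2}\}$, giving ${n-2 \choose \ell-1}$ choices; likewise condition (2) gives ${n-2 \choose \ell-1}$ choices for each primed index. For each $h' \in \{1,\dots,t\}$, condition (3) forces $R = S_{r-t+h'} \cap T_{r-t+h'}$ to be an $(\ell-1)$-subset avoiding $\{w_{h'}, w'_{h'}\}$ (or a canonical vertex when $w_{h'} = w'_{h'}$), giving exactly ${n-2 \choose \ell-1}$ choices, while $S'_{r-t+h'} = T'_{r-t+h'}$ can be any $\ell$-subset, giving ${n \choose \ell}$ choices. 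Multiplying, the total is
\begin{equation*}
{n-2 \choose \ell-1}^{2(r-t+1)} \cdot \left({n-2 \choose \ell-1}\cdot {n \choose \ell}\right)^t \;=\; {n-2 \choose \ell-1}^{2r+2-t} \cdot {n \choose \ell}^t \;=\; D_t\mper
\end{equation*}
Since every nonzero entry contributes the same monomial, ${x'}^\top A_{i,j}^{(C,C',Q)} x' = D_t \cdot (\text{target monomial})$, as claimed. The ``moreover'' statement follows because zeroing out $\alpha D_t$ entries removes exactly an $\alpha$-fraction of the contributions.

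Finally, to derive ${x'}^\top A x' = f_M(x)$, I would linearly combine: by definition of $A_{i,j}^{(t)}$ and $A_{i,j}$, dividing by $D_t$ cancels the $D_t$ factor above, summing over $C \in \cH_i^{(r+1,Q)}$, $C' \in \cH_j^{(r+1,Q)}$ against the weights $\wt_{\cH_i^{(r+1)}}(C)\wt_{\cH_j^{(r+1)}}(C')$ and the $1/\wt(Q)$ factor reproduces the summand $\frac{1}{\wt(Q)}\Psi_{i,Q}(x)\Psi_{j,Q}(x)$, and finally summing over $Q \in P_t$, over $t \in \{0,\dots,r\}$, and over pairs $\{i,j\} \in M$ (together with the signs $b_i b_j$) exactly yields $f_M(x)$. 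The main technical subtlety is simply keeping the indexing straight between the ``free'' prefix and the ``heavy'' suffix of the chain and applying the $Q_h = \star$ convention consistently; no nontrivial combinatorial inequality is required.
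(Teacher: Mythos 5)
Your proposal is correct and follows essentially the same route as the paper's proof: verify via $x_S x_T = x_{S \oplus T}$ that each nonzero entry contributes the target monomial, count the nonzero entries factor-by-factor across conditions (1)--(3) to obtain $D_t$, and then linearly combine with the appropriate weights to recover $f_M(x)$. The only cosmetic difference is your extra remark that $S_h \oplus T_h = C_h \oplus C'_h$ for the $Q$-part — the paper computes $x_{S_{r-t+h}\oplus T_{r-t+h}} x_{S'_{r-t+h}\oplus T'_{r-t+h}} = x_{C_{r-t+h}\setminus Q_h}\, x_{C'_{r-t+h}\setminus Q_h}$ directly — but this is an equivalent calculation.
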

\begin{proof}
Let  $\vec{S} = (S_0, S_1, \dots, S_{r}, S'_0, S'_1, \dots, S'_{r})$ and $\vec{T} = (T_0,  \dots, T_{r}, T'_0, \dots T'_{r})$ be such that $A_{i,j}^{(\vec{C},\vec{C'}, Q)}(\vec{S},\vec{T}) = 1$. 
Then, we have that
\begin{flalign*}
x'_{\vec{S}} x'_{\vec{T}} &= \prod_{h = 0}^{r} x_{S_h} x_{T_h} x_{S'_h} x_{T'_h}  =  \prod_{h = 0}^{r-t} x_{S_h \oplus T_h} x_{S'_h \oplus T'_h} \prod_{h = 1}^{t} x_{S_{r - t + h} \oplus T_{r - t + h}} x_{S'_{r - t + h} \oplus T'_{r - t + h}} \\
&= \prod_{h = 0}^{r-t} x_{C_h} x_{C'_h} \prod_{h = 1}^{t} x_{C_{r - t+h} \setminus Q_h} x_{C'_{r - t+h} \setminus Q_h} \enspace,
\end{flalign*}
which is equal to the product of monomials on the right-hand side of the equation we wish to show.

It thus remains to argue that $A_{i,j}^{(\vec{C},\vec{C'}, Q)}$ has exactly $D_t$ nonzero entries. We observe that, for each $h = 0, \dots, r - t$, there are exactly ${n - 2 \choose \ell - 1}$ pairs $(S_h, T_h)$ such that $S_h \oplus T_h = C_h$ with $v_{2h + 1} \in S_h$ and $v_{2h + 2} \in T_h$. Indeed, this is because we must simply choose a set of size $\ell - 1$ that does not contain either of $v_{2h + 1}$ and $v_{2h + 2}$, and then this determines $S_h$ and $T_h$.

For $h = 1, \dots, t$, there are exactly ${n - 2 \choose \ell - 1}$ choices of $(S_{r - t + h}, T_{r - t + h})$. Indeed, this is because $S_{r - t + h}$ must contain $w_h$ and $T_{r - t + h}$ must contain $w'_h$. Note that if $w_h = w'_h$, then there are actually ${n - 1 \choose \ell - 1}$ choices! However, using the slightly modified definition of the matrix in the footnote in \cref{def:kikuchi}, we can again force there to be exactly ${n - 2 \choose \ell - 1}$ choices. Finally, there are ${n \choose \ell}$ choices for $(S'_{r - t + h}, T'_{r - t + h})$, as we must have $S'_{r - t + h} = T'_{r - t + h}$.

Combining, we see that $D_t =  {n - 2 \choose \ell - 1}^{2(r - t + 1)} \cdot ({n - 2 \choose \ell - 1}{n \choose \ell})^t = {n - 2 \choose \ell - 1}^{2r + 2 - t} {n \choose \ell}^t$, as required.
\end{proof}

%%%%%%%%%%%%%%%%%%%%%%%%%%%%%%%%%%%%%%%%%%%%%%%%%%%%%%%%%%%%%%%%%%%%%%%%%%%%%%%%
%%%%%%%%%%%%%%%%%%%%%%%%%%%%%%%%%%%%%%%%%%%%%%%%%%%%%%%%%%%%%%%%%%%%%%%%%%%%%%%%
\subsection{Step 3: finding a regular submatrix of the Kikuchi matrix}
\label{sec:regularsubmatrix}
By \cref{lem:completeness}, in order to upper bound $\E_b[\val(f_M)]$, it suffices to bound $\E_b[\boolnorm{A}] \leq N \E_b[\norm{A}_2]$, where $N = {n \choose \ell}^{2r + 2}$; here, we use that $\boolnorm{A} \leq N \norm{A}_2$ always holds.

To bound $\norm{A}_2$, we will write $A = \sum_{(i,j) \in M} b_i b_j A_{i,j}$ and apply \cref{fact:matrixkhintchine}. To do this, we need to bound $\norm{A_{i,j}}_2$, which we shall do by upper bounding the maximum $\ell_1$-norm of any row/column of the matrix. In turns out there are some rows that indeed have a large $\ell_1$-norm. To handle this issue, we shall zero out the ``bad rows'', as follows. To do this, we will need to use the following technical lemma, proven in \cref{sec:rowpruning}, that bounds the expected $\ell_1$-norm of a row and the conditional expectation given that the row has a nonzero entry in a specific matrix $A_{i,j}^{(C,C', Q)}$.

\begin{restatable}[First and conditional moment bounds]{lemma}{RowPruning}
\label{lem:rowpruning}
Fix $r \geq 1$, $i,j \in [k]$, and let $\cH_i^{(r+1)}$ and $\cH_j^{(r+1)}$ denote the $(r+1)$-chain hypergraph with heads in $i$ and $j$ respectively. Let $\cup_{t = 0}^r \cup_{Q \in P_t} \cH_i^{(r+1,Q)}$ be a smooth partition of $\cH_i^{(r+1)}$, as defined in \cref{def:partition,def:inducedpartition}. Let $A_{i,j}$ be the Kikuchi matrix defined in \cref{def:kikuchi}, which depends on $r$, $i$, $j$, and the pieces $\cup_{Q \in P_t} \cH^{(r+1,Q)}$ of the refinement, and the matching $M$.

Let $\vec{S} = (S_0, \dots, S_r, S'_0, \dots, S'_r) \in {[n] \choose \ell}^{2r + 2}$ be a row of the matrix, and let $\deg_{i,j}(\vec{S})$ denote the $\ell_1$-norm of the $\vec{S}$-th row of $A_{i,j}$. Then, 
\begin{equation*}
\E_{\vec{S}}[\deg_{i,j}(\vec{S})] \leq \frac{1}{N \cdot \delta n} \mcom
\end{equation*}
where $N = {n \choose \ell}^{2r + 2}$.

Furthermore, let $t \in \{0, \dots, r\}$, $Q \in P_t$, and $C \in \cH_i^{(r+1, Q)}$ and $C' \in \cH_j^{(r+1, Q)}$. Let $\cD_{C,C',Q}$ denote the uniform distribution over rows of $A_{i,j}^{(C,C', Q)}$ that contain a nonzero entry. Then, if $d^{r+1} \geq n$ and $\ell \geq 2d(r+1)/\delta$, it holds that 
\begin{equation*}
\E_{\vec{S} \sim \cD_{C,C',Q}}[\deg_{i,j}(\vec{S})] \leq \left(1 + \frac{O(\ell r)}{n} \right) \cdot \frac{4}{N \delta n} \mper
\end{equation*}
\end{restatable}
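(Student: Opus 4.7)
The strategy is to mirror the proof of \cref{lem:graphrowpruning} with two new complications: the matrix $A_{i,j}$ is now indexed by \emph{two} parallel chains $C$ and $C'$ on each side, and the $Q$-modding collapses the link structure for $h > r - t$ so that we must invoke the smooth partition estimate \cref{lem:chainsmoothness} rather than raw hypergraph smoothness.

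For the first moment, by \cref{lem:completeness} each $A_{i,j}^{(C,C',Q)}$ has exactly $D_t$ nonzero entries, so
\begin{equation*}
N \cdot \E_{\vec S}[\deg_{i,j}(\vec S)] \;=\; \sum_{t=0}^{r} \sum_{Q \in P_t} \frac{1}{\wt(Q)} \sum_{\substack{C \in \cH_i^{(r+1,Q)} \\ C' \in \cH_j^{(r+1,Q)}}} \wt_{\cH_i^{(r+1)}}(C)\,\wt_{\cH_j^{(r+1)}}(C') \mper
\end{equation*}
Invoking \cref{lem:chainsmoothness} with $Z = Q$ gives $\sum_{C \in \cH_i^{(r+1,Q)}} \wt_{\cH_i^{(r+1)}}(C) \leq \wt(Q)/(\delta n)$ on each side, so the inner double sum is at most $\wt(Q)^2/(\delta n)^2$. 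Dividing by $\wt(Q)$ and using $\sum_{Q \in P_t} \wt(Q) \leq n$ for each $t$ collapses the sum to the claimed $O(1/(\delta n))$ scale.

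For the conditional moment, fix $t, Q \in P_t$, and chains $C \in \cH_i^{(r+1,Q)}$, $C' \in \cH_j^{(r+1,Q)}$. I will write $\E_{\vec S \sim \cD_{C,C',Q}}[\deg_{i,j}(\vec S)]$ as a sum over all other choices of $(t', Q', C'', C''')$, grouped by an \emph{intersection pattern} $z$ recording how many link positions of the new chains share a forced vertex with the original chains (on either the $\vec S$ or $\vec S'$ side). For each pattern the contribution factorizes into two pieces, estimated as in \cref{lem:graphrowpruning}:
\begin{enumerate}
\item a geometric factor counting the normalized number of rows $\vec S$ simultaneously nonzero in both $A_{i,j}^{(C,C',Q)}$ and $A_{i,j}^{(C'',C''',Q')}$ -- each intersection saves a factor of $\ell/n$, so the factor is $(1 + O(\ell r)/n) \cdot (\ell/n)^{2r+2-t'-z}$ relative to $D_{t'}$;
\item a weight factor for the total weight of $(C'',C''')$ compatible with the pattern -- each forced vertex in $C''$ or $C'''$ costs a factor of $1/(\delta n)$ via \cref{lem:chainsmoothness}, and the hypothesis $d^{r+1} \geq n$ allows us to use the uniform bound therein even when the forced positions saturate the chain (the $\abs{Z} = r+2$ case).
\end{enumerate}
Multiplying the two factors and using $D_{t'}/N = (1 \pm O(\ell r)/n) \cdot (\ell/n)^{2r+2-t'}$ (analogous to the graph case), the dependence on the pattern collapses into a geometric sum of the form $\sum_{z \geq 0} \binom{O(r)}{z} (\Theta(r)/(\delta \ell))^{z}$. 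The hypothesis $\ell \geq 2d(r+1)/\delta$ guarantees a ratio at most $1/2$, so the series is dominated by its leading term, yielding the stated bound $(1 + O(\ell r)/n) \cdot O(1/(N \delta n))$.

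The hard part will be cleanly bookkeeping the $Q$-modding asymmetry. Within a single $A_{i,j}^{(C,C',Q)}$, the links $h > r - t$ on the $\vec T$ side are only constrained by a single vertex (while the $\vec T'$ side is unconstrained, thanks to $S'_{r-t+h} = T'_{r-t+h}$), whereas links $h \leq r - t$ are constrained by a pair. When we compare against another matrix $A_{i,j}^{(C'',C''',Q')}$ with a different $t'$ or $Q'$, these two modded ranges need not align, and the count of ``slots'' that are pinned versus free in each $S_h, S'_h$ requires a careful case split. The saving grace is that \cref{lem:chainsmoothness} is stated uniformly enough in $\abs{Z}$ and $\abs{Q}$ that every case is absorbed into the same $1/(\delta n)$-per-pinned-vertex estimate, so the final geometric sum avoids case explosion.
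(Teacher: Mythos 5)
Your first-moment argument is incorrect, and the gap is genuine. You bound the double sum by applying \cref{lem:chainsmoothness} to \emph{both} $C$ and $C'$, obtaining $\wt(Q)^2/(\delta n)^2$, then divide by $\wt(Q)$ and sum over $Q \in P_t$ (giving $n/(\delta n)^2 = 1/(\delta^2 n)$) and then over $t$ (giving $(r+1)/(\delta^2 n)$). This is \emph{not} ``the claimed $O(1/(\delta n))$ scale'': it is worse by a factor of $(r+1)/\delta$, which is $\Theta(\log n)$ in the parameter regimes used downstream. The missing idea is that $\bigcup_{t=0}^{r}\bigcup_{Q \in P_t} \cH_i^{(r+1,Q)}$ is a \emph{partition} of $\cH_i^{(r+1)}$: each $C$ lies in exactly one piece, determined by the maximal heavy suffix $Q_C$ of $C$. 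So you should \emph{not} sum independently over $(t,Q,C)$, but rewrite the sum as
\begin{equation*}
N\cdot\E_{\vec S}[\deg_{i,j}(\vec S)] \;=\; \sum_{C \in \cH_i^{(r+1)}} \wt_{\cH_i^{(r+1)}}(C)\cdot\frac{1}{\wt(Q_C)}\sum_{C' \in \cH_j^{(r+1,Q_C)}}\wt_{\cH_j^{(r+1)}}(C')\mcom
\end{equation*}
apply the smoothness estimate $\sum_{C'\in\cH_j^{(r+1,Q)}}\wt_{\cH_j^{(r+1)}}(C') \le \wt(Q)/(\delta n)$ to the $C'$-sum \emph{only}, and then use $\sum_{C\in\cH_i^{(r+1)}}\wt_{\cH_i^{(r+1)}}(C)\le 1$ from \cref{obs:weightpreserved}. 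This yields exactly $1/(N\delta n)$ with no $r$ or extra $1/\delta$ loss. Applying the smoothness bound to $C$ as well and then summing $\wt(Q)$ over $Q$ and $t$ throws away the partition structure and is the source of the loss.

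The conditional-moment sketch is on the right track and follows the paper's two-step plan (geometric factor $\times$ weight factor, summed over intersection patterns), but your stated geometric series $\sum_z \binom{O(r)}{z}(\Theta(r)/(\delta\ell))^z$ misplaces the $d$. After normalizing by $D_{t'}$, the correct term is $\binom{2r+2-t}{z}\,(d/(\delta\ell))^z$, which after bounding the binomial by $(O(r))^z$ becomes $\big(O(r)\,d/(\delta\ell)\big)^z$; this is precisely why the hypothesis $\ell \ge 2d(r+1)/\delta$ is needed to force the ratio below $1/2$. Your own invocation of that hypothesis shows you understand the $d$ belongs there, so this is likely a slip rather than a conceptual error, but as written the series you display does not match the estimate it would have to support. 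You should also make explicit (as in Step~1 of the paper's proof) the uniform $\ell_1$-count-vs.-$D_{t'}$ normalization across different $t'$, since the two cases $t\ge t'$ and $t\le t'$ produce counts that differ by $t'-t$ factors of $\binom{n}{\ell-1}^2/(\binom{n}{\ell-2}\binom{n}{\ell})$ and one must verify these stay within a constant.
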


Let us now use \cref{lem:rowpruning} to argue the following. For a sufficiently large constant $\Gamma$, there exist submatrices $B_{i,j}^{(C, C', Q)}$, i.e., a $\Bits$-matrix where $B_{i,j}^{(C, C', Q)}(\vec{S}, \vec{T}) = 1$ implies $A_{i,j}^{(C, C', Q)}(\vec{S}, \vec{T}) = 1$, such that \begin{inparaenum}[(1)] \item each $B_{i,j}^{(C, C', Q)}$ contains exactly $D_t / 2$ nonzero entries, and \item the $\ell_1$-norm of any row/column of $B_{i,j}$ (defined analogously to $A_{i,j}$) is at most $ \frac{\Gamma}{N \cdot \delta n}$.\end{inparaenum}

We do this as follows. First, we observe that $A_{i,j}^{(C,C', Q)}(\vec{S}, \vec{T}) = A_{j,i}^{(C', C, Q)}(\vec{R}, \vec{W})$, where $\vec{R} = (S'_0, \dots, S'_r, S_0, \dots, S_r)$ and $\vec{W} = (T'_0, \dots, T'_r, T_0, \dots, T_r)$. In particular, this symmetry implies that the bounds on the moments for rows in \cref{lem:rowpruning} hold for columns as well.

Let $\cB_1 = \{\vec{S} : \deg_{i,j} (\vec{S}) \geq \frac{\Gamma}{N \cdot \delta n}\}$ denote the set of bad rows with $\ell_1$-norm at least $\frac{\Gamma}{N \cdot \delta n}$, and similarly let $\cB_2$ be the same but for the columns. Applying Markov's inequality and the conditional degree bound, we see that $\cB_1$ contains at most $O(1/\Gamma)$-fraction of the rows where $A_{i,j}^{(C,C', Q)}$ is nonzero, and similarly $\cB_2$ contains at most $O(1/\Gamma)$-fraction of the columns where $A_{i,j}^{(C,C', Q)}$ is nonzero. Thus, after removing these rows, we still have at least $(1 - O(1/\Gamma)) D_t$ nonzero entries in $A_{i,j}^{(C,C', Q)}$. When $\Gamma$ is a sufficiently large constant, this is at least $1/2$, and so we can choose an arbitrary subset of \emph{exactly} $D_t/2$ nonzero entries. We let $B_{i,j}^{(C,C', Q)}$ be the matrix with those nonzero entries. 

The first property is clearly satisfied by construction. The second property is satisfied because the $\ell_1$-norm of any row/column of $B_{i,j}$ is clearly at most $\frac{\Gamma}{N \cdot \delta n}$, again by construction.

%%%%%%%%%%%%%%%%%%%%%%%%%%%%%%%%%%%%%%%%%%%%%%%%%%%%%%%%%%%%%%%%%%%%%%%%%%%%%%%%
%%%%%%%%%%%%%%%%%%%%%%%%%%%%%%%%%%%%%%%%%%%%%%%%%%%%%%%%%%%%%%%%%%%%%%%%%%%%%%%%
\subsection{Step 4: finishing the proof}
Let $B_{i,j}^{(C,C',Q)}$ be the matrix produced in \cref{sec:regularsubmatrix}. 

We let $B^{(t)}_{i,j} = \sum_{Q \in P_t} \frac{1}{\wt(Q)} \sum_{C \in \cH^{(r+1,Q)}_{i} ,{C'} \in \cH^{(r+1,Q)}_{j}} \wt_{\cH^{(r+1)}_i}(C)\wt_{\cH^{(r+1)}_j}(C') \cdot B_{i,j}^{({C}, {C'}, Q)}$ and $B_{i,j} = \sum_{t = 0}^r \frac{1}{D_t} B_{i,j}^{(t)}$. For any matching $M$ on $[k]$, let $ B_M = \sum_{(i,j) \in M} b_i b_j B_{i,j}$. We will abuse notation and let $B \defeq B_M$.

By \cref{lem:completeness} and the fact that $B_{i,j}^{(C,C', Q)}$ has exactly $D_t/2$ nonzero entries of $A_{i,j}^{(C,C',Q)}$ in it, we see that for every $x \in \Fits^n$, there exists $x' \in \Fits^N$ such that ${x'}^{\top} B x' = \frac{1}{2} f_M(x)$. We also have that $\norm{B_{i,j}}_2 \leq \frac{\Gamma}{N \cdot \delta n}$, by construction in \cref{sec:regularsubmatrix}.

By \cref{fact:matrixkhintchine}, it therefore follows that
\begin{flalign*}
\E_b[\val(f_M(x))] \leq 2 \E_b[N \norm{B}_2] \leq N \cdot \frac{\Gamma}{N \cdot \delta n} \cdot O( \sqrt{k \log N}) = O(\sqrt{k \ell r \log n}) \cdot \frac{1}{\delta n}
\end{flalign*}
Hence,
\begin{flalign*}
&\E_b[\val(\Psi(x,y))]^2 \leq \E_b[\val(\Psi(x,y)^2)] \leq n(r+1)\Paren{\frac{k(r+1)}{\delta^2 n}  + 2k\E_{b,M}[\val(f_M)]} \\
&\leq n(r+1)\Paren{\frac{k(r+1)}{\delta^2 n}  + 2k O(\sqrt{k \ell r \log n}) \cdot \frac{1}{\delta n}} = \frac{k(r+1)}{\delta} \Paren{\frac{r+1}{\delta}  + 2 O(\sqrt{k \ell r \log n})} \\
&\leq \frac{k(r+1)}{\delta}  O(\sqrt{k \ell r \log n}) \mcom
\end{flalign*}
as $\ell \geq O(r/\delta)$ and we can assume that $k \geq 1/\delta$ (as otherwise we are already done).

%%%%%%%%%%%%%%%%%%%%%%%%%%%%%%%%%%%%%%%%%%%%%%%%%%%%%%%%%%%%%%%%%%%%%%%%%%%%%%%%
%%%%%%%%%%%%%%%%%%%%%%%%%%%%%%%%%%%%%%%%%%%%%%%%%%%%%%%%%%%%%%%%%%%%%%%%%%%%%%%%
%%%%%%%%%%%%%%%%%%%%%%%%%%%%%%%%%%%%%%%%%%%%%%%%%%%%%%%%%%%%%%%%%%%%%%%%%%%%%%%%
\section{Row Pruning: Proof of \cref{lem:rowpruning}}
\label{sec:rowpruning}
In this section, we prove \cref{lem:rowpruning}, restated below.

\RowPruning*

\begin{proof}
We begin by estimating the first moment, i.e., $\E_{\vec{S}}[\deg_{i,j}(\vec{S})]$. By definition, we have that 
\begin{flalign*}
&\E_{\vec{S}}[\deg_{i,j}(\vec{S})] = \frac{1}{N} \sum_{t = 0}^r \frac{1}{D_t} \sum_{Q \in P_t} \frac{1}{\wt(Q)} \sum_{C \in \cH^{(r+1,Q)}_{i} ,{C'} \in \cH^{(r+1,Q)}_{j}} \wt_{\cH^{(r+1)}_i}(C)\wt_{\cH^{(r+1)}_j}(C') \cdot D_t \\
&= \frac{1}{N} \sum_{t = 0}^r  \sum_{Q \in P_t} \frac{1}{\wt(Q)} \sum_{C \in \cH^{(r+1,Q)}_{i} ,{C'} \in \cH^{(r+1,Q)}_{j}} \wt_{\cH^{(r+1)}_i}(C)\wt_{\cH^{(r+1)}_j}(C') \mper
\end{flalign*}
We note that the latter quantity is simply equal to $\frac{1}{N} \sum_{C \in \cH_i^{(r+1)}} \wt_{\cH^{(r+1)}_i}(C) \sum_{C' \in \cH_j^{(r+1,Q)} : C \in \cH_i^{(r+1, Q)}} \frac{1}{\wt(Q)} \cdot \wt_{\cH^{(r+1)}_j}(C')$, where the second sum is over $C' \in \cH_j^{(r+1,Q)}$ where $Q$ is determined by the choice of $C$. We note that for any $Q$, $\sum_{C' \in \cH_j^{(r+1,Q)}}  \wt_{\cH^{(r+1)}_j}(C') \leq \frac{\wt(Q)}{\delta n}$, and hence we conclude that
\begin{equation*}
\E_{\vec{S}}[\deg_{i,j}(\vec{S})] \leq \frac{1}{N} \sum_{C \in \cH_i^{(r+1)}} \wt_{\cH^{(r+1)}_i}(C) \frac{1}{\delta n} \leq \frac{1}{N \cdot \delta n} \mper
\end{equation*}

\later{\peter{could add a note that we could use the simple bounds from the inequalities we have established but this loses an additional factor of $r$}}

Next, we estimate the conditional first moment. Fix a $Q \in P_t$ for some $0 \leq t \leq r$, and let $C \in \cH^{(r+1,Q)}_{i} ,{C'} \in \cH^{(r+1,Q)}_{j}$. We now bound $\E_{\vec{S} \sim \cD_{C, C', Q}}[\deg_{i,j}(\vec{S})]$, where $\cD_{C,C',Q}$ is the uniform distribution over all rows $\vec{S}$ such that $A_{i,j}^{(C,C',Q)}$ has a nonzero entry. We note that there are exactly $D_t$ such rows.

We shall proceed in two steps. First, we consider a fixed $(D, D', Q')$ with $D \in \cH^{(r+1,Q')}_{i} ,{D'} \in \cH^{(r+1,Q')}_{j}$. Let $\abs{Q'} = t' + 1$. We will upper bound the number of rows $\vec{S}$ where $A_{i,j}^{(C,C', Q)}$ and $A_{i,j}^{(D,D',Q')}$, normalized by the factor of $1/D_{t'}$. This will depend on the number of shared vertices $z$ between these two pairs of chains, for an appropriate definition of shared vertices. Then, we will, for each choice of $z$, bound the total weight of the number of chains $(D, D', Q')$ have ``intersection $z$'' with $(C,C',Q)$, which will conclude the argument.

\parhead{Step 1: bounding the normalized number of entries for a fixed $(D,D',Q')$.} To begin, we will define the number of ``shared vertices'' between two pairs of chains $(C,C',Q)$ and $(D,D',Q')$.

\begin{definition}[Left vertices]
Let $(C,C',Q)$ be such that $Q \in P_t$ and $C \in \cH^{(r+1,Q)}_{i} ,{C'} \in \cH^{(r+1,Q)}_{j}$. Let $C = (i, v_1, v_2, u_1, \dots, u_{r+1})$ and $C' = (j, v'_1, v'_2, u'_1, \dots, u'_{r+1})$. The tuple of \emph{left vertices} of $(C,C',Q)$ is the sequence $(v_1, v_3, v_5, \dots, v_{2(r - t) + 1}, w_1, \dots, w_t, v'_1, v'_3, \dots, v'_{2(r - t) + 1})$, where $C_h = \{v_{2h + 1}, v_{2h + 2}\} = \{w_h, Q_h\}$. We denote this sequence by $L(C,C', Q)$. 
\end{definition}
\begin{remark}
\label{rem:intersectionmotivation}
The reason for the above definition is the following. If $\vec{S}$ is a row where the matrix $A_{i,j}^{(C,C',Q)}$ has a nonzero entry, then the entries of $L(C,C',Q)$ (in order) are contained in the sets $(S_0, \dots, S_{r - t}, S_{r - t + 1}, \dots, S_{r}, S'_0, \dots, S'_{r - t})$, e.g., $v_1 \in S_0$, $v_3 \in S_1$, $w_1 \in S_{r - t + 1}$, etc.
\end{remark}
\begin{definition}[Intersection patterns]
Let $(C,C',Q)$ and $(D,D', Q')$ be such that $C \in \cH^{(r+1,Q)}_{i},{C'} \in \cH^{(r+1,Q)}_{j}$ and $D \in \cH^{(r+1,Q')}_{i} ,{D'} \in \cH^{(r+1,Q')}_{j}$. 

The \emph{intersection pattern} of $(C,C',Q)$ and $(D,D', Q')$, given by $Z \in \Bits^{2r + 2 - t}$, is defined as $Z_h = 1$ if $L(C,C',Q)_h = L(D,D',Q')_h$, and it is $0$ otherwise. Note that the sequences $L(C,C',Q)$ and $L(D,D',Q')$ may not have the same length; if $h$ is ``out of bounds'' for $L(D,D',Q')$, then we set $Z_h = 0$.
\end{definition}

We now fix $(D,D', Q')$ and count the number of rows as a function of the intersection pattern $Z$. Let $t' = \abs{Q'} - 1$. We have two cases. In the first case, $t \geq t'$, which implies that $\abs{L(C,C',Q)} \leq \abs{L(D,D',Q)}$.
We observe that in order for a row $\vec{S}$ to have a nonzero entry for both pairs of chains, the following must hold:
\begin{enumerate}
\item for $h = 1, \dots, r + 2$ (the first $r+1$ sets), we have $\{L(C,C',Q)_h, L(D,D',Q)_h\} \subseteq S_h$,
\item for $h = r+2, \dots, 2r + 3 - t$ (the next $r + 1 - t$ sets), we have $\{L(C,C',Q)_h, L(D,D',Q)_h\} \subseteq S'_{h - (r+2)}$,
\item for $h = 2r + 3 - t, \dots, 2r + 2 - t'$ (the next $t - t'$ sets), we have $L(D,D',Q)_h \in S'_{h - (r+2)}$,
\item for $h = 2r + 2 - t' + 1, \dots, 2r + 2$ (the final $t'$ sets), we have $S'_{h - (r + 2)}$ is arbitrary.
\end{enumerate}
We observe that for each intersection point, i.e., an $h$ such that $L(C,C',Q)_h =  L(D,D',Q)_h$, there are ${n \choose \ell - 1}$ choices for the corresponding set, as it needs to only contain one vertex. For each nonintersection point, i.e., an $h \in \{1, \dots, 2r + 2 - t\}$ where $L(C,C',Q)_h \ne  L(D,D',Q)_h$, we have ${n \choose \ell - 2}$ choices, because the set needs to contain both vertices. Finally, we have ${n \choose \ell - 1}$ choices for each of the $t - t'$ sets in the third case, and ${n \choose \ell}$ choices for the last $t$ sets in the final case. In total, we have ${n \choose \ell - 1}^z {n \choose \ell - 2}^{2r + 2 - t - z} {n \choose \ell - 1}^{t - t'} {n \choose \ell}^{t'}$.

In the second case, $t \leq t'$. We observe that by swapping the roles of $t$ and $t'$ above, we get a bound of ${n \choose \ell-1}^z {n \choose \ell - 2}^{2r + 2 - t' - z}{n \choose \ell - 1}^{t' - t} {n \choose \ell}^{t}$.

Now, although the above counts are different, we observe that they are within constant factors of each other. Indeed, we have 
\begin{flalign*}
&\frac{{n \choose \ell - 2}^{-t'}{n \choose \ell - 1}^{t' - t} {n \choose \ell}^{t}}{{n \choose \ell - 2}^{-t}{n \choose \ell - 1}^{t - t'} {n \choose \ell}^{t'}} = \left({n \choose \ell - 2}^{-1}{n \choose \ell - 1}^{2} {n \choose \ell}^{-1}\right)^{t' - t} \\
&= \left(\frac{\ell(n - \ell + 2)}{(\ell - 1)(n - \ell +1)} \right)^{t' - t} = \left( 1 + \frac{n - 1}{(\ell - 1)(n - \ell + 1)} \right)^{t' - t} \mcom
\end{flalign*}
and this ratio is between $\frac{1}{2}$ and $2$ since $\abs{t' - t} \leq r$ and $\frac{n - 1}{(\ell - 1)(n - \ell + 1)} \geq \frac{2}{\ell} \geq \frac{1}{\Gamma r}$ for a sufficiently large constant $\Gamma$.

Next, we observe that while we have an upper bound of $2 \cdot {n \choose \ell-1}^z {n \choose \ell - 2}^{2r + 2 - t}{n \choose \ell - 1}^{t - t'} {n \choose \ell}^{t'}$ on the number of rows, which depends on $t'$, each entry has a scaling factor of $\frac{1}{D_{t'}}$. We now give an upper bound on the \emph{normalized} number of entries that does not depend on $t'$. We have
\begin{flalign*}
&2\frac{{n \choose \ell-1}^z {n \choose \ell - 2}^{2r + 2 - t - z}{n \choose \ell - 1}^{t - t'} {n \choose \ell}^{t'}}{D_{t'}} = 2\frac{{n \choose \ell-1}^z {n \choose \ell - 2}^{2r + 2 - t - z}{n \choose \ell - 1}^{t - t'} {n \choose \ell}^{t'}}{{n - 2 \choose \ell - 1}^{2r + 2 - t'} \cdot {n \choose \ell}^{t'}} =   2\left( \frac{ {n \choose \ell - 2} }{ {n \choose \ell - 1} } \right)^{ 2r + 2 - t - z}  \cdot \left( \frac{ {n \choose \ell - 1} }{ {n - 2 \choose \ell - 1} }\right)^{2r + 2 - t'} \\
&=  2\left( \frac{\ell - 1}{n - \ell + 2} \right)^{2r + 2 - t - z} \cdot \left( \frac{n(n-1)}{(n - \ell + 1)(n - \ell)}\right)^{2r + 2 - t'} \\
&\leq 2\left( \frac{\ell}{n} \right)^{2r + 2 - t - z} \cdot \left(1 + \frac{O(\ell r)}{n}\right) \mper
\end{flalign*}

\parhead{Step 2: bounding the weight of $(D,D',Q')$ with a fixed intersection pattern $Z$.}
Let us fix the intersection pattern $Z$ and then determine the total weight of all $(D, D', Q')$ with $D \in \cH^{(r+1,Q')}_{i} ,{D'} \in \cH^{(r+1,Q')}_{j}$ with these intersection points. To do this, we will apply \cref{lem:chainsmoothness}. 

First, we observe that fixing an intersection pattern induces a $Z^{(1)} \in \{[n]\cup \{\star\}\}^{r+1} \times \{\star\}$, simply by filling in $Z^{(1)}$'s non-$\star$ entries with the appropriate vertices of $L(C,C',Q)$. We note that such a $Z^{(1)}$ never has the tail filled in, as the tail is not a potential intersection point. By \cref{lem:chainsmoothness}, this implies that the total weight of $D$ that contain $Z^{(1)}$ is at most $(\delta n)^{-\abs{Z^{(1)}}}$.

Next, we bound the total weight of all $D'$ that are valid for a fixed $D$. We observe that $D \in \cH_i^{(r+1,Q')}$ for some $i$, and hence $D'$ must be in $\cH_j^{(r+1,Q')}$. We note that $Z$ induces an intersection pattern $Z^{(2)}$ on $D'$, and moreover $Z^{(2)}$ does not intersect with the ``$Q'$-part'' of the chain $D'$, namely the links that contain vertices from $Q'$. So, it follows that $D'$ contains $(Z^{(2)}, Q')$.

By \cref{lem:chainsmoothness}, we have that the total weight of all $D'$ is at most $\wt(Q) d^{\abs{Z^{(2)}}} (\delta n)^{-\abs{Z^{(2)}} - 1}$. As each entry in $A_{i,j}^{(D,D', Q')}$ is scaled down by a factor of $\wt(Q')$, the normalized weight is therefore at most $d^{\abs{Z^{(2)}}} (\delta n)^{-\abs{Z^{(2)}} - 1}$.

In total, we get a bound of $(\delta n)^{-\abs{Z^{(1)}}} \cdot d^{\abs{Z^{(2)}}} (\delta n)^{-\abs{Z^{(2)}} - 1}$, which is at most $d^{\abs{Z}} (\delta n)^{-\abs{Z} - 1}$. Here, we use that $\abs{Z} = \abs{Z^{(1)}} + \abs{Z^{(2)}}$.

\parhead{Putting it all together.} By combining steps (1) and (2) (and paying an additional ${2r + 2 - t \choose z}$ factor to choose the nonzero entries of $Z$), we thus obtain the final bound of
\begin{flalign*}
&\E_{\vec{S} \sim \cD_{C, C', Q}}[\deg_{i,j}(\vec{S})] \leq \frac{1}{D_t} \sum_{z = 0}^{2r + 2 - t} {2r + 2 - t \choose z} \cdot \left(1 + \frac{O(\ell r)}{n}\right) \cdot 2\left(\frac{\ell}{n}\right)^{2r + 2 - t - z} \cdot d^{z} (\delta n)^{-z - 1} \\
&\leq \left(1 + \frac{O(\ell r)}{n}\right)\frac{2}{D_t} \left(\frac{\ell}{n}\right)^{2r + 2 - t} \cdot  \sum_{z = 0}^{2r + 2 - t} (2r + 2 - t)^z \cdot \left(\frac{\ell}{n}\right)^{- z} \cdot d^{z} (\delta n)^{-z - 1} \\
&= \left(1 + \frac{O(\ell r)}{n}\right)\frac{2}{D_t \cdot \delta n} \left(\frac{\ell}{n}\right)^{2r + 2 - t}  \cdot \sum_{z = 0}^{2r + 2 - t}  \left( \frac{(2r + 2 - t) \cdot d}{\delta \ell} \right)^{z} \\
&\leq \left(1 + \frac{O(\ell r)}{n}\right) \frac{4}{D_t \cdot \delta n} \left(\frac{\ell}{n}\right)^{2r + 2 - t} \mcom
\end{flalign*}
where we use that $\ell \geq 2d(2r + 2)/\delta$.

To finish the proof, we need to compute $\frac{D_t}{N}$. We have that
\begin{flalign*}
&\frac{D_t}{N} = \frac{{n - 2 \choose \ell - 1}^{2r + 2 - t} \cdot {n \choose \ell}^t}{{n \choose \ell}^{2r + 2}} = \left( \frac{{n - 2 \choose \ell - 1}}{{n \choose \ell}}\right)^{2r + 2 - t} = \left( \frac{\ell (n - \ell)}{n(n-1)}\right)^{2r + 2 - t} \\
&\geq \left(\frac{\ell}{n}\right)^{2r + 2 - t} \cdot \left( 1 - \frac{\ell - 1}{n - 1}\right)^{2r + 2 - t} \geq \left(\frac{\ell}{n}\right)^{2r + 2 - t} \left(1 - \frac{(\ell - 1)(2r + 2)}{n-1}\right) = \left(\frac{\ell}{n}\right)^{2r + 2 - t} \left(1 - \frac{O(\ell r)}{n}\right) \mcom
\end{flalign*}
Thus, 
\begin{flalign*}
&\E_{\vec{S} \sim \cD_{C, C', Q}}[\deg_{i,j}(\vec{S})] \leq \left(1 + \frac{O(\ell r)}{n}\right) \frac{4}{D_t \cdot \delta n} \left(\frac{\ell}{n}\right)^{2r + 2 - t} \leq \left(1 + \frac{O(\ell r)}{n}\right) \frac{4}{N \cdot \delta n} \mcom
\end{flalign*}
which finishes the proof.
\end{proof}

%%%%%%%%%%%%%%%%%%%%%%%%%%%%%%%%%%%%%%%%%%%%%%%%%%%%%%%%%%%%%%%%%%%%%%%%%%%%%%%%
%%%%%%%%%%%%%%%%%%%%%%%%%%%%%%%%%%%%%%%%%%%%%%%%%%%%%%%%%%%%%%%%%%%%%%%%%%%%%%%%

%%%%%%%%%%%%%%%%%%%%%%%%%%%%%%%%%%%%%%%%%%%%%%%%%%%%%%%%%%%%%%%%%%%%%%%%%%%%%%%%
%%%%%%%%%%%%%%%%%%%%%%%%%%%%%%%%%%%%%%%%%%%%%%%%%%%%%%%%%%%%%%%%%%%%%%%%%%%%%%%%
%%%%%%%%%%%%%%%%%%%%%%%%%%%%%%%%%%%%%%%%%%%%%%%%%%%%%%%%%%%%%%%%%%%%%%%%%%%%%%%%
\bibliographystyle{alpha}
{\small
\bibliography{references}

\newcommand{\etalchar}[1]{$^{#1}$}
\begin{thebibliography}{AGKM23}

\bibitem[AG24]{AlrabiahG24}
Omar Alrabiah and Venkatesan Guruswami.
\newblock Near-tight bounds for 3-query locally correctable binary linear codes
  via rainbow cycles.
\newblock In {\em 65th {IEEE} Annual Symposium on Foundations of Computer
  Science, {FOCS} 2024, Chicago, IL, USA, October 27-30, 2024}. {IEEE}, 2024.

\bibitem[AGKM23]{AlrabiahGKM23}
Omar Alrabiah, Venkatesan Guruswami, Pravesh~K. Kothari, and Peter Manohar.
\newblock A near-cubic lower bound for 3-query locally decodable codes from
  semirandom {CSP} refutation.
\newblock In {\em Proceedings of the 55th Annual {ACM} Symposium on Theory of
  Computing, {STOC} 2023, Orlando, FL, USA, June 20-23, 2023}, pages
  1438--1448. {ACM}, 2023.

\bibitem[AK92]{AssmusKey1992}
E.~F. Assmus and J.~D. Key.
\newblock {\em Designs and their Codes}.
\newblock Cambridge Tracts in Mathematics. Cambridge University Press, 1992.

\bibitem[ALM{\etalchar{+}}98]{AroraLMSS98}
Sanjeev Arora, Carsten Lund, Rajeev Motwani, Madhu Sudan, and Mario Szegedy.
\newblock Proof verification and the hardness of approximation problems.
\newblock {\em Journal of the ACM (JACM)}, 45(3):501--555, 1998.

\bibitem[AS98]{AroraS98}
Sanjeev Arora and Shmuel Safra.
\newblock Probabilistic checking of proofs: A new characterization of np.
\newblock {\em Journal of the ACM (JACM)}, 45(1):70--122, 1998.

\bibitem[BFNW93]{BabaiFNW93}
L{\'{a}}szl{\'{o}} Babai, Lance Fortnow, Noam Nisan, and Avi Wigderson.
\newblock {BPP} has subexponential time simulations unless {EXPTIME} has
  publishable proofs.
\newblock {\em Comput. Complex.}, 3:307--318, 1993.

\bibitem[BGT17]{BhattacharyyaGT17}
Arnab Bhattacharyya, Sivakanth Gopi, and Avishay Tal.
\newblock Lower bounds for 2-query lccs over large alphabet.
\newblock In {\em Approximation, Randomization, and Combinatorial Optimization.
  Algorithms and Techniques (APPROX/RANDOM 2017)}. Schloss
  Dagstuhl-Leibniz-Zentrum fuer Informatik, 2017.

\bibitem[BIW10]{BarkolIW10}
Omer Barkol, Yuval Ishai, and Enav Weinreb.
\newblock On locally decodable codes, self-correctable codes, and t-private
  pir.
\newblock {\em Algorithmica}, 58(4):831--859, 2010.

\bibitem[BJM23]{BansalJM23}
Nikhil Bansal, Haotian Jiang, and Raghu Meka.
\newblock Resolving matrix spencer conjecture up to poly-logarithmic rank.
\newblock In {\em Proceedings of the 55th Annual {ACM} Symposium on Theory of
  Computing, {STOC} 2023, Orlando, FL, USA, June 20-23, 2023}, pages
  1814--1819. {ACM}, 2023.

\bibitem[BK95]{BlumK95}
Manuel Blum and Sampath Kannan.
\newblock Designing programs that check their work.
\newblock {\em Journal of the ACM (JACM)}, 42(1):269--291, 1995.

\bibitem[BLR93]{BlumLR93}
Manuel Blum, Michael Luby, and Ronitt Rubinfeld.
\newblock Self-testing/correcting with applications to numerical problems.
\newblock {\em Journal of computer and system sciences}, 47(3):549--595, 1993.

\bibitem[BSS14]{BatsonSS14}
Joshua~D. Batson, Daniel~A. Spielman, and Nikhil Srivastava.
\newblock Twice-ramanujan sparsifiers.
\newblock {\em {SIAM} Rev.}, 56(2):315--334, 2014.

\bibitem[DHV78]{DoyenHV78}
Jean Doyen, Xavier Hubaut, and Monique Vandensavel.
\newblock Ranks of incidence matrices of steiner triple systems.
\newblock {\em Mathematische Zeitschrift}, 163:251--259, 1978.

\bibitem[Dvi10]{Dvir10}
Zeev Dvir.
\newblock On matrix rigidity and locally self-correctable codes.
\newblock In {\em Proceedings of the 25th Annual {IEEE} Conference on
  Computational Complexity, {CCC} 2010, Cambridge, Massachusetts, USA, June
  9-12, 2010}, pages 291--298. {IEEE} Computer Society, 2010.

\bibitem[Dvi12]{Dvir12}
Zeev Dvir.
\newblock Incidence theorems and their applications.
\newblock {\em CoRR}, abs/1208.5073, 2012.

\bibitem[Dvi16]{Dvir16}
Zeev Dvir.
\newblock Lecture notes on linear locally decodable codes.
\newblock \url{https://www.cs.princeton.edu/~zdvir/LDCnotes/LDC8.pdf}, Fall
  2016.

\bibitem[Efr09]{Efremenko09}
Klim Efremenko.
\newblock 3-query locally decodable codes of subexponential length.
\newblock In {\em Proceedings of the 41st Annual {ACM} Symposium on Theory of
  Computing, {STOC} 2009, Bethesda, MD, USA, May 31 - June 2, 2009}, pages
  39--44. {ACM}, 2009.

\bibitem[GKM22]{GuruswamiKM22}
Venkatesan Guruswami, Pravesh~K. Kothari, and Peter Manohar.
\newblock {Algorithms and certificates for Boolean {CSP} refutation: smoothed
  is no harder than random}.
\newblock In {\em {STOC} '22: 54th Annual {ACM} {SIGACT} Symposium on Theory of
  Computing, Rome, Italy, June 20 - 24, 2022}, pages 678--689. {ACM}, 2022.

\bibitem[GKST06]{GoldreichKST06}
Oded Goldreich, Howard Karloff, Leonard~J Schulman, and Luca Trevisan.
\newblock Lower bounds for linear locally decodable codes and private
  information retrieval.
\newblock {\em Computational Complexity}, 15(3):263--296, 2006.

\bibitem[Ham73]{Hamada73}
Noboru Hamada.
\newblock On the $ p $-rank of the incidence matrix of a balanced or partially
  balanced incomplete block design and its applications to error correcting
  codes.
\newblock {\em Hiroshima Mathematical Journal}, 3(1):153--226, 1973.

\bibitem[HKM23]{HsiehKM23}
Jun{-}Ting Hsieh, Pravesh~K. Kothari, and Sidhanth Mohanty.
\newblock {A simple and sharper proof of the hypergraph Moore bound}.
\newblock In {\em Proceedings of the 2023 {ACM-SIAM} Symposium on Discrete
  Algorithms, {SODA} 2023, Florence, Italy, January 22-25, 2023}, pages
  2324--2344. {SIAM}, 2023.

\bibitem[HKM{\etalchar{+}}24]{HsiehKMMS24}
Jun{-}Ting Hsieh, Pravesh~K. Kothari, Sidhanth Mohanty, David~Munh{\'{a}}
  Correia, and Benny Sudakov.
\newblock Small even covers, locally decodable codes and restricted subgraphs
  of edge-colored kikuchi graphs.
\newblock {\em CoRR}, abs/2401.11590, 2024.

\bibitem[HO75]{HamadaO75}
N~Hamada and H~Ohmori.
\newblock On the bib design having the minimum p-rank.
\newblock {\em Journal of Combinatorial Theory, Series A}, 18(2):131--140,
  1975.

\bibitem[IK99]{IshaiK99}
Yuval Ishai and Eyal Kushilevitz.
\newblock Improved upper bounds on information-theoretic private information
  retrieval (extended abstract).
\newblock In {\em Proceedings of the Thirty-First Annual {ACM} Symposium on
  Theory of Computing, May 1-4, 1999, Atlanta, Georgia, {USA}}, pages 79--88.
  {ACM}, 1999.

\bibitem[JT09]{MR2480694}
Dieter Jungnickel and Vladimir~D. Tonchev.
\newblock Polarities, quasi-symmetric designs, and {H}amada's conjecture.
\newblock {\em Des. Codes Cryptogr.}, 51(2):131--140, 2009.

\bibitem[Jun84]{Jungnickel84}
Dieter Jungnickel.
\newblock The number of designs with classical parameters grows exponentially.
\newblock {\em Geom. Dedicata}, 16(2):167--178, 1984.

\bibitem[Jun11]{MR2860656}
Dieter Jungnickel.
\newblock Recent results on designs with classical parameters.
\newblock {\em J. Geom.}, 101(1-2):137--155, 2011.

\bibitem[Kan94]{Kantor94}
William~M. Kantor.
\newblock Automorphisms and isomorphisms of symmetric and affine designs.
\newblock {\em J. Algebraic Combin.}, 3(3):307--338, 1994.

\bibitem[KM23]{KothariM23}
Pravesh~K. Kothari and Peter Manohar.
\newblock An exponential lower bound for linear 3-query locally correctable
  codes.
\newblock {\em CoRR}, abs/2311.00558, 2023.

\bibitem[KT00]{KatzT00}
Jonathan Katz and Luca Trevisan.
\newblock On the efficiency of local decoding procedures for error-correcting
  codes.
\newblock In {\em Proceedings of the thirty-second annual ACM symposium on
  Theory of computing}, pages 80--86, 2000.

\bibitem[KV00]{KimV00}
Jeong~Han Kim and Van~H Vu.
\newblock Concentration of multivariate polynomials and its applications.
\newblock {\em Combinatorica}, 20(3):417--434, 2000.

\bibitem[KW04]{KerenidisW04}
Iordanis Kerenidis and {Ronald de} Wolf.
\newblock Exponential lower bound for 2-query locally decodable codes via a
  quantum argument.
\newblock {\em Journal of Computer and System Sciences}, 69(3):395--420, 2004.

\bibitem[LFKN90]{LundFKN90}
Carsten Lund, Lance Fortnow, Howard~J. Karloff, and Noam Nisan.
\newblock Algebraic methods for interactive proof systems.
\newblock In {\em 31st Annual Symposium on Foundations of Computer Science, St.
  Louis, Missouri, USA, October 22-24, 1990, Volume {I}}, pages 2--10. {IEEE}
  Computer Society, 1990.

\bibitem[LLT00]{LamLT00}
Clement Lam, Sigmund Lam, and Vladimir~D. Tonchev.
\newblock Bounds on the number of affine, symmetric, and {H}adamard designs and
  matrices.
\newblock {\em J. Combin. Theory Ser. A}, 92(2):186--196, 2000.

\bibitem[LLT01]{LamLT01}
Clement Lam, Sigmund Lam, and Vladimir~D. Tonchev.
\newblock Bounds on the number of {H}adamard designs of even order.
\newblock {\em J. Combin. Des.}, 9(5):363--378, 2001.

\bibitem[LP91]{LustPiquardG91}
Fran\c{c}oise {Lust-Piquard} and Gilles Pisier.
\newblock Noncommutative {K}hintchine and {P}aley inequalities.
\newblock {\em Ark. Mat.}, 29(2):241--260, 1991.

\bibitem[LT02]{LamT02}
Clement Lam and Vladimir~D. Tonchev.
\newblock A new bound on the number of designs with classical affine
  parameters.
\newblock volume~27, pages 111--117. 2002.
\newblock Special issue in honour of Ronald C. Mullin, Part II.

\bibitem[MSS15]{MarcusSS15}
Adam~W. Marcus, Daniel~A. Spielman, and Nikhil Srivastava.
\newblock Interlacing families {II}: {M}ixed characteristic polynomials and the
  {K}adison-{S}inger problem.
\newblock {\em Ann. of Math. (2)}, 182(1):327--350, 2015.

\bibitem[RS96]{RubinfeldS96}
Ronitt Rubinfeld and Madhu Sudan.
\newblock Robust characterizations of polynomials with applications to program
  testing.
\newblock {\em SIAM Journal on Computing}, 25(2):252--271, 1996.

\bibitem[Sha90]{Shamir90}
Adi Shamir.
\newblock Ip=pspace.
\newblock In {\em 31st Annual Symposium on Foundations of Computer Science, St.
  Louis, Missouri, USA, October 22-24, 1990, Volume {I}}, pages 11--15. {IEEE}
  Computer Society, 1990.

\bibitem[SS12]{SchudyS12}
Warren Schudy and Maxim Sviridenko.
\newblock Concentration and moment inequalities for polynomials of independent
  random variables.
\newblock In {\em Proceedings of the Twenty-Third Annual {ACM-SIAM} Symposium
  on Discrete Algorithms, {SODA} 2012, Kyoto, Japan, January 17-19, 2012},
  pages 437--446. {SIAM}, 2012.

\bibitem[Tei80]{Teirlinck80}
Luc Teirlinck.
\newblock On projective and affine hyperplanes.
\newblock {\em Journal of Combinatorial Theory, Series A}, 28(3):290--306,
  1980.

\bibitem[Ton99]{Tonchev99}
Vladimir~D Tonchev.
\newblock Linear perfect codes and a characterization of the classical designs.
\newblock {\em Designs, Codes and Cryptography}, 17:121--128, 1999.

\bibitem[Tre04]{Trevisan04}
Luca Trevisan.
\newblock Some applications of coding theory in computational complexity.
\newblock {\em arXiv preprint cs/0409044}, 2004.

\bibitem[Tro15]{Tropp15}
Joel~A. Tropp.
\newblock An introduction to matrix concentration inequalities.
\newblock {\em Found. Trends Mach. Learn.}, 8(1-2):1--230, 2015.

\bibitem[WAM19]{WeinAM19}
Alexander~S. Wein, Ahmed~El Alaoui, and Cristopher Moore.
\newblock {The Kikuchi Hierarchy and Tensor {PCA}}.
\newblock In {\em 60th {IEEE} Annual Symposium on Foundations of Computer
  Science, {FOCS} 2019, Baltimore, Maryland, USA, November 9-12, 2019}, pages
  1446--1468. {IEEE} Computer Society, 2019.

\bibitem[Yan24]{Yankovitz24}
Tal Yankovitz.
\newblock A stronger bound for linear 3-lcc.
\newblock In {\em 65th {IEEE} Annual Symposium on Foundations of Computer
  Science, {FOCS} 2024, Chicago, IL, USA, October 27-30, 2024}. {IEEE}, 2024.

\bibitem[Yek08]{Yekhanin08}
Sergey Yekhanin.
\newblock Towards 3-query locally decodable codes of subexponential length.
\newblock {\em Journal of the ACM (JACM)}, 55(1):1--16, 2008.

\bibitem[Yek12]{Yekhanin12}
Sergey Yekhanin.
\newblock Locally decodable codes.
\newblock {\em Foundations and Trends in Theoretical Computer Science},
  6(3):139--255, 2012.

\end{thebibliography}
}

%%%%%%%%%%%%%%%%%%%%%%%%%%%%%%%%%%%%%%%%%%%%%%%%%%%%%%%%%%%%%%%%%%%%%%%%%%%%%%%%
%%%%%%%%%%%%%%%%%%%%%%%%%%%%%%%%%%%%%%%%%%%%%%%%%%%%%%%%%%%%%%%%%%%%%%%%%%%%%%%%
%%%%%%%%%%%%%%%%%%%%%%%%%%%%%%%%%%%%%%%%%%%%%%%%%%%%%%%%%%%%%%%%%%%%%%%%%%%%%%%%
\appendix

%%%%%%%%%%%%%%%%%%%%%%%%%%%%%%%%%%%%%%%%%%%%%%%%%%%%%%%%%%%%%%%%%%%%%%%%%%%%%%%%
%%%%%%%%%%%%%%%%%%%%%%%%%%%%%%%%%%%%%%%%%%%%%%%%%%%%%%%%%%%%%%%%%%%%%%%%%%%%%%%%
%%%%%%%%%%%%%%%%%%%%%%%%%%%%%%%%%%%%%%%%%%%%%%%%%%%%%%%%%%%%%%%%%%%%%%%%%%%%%%%%
\ignore[imperfect completeness appendix]{
\section{The Case of Imperfect Completeness in \cref{mthm:nonlin}}
\label{app:imperfectcompleteness}
In this appendix, we prove \cref{mthm:nonlin} when the code has completeness $1 - \eps$. The proof is essentially identical to the proof in the perfect completeness case presented in \cref{sec:chainpolys,sec:chainxor,sec:graphref,sec:decomp,sec:kikuchi,sec:rowpruning}, with only minor changes that we describe here.

First, we observe that the reduction in \cref{sec:adaptivesmooth} holds with the following minor change: the decoding function $f_{v_1, a_1, v_2, a_2}(a_3)$ for $C = (v_1, a_1, v_2, a_2, v_3)$ might not be deterministic. This means that the function $f_{v_1, a_1, v_2, a_2}(a_3)$ is a convex combination of the deterministic functions specified in \cref{sec:adaptivesmooth}, and so we may need to add multiple copies of $C = (v_1, a_1, v_2, a_2, v_3)$ with different weights to handle the different deterministic functions in the convex combination. This only introduces some minor issues with notation.

The key change that we need to make lies in \cref{claim:chaincompleteness}. We no longer have that the chain polynomials correctly decode $x_u$ for every $x \in \Code$. In fact, we can see that, by the ``chain decoder'' interpretation of the adaptive chains given in \cref{sec:techsnonlin}, the chain polynomial computes the advantage of the chain decoder $\Dec_{r+1}^{x}(u)$ when decoding $x_u$, namely $\E[x_u \Dec_{r+1}^{x}(u)]$, where the expectation is over the internal randomness of the chain decoder. In this case, by union bound, the chain decoder is correct with probability at least $1 - (r+1) \eps$, and so $\E[x_u \Dec_{r+1}^{x}(u)] \geq 1 - 2(r+1) \eps$.

Now, when we use \cref{lem:chainpolyref} to refute the chain polynomial instances, we set parameters as follows. Let $\eta > 0$ to be chosen later, and set $r_0$ be such that $r_0 +1 = \floor{\frac{1}{2 \eps} - \eta}$ and $r_1$ be such that $r_1 = \gamma \sqrt{\log _2 n}$ for some constant $\gamma$ to be chosen later. We then let $r = \min(r_0, r_1)$. Note that by choice of $r$, $\frac{1}{2 \eps} - \eta \geq r+1$, and so $1 - 2(r+1)\eps \geq 2\eta \eps$.

Now, we set $d$ to be such that $d^{r+1} \geq n$, so we have to set $d = n^{1/{r+1}}$. Finally, we set $\ell = d r / \delta$. Following the calculations, we thus get that either
\begin{flalign*}
\eta^2 \eps^2 k \leq r^2 2^{O(r)} O(\ell r \log n) = \frac{O(1)}{\delta} r^6 2^{O(r)} d \log n \mcom
\end{flalign*}
or
\begin{flalign*}
\eta^4 eps^4 k \leq \frac{r^6}{\delta^2} 2^{O(r)} O(\ell r \log n) = \frac{r^8}{\delta^3}2^{O(r)} d \log n \mper
\end{flalign*}
The second equation is always the dominant term. If $\eps \leq \gamma/\sqrt{\log_2 n}$, then we observe that we are simply in the same parameter regime as in the perfect completeness, and we get the same bound. If $\eps \geq 2\gamma /\sqrt{\log_2 n}$, then we have that 
\begin{flalign*}
 k \leq \frac{r^8}{\delta^3 \eta^4 \eps^4}2^{O(r)} n^{\frac{1}{r+1}} \log n \mper
\end{flalign*}
Taking $\gamma$ large enough and $\eta = O(1/\log n)$ implies that $(\delta^3 \eps^4 k) \leq \tilde{O}(n^{\frac{1}{r+1}})$. This finishes the proof, as $r+1 =  \floor{\frac{1}{2 \eps} - \eta}$. Note that the final $\log(1/\delta)$ loss comes from \cref{fact:bgt17}.
}

%%%%%%%%%%%%%%%%%%%%%%%%%%%%%%%%%%%%%%%%%%%%%%%%%%%%%%%%%%%%%%%%%%%%%%%%%%%%%%%%
%%%%%%%%%%%%%%%%%%%%%%%%%%%%%%%%%%%%%%%%%%%%%%%%%%%%%%%%%%%%%%%%%%%%%%%%%%%%%%%%
%%%%%%%%%%%%%%%%%%%%%%%%%%%%%%%%%%%%%%%%%%%%%%%%%%%%%%%%%%%%%%%%%%%%%%%%%%%%%%%%
\section{Design $3$-LCCs over $\F_2$ from Reed--Muller Codes}
\label{app:reedmuller}
In this appendix, we give a simple folklore construction of a design $3$-LCCs (\cref{def:designs}) using Reed--Muller codes. 

\begin{lemma}[Design $3$-LCCs over $\F_2$ from Reed--Muller Codes]
\label{lem:designreedmuller}
Let $t$ be an integer, and let $k = 1 + t + {t \choose 2}$. Then, there is a design $3$-LCC with blocklength $n = 4^t$ of dimension $k$. In particular, $n \leq 2^{2 \sqrt{2k}}$.
\end{lemma}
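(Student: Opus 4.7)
The plan is to realize $H$ as the set of all affine lines in $\F_4^t$ and lower bound the dimension of $\cV$ via explicit codewords coming from traces of $\F_4$-quadratic polynomials. Taking $H$ to be the set of all affine lines in $\F_4^t$ (identifying the $n = 4^t$ bit positions with points of $\F_4^t$), any two distinct points determine a unique line of size $|\F_4| = 4$, so $H$ is a $2$-$(4^t, 4, 1)$-design as required by \cref{def:designs}. It remains to exhibit $1 + t + \binom{t}{2}$ linearly independent codewords in $\cV$.

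Fix $\alpha \in \F_4 \setminus \F_2$, so $\alpha^2 + \alpha + 1 = 0$ and $\Tr(\alpha) = \Tr(\alpha^2) = 1$ while $\Tr(0) = \Tr(1) = 0$. I will use the $1 + t + \binom{t}{2}$ functions $\F_4^t \to \F_2$ defined by $g_0(x) = 1$, $g_i(x) = \Tr(\alpha x_i)$ for $1 \leq i \leq t$, and $g_{ij}(x) = \Tr(\alpha x_i x_j)$ for $1 \leq i < j \leq t$. To see each $g$ lies in $\cV$, parameterize an arbitrary line $L$ as $\{p + \tau d : \tau \in \F_4\}$ with $d \neq 0$; since $\sum_{\tau \in \F_4} \tau = 0$ and $\sum_{\tau \in \F_4} \tau^2 = 0$ (the latter because Frobenius is a bijection on $\F_4$), expanding gives $\sum_{v \in L} v_i = 0$ and $\sum_{v \in L} v_i v_j = 0$ in $\F_4$. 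By $\F_2$-linearity of $\Tr$ this forces $\sum_{v \in L} g_i(v) = 0$ and $\sum_{v \in L} g_{ij}(v) = 0$; for $g_0$ one has $\sum_{v \in L} 1 = 4 = 0$ in $\F_2$.

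The remaining task is to verify $\F_2$-linear independence of $\{g_0\} \cup \{g_i\} \cup \{g_{ij}\}$. Given a relation $c_0 g_0 + \sum_i c_i g_i + \sum_{i<j} c_{ij} g_{ij} = 0$ on $\F_4^t$, I will evaluate at three families of points. Setting $x = 0$ gives $c_0 = 0$. Setting $x = \alpha e_k$ (only $x_k = \alpha$ nonzero) gives $c_k \Tr(\alpha^2) = c_k = 0$. Finally, setting $x = e_k + \alpha e_l$ for $k < l$ gives $c_k \Tr(\alpha) + c_l \Tr(\alpha^2) + c_{kl}\Tr(\alpha^2) = c_{kl} = 0$ using $c_k = c_l = 0$. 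This closes the argument, yielding $\dim_{\F_2} \cV \geq 1 + t + \binom{t}{2}$, which is the desired dimension for the design $3$-LCC.

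The blocklength bound is then immediate from $n = 4^t = 2^{2t}$ together with $2k \geq t^2 + t + 2 \geq t^2$, so $t \leq \sqrt{2k}$ and $n \leq 2^{2\sqrt{2k}}$. The only mild subtlety to watch for is that $\Tr(1) = 1 + 1 = 0$ in the trace $\F_4 \to \F_2$, which is precisely why the coefficient $\alpha$ is inserted in the definitions of $g_i$ and $g_{ij}$ and why the independence test must plug in $x_i \in \{\alpha, \alpha+1\}$ rather than $x_i \in \F_2$; this is the only place where care is required.
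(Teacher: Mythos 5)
Your proof is correct, and it takes a genuinely different (and more careful) route than the paper. The paper defines an $\F_4$-code by evaluating degree-$\le 2$ polynomials on $\F_4^t$, checks the line constraints via a univariate sum identity, and then projects to $\F_2$ via the trace, arguing that the $\F_2$-dimension of the traced code matches the $\F_4$-dimension of $\cV$ by a systematicity argument. You instead work directly in $\F_2^n$: you take $H$ to be all affine lines, exhibit the explicit family $\{1\} \cup \{\Tr(\alpha x_i)\} \cup \{\Tr(\alpha x_i x_j)\}$, verify membership in $\cV$ using $\sum_{\tau \in \F_4} \tau^j = 0$ for $j \in \{0,1,2\}$, and prove $\F_2$-linear independence by evaluation at $0$, $\alpha e_k$, and $e_k + \alpha e_l$. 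Your bottom-up approach buys you precision on two points that the paper treats loosely. First, the paper asserts that the $\F_4$-space of degree-$\le 2$ polynomials has dimension $1 + t + \binom{t}{2}$, but counting all degree-$\le 2$ monomials (including $x_i^2$) actually gives $1 + 2t + \binom{t}{2}$; the correct figure $k$ emerges only after the trace projection collapses $\Tr(\gamma x_i^2) = \Tr(\gamma^2 x_i)$, and this collapse is not addressed. Second, the paper justifies the dimension of the projected code by stating that ``the trace map is identity on $\F_2$,'' which is false ($\Tr(1) = 1 + 1 = 0$); the correct reason is that $\Tr: \F_4 \to \F_2$ is surjective. Your insertion of the $\alpha$-twist and your choice to evaluate at points with a coordinate in $\F_4 \setminus \F_2$ handle exactly this issue, as you note at the end. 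Both proofs establish the required lower bound $\dim \cV \ge k$, but yours does so with an argument that is elementary, self-contained, and avoids the dimension-counting imprecision in the paper's trace step.
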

To prove this lemma, we will need the following fact about polynomials over $\F_4$.
\begin{fact}
\label{fact:polyinterpolation}
Let $f(x) = \alpha_0 + \alpha_1 x + \alpha_2 x^2$ be a degree-$2$ polynomial over $\F_4$. Then, $\sum_{\beta \in \F_4} f(\beta) = 0$.
\end{fact}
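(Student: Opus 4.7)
The plan is to split the sum by the three monomials and show that each piece vanishes. Write
\[
\sum_{\beta \in \F_4} f(\beta) \;=\; \alpha_0 \sum_{\beta \in \F_4} 1 \;+\; \alpha_1 \sum_{\beta \in \F_4} \beta \;+\; \alpha_2 \sum_{\beta \in \F_4} \beta^2.
\]
The constant piece is $4\alpha_0$, which is $0$ since $\F_4$ has characteristic $2$. For the linear piece, I will use that $\F_4 = \{0,1,\omega,\omega^2\}$ where $\omega$ is a root of $x^2+x+1$, so $1+\omega+\omega^2 = 0$ and hence $\sum_{\beta \in \F_4}\beta = 0$. For the quadratic piece, I will invoke the fact that Frobenius $\beta \mapsto \beta^2$ is a bijection on $\F_4$ (being a field automorphism in characteristic $2$), so $\sum_{\beta \in \F_4} \beta^2 = \sum_{\beta \in \F_4} \beta = 0$. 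Combining the three, $\sum_{\beta \in \F_4} f(\beta) = 0$.

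There is no genuine obstacle here: each of the three pieces is a standard identity about $\F_4$, and the result is just the observation that ``summing a low-degree polynomial over all field elements'' kills it whenever the degree is strictly less than $|\F|-1$, specialized to $\F_4$ and degree $2$.
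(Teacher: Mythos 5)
Your proof is correct. You decompose the sum by monomial, showing $\sum_{\beta\in\F_4}1 = 0$ (characteristic $2$), $\sum_{\beta\in\F_4}\beta = 0$ (from $1+\omega+\omega^2=0$), and $\sum_{\beta\in\F_4}\beta^2 = 0$ (Frobenius $\beta\mapsto\beta^2$ is a bijection, so it is the same sum as the linear one). The paper instead decomposes by evaluation point: it writes out $f(0), f(1), f(\beta), f(1+\beta)$ explicitly and adds them, collecting each coefficient of $\alpha_i$ to a multiple of $2$, which vanishes. Both are direct computations over $\F_4$; yours is slightly more structural in that it isolates the standard power-sum identities and notes (without proof, which is fine here) that the phenomenon is an instance of the general fact that $\sum_{\beta\in\F_q}\beta^j = 0$ for $0\le j<q-1$. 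The paper's version is arguably more self-contained for a reader who doesn't want to recall that Frobenius is an automorphism, while yours generalizes more transparently. Either is acceptable.
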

\begin{proof}
Recall that the field $\F_4$ is equivalent to the polynomial ring $\F_2[\beta]$ modulo the equation $\beta^2 + \beta + 1 = 0$. We have 
\begin{flalign*}
&f(0) = \alpha_0 \\
&f(1) = \alpha_0 + \alpha_1 + \alpha_2 \\
&f(\beta) = \alpha_0 + \alpha_1 \beta + \alpha_2 \beta^2 \\
&f(1 + \beta) = \alpha_0 + \alpha_1 (1 + \beta) + \alpha_2 (1 + \beta)^2 \\
&\implies f(0) + f(1) + f(\beta) + f(1 + \beta) = \alpha_0 \cdot 4 + \alpha_1 \cdot 2(1 + \beta) + \alpha_2 (1 + \beta^2 + (1 + 2 \beta + \beta^2)) \\
&= 0 \mcom
\end{flalign*}
as $2 = 0$ in $\F_4$. 
\end{proof}

\begin{proof}[Proof of \cref{lem:designreedmuller}]
We will define the code in two stages. First, we will define, via an encoding map, a code over $\F_4$ with the desired dimension argue that it is a design $3$-LCC. Then, we will use this code to construct a code over $\F_2$.

Let $\cV$ denote the vector space of degree $\leq2$ polynomials over $\F_4$ in $t$ variables $x_1, \dots, x_t$. We note that $\cV$ has dimension $k$. 

For each $b \in \F_4^k$, we encode $b$ by (1) letting $f_b(x_1, \dots, x_t)$ be the degree-$2$ polynomial with coefficients given by $b$, and (2) evaluating $f_b$ over all $x \in \F_4^t$; this yields an output $Z \in {\F_4}^{4^t} = \F_4^n$, which is the encoding $\Enc(b)$. We note that $\Enc$ is clearly an $\F_4$ linear map.

We now argue that this encoding map is a design $3$-LCC. Indeed, we need to define a system of constraints such that for every pair $x^{(0)},x^{(1)} \in \F_4^t$, there is a unique constraint containing $x^{(0)},x^{(1)}$. Let $x^{(\beta)} = x^{(0)} + \beta(x^{(1)} - x^{(0)})$ and $x^{(1 + \beta)} = x^{(0)} + (1 + \beta)(x^{(1)} - x^{(0)})$. We note that $x^{(0)}, x^{(1)}, x^{(\beta)}$ and $x^{(1 + \beta)}$ is the line $L(t) = x^{(0)} + \lambda(x^{(1)} - x^{(0)})$ containing $x^{(0)}, x^{(1)}$. Fix $b \in \F_4^k$, and let $f_b$ be the corresponding polynomial. We know that $g(\lambda) = f_b(L(\lambda))$ is a degree-$2$ univariate polynomial in $\lambda$. Hence, by \cref{fact:polyinterpolation}, it follows that $f_b(x^{(0)}) + f_b(x^{(1)}) + f_b(x^{(\beta)}) + f_b(x^{(1 + \beta)}) = 0$. Hence, for each pair $x^{(0)}, x^{(1)} \in \F_4^t$, there exists a constraint containing this pair, and moreover, because two points determine a line, any constraint containing this pair must be exactly this line. Thus, the code given by $\Enc$ is a design $3$-LCC.

We now use the above code to construct a binary code. Let $\Tr \colon \F_4 \to \F_2$ be the trace map. We let $\cV'$ be the image of $\cV$ under $\Tr$ (applied element-wise to each vector in $\cV$). We note that because $\cV$ has dimension $k$ over $\F_4$ is a linear code, it is systematic, meaning that there is a subset $S \subseteq \F_4^t$ such that $\cV \vert_S = \F_4^k$. Therefore, because the trace map is identity on $\F_2$, it follows that $\cV' \vert_S = \F_2^k$, i.e., that $\cV'$ has dimension $k$ also.

To finish the proof, we need to argue that $\cV'$ is a design $3$-LCC. Let $g \in \cV'$. We will show that for each line $x^{(0)},x^{(1)}, x^{(\beta)}, x^{(1 + \beta)}$ in $\F_4^t$ as defined earlier, it holds that $g(x^{(0)}) + g(x^{(1)}) + g(x^{(\beta)}) + g(x^{(1 + \beta)}) = 0$. Indeed, we have that $g = \Tr(f)$ for some $f \in \cV$, and that $f(x^{(0)}) + f(x^{(1)}) + f(x^{(\beta)}) + f(x^{(1+\beta)}) = 0$. Because all the coefficients in the linear constraint are $1$, i.e., they are in $\F_2$, the constraint still holds after applying $\Tr(\cdot)$, as this is an $\F_2$-linear map. Thus, the constraint holds, which finishes the proof.
\end{proof}

%%%%%%%%%%%%%%%%%%%%%%%%%%%%%%%%%%%%%%%%%%%%%%%%%%%%%%%%%%%%%%%%%%%%%%%%%%%%%%%%
%%%%%%%%%%%%%%%%%%%%%%%%%%%%%%%%%%%%%%%%%%%%%%%%%%%%%%%%%%%%%%%%%%%%%%%%%%%%%%%%
%%%%%%%%%%%%%%%%%%%%%%%%%%%%%%%%%%%%%%%%%%%%%%%%%%%%%%%%%%%%%%%%%%%%%%%%%%%%%%%%
%%%%%%%%%%%%%%%%%%%%%%%%%%%%%%%%%%%%%%%%%%%%%%%%%%%%%%%%%%%%%%%%%%%%%%%%%%%%%%%%
%%%%%%%%%%%%%%%%%%%%%%%%%%%%%%%%%%%%%%%%%%%%%%%%%%%%%%%%%%%%%%%%%%%%%%%%%%%%%%%%
%%%%%%%%%%%%%%%%%%%%%%%%%%%%%%%%%%%%%%%%%%%%%%%%%%%%%%%%%%%%%%%%%%%%%%%%%%%%%%%%
\end{document}